\DeclareSymbolFont{AMSb}{U}{msb}{m}{n}
\DeclareSymbolFontAlphabet{\Bbb}{AMSb}
\newcommand{\G}{\mathcal{G}}
\newcommand{\C}{\mathcal{C}}
\newcommand{\IN}{_\mathrm{in}}
\newcommand{\OUT}{_\mathrm{out}}
\newcommand{\Po}{\mathcal{P}}
\def\hb@xt@{\hbox to }
\let\oldendproof\endproof
\def\endproof{\qed\oldendproof}
\begin{document}
\title{Steinitz Theorems for Orthogonal Polyhedra}

\author{David Eppstein\inst{1}  and Elena Mumford\inst{2}}

\institute{Computer Science Department\\
University of California, Irvine\\
\email{eppstein@uci.edu}\\[0.1in]
\and
Department of Mathematics and Computer Science\\
Technische Universiteit Eindhoven\\
\email{e.mumford@tue.nl}}

\maketitle
\setcounter{page}{0}

\begin{abstract}
We define a \emph{simple orthogonal polyhedron} to be a three-dimensional polyhedron with the topology of a sphere in which three mutually-perpendicular edges meet at each vertex.
By analogy to Steinitz's theorem characterizing the graphs of convex polyhedra, we find graph-theoretic characterizations of three classes of simple orthogonal polyhedra: \emph{corner polyhedra}, which can be drawn by isometric projection in the plane with only one hidden vertex, \emph{$xyz$ polyhedra}, in which each axis-parallel line through a vertex contains exactly one other vertex, and arbitrary simple orthogonal polyhedra. In particular, the graphs of $xyz$ polyhedra are exactly the bipartite cubic polyhedral graphs, and every bipartite cubic polyhedral graph with a 4-connected dual graph is the graph of a corner polyhedron. Based on our characterizations we find efficient algorithms for constructing orthogonal polyhedra from their graphs.
\end{abstract}

\newpage

\section{Introduction}

Steinitz's theorem~\cite{Gru-03,Ste-EMW-22,Zie-95} characterizes the skeletons of three-dimensional convex polyhedra in purely graph-theoretic terms: they are exactly the 3-vertex-connected planar graphs. In one direction, this is straightforward to prove: every convex polyhedron has a skeleton that is 3-connected and planar. The main content of Steinitz's theorem lies in the other direction, the statement that every 3-connected planar graph can be represented as a polyhedron. Steinitz's theorem, together with Balinski's theorem that every $d$-dimensional polytope has a $d$-connected skeleton~\cite{Bal-PJM-61}, form the foundation stones of polyhedral combinatorics; Grunbaum writes~\cite{Gru-03} that Steinitz's theorem is ``the most important and deepest known result on 3-polytopes.''

However, analogous results characterizing the skeletons of other classes of polyhedra or higher dimensional polytopes have been elusive. As Ziegler~\cite{Zie-95} writes, ``No similar theorem is known, and it seems that no similarly effective theorem is possible, in higher dimensions.'' Even in three dimensions, it remains unknown whether the complete graph $K_{12}$ may be embedded as a genus-six triangulated polyhedral surface, generalizing the toroidal embedding of $K_7$ as the Cs\'asz\'ar polyhedron~\cite{Csa-ASMS-49}.

\begin{figure}[t]
\centering\includegraphics[scale=0.45]{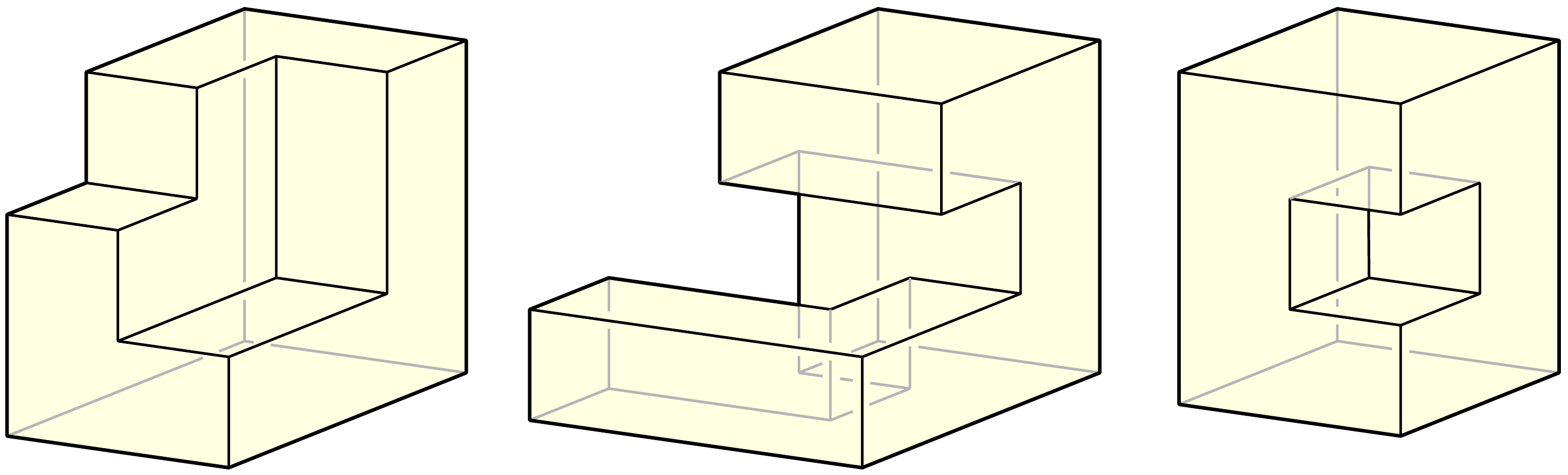}
\caption{Three types of simple orthogonal polyhedron: Left, a corner polyhedron. Center, an $xyz$ polyhedron that is not a corner polyhedron. Right, a simple orthogonal polyhedron that is not an $xyz$ polyhedron.}
\label{fig:orthotypes}
\end{figure}

In this paper, we characterize another class of three-dimensional non-convex polyhedra, which we call \emph{simple orthogonal polyhedra} (Fig.~\ref{fig:orthotypes}): polyhedra with the topology of a sphere, with simply-connected faces, and with exactly three mutually-perpendicular axis-parallel edges meeting at every vertex. We also consider two special cases of simple orthogonal polyhedra, which we call \emph{corner polyhedra} and \emph{$xyz$ polyhedra}. A corner polyhedron (Fig.~\ref{fig:orthotypes}, left) is a simple orthogonal polyhedron in which all but three faces are oriented towards the vector $(1,1,1)$; it can be drawn in the plane by isometric projection with only one of its vertices hidden (the one incident to the three back faces). An $xyz$ polyhedron (Fig.~\ref{fig:orthotypes}, center)  is a simple orthogonal polyhedron in which each axis-parallel line contains at most two vertices. We show:

\begin{itemize}
\item The graphs of corner polyhedra are exactly the cubic bipartite polyhedral graphs such that every separating triangle of the planar dual graph has the same parity. Here cubic means 3-regular, polyhedral means planar 3-connected, and we define the parity of a separating triangle later. The graphs with no separating triangles form the building blocks for all our other characterizations: every cubic bipartite polyhedral graph with a 4-connected planar dual is the graph of a corner polyhedron.
\item The graphs of $xyz$ polyhedra are exactly the cubic bipartite polyhedral graphs.
\item The graphs of simple orthogonal polyhedra are exactly the cubic bipartite planar graphs such that the removal of any two vertices leaves at most two connected components.
\end{itemize}
Based on our graph-theoretic characterizations of these classes of polyhedron, we find efficient algorithms for finding a polyhedral realization of the graph of any corner polyhedron, $xyz$ polyhedron, or simple orthogonal polyhedron.
Beyond the obvious applications of our results in graph drawing and architectural design, we believe that these results may have applications in image understanding, where an analysis of the structure of polyhedral and rectilinear objects has been an important subtopic~\cite{Huf-MI6-71,Wal-PCB-75,KirPap-JCSS-88}.

\begin{figure}[t]
\centering\includegraphics[scale=0.45]{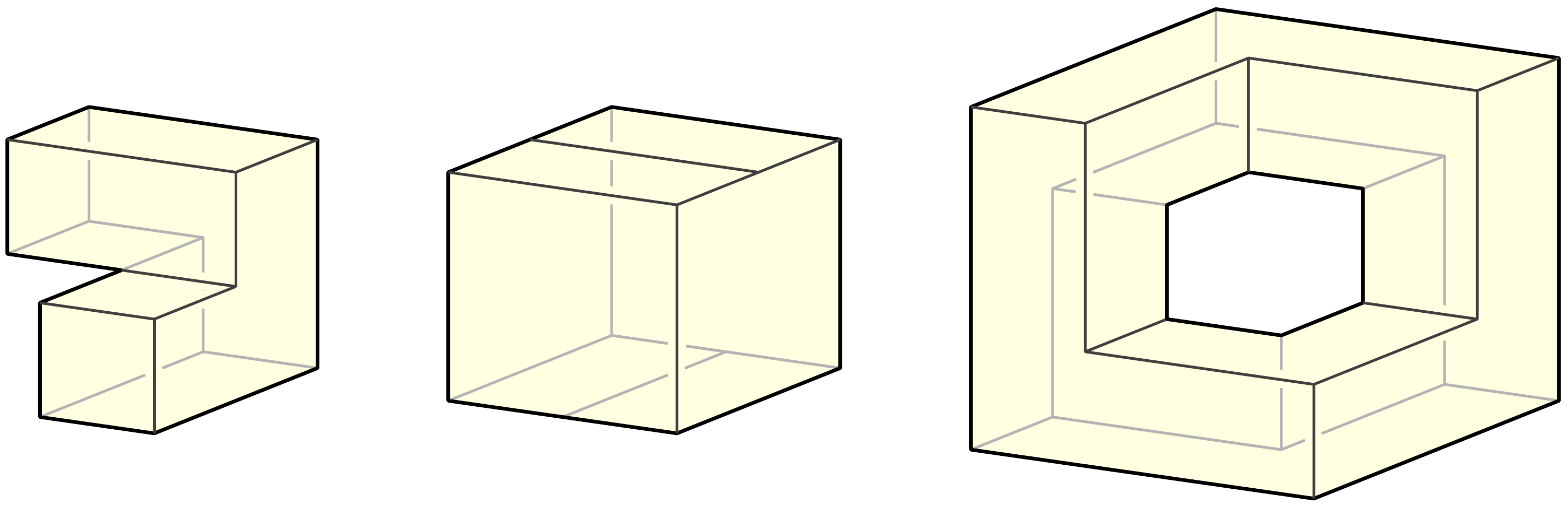}
\caption{Three orthogonal polyhedra that are not simple: Left, more than three edges meet at a vertex. Center, the bidiakis cube, with edges and faces meeting non-perpendicularly. Right, an orthogonally convex orthogonal polyhedron that does not have the topology of a sphere.}
\label{fig:nonsimple}
\end{figure}

Due to space considerations we defer the proofs of our results to appendices, and provide only a sketch of the main ideas of these proofs in the text of this paper.

\section{Related work}

Besides convex polyhedra and our results on orthogonal polyhedra,
two other classes of polyhedra have known graph-theoretic characterizations. They are the \emph{inscribable polyhedra} (convex polyhedra with all vertices on a common sphere or, almost equivalently, graphs of Delaunay triangulations)~\cite{DilSmi-DM-96,HodRivSmi-BAMS-92,Riv-AM-96} and a class of nonconvex polyhedra with star-shaped faces all but one of which are visible from a common viewpoint~\cite{HonNag-TR-08}.

The most direct predecessor of the work described here is our previous paper on three-dimensional bendless orthogonal graph drawing~\cite{Epp-GD-08}. We defined an \emph{$xyz$ graph} to be a cubic graph with axis-parallel edges such that the line through each edge does not pass through any other vertex. These graphs may also be defined in a coordinate-free way from their points, as there can be only one way of rotating a point set to form a connected $xyz$ graph~\cite{LoeMum-GD-08}. From every $xyz$ graph one may define an abstract topological surface by forming a face for every coplanar cycle; these faces may be 3-colored by the orientations of their defining planes. Conversely, for every 3-face-colored cubic topological cell complex on a manifold, assigning arbitrary distinct numbers to the faces and using these numbers as the Cartesian coordinates of the incident vertices leads to a representation as an $xyz$ graph.  As we proved, the planar $xyz$ graphs are exactly the bipartite cubic 3-connected planar graphs. Unlike polyhedra, an $xyz$ graph may have crossing points where pairs of edges or even triples of edges intersect (Fig.~\ref{fig:invertohedron}, left), and its face cycles may be linked in three-dimensional space. However, in some cases, an $xyz$ graph may be drawn as an orthogonal polyhedron, eliminating all edge crossings; for instance, we found an orthogonal polyhedron representation of the truncated octahedron (Fig.~\ref{fig:invertohedron}, right), and based on this example we posed as an open problem the algorithmic question of determining which $xyz$ graphs have a crossing-free representation. In this paper we answer that question in the planar case: all of them do, and more strongly all planar $xyz$ graphs have not just a crossing-free but a polyhedral representation.

\begin{figure}[t]
\centering
\includegraphics[height=1.75in]{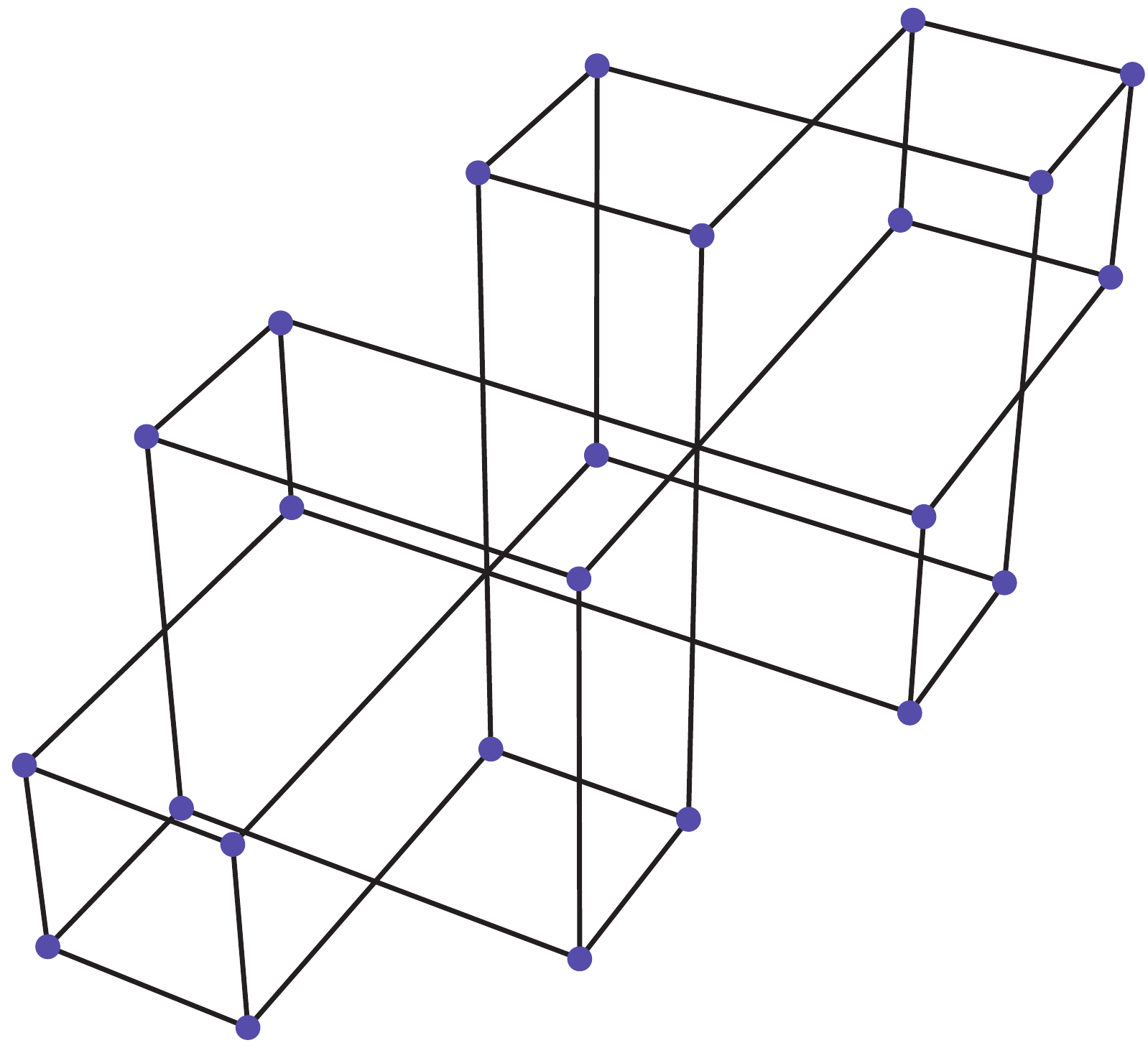}
\qquad\qquad
\includegraphics[height=1.75in]{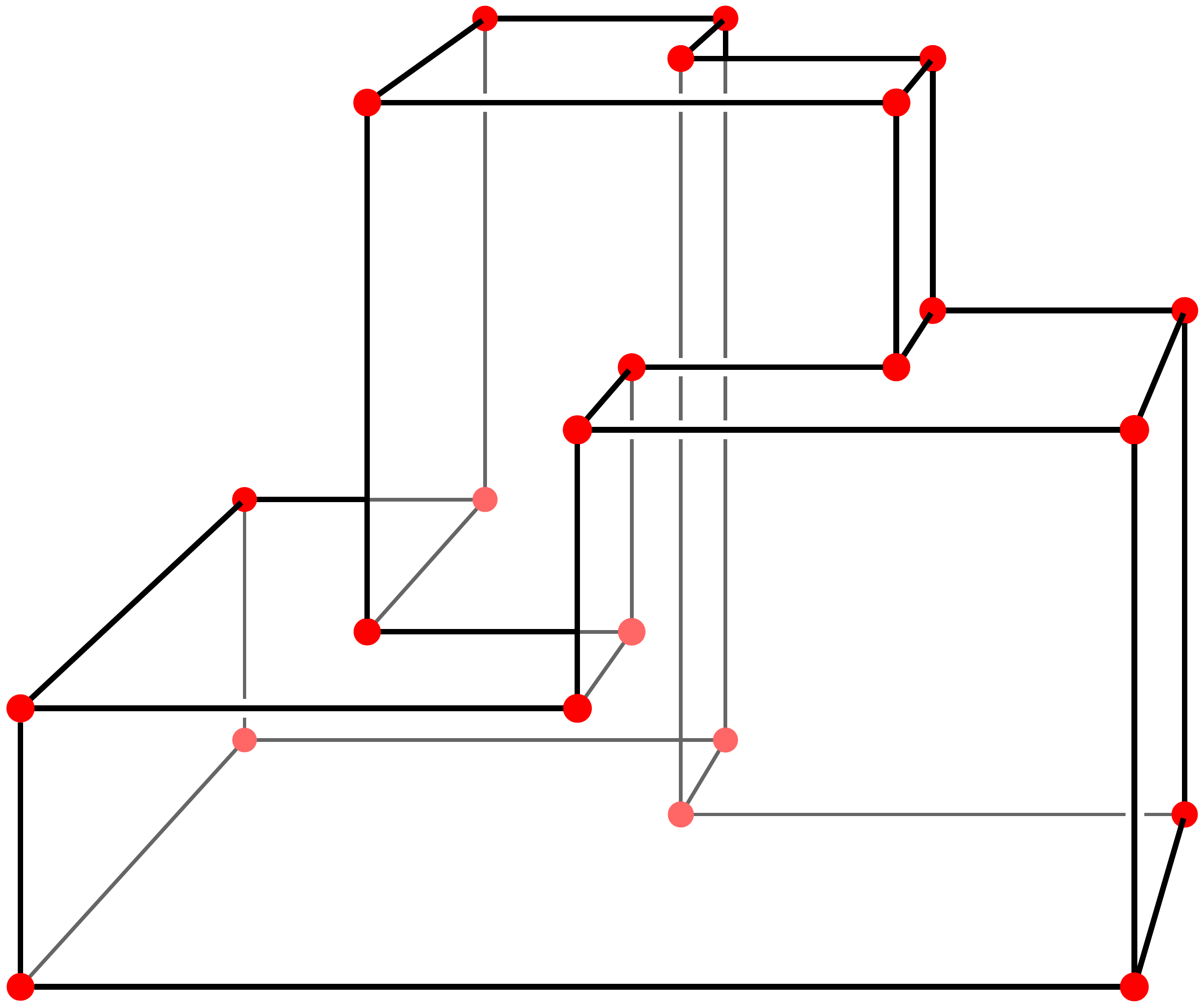}
\caption{Two $xyz$ graph representations of the truncated octahedron, from~\cite{Epp-GD-08}. The first has many edge crossings, while the second forms an orthogonal polyhedron but not a corner poyhedron. The results of this paper provide a corner polyhedron representation of the same graph.}
\label{fig:invertohedron}
\end{figure}

Biedl and Genc~\cite{BieGen-CCCG-04,BieGen-ESA-09} investigated analogues for orthogonal polyhedra of a different result about convex polyhedra, Cauchy's theorem~\cite{Cau-JEP-13} that specifying the shape of each face of a convex polyhedron fixes the shape of a whole polyhedron. In contrast, for nonconvex polyhedra, specifying the shape of each face is enough to fix the volume of the whole polyhedron under continuous motions~\cite{ConSabWal-BAG-97} but there exist flexible nonconvex polyhedra with fixed face shapes and an uncountably infinite number of global configurations~\cite{Con-IHES-77}. Analogously to Cauchy's theorem, fixing the shape of each face is enough to determine the shape of  an orthogonally convex polyhedron~\cite{BieGen-CCCG-04} or more generally of an orthogonal polyhedron with the topology of a sphere~\cite{BieGen-ESA-09}. Although these results concern a different problem, they suggest as do ours that orthogonal polyhedra may be closely analogous to convex polyhedra.

\emph{Rectangular layouts} form an important two-dimensional analogue of orthogonal polyhedra. These are planar drawings of cubic graphs for which each edge is axis-parallel and has no bends, and in which every face (including the outer face) is a rectangle. Rectangular layouts have applications in the visualization of geographic data~\cite{Rai-GR-34}, floorplan layout in architectural design~\cite{EarMar-AGT-79,Rin-EPB-88}, VLSI design~\cite{YeaSar-SJDM-95}, treemap information visualization~\cite{BruHuiWij-DV-00}, and graph drawing~\cite{KanHe-TCS-97}. A plane graph admits a rectangular layout, with a given partition of its outer faces into the sides of an outer rectangle, if and only if a variant of its dual graph (with one dual vertex for each side of the outer rectangle rather than a single vertex for the outer face) is a plane triangulated graph with an exterior quadrilateral and no separating triangles~\cite{KozKin-Nw-85}; a closely related characterization also holds for the cubic graphs that can be drawn on a grid with no bends but without requiring that the faces be rectangles~\cite{RahNisNaz-JGAA-03}. There is a combinatorial bijection between the rectangular layouts of a graph and its \emph{regular edge labelings} or \emph{transversal structures}, (improper) two-colorings of the edges of the dual graph together with an orientation for each edge satisfying certain constraints on the cyclic order in which the colored edges of each orientation meet each dual vertex~\cite{KanHe-TCS-97}. The regular edge labelings of a graph form a distributive lattice~\cite{Fus-GD-05,Fus-DM-08} and this lattice structure has algorithmic applications in finding rectangular layouts with additional properties~\cite{EppMumSpe-SCG-09,EppMum-WADS-09}. In our three-dimensional problem, as in the two-dimensional case, dual separating triangles form an obstacle to embedding. Additionally, our new results use a structure closely related to a regular edge labeling, with three edge colors rather than two, although the local constraints on colorings and orientations are different than in the two dimensional case.

Schnyder~\cite{Sch-SODA-90} developed algorithms for embedding planar graphs with the vertices on an integer grid, but with edges of arbitrary slopes, based on the concept now known as a \emph{Schnyder wood}, a (non-proper) 3-coloring of the edges of a maximal planar graph together with an orientation on each edge, with constraints about how the colors and orientations must be arranged at each vertex. Felsner and Zickfeld~\cite{FelZic-DCG-08} study geometric representations of Schnyder woods as orthogonal surfaces that are very similar to the corner polyhedra we study here. Their representation provides an embedding of the input graph onto the surface found by their representation, but the edges of the embedding do not in general follow the edges of the orthogonal surface. Our results on corner polyhedra also involve colorings and orientations of maximal planar graphs (the dual graphs of the graphs we wish to represent), and our impetus for considering this sort of combinatorial data on a graph came from these two papers as well as from the work on two-dimensional rectangular representations and regular edge labelings. However there seems to be no direct connection between our results and the results of Schnyder, Felsner, and Zickfeld: our colorings are proper and our orientations do not form Schnyder woods, our polyhedral representation is not of the colored and oriented graph but of its dual, and unlike Felsner and Zickfeld we develop a polyhedral representation of a graph in which the graph edges and the polyhedron edges coincide.

More generally there has been a large body of research on planar and spatial embeddings of graphs on low-dimensional grids, or otherwise having a small number of edge slopes. Much of this work allows the edges of the graph to bend in order to follow the edges of the grid, and seeks to minimize the number of bends~\cite{AziBie-GD-04,Kan-WG-93,Tam-SJC-87,Woo-JGAA-03,Woo-TCS-03}; as we show, for the graphs of corner polyhedra, isometrically projection leads to a hexagonal grid drawing with three slopes and only two bends, improving a bound of three bends by Kant~\cite{Kan-WG-93} which however applies more generally to all 3-connected cubic planar graphs.
Graph drawing researchers have also studied the \emph{slope number} of a graph, the minimum number of distinct edge slopes needed to draw the graph in the plane with straight line edges and no bends~\cite{DujEppSud-CGTA-07,DujSudWoo-CGTA-07,PacPal-EJC-06,KesPacPal-CGTA-08,MukSze-CGTA-09}. Every graph of an orthogonal polyhedron, and every $xyz$ graph, has slope number three, since a three-dimensional orthogonal representation may be transformed into a planar drawing with three slopes (allowing edge crossings) by axonometric projection. However, not every graph with slope number three comes from an orthogonal drawing in this way; for instance, $K_3$ has slope number 3 but has no orthogonal drawing, as does Fig.~\ref{fig:nonpolyhedral}.

Both $xyz$ graphs and our polyhedral representations can be viewed as embedding the given graph onto the three-dimensional integer grid, with axis-aligned edges that can have arbitrary lengths. Embedding a graph onto a grid with unit-length edges is NP-complete~\cite{BhaCos-IPL-87}. Embeddings with unit length edges that additionally preserve distances between vertices farther than one unit apart can be found in polynomial time, when they exist, for two- and three-dimensional integer lattices~\cite{Epp-EJC-05} and hexagonal and diamond lattices~\cite{Epp-GD-08-diamond}, but the graphs that have such embeddings (the \emph{partial cubes}) form a restricted subclass of all graphs. The embeddings we consider in this paper only apply to cubic (3-regular) graphs, and of the infinitely many known cubic partial cubes all but one (the Desargues graph) are bipartite polyhedral graphs~\cite{Epp-EJC-06}, and may therefore be represented as orthogonal polyhedra using the algorithms we describe here.

A hint that care is needed in defining orthogonal polyhedra is given by Donoso and O'Rourke~\cite{DonORo-01}. As they show, spherical and toroidal polyhedra in which all faces are rectangles (even allowing adjacent faces to be coplanar) must also have all faces and edges axis-parallel for some orientation of the polyhedron, but there exist higher-genus polyhedra with rectangular faces that have more than three edge orientations. Biedl et al.~\cite{BieDemDem-CCCG-98} define two interesting subclasses of orthogonal polyhedra, which they call \emph{orthostacks} and \emph{orthotubes}; they also consider orthogonal polyhedra that are somewhat more general than ours, in that they allow the graph of the polyhedron to be disconnected (resulting in faces that are not simple polygons) while we do not.

The bipartite cubic polyhedral graphs and their dual graphs, the Eulerian triangulations, have also been studied independently of their geometric representations, in connection with Barnette's conjecture that all bipartite cubic polyhedral graphs are Hamiltonian~\cite{Bar-RPC-69}. Batagelj, Brinkmann and McKay~\cite{Bat-CGT-87,BriMcK-CMCC-07} described a method for reducing any Eulerian triangulation to a simpler graph in the same class, based on which Brinkmann and McKay show how to efficiently generate all sufficiently small Eulerian triangulations; they also generate 4-connected Eulerian triangulations by filtering them from the larger set of all Eulerian triangulations. Our proof that 4-connected Eulerian triangulations are dual to corner polyhedra uses a different reduction scheme that remains within the class of 4-connected Eulerian triangulations.

\section{Corner polyhedra and rooted cycle covers}
\label{sec:corner}

\begin{figure}[t]
\centering\includegraphics[height=2.75in]{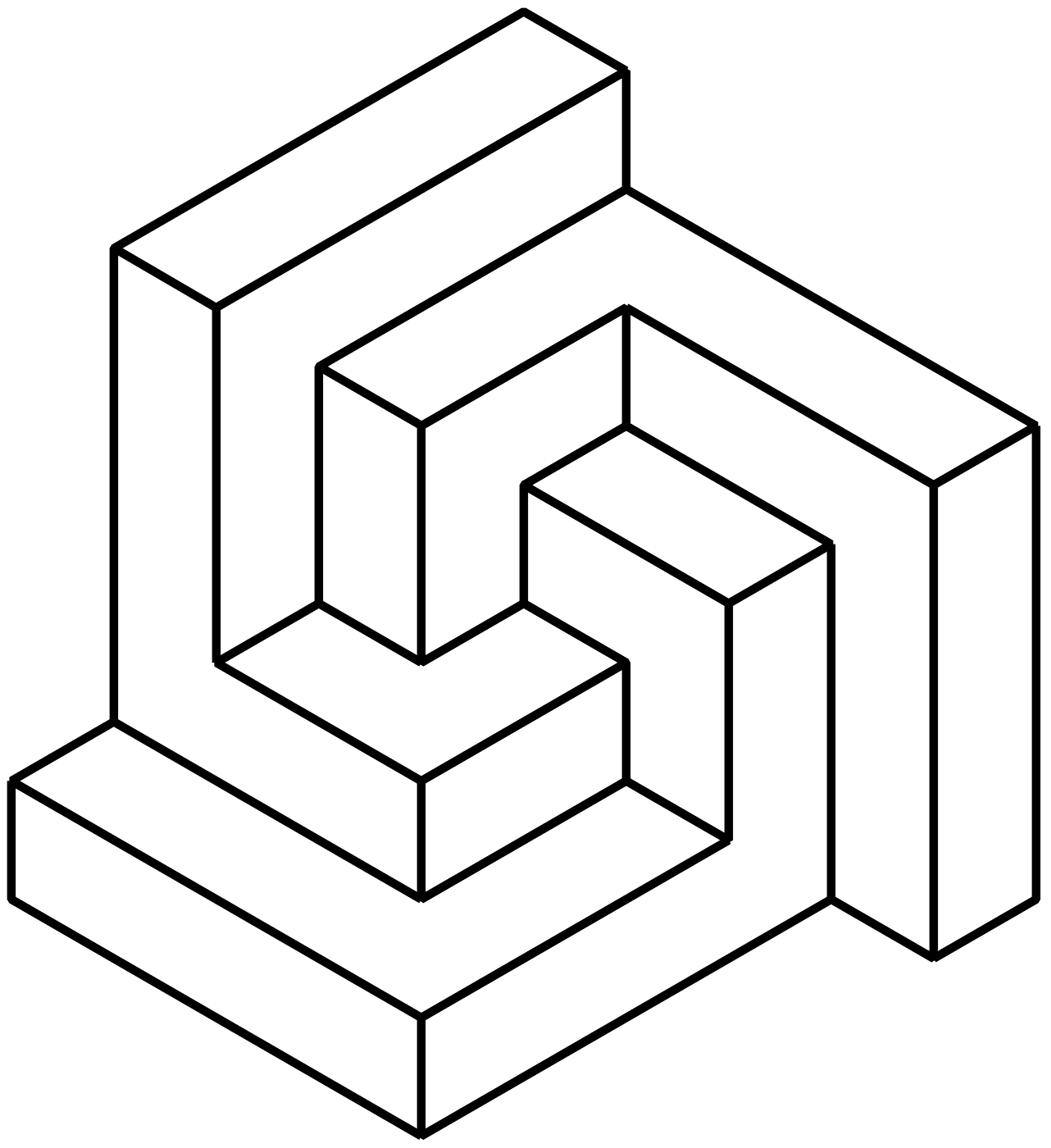}\qquad\qquad
\includegraphics[height=2.75in]{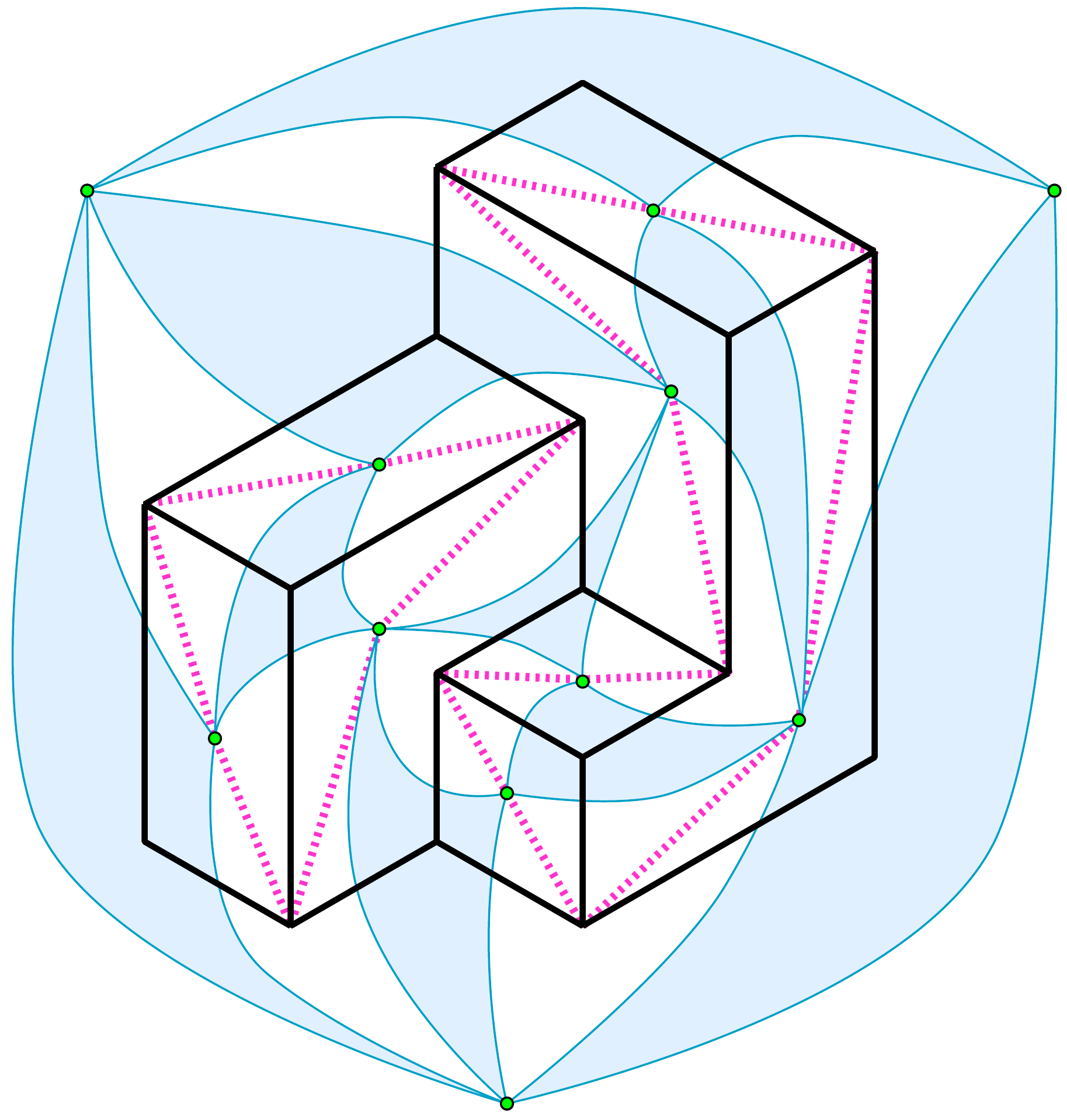}
\caption{Left: isometric projection of a corner polyhedron for a truncated rhombic dodecahedron. Right: connecting each sharp corner of a visible face to the dual vertex of the face produces a rooted cycle cover.}
\label{fig:corners}
\end{figure}

As stated in the introduction, we define a corner polyhedron to be a simple orthogonal polyhedron with the additional property that three faces (the \emph{back faces}) are oriented towards the vector $(-1,-1,-1)$, and all remaining faces (the \emph{front faces}) are oriented towards the vector $(1,1,1)$. The three back faces necessarily share a vertex, the \emph{hidden vertex}. Parallel projection of a corner polyhedron onto a plane perpendicular to the vector $(1,1,1)$ gives rise to a drawing, the so-called \emph{isometric projection}, in which the axis-parallel edges of the three-dimensional polyhedron are mapped to three sets of parallel lines that form angles of $\pi/3$ with respect to each other; see Fig.~\ref{fig:corners} (left) for an example. If the edges of the corner polyhedron have integer lengths, the resulting isometric projection is a drawing of all of the vertices of the polyhedron, except the hidden vertex, on the hexagonal lattice. It is possible to include the hidden vertex as well by connecting it to its three neighbors by lattice paths, one of which is straight and the other two have one bend each; the resulting drawing of the whole graph has two bends.

In an isometric drawing of a corner polyhedron, none of the faces can have an interior angle of $5\pi/3$, for any simple orthogonal polyhedron with such a projected angle would have more than three back faces. Additionally, each face (being the projection of a planar orthogonal polygon) has edges of only two of the three possible slopes. Therefore, each face of the drawing has the shape of a \emph{double staircase}: there are two vertices at which the interior angle is $\pi/3$, and the two sequences of interior angles on the paths between these vertices alternate between interior angles of $2\pi/3$ and $4\pi/3$.
(Conversely, it follows by Thurston's results on height functions~\cite{Thu-AMM-90} that a drawing in the hexagonal lattice for which all faces have this shape comes from a three-dimensional orthogonal surface.)
Because a simple orthogonal polyhedron forms a planar graph in which all faces have an even number of edges, it must be bipartite; one of its two color classes consists of the vertices having sharp face angle, and the other color class contains all the other vertices.

The two-to-one correspondence between interior faces and vertices with sharp angles gives rise to an important structure on the graph of the polyhedron, which we find simpler to describe in terms of its dual graph. The dual graph of a corner polyhedron (as with a bipartite cubic polyhedral graph more generally) is an \emph{Eulerian triangulation}, a maximal planar graph in which every vertex has even degree. It has a unique planar embedding, for which all the faces are triangles; the triangles may be two-colored so that the two triangles that share each edge have different colors. Within each interior face of the projected corner polyhedron, we connect the dual vertex to its two sharp corners. The result of forming these connections is a structure that we call a \emph{rooted cycle cover}: a set of vertex-disjoint cycles in the dual Eulerian triangulation, that cover every dual vertex except for the three vertices of the \emph{root triangle} dual to the hidden vertex, and that include exactly one edge from every triangle with the same color as the root triangle. Conversely, as we show, every rooted cycle cover of an Eulerian triangulation gives rise to a corner polyhedron representation of its dual graph. This equivalence between a combinatorial structure (a rooted cycle cover) and a geometric structure (a corner polyhedron) is a key component of our characterization of the graphs of corner polyhedra.

Specifically, we prove the following results:

\begin{theorem}
\label{thm:corner-cover}
A graph $\G$ can be represented as a corner polyhedron, with a specified vertex $v$ as the single hidden vertex, if and only if the dual graph of $\G$ has a cycle cover rooted at the triangle dual to $v$.
\end{theorem}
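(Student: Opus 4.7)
The plan is to prove both directions of the equivalence by exploiting the correspondence between the isometric projection of a corner polyhedron and its double-staircase face structure. In one direction a rooted cycle cover is read off a given polyhedron; in the other, a given cover is used to construct a hexagonal-lattice drawing which is then lifted to three dimensions via the Thurston height-function theorem mentioned in the excerpt.

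For the forward direction (polyhedron to cover), I take a corner polyhedron realization of $\G$ with hidden vertex $v$ and project it isometrically. Inside each front face $f$ I join the dual vertex of $f$ to its two sharp corners by arcs, and glue these arcs together at primal vertices shared between faces; the resulting set of dual edges is the candidate subgraph $C$. I then need to verify three claims, all via combinatorics of the double-staircase shape: (a) every non-root dual vertex (front face) has degree exactly two in $C$, because it has exactly two sharp corners; (b) every sharp primal vertex is shared by exactly two front faces that meet it with a sharp angle, which follows from enumerating the triples of hexagonal face angles summing to $2\pi$ at each non-hidden primal vertex; and (c) each dual triangle of the root color contributes exactly one edge to $C$, which follows from the bipartite structure of $\G$, with sharp primal vertices forming one color class of the primal bipartition and non-sharp vertices forming the other.

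For the reverse direction (cover to polyhedron), I start with a rooted cycle cover $C$ of the dual Eulerian triangulation rooted at the triangle dual to $v$. The cycles of $C$ determine, for every front face, which two of its boundary primal vertices should be its sharp corners; every other corner is prescribed interior angle $2\pi/3$ or $4\pi/3$ according to the color of the dual triangle at the corresponding primal vertex relative to the root color. I then construct a hexagonal-lattice drawing in which every face has the prescribed double-staircase shape, by assigning positive integer lengths to the primal edges so that each face's closure equations (one per face per axis direction) are satisfied. Once the planar drawing is built, Thurston's theorem on height functions of orthogonal surfaces lifts it to a three-dimensional orthogonal surface whose graph is $\G$, with the three faces meeting at the image of $v$ realized as the three back faces.

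The hardest step will be showing that the edge-length system in the reverse direction is feasible and yields a non-self-overlapping drawing. I expect this to require a circulation-style argument on the dual graph in which the rooted cycle cover provides a consistent way of balancing horizontal, $60^\circ$, and $-60^\circ$ displacements across all closed circuits, so that the three height functions associated with the three axis directions are well-defined and single-valued on the primal vertices; an inductive reduction on the Eulerian triangulation, similar in spirit to the Batagelj--Brinkmann--McKay scheme referenced in the excerpt, may be the cleanest way to carry this through.
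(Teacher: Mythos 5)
Your forward direction (polyhedron to cover) matches the paper's argument closely: project isometrically, observe that interior angles are restricted to $\{\pi/3, 2\pi/3, 4\pi/3\}$, show each face is a double staircase, show sharp corners occur two per face and two per primal vertex of the appropriate bipartition class, and conclude that the sharp-corner structure dualizes to a rooted cycle cover. This is exactly the route of Appendix~II, and the three verification claims you list correspond to the paper's Lemmas on angle sums, vertex coloring from angles, and the ``one cover edge per white triangle'' property.

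The reverse direction is where the proposal diverges and where there is a genuine gap. You propose to (i) use the cycle cover to prescribe an interior angle at every face corner, (ii) solve a linear edge-length system to obtain a non-self-overlapping hexagonal lattice drawing with those prescribed angles, and (iii) lift via Thurston. The paper instead never constructs a planar drawing: it translates the cycle cover into a \emph{regular edge labeling} of the dual Eulerian triangulation (a rainbow partition plus an edge orientation that alternates around each vertex except within two white triangles), shows each monochromatic subgraph and each bichromatic graph $\Delta_{xy}$ is $st$-planar, and reads off each face-plane coordinate directly from an $st$-numbering. Your step~(ii) is precisely where the paper spends Appendices~III and~IV, and the proposal offers no argument for it beyond the hope that ``a circulation-style argument'' will close the system consistently and without self-overlap. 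That hope does not yet constitute a proof: the closure equations around faces, the single-valuedness of three height functions on all of $\G$, and the non-overlap of the resulting drawing are all nontrivial (the paper handles the last of these with the ``fat double staircase'' condition and the global consistency of $st$-numberings, neither of which has an obvious analogue in the circulation framing). Moreover, the Batagelj--Brinkmann--McKay style reduction you invoke as a possible backup is used by the paper for a different purpose, namely to prove that 4-connected Eulerian triangulations \emph{have} cycle covers (Theorem~\ref{thm:4-connected}); to use an inductive reduction here you would additionally need to show how a cycle cover is transported across each reduction step, which is an extra claim you do not address. In short, the forward direction is fine, but the reverse direction as written has a hole exactly where the theorem's difficulty lies.
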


\begin{theorem}
\label{thm:4-connected}
If $\G$ is a cubic bipartite polyhedral graph with a 4-connected dual, then it can be represented as a corner polyhedron.
\end{theorem}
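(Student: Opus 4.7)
The plan is to reduce Theorem~\ref{thm:4-connected} to a purely combinatorial claim using Theorem~\ref{thm:corner-cover}: it suffices to show that every 4-connected Eulerian triangulation $D$ admits a rooted cycle cover at some (in fact every) choice of root triangle. I would prove this by induction on $|V(D)|$. The base case is the octahedron, the unique 4-connected Eulerian triangulation on six vertices. Fixing any triangle $t$ as root, the three non-$t$ vertices form the unique opposite face of $D$, whose bounding $3$-cycle uses exactly one edge of each of the three non-root triangles of $t$'s color class, so it is a rooted cycle cover; the vertex-transitivity of the octahedron then gives one for every root.

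For the inductive step I would design a reduction that sends $D$ to a strictly smaller 4-connected Eulerian triangulation $D'$ by a local surgery around a degree-$4$ vertex. Such a vertex always exists, because Euler's formula forces the average vertex degree of an Eulerian triangulation to be strictly less than $6$ while 4-connectivity forces minimum degree at least $4$. The four neighbors of a degree-$4$ vertex $v$ form a $4$-cycle, and the two pairs of opposite neighbors are non-adjacent, as otherwise a separating triangle would contradict 4-connectivity. The natural move is to delete $v$ and identify one pair of its opposite neighbors, producing a smaller Eulerian triangulation $D'$. The obstacle is that this contraction need not preserve 4-connectivity; in the spirit of the Batagelj/Brinkmann--McKay reductions cited in the introduction, a local case analysis is required to show that whenever $|V(D)|>6$ there is always a degree-$4$ vertex and a choice of opposite pair whose identification yields a 4-connected $D'$. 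As the paper emphasizes, this scheme has to be tailored to stay within the 4-connected Eulerian class rather than borrowed from the standard Eulerian-triangulation reductions.

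The final and most delicate step is to lift a rooted cycle cover of $D'$ back to a rooted cycle cover of $D$ across the reduction. Using the stronger inductive hypothesis, I would place the root of $D'$ outside the contracted patch, so that the inductively supplied cover of $D'$ meets the boundary of that patch in a controlled pattern. The extension is then purely local: a constant-size case analysis on the restriction of the $D'$-cover to the boundary edges of the collapsed region shows that each such boundary pattern can be completed, inside the reinserted local patch, to a $2$-regular subgraph on the interior vertices selecting exactly one edge from each interior same-colored triangle. I expect this local lifting lemma, together with the coupled requirement that the reduction scheme produce only boundary patterns that are actually liftable, to be the main technical obstacle, and precisely the reason the reduction must be chosen to stay inside the 4-connected Eulerian class rather than using off-the-shelf reductions that might yield boundary data inconsistent with any choice of interior cover.
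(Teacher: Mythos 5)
Your high-level strategy — reduce Theorem~\ref{thm:4-connected} via Theorem~\ref{thm:corner-cover} to the existence of a rooted cycle cover for every $4$-connected Eulerian triangulation, prove that by induction, and lift cycle covers back across a local reduction — is exactly the architecture of the paper's argument (Appendices~V and~VI). But the central claim in your inductive step is false: it is \emph{not} the case that every $4$-connected Eulerian triangulation on more than six vertices has a degree-$4$ vertex $v$ and a pair of opposite neighbors of $v$ whose identification (after deleting $v$) yields a $4$-connected Eulerian triangulation. The paper's Lemma~\ref{lem:decompose} has to work with a \emph{system of four} reduction rules precisely because this single operation can be blocked on both opposite pairs simultaneously: if there is a two-edge-plus-one-edge path $aefc$ joining one opposite pair $\{a,c\}$ (so contracting there creates a separating triangle) \emph{and} a similar obstruction $bged$ for the other pair $\{b,d\}$, those two paths must cross by planarity, and the argument then produces a \emph{separating $4$-cycle}. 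That is an object you cannot handle by your single contraction move; you must instead split the triangulation along the $4$-cycle into two smaller $4$-connected Eulerian triangulations (the paper's Rules~I and~II, distinguished by whether the $4$-cycle is monochromatic or bichromatic in the rainbow partition, which changes how the cycle covers are glued back). The paper also needs a separate rule for a pair of adjacent degree-$4$ vertices (Rule~III), and the induction bottoms out at two base graphs, not one: the octahedron $\Delta_6$ and the $11$-vertex graph $\Delta_{11}$ of Fig.~\ref{fig:rule3}, neither of which admits any of the four reductions. Your proposal mentions only the octahedron.

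Concretely, the gap is that ``delete a degree-$4$ vertex and merge an opposite pair'' is not a complete reduction scheme for $4$-connected Eulerian triangulations, and no amount of local case analysis will make it one — the separating-$4$-cycle obstruction genuinely requires a split operation that produces \emph{two} smaller triangulations rather than one, and the cycle-cover lifting for a split (paper's Lemmas~\ref{lem:cc1} and~\ref{lem:cc2}) is qualitatively different from lifting across a vertex contraction (Lemma~\ref{lem:cc4}): you must merge one cycle from each side of the split and argue that the merged set still selects exactly one edge per white triangle. Your ``constant-size local case analysis'' description of the lift is reasonable for the contraction rule alone, but once the $4$-cycle splits are in the picture the lift is not purely local to a patch around a single vertex. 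To repair the proposal you would essentially need to rediscover the paper's four-rule decomposition, the classification of irreducible graphs ($\Delta_6$, $\Delta_{11}$), and the four corresponding cycle-cover reconstruction lemmas.
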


If $\Delta$ is an Eulerian triangulation (the dual to a cubic bipartite planar graph),
with a chosen root triangle $\delta$, then we may uniquely two-color the triangles of $\Delta$ so that any two adjacent triangles are adjacent. For any separating triangle $\gamma$ of $\Delta$, this coloring will assign equal colors to the three triangles that are on the side of $\gamma$ that does not contain $\delta$ and that are incident to one of the edges of $\gamma$. We say that $\gamma$ has \emph{even parity} if these three triangles have the same color as $\delta$, and \emph{odd parity} otherwise.

\begin{theorem}
\label{thm:corner-characterization}
If $\G$ is a cubic bipartite graph with a non-4-connected dual, and $v$ is any vertex of $\G$, then $\G$ has a corner representation for which $v$ is the hidden vertex if and only if all separating triangles have odd parity with respect to the root triangle dual to $v$.
\end{theorem}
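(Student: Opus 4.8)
The plan is to reduce to Theorems~\ref{thm:corner-cover} and~\ref{thm:4-connected}. By Theorem~\ref{thm:corner-cover} the claim is equivalent to the following statement about the dual Eulerian triangulation $\Delta$ of $\G$: $\Delta$ has a cycle cover rooted at the triangle $\delta$ dual to $v$ if and only if every separating triangle of $\Delta$ has odd parity with respect to $\delta$. Fix the two-coloring of the triangles of $\Delta$ in which $\delta$ is black. The backbone of the argument is the decomposition of $\Delta$ along its separating triangles: cutting at a separating triangle $\gamma$ produces an \emph{outer} triangulation (everything on the side of $\gamma$ containing $\delta$, with $\gamma$ added as a face) and an \emph{inner} triangulation (everything enclosed by $\gamma$, with $\gamma$ as its outer face). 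Both pieces are again Eulerian triangulations: the recalled fact that the three triangles inside $\gamma$ incident to its edges all receive one color forces each vertex of $\gamma$ to have an even number of neighbours strictly inside $\gamma$, hence also strictly outside, so no vertex degree changes parity. Iterating expresses $\Delta$ as a tree of $4$-connected Eulerian triangulations, one containing $\delta$ and each of the others attached to its parent along a separating triangle. Colour each piece so that it agrees with $\Delta$ on the triangular faces they share; then the separating triangle along which a non-root piece is attached (now that piece's outer face) receives the color opposite to the common $\Delta$-color of the three triangles inside it incident to its edges, and hence receives the color of $\delta$ precisely when that separating triangle has odd parity.

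For the ``if'' direction, assume every separating triangle of $\Delta$ has odd parity. Apply the construction behind Theorem~\ref{thm:4-connected} to each piece, obtaining a cycle cover of that piece rooted at the separating triangle along which it is attached (and rooted at $\delta$ for the piece containing $\delta$); here I use that the $4$-connected reduction scheme can be made to produce a cycle cover rooted at any prescribed triangle, which I expect to follow from the proof of Theorem~\ref{thm:4-connected} but which should be checked. Form the union of these covers as a set of edges of $\Delta$. A cycle cover rooted at a triangle leaves the three vertices of that triangle uncovered and hence uses none of its edges, so no inner piece's cover touches the separating triangle joining it to its parent; the union is therefore conflict-free, every vertex of $\Delta$ outside $\delta$ gets degree two (from the unique piece where it is interior), the three vertices of $\delta$ get degree zero, and so the union is a disjoint family of cycles covering exactly the required vertices. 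Finally, every triangle of $\Delta$ lies in a unique piece, and the odd-parity hypothesis makes the root color of every piece equal to the global black color; hence the union contains exactly one edge of each black triangle other than $\delta$, and it is a cycle cover of $\Delta$ rooted at $\delta$.

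For the ``only if'' direction, suppose $\Delta$ has a cycle cover $C$ rooted at $\delta$ but some separating triangle $\gamma$ has even parity, so that the three triangles inside $\gamma$ incident to its edges are black. The heart of the matter is a counting constraint: in the inner triangulation enclosed by $\gamma$ the outer face $\gamma$ is then colored white, so strictly inside $\gamma$ there is one more black face than white face; on the other hand, examining how $C$ meets this region — each of the three vertices of $\gamma$ has degree two in $C$, its two incident $C$-edges split among the edges of $\gamma$, the edges reaching inward, and the edges reaching outward, and every black face inside $\gamma$ carries exactly one $C$-edge — and comparing the number of $C$-edges inside $\gamma$ obtained by summing over the black faces against the number obtained by summing over the white faces, together with a parity constraint on how many of $C$'s edges cross the closed curve $\gamma$, yields a contradiction. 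I expect this last step to be the main obstacle: it is not settled by a simple cardinality count, since both color classes of the inner triangulation have the same size, so it needs a genuinely structural argument about how a family of disjoint cycles can thread through a separating triangle, plus an easy separate treatment of the degenerate case in which $\gamma$ shares a vertex or an edge with $\delta$. One can alternatively argue this direction geometrically on the corner polyhedron itself, where an even-parity separating triangle would force the isometric projection to have a face with a $5\pi/3$ interior angle or a fourth face oriented towards $(-1,-1,-1)$, neither of which is possible. Combining the two directions with Theorem~\ref{thm:corner-cover} completes the proof.
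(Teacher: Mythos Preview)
Your ``if'' direction is essentially the paper's argument: split $\Delta$ along all separating triangles into $4$-connected pieces, apply the $4$-connected case to each piece with the appropriate root triangle, and take the union of the resulting covers. Your observation that the union is conflict-free because a rooted cycle cover avoids the vertices of its root triangle is exactly the point. (Your caveat about needing the $4$-connected case to work for \emph{any} prescribed root triangle is correct and is indeed what Theorem~\ref{thm:4-connected}'s proof delivers.)

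The ``only if'' direction, however, has a genuine gap. You set up the right framework and then talk yourself out of it: ``it is not settled by a simple cardinality count, since both color classes of the inner triangulation have the same size''. In fact a simple count \emph{does} settle it --- you are just counting the wrong thing. Instead of counting $C$-edges in the inner piece $\Delta'$, count \emph{vertex--edge incidences} between vertices of $\Delta'$ and edges of $\Delta'\cap C$. If $\Delta'$ has $k$ vertices, then (in the paper's convention, where the root triangle is white) $\Delta'$ has exactly $k-2$ white faces, each carrying one edge of $C$, and every edge of $\Delta'\cap C$ arises this way; so there are $k-2$ such edges and $2(k-2)$ incidences. On the other hand, each of the $k-3$ vertices interior to $\gamma$ has both of its $C$-edges inside $\Delta'$, contributing $2(k-3)$ incidences, so the three vertices $u,v,w$ of $\gamma$ together contribute exactly $2k-4-2(k-3)=2$ incidences. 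But the three white faces of $\Delta'$ sharing an edge with $\gamma$ each have two of their three vertices in $\{u,v,w\}$, so each of their three (distinct) $C$-edges is incident to at least one of $u,v,w$; that forces at least $3$ incidences, a contradiction. No parity argument about cycles crossing $\gamma$ is needed, and the degenerate case where $\gamma$ shares a vertex with $\delta$ requires no separate treatment: the incidence count goes through verbatim, since the lower bound of $3$ comes from the three inner white faces, not from the individual degrees of $u,v,w$.
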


In the rest of this section, we provide a rough sketch of our proofs of the claims stated above. The detailed proof of Theorem~\ref{thm:corner-cover} comprises Appendices II, III, and~IV, the proof of Theorem~\ref{thm:4-connected} is in Appendices V and~VI, and the proof of Theorem~\ref{thm:corner-characterization} is in Appendix~VII.

Given a corner polyhedron representation of a graph $\G$, we form a rooted cycle cover of the dual of $\G$ as described above. The graph of the polyhedron has a unique 3-edge-coloring given by the orientations of its edges. These colors may be carried over to the dual Eulerian triangulation $\Delta$, but in $\Delta$ they do not form an edge coloring but rather a \emph{rainbow partition}, a partition of the edges of $\Delta$ into three monochromatic subgraphs such that each triangle of $\Delta$ participates in each of these subgraphs. Each monochromatic subgraph must be biconnected, which implies that it can be oriented as an $st$-planar graph with the two terminals on the chosen root triangle. From the corner polyhedron representation we may also determine an orientation for each edge of $\Delta$ based on which of the two adjacent primal faces is on which side of the edge in the isometric drawing; this orientation can be shown to be simultaneously $st$-planar in each monochromatic subgraph, and the requirement that each face of the corner polyhedron is a double staircase may be translated via planar duality into local consistency conditions on the orientations of the edges incident to each dual vertex.

Conversely, given a cycle cover, we can use it to define an orientation on the edges of the dual Eulerian triangulation, in which the orientations of the edges alternate around each vertex except within two triangles incident to the vertex, which both have the same color as the root triangle in the two-coloring of triangles dual to the bipartition of the input graph. We call a rainbow partition together with an orientation having this property a \emph{regular edge labeling}. Based on this property, we can show that a regular edge labeling is simultaneously $st$-planar in each monochromatic subgraph, and that unions of two monochromatic subgraphs (with the orientation reversed in one of the two) are again $st$-planar.
By numbering the vertices of each of these bichromatic subgraphs consistently with the $st$-planar orientation, and using these numbers as the coordinates of the face planes of $\G$, we may construct a representation of $\G$ as a corner polyhedron. It follows from this construction that $\G$ is the graph of a corner polyhedron if and only if $\G$ has a rooted cycle cover.

We show that every 4-connected Eulerian triangulation can be decomposed into smaller 4-connected Eulerian triangulations using three operations: splitting the graph on a 4-cycle, removing a pair of adjacent degree-four vertices, and collapsing two opposite edges of a degree-four vertex. We use this structure as the basis for a proof by induction that every 4-connected Eulerian triangulation has a cycle cover. Therefore, by the above equivalence between cycle covers and corner polyhedron representations, every cubic bipartite graph with a 4-connected dual can be represented as a corner polyhedron.

If a bipartite cubic graph does not have a 4-connected dual, but all of its separating triangles have odd parity, we may split the dual graph at all of its separating triangles, find a cycle cover separately for each subgraph created by this splitting process, and form a cycle cover of the original graph as the union of these separate cycle covers. On the other hand, if there exists a separating triangle with the same parity as the chosen root vertex, we show that no cycle cover can exist.

\section{$xyz$ polyhedra}
\label{sec;xyz}

In our previous paper~\cite{Epp-GD-08} we defined an \emph{$xyz$ graph} to be a cubic graph embedded in three dimensional space, with axis parallel edges, such that the line through each edge passes through no other vertices of the graph. We can extend this definition to an \emph{$xyz$ polyhedron}, a simple orthogonal polyhedron whose skeleton forms an $xyz$ graph.
Alternatively, we can consider a weaker definition: a \emph{singly-intersecting} simple orthogonal polyhedron is a simple orthogonal polyhedron with the property that, for any two faces with a nonempty intersection, their intersection is a single line segment.  Geometrically, the intersection of two faces lies along the line of intersection of their planes, so a singly-intersecting polyhedron must be an $xyz$ polyhedron, but not necessarily vice versa. However, the graphs of the two classes of polyhedra are the same: by perturbing the face planes of a singly-intersecting polyhedron, one may obtain an $xyz$ polyhedron that represents the same graph.

As we showed in our previous paper, a planar $xyz$ graph must be 3-connected and bipartite, and the same results hold for $xyz$ polyhedra. Our main result is a converse to this:

\begin{theorem}
\label{thm:xyz}
The following three classes of graphs are equivalent:
\begin{itemize}
\item Cubic 3-connected bipartite planar graphs,
\item Graphs of $xyz$ polyhedra, and
\item Graphs of singly-intersecting simple orthogonal polyhedra.
\end{itemize}
\end{theorem}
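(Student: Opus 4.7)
The plan is to close a cycle of three implications among the three classes. Two are short. Every singly-intersecting simple orthogonal polyhedron is $xyz$: the intersection of any two faces lies along the axis-parallel line where their planes meet, so three vertices on a common axis-parallel line would force a pair of perpendicular faces containing segments of this line to meet in more than one component, violating the singly-intersecting property. The graph of any $xyz$ polyhedron is cubic, 3-connected, bipartite, and planar, by applying the corresponding result on general planar $xyz$ graphs from~\cite{Epp-GD-08} to the polyhedral skeleton. Finally, as noted in the text preceding Theorem~\ref{thm:xyz}, a generic perturbation of the face-plane coordinates of any $xyz$ polyhedron yields a singly-intersecting realization of the same graph.

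The substantive implication is that every cubic 3-connected bipartite planar graph $\G$ is realized by an $xyz$ polyhedron. The plan is induction on the number of separating triangles of the dual Eulerian triangulation $\Delta$. In the base case $\Delta$ is 4-connected, and Theorem~\ref{thm:4-connected} gives a corner polyhedron realization; choosing coordinates generically within the $st$-planar numberings used in the proof of Theorem~\ref{thm:corner-cover} makes this realization $xyz$, since no axis-parallel line through a vertex will contain a third vertex.

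For the inductive step, let $\gamma$ be a separating triangle of $\Delta$. Splitting $\Delta$ along $\gamma$ yields two smaller Eulerian triangulations $\Delta', \Delta''$, each containing $\gamma$ as a face. Dually, this cuts $\G$ along a 3-edge cut whose three edges lie in three mutually perpendicular axis directions, and introduces a cap vertex on each side incident to the three dangling edge-ends, producing smaller cubic 3-connected bipartite planar graphs $\G', \G''$. By induction each admits an $xyz$ realization $P', P''$ in which the cap vertex is an orthogonal corner with three mutually perpendicular axis-parallel edges. To assemble a realization of $\G$, one glues $P'$ and $P''$ along a rectilinear corner patch surrounding the cap vertex of each piece. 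Because independent reflection and axis-by-axis rescaling of the two pieces preserves the $xyz$ class while changing corner orientations and proportions, the two cap patches can always be aligned compatibly, sidestepping the parity obstruction of Theorem~\ref{thm:corner-characterization} that applies in the more rigid corner-polyhedron setting.

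The main obstacle is the geometric consistency of this gluing. A naive identification of the two cap vertices at a single geometric point produces a pinch where three axis-parallel edges would cross at a non-vertex, which is forbidden in a simple orthogonal polyhedron. The resolution is to perform corner-patch surgery rather than point identification: excise from each piece the intersection of the polyhedron with a small axis-aligned cube around its cap vertex, which is bounded by a hexagonal rectilinear curve, and glue the two pieces along these matching hexagons after reflecting one so that the interiors of $P'$ and $P''$ end up in disjoint regions of space. Verifying that the three pairs of cap faces fuse into simple rectilinear polygons, that the rescaling can be carried out without introducing three-vertex axis-parallel collinearities elsewhere in the assembled polyhedron, and that the resulting surface is a topological sphere with simply-connected faces, is the delicate part of the argument; a final generic perturbation then restores the singly-intersecting property.
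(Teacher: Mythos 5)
Your overall architecture (three short implications plus an induction on the number of separating triangles in the dual Eulerian triangulation) matches the paper's, and your base case, the perturbation argument relating $xyz$ and singly-intersecting polyhedra, and the reference to Theorem~\ref{thm:4-connected} are all in order. The gap is in the inductive gluing step, and it is a real one.

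First, the paper does \emph{not} split on an arbitrary separating triangle and recurse on both sides: it chooses the separating triangle so that one of the two pieces, $\Delta_2$, is vertex-minimal, and hence has no further separating triangles. That makes $\G_2$ the graph of a \emph{corner polyhedron}, which is the whole point: a corner polyhedron with its hidden vertex deleted is a small ``cap'' sitting inside the positive orthant, with its three back faces lying in the coordinate planes. That cap can then be inserted at the vertex $t$ of $P_1$ dual to the separating triangle by forming, in each of the three face planes incident to $t$, the symmetric difference of the corresponding face of $P_1$ and the corresponding back face of $P_2$. This works no matter whether $t$ is a convex corner, a concave corner, or a saddle in $P_1$, because the operation either shaves a small piece off the solid near $t$ or adds one, and only the local right-angle corner of each face of $P_1$ is modified. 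Your worry about a ``pinch where three axis-parallel edges would cross at a non-vertex'' does not actually arise: after shrinking $P_2$, the three merged edges $a_ib_i$ are subsegments of $a_it$ and do not pass through $t$ at all.

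Second, your cube-excision surgery has a genuine obstruction that you paper over. For the two hexagonal cuts to fuse into a valid closed orthogonal surface, the solid-octant configurations of $P'$ at $v'$ and of $P''$ at $v''$ must be \emph{complementary}: if $v'$ is $k$-solid then $v''$ must be $(8-k)$-solid after alignment. But ``independent reflection and axis-by-axis rescaling'' cannot change $k$ --- a reflection permutes the eight octants and a rescaling does nothing to them --- so a convex $1$-solid cap can never be turned into a concave $7$-solid cap this way. A concrete failure: split a nine-vertex $\Delta$ along its unique separating triangle into two copies of the octahedron $\Delta_6$; both $\G'$ and $\G''$ are cube graphs, whose only $xyz$ realizations are rectangular boxes, and every vertex of a box is $1$-solid, so no reflection makes the two caps complementary. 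Yet $\G$ is realizable (a box with a small box removed from one corner). The paper's construction produces exactly this shape because the second piece is treated as a cap to be carved out of (or added onto) the first, never as an independent closed polyhedron glued along a separating curve. In short, the missing ideas are (i) the minimal-side choice that makes one piece a corner polyhedron and (ii) the vertex-replacement-by-face-symmetric-difference gluing; without them the hexagonal surgery as you describe it does not go through.
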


The idea of the proof is to use induction on the number of separating triangles in the dual Eulerian triangulation. If there are no separating triangles, the given graph has a corner polyhedron representation and we are done. Otherwise, we find a separating triangle that splits the dual graph into two smaller Eulerian triangulations, one of them four-connected. By induction, the other one has a polyhedral representation, and we can replace one vertex of this polyhedron by a very small copy of a corner polyhedron representing the other split component, forming a representation of the overall polyhedron.

The repeated replacement of polyhedron vertices by small corner polyhedra may eventually lead to features of exponentially small size, but this issue can be sidestepped by replacing the coordinates of the faces by small integers, leading to a polyhedral representation in which all vertex coordinates are integers in the interval $[1,n/4]$.

The full proof is in Appendix~VIII.

\section{Simple orthogonal polyhedra}

\begin{figure}[t]
\centering\includegraphics[width=2.5in]{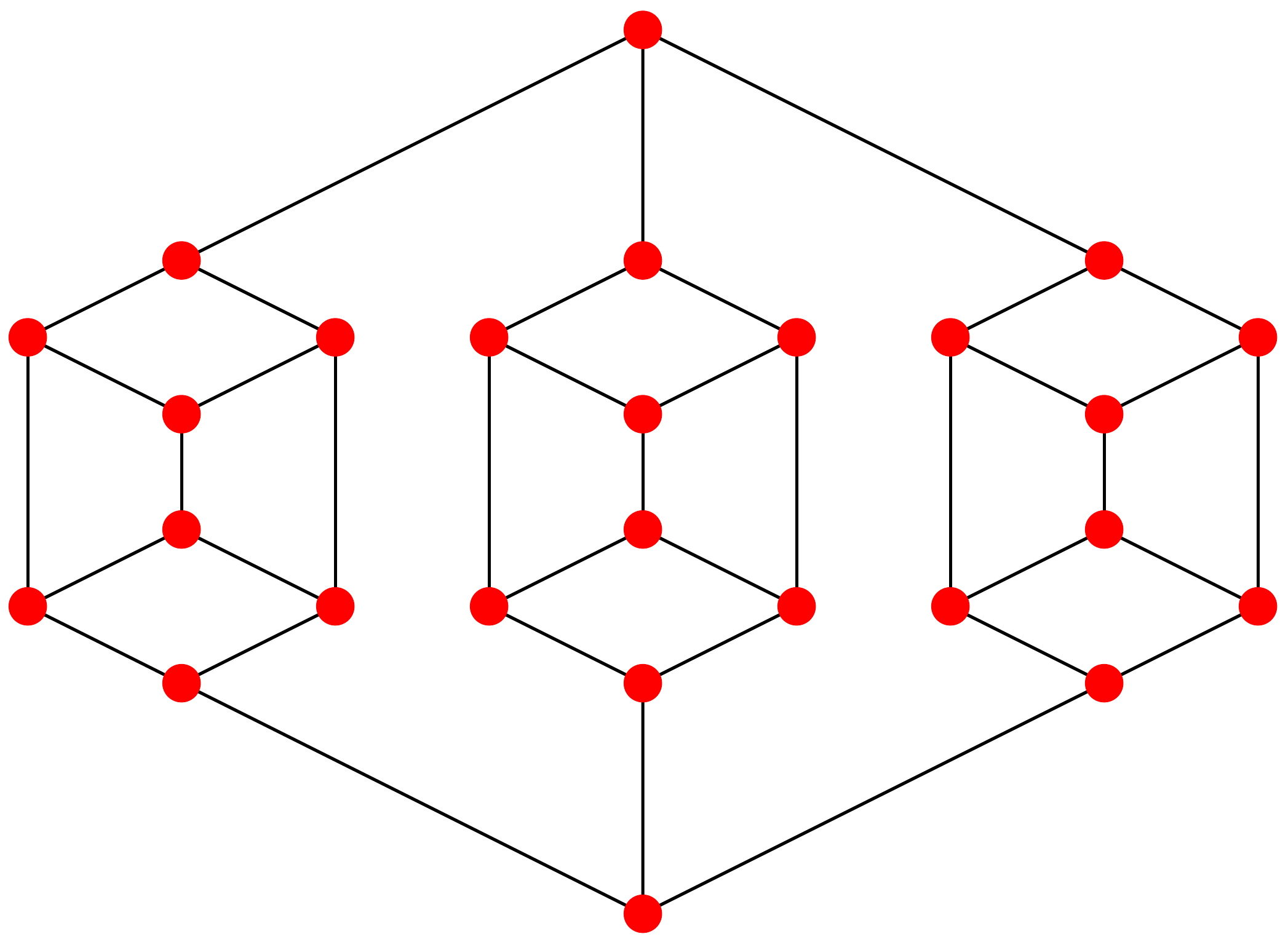}
\caption{A 2-connected bicubic planar graph that is not the graph of a simple orthogonal polyhedron.}
\label{fig:nonpolyhedral}
\end{figure}

As can be seen in Fig.~\ref{fig:orthotypes} (right), the graph of an arbitrary simple orthogonal polyhedra may not always be 3-connected, although it is always 2-connected. Pairs of faces of the polyhedron may meet in multiple edges, and the removal of any two of these edges (or the removal of endpoints from any two of these edges) leaves a disconnected graph. Therefore, there exist graphs simple orthogonal polyhedra that are not graphs of $xyz$ polyhedron, and we need to use a more general class of graphs to characterize the simple orthogonal polyhedra.
Replacing the 3-connectivity condition in the characterization of $xyz$ polyhedra by 2-connectivity would be too general, however. Not every 2-connected bipartite 3-regular graph is the graph of a simple orthogonal polyhedron; for instance, the graph depicted in Fig.~\ref{fig:nonpolyhedral} is not the graph of a simple orthogonal polyhedron, as the results in this section will show.

Instead, our characterization uses the SPQR tree, a standard tool for representing the planar embeddings of a graph in terms of its triconnected components~\cite{Mac-DMJ-37,HopTar-SJC-73,DiBTam-FOCS-89,DiBTam-ICALP-90,GutMut-GD-01}.  The triconnected components of a graph may be multigraphs rather than simple graphs; for instance, the graph shown in Fig.~\ref{fig:nonpolyhedral} has seven triconnected components: three cubes (the R nodes of an SPQR tree), three 4-cycles (the S nodes of an SPQR tree), and a multigraph with two vertices and three edges (the P node of an SPQR tree).

\begin{theorem}
\label{thm:atom}
The following three classes of graphs are equivalent:
\begin{itemize}
\item Cubic 2-connected graphs in which every triconnected component is either a bipartite polyhedral graph or an even cycle,
\item Bipartite cubic planar graphs in which the removal of any two vertices leaves at most two connected components (counting an edge between the two vertices as a component, if one exists), and
\item Graphs of simple orthogonal polyhedra.
\end{itemize}
\end{theorem}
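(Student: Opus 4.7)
The plan is to prove the three-way equivalence as a cycle $(3) \Rightarrow (2) \Rightarrow (1) \Rightarrow (3)$, reducing the final geometric construction to Theorem~\ref{thm:xyz}.

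For $(3) \Rightarrow (2)$: the skeleton of any simple orthogonal polyhedron is cubic by definition, planar by sphere topology, and bipartite because each face is an orthogonal polygon of even length. For the 2-cut condition I would argue by contradiction: if $\{u,v\}$ separated $\G$ into three or more pieces in the modified count, then any face at $u$ would contain a boundary path between two of the components of $\G \setminus \{u,v\}$, forcing that boundary to pass through $v$; hence each of the three faces at $u$ would be incident to $v$. Combined with the cubic structure and the simplicity of $\G$ (no parallel edges), this yields a degree contradiction on the vertices of the separated components.

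For $(2) \Rightarrow (1)$: I would analyze the SPQR tree of $\G$. Simplicity of $\G$ forbids two parallel real edges, so any P-node is either a bond of three virtual edges or of one real edge with two virtual edges; the modified 2-cut condition rules out both cases (creating three subgraph components, or two components plus the counted edge). Hence the SPQR tree has only S- and R-nodes. Bipartiteness of $\G$ forces every S-node cycle to be even, and each R-node is 3-connected and inherits cubic-ness (each virtual edge contributes exactly one unit of degree to each endpoint, replacing an absent real edge) and bipartiteness.

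For $(1) \Rightarrow (3)$, I would induct on the size of the SPQR tree. Base case: a single R-node makes $\G$ cubic 3-connected bipartite planar, so Theorem~\ref{thm:xyz} supplies an $xyz$ polyhedron (a fortiori a simple orthogonal polyhedron). Inductive step: pick an SPQR-tree leaf $\nu$. If $\nu$ is an R-node attached to the rest of $\G$ along one virtual edge, I would recursively realize the smaller graph $\G'$ as a simple orthogonal polyhedron and replace the corresponding edge by a rescaled copy of $\nu$'s $xyz$-polyhedron realization, fitted into a small axis-aligned box on an incident face of $\G'$'s polyhedron. If $\nu$ is an S-node (even cycle), the operation is subdividing an edge by a chain of parallel faces, which I would realize as a thin axis-aligned slab bumped onto a face of $\G'$'s polyhedron. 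The main obstacle I anticipate is in the R-node gluing: the face of $\G'$'s polyhedron along which $\nu$ is attached must be compatible with the orientation of the corresponding face in $\nu$'s $xyz$ polyhedron, and Theorem~\ref{thm:xyz} does not a priori control which face is exposed at the gluing boundary. I would address this by invoking the flexibility in the choice of hidden vertex granted by Theorem~\ref{thm:corner-characterization}, together with reflection or reorientation of the $xyz$ realization, and finally propagate the integer-coordinate bound from Theorem~\ref{thm:xyz} through the induction to keep all vertex coordinates polynomial in $n$.
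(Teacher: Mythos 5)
Your high-level plan tracks the paper's: show $(3)\Rightarrow(2)$ from the geometry, establish $(2)\Leftrightarrow(1)$ via SPQR-tree structure, and prove $(1)\Rightarrow(3)$ by inductive gluing that bottoms out at Theorem~\ref{thm:xyz}. Two of the three legs are essentially the paper's argument. But there are gaps worth flagging.

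For $(3)\Rightarrow(2)$, your reduction to ``each of the three faces at $u$ is incident to $v$'' is fine, but ``this yields a degree contradiction on the vertices of the separated components'' is not: a cubic graph can perfectly well support a $2$-cut with three components (or two components plus an edge) by degree-counting alone — $u$ and $v$ supply six edge-ends, and three components taking two each (or two components taking two each plus the edge $uv$) is consistent. The actual contradiction has to be geometric: at a vertex of a simple orthogonal polyhedron the three incident faces are mutually perpendicular, one per coordinate plane, and each face fixes one coordinate of every vertex on it; so if $u$ and $v$ lay on all three of the same faces they would share all three coordinates, contradicting $u\neq v$. The paper avoids arguing directly on the geometry and instead introduces \emph{orthogonal polyhedroids} precisely so that it can push the splitting down the SPQR tree (Lemma~\ref{lem:split-polyhedroidality}) and conclude that every atom is a simple graph — this also handles the subtlety that a split component of a polyhedron need not itself embed as a simple orthogonal polyhedron, which your argument does not confront because you do not recurse on the $(3)$-side.

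For $(1)\Rightarrow(3)$, two corrections. First, in the SPQR tree of a $3$-regular graph an S node is never a leaf: by Lemma~\ref{lem:spqr3} every S-node cycle alternates virtual and non-virtual edges, so it has at least two virtual edges and hence at least two SPQR-tree neighbours. Your ``S-node leaf'' case does not occur; leaves are P or R nodes, and P nodes are excluded by hypothesis. Second, the paper's gluing (Lemma~\ref{lem:molecular}) is along an edge, not into a box on a face: the virtual edge $uw$ of $\G_1$ is an axis-parallel segment whose two incident faces form a right-angled wedge, and the corner-polyhedron realization of $\G_2$ — placed in the positive orthant with a chosen vertex $v$ at the origin and its neighbours on the coordinate axes — is shrunk, translated, and reflected into that wedge, with $v$ and $x$ inserted between $u$ and $w$ on the line of the segment and the two coplanar face pairs merged by symmetric difference. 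The orientation freedom you worry about is obtained by the free choice of which vertex of $\G_2$ is placed at the origin, which the paper secures via the free choice of root triangle in Lemma~\ref{lem:cc4c} and the construction in Theorem~\ref{thm:xyz}; appealing to Theorem~\ref{thm:corner-characterization} is not needed and does not directly apply here, since $\G_2$'s dual may have separating triangles of the wrong parity relative to some vertices even though the Theorem~\ref{thm:xyz} construction still places any chosen vertex at the origin.
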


For instance, the graph of Fig.~\ref{fig:nonpolyhedral} cannot be the graph of a simple orthogonal polyhedron, because removing its top and bottom vertex leaves three connected components, and because it has a multigraph as one of its triconnected components.

The main idea of the proof of Theorem~\ref{thm:atom} (in Appendix~IX) is to decompose the graph into triconnected components. In one direction, we show that when a simple orthogonal polyhedron is decomposed into triconnected components, each non-cyclic component inherits a (nonpolyhedral) geometric embedding based on which we can rule out the possibility that any component forms a P node in the SPQR tree. In the other direction, when we are given a graph in which each triconnected component has the stated form, we represent each bipartite polyhedral triconnected component as a simple orthogonal polyhedron using the results of the previous section, and use the even cycles in the SPQR tree to guide a sequence of gluing steps that combine each of these polyhedral pieces into a single polyhedron that represents the whole graph.
Our proof technique leads to a stronger result: if $\G$ is the graph of a simple orthogonal polyhedron, then every planar embedding of $\G$ can be represented as a simple orthogonal polyhedron.

\section{Algorithms}

Below we outline an algorithm that takes a 2-connected cubic planar graph as an input and embeds it as a simple orthogonal polyhedron, when such a representation exists. The algorithms for taking as input a 3-connected graph and representing it either as an $xyz$ polyhedron or as a corner polyhedron, when such a representation exists, are similar but with fewer steps.
\begin{enumerate}
\item Decompose the graph into its triconnected components, as represented by an SPQR tree, in linear time~\cite{GutMut-GD-01,HopTar-SJC-73}. Check that the SPQR tree does not contain any P nodes (triconnected components that are multigraphs rather than simple graphs). If it does, report that no orthogonal polyhedral representation exists and abort the algorithm.
\item Transform each atom (triconnected component that is not a cycle) into its dual Eulerian triangulation, using a linear time planar embedding algorithm~\cite{HopTar-JACM-74}. If any atom is nonplanar or has a non-Eulerian dual, report that no orthogonal polyhedral representation exists and abort the algorithm.
\item Partition each Eulerian triangulation into 4-connected Eulerian triangulations by splitting it on its separating triangles. All separating triangles may be found in $O(n)$ time~\cite{ChiNis-SJC-85,ChrEpp-TCS-91}.
\item Recursively decompose each 4-connected Eulerian triangulation into simpler 4-connected Eulerian triangulations using separating 4-cycles, pairs of adjacent degree-4 vertices, and isolated degree-4 vertices. While returning from the recursion, undo the steps of the decomposition and build a cycle cover for the Eulerian triangulation.
\item Convert the cycle covers into regular edge labelings by a simple local pattern matching rule.
\item For each pair of colors $x$ and $y$ in the rainbow partition, construct the subgraph $\Delta_{xy}$ formed by edges with those two colors, oriented by reversing the orientations of one of the two colors from the orientation given by the regular edge labeling, and find an $st$-numbering of each such graph using breadth-first search.
\item For each graph dual to one of the 4-connected Eulerian triangulations, use the $st$-numbering to construct a representation of the graph as a corner polyhedron: the coordinates of each vertex of the corner polyhedron are triples of numbers from the $st$-numbering, one from each of the three bichromatic subgraphs of $\Delta$.
\item Glue the corner polyhedra together to form orthogonal polyhedra dual to each non-4-connected Eulerian triangulation. In order to perform this step and the next one efficiently, we represent vertex coordinates implicitly throughout these steps, as positions within a doubly linked list, and after the gluing is completed convert these implicit positions back into numeric values.
\item  Glue 3-connected polyhedra together to form arbitrary simple orthogonal polyhedra.
\end{enumerate}
The algorithm needs $O(n)$ expected time when implemented using randomized hash tables; deterministically, it can be implemented to run in $O(n(\log\log n)^2/(\log\log\log n))$ time with linear space. The most complicated step, and the only step that uses more than $O(n)$ deterministic running time, is the one in which we decompose each 4-connected Eulerian triangulation into simpler 4-connected Eulerian triangulations; this step uses a data structure for testing adjacency of pairs of vertices in a dynamic plane graph. If the adjacency testing data structure is implemented using linear-space deterministic integer searching data structures~\cite{AndTho-STOC-00}, the total running time is $O(n(\log\log n)^2/(\log\log\log n))$, whereas if it is implemented using hash tables, the total expected running time is $O(n)$.

\begin{theorem}
We may construct a representation of a given graph as a corner polyhedron, $xyz$ polyhedron, or simple orthogonal polyhedron, when such a representation exists, in $O(n)$ randomized expected time, or deterministically in $O(n(\log\log n)^2/(\log\log\log n))$ time with linear space.
\end{theorem}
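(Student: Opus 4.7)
The plan is to bound each step of the nine-step algorithm of the previous section and to show that the dominant cost comes from the recursive decomposition of 4-connected Eulerian triangulations. Most of the pipeline is straightforwardly linear: the SPQR decomposition (step 1), planar embedding and dual construction (step 2), separating-triangle enumeration (step 3), the local pattern-matching that turns a cycle cover into a regular edge labeling (step 5), breadth-first $st$-numbering (step 6), and the coordinate lookup that builds a corner polyhedron from three $st$-numberings (step 7) are all known to run in $O(n)$ deterministic time. The possible failure conditions (presence of a P-node, a nonplanar atom, a non-Eulerian dual, a dual with an odd-parity separating triangle, or an absent cycle cover at a recursive base case) can each be detected by a local test at the point in the pipeline where the relevant structure is examined, so error handling adds only $O(n)$ work overall.

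The main obstacle is step 4. Each reduction of a 4-connected Eulerian triangulation (splitting on a separating 4-cycle, removing two adjacent degree-4 vertices, or collapsing opposite edges at an isolated degree-4 vertex) first has to locate the configuration to apply, which reduces to testing adjacency of pairs of vertices in a plane graph undergoing $O(n)$ local edits. My plan is to maintain the rotation system explicitly, together with an associative map from ordered vertex pairs to edge records; each reduction performs $O(1)$ lookups and $O(1)$ local updates, and the reconstruction phase that builds the cycle cover on the way back up the recursion is a symmetric $O(n)$ sequence of local edits. Implementing the associative map with hash tables yields $O(n)$ expected time; implementing it with the Andersson--Thorup deterministic integer-search structure yields $O(n(\log\log n)^2/\log\log\log n)$ deterministic time within linear space. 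This is the only step in the algorithm whose deterministic cost exceeds $O(n)$.

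The remaining subtlety is that naively assigning numeric coordinates during steps 8 and 9 could blow up exponentially, because a nested chain of gluings repeatedly subdivides intervals between existing face planes. I would sidestep this by keeping coordinates implicit throughout the gluing phase: three doubly linked lists, one per axis, whose nodes represent axis-perpendicular face planes, with each vertex of the polyhedron storing three pointers into these lists. Each gluing step inserts a constant number of new nodes into these lists and repositions the pointers of a constant number of vertices, so it costs $O(1)$. After all gluings are completed, a single traversal of each list assigns integer coordinates, yielding the $[1,n/4]$ bound quoted for $xyz$ polyhedra in Section~\ref{sec;xyz}. The gluing phase therefore contributes only $O(n)$ to the total cost.

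Summing the bounds, step 4 dominates, giving $O(n)$ expected time with hash tables and $O(n(\log\log n)^2/\log\log\log n)$ deterministic time with linear space. The three special cases promised by the theorem use subsets of the same pipeline: for a corner polyhedron representation the SPQR step and the polyhedron-gluing steps are unnecessary, and for an $xyz$ polyhedron representation only the final gluing step is omitted, so the same time bound applies in all three cases.
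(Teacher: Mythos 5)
Your proposal follows the same pipeline and uses the same two key tricks as the paper---hash tables versus Andersson--Thorup for dynamic adjacency testing, and implicit coordinates stored as positions in doubly-linked lists to avoid exponential blowup during gluing---so at a high level this is the paper's proof. However, there is a genuine gap in your treatment of step~4, which you correctly identify as the bottleneck but then handle with a hand-wave. You write that locating a reduction opportunity ``reduces to testing adjacency of pairs of vertices,'' but this is not automatic: naively searching the partially decomposed triangulation for a separating 4-cycle, a pair of adjacent degree-four vertices, or an isolated degree-four vertex satisfying Rule~IV could cost $\Theta(n)$ per decomposition step, for a total of $\Theta(n^2)$. The paper's proof hinges on two structural arguments that your proposal omits. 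First, it defines a restricted class of \emph{good vertices}---degree-four vertices that are not in the root triangle, not in a ``collapsible 4-cycle,'' and not in a face triangle whose other two vertices have high degree---and proves via Euler's formula (applied to a graph with collapsible 4-cycles contracted) that every component always has at least three good vertices. Maintaining the set of good vertices as a doubly-linked list lets the algorithm pick one in $O(1)$ time, after which a constant number of adjacency tests suffice to determine which of the four reduction rules applies. Second, the paper introduces a potential function $\Phi = |V| - 6|K|$ to bound the total number of decomposition steps by $O(n)$ and to bound the total number of vertices present at any one time by $3n$, which is needed both for the overall linear accounting and for keeping the vertex identifiers small enough for the deterministic integer-searching structure. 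Without these two ingredients---existence and efficient maintenance of good vertices, and the potential-function bound on the recursion---your claim that step~4 performs $O(1)$ lookups per reduction and $O(n)$ reductions total is unsubstantiated.
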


Detailed descriptions of each step together with the running time analyses are given in Appendix~X.

\section{Conclusions}

\begin{figure}[t]
\centering\includegraphics[height=2.25in]{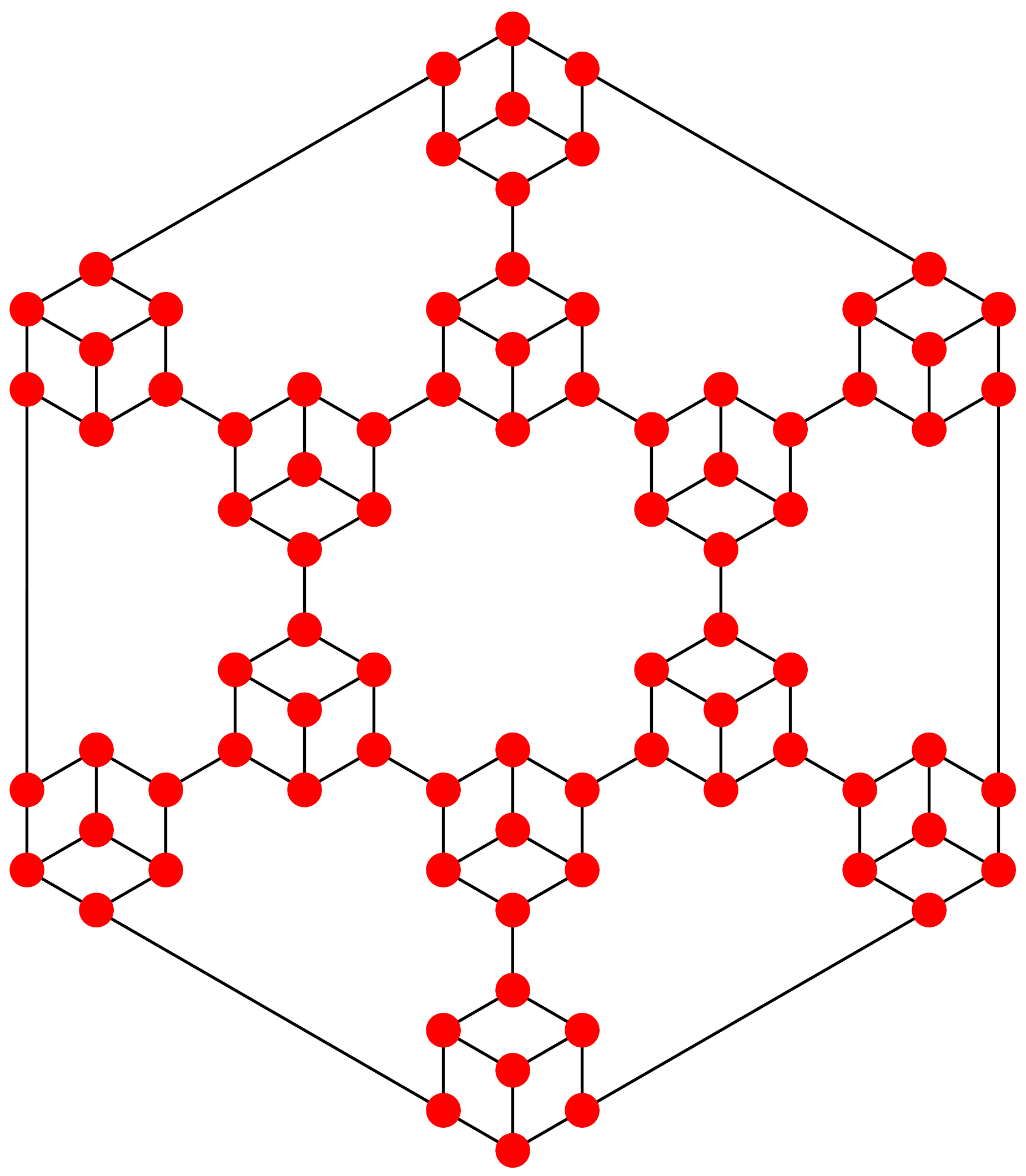}\qquad\qquad
\includegraphics[height=2.25in]{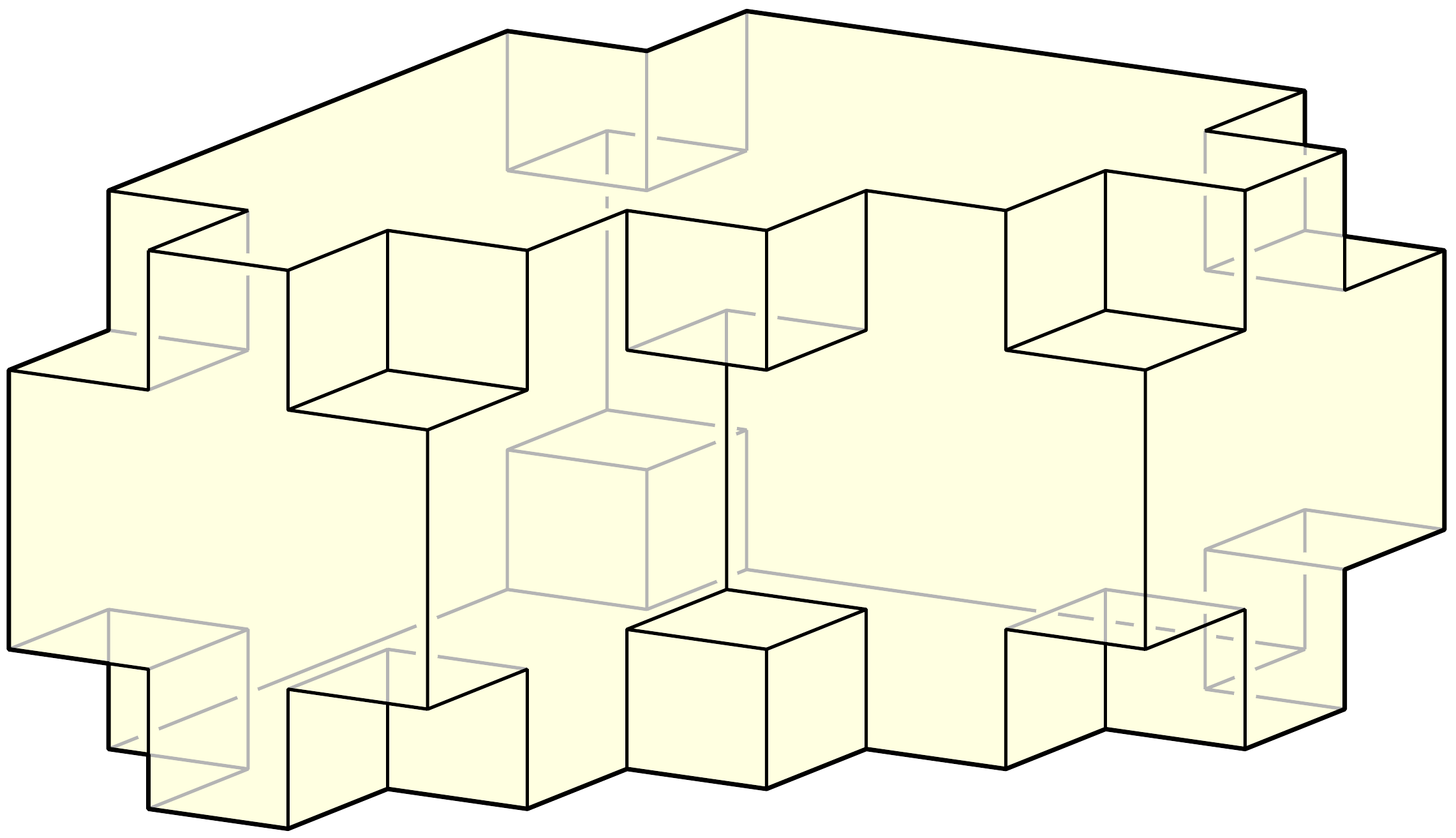}
\caption{A bipartite cubic polyhedral graph (left) that has no orthogonally convex representation as a simple orthogonal polyhedron, and its representation as a nonconvex simple orthogonal polyhedron (right).}
\label{fig:escherized}
\end{figure}

We have defined three interesting classes of orthogonal polyhedra, and provided exact graph-theoretic characterizations of the graphs that may be represented by these polyhedra. In particular, every bipartite cubic polyhedral graph has a representation as an orthogonal polyhedron.

The following problems remain open for additional investigation:
\begin{itemize}
\item Corner polyhedra are orthogonally convex, and orthogonally convex simple polyhedra may be represented as $xyz$ polyhedra, so the orthogonally convex simple polyhedra are sandwiched between two of the classes of polyhedron that we can precisely characterize and for which we provide polynomial time recognition algorithms. However, not every bipartite cubic polyhedral graph has an orthogonally convex representation: if enough dual separating triangles share edges with each other, they may interfere with each other and force any orthogonal polyhedral representation to be nonconvex (Fig.~\ref{fig:escherized}). Is there a simple condition on the position of the dual separating triangles that characterizes orthogonally convex simple orthogonal polyhedra, and can we test this condition in polynomial time?

\item The \emph{orthostacks} defined by Biedl et al.~\cite{BieDemDem-CCCG-98} are also intermediate between corner polyhedra and $xyz$ polyhedra. Can we characterize their graphs?

\item Our results hold only for orthogonal polyhedra with three perpendicular edges at each vertex. What if we relax this requirement, and either allow edges to be non-perpendicular (as in the bidiakis cube, Fig.~\ref{fig:nonsimple}, center) or to have more than three edges per vertex (Fig.~\ref{fig:nonsimple}, left). In this case, as the figures show, the graph of the polyhedron does not need to be bipartite. Is there some way of replacing problematic vertices by small subgraphs in which all vertices have degree three and all edges are perpendicular, allowing the methods from this paper to apply?

\item Given the hardness of nonplanar $xyz$ graph recognition~\cite{Epp-GD-08} it seems likely that it will also be difficult to determine whether a given graph is the graph of an orthogonal polyhedron with nonzero genus (Fig.~\ref{fig:nonsimple}, right), but what about graphs for which an $xyz$ graph representation is already known? In that case, how difficult is it to detemine whether the faces of the $xyz$ representation can be untangled to form a polyhedral representation?

\item Although we provided a linear time algorithm for finding orthogonal polyhedron representations, its constant factors are large (one case involves up to 1225 adjacency tests). Additionally, while not an obstacle for practical implementation, the need for hashing based data structures in order to achieve linear time is a theoretical irritation. Is there a simple deterministic linear time algorithm for finding cycle covers in 4-connected Eulerian triangulations?
\end{itemize}

\newpage
\section*{Acknowledgements}

Work of David Eppstein was supported in part by NSF grant
0830403 and by the Office of Naval Research under grant
N00014-08-1-1015. We thank Patrizio Angelini, Mike Dillencourt, Fabrizio Frati, Mike Goodrich, Maarten L{\"o}ffler, and Brendan McKay for helpful conversations related to the subject of this paper.

{\raggedright
\bibliographystyle{abuser}
\bibliography{steinitz}}
\newpage

\section*{Appendix I: Eulerian triangulations}

Our proofs require several technical lemmas about the 3-connected cubic planar graphs and their duals, the Eulerian triangulations, that we collect here.

We begin with a standard property of bipartite planar graphs. As is standard when discussing planar graphs, we refer to a graph together with a fixed planar embedding of the graph as a \emph{plane graph}.

\begin{lemma}
\label{lem:even-faces}
A plane graph $\G$ is bipartite if and only if each face of the embedding of $\G$ has an even number of edges.
\end{lemma}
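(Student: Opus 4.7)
The plan is to prove both directions by a standard face-cycle parity argument.

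For the forward direction, suppose $\G$ is bipartite. Every face boundary of a plane graph is a closed walk, and every closed walk in a bipartite graph has even length (it must alternate between the two color classes and return to its starting vertex). Therefore the boundary of each face uses an even number of edges, counted with multiplicity. This handles the case of 2-connected plane graphs directly; for graphs with cut vertices or bridges, a face boundary may traverse the same edge twice, but this only adds an even number to the edge count and does not affect parity.

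For the reverse direction, suppose every face of $\G$ has an even number of edges. I would show that every cycle of $\G$ has even length, which is equivalent to bipartiteness. Let $C$ be any cycle in $\G$. By the Jordan curve theorem, $C$ partitions the plane into a bounded interior and an unbounded exterior; let $F$ be the set of faces of $\G$ lying in the interior of $C$. Every edge of $\G$ that lies in the interior region (not on $C$) appears on the boundary of exactly two faces of $F$, while every edge of $C$ appears on the boundary of exactly one face of $F$. Summing face lengths therefore gives
\[
\sum_{f \in F} |f| \;=\; |C| + 2 \cdot (\text{number of edges strictly inside } C).
\]
Since each $|f|$ is even by hypothesis, the left-hand side is even, and since the final term is also even, $|C|$ must be even. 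Thus $\G$ contains no odd cycle and is bipartite.

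The argument is essentially routine; the only subtle point is the bookkeeping for non-2-connected graphs, where an edge may be traversed twice by a single face walk. This is the step I would be most careful about: one should define $|f|$ as the number of edge-traversals in the facial walk (so bridges contribute $2$), and verify that the edge-counting identity in the reverse direction still holds, either by reducing to the 2-connected case via block decomposition or by working directly with the planar dual.
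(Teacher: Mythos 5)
Your proof is correct and takes essentially the same approach as the paper: the reverse direction is the identical Jordan-curve/mod-2 edge-counting argument, and your forward direction is just the direct version of the paper's contrapositive (odd face implies odd cycle). Your explicit care about bridges and edge multiplicities in facial walks is a subtlety the paper's one-line forward direction quietly glosses over, and it is the right thing to worry about for the lemma to be literally true outside the 2-connected case.
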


\begin{proof}
If $\G$ has an odd face, it is obviously not bipartite, as it contains an odd cycle formed from the face. Suppose on the other hand that all faces of $\G$ are even, and let $C$ be any cycle in $\G$. By the Jordan curve theorem, $C$ separates $\G$ into an interior and an exterior. The number of edges (modulo~2) in $C$ is the sum of the number of edges (modulo~2) of each of the faces of $\G$ in the interior of $C$, because each edge of $C$ belongs to one interior face and contributes one to this sum while each edge that is in the interior of $C$ belongs to two interior faces and is cancelled from the sum when the lengths of both faces are added modulo~2. But because each face in $\G$ is even, the sum of the interior faces to $C$ is even and therefore $C$ itself is even; as a graph in which all cycles are even, $\G$ is bipartite.
\end{proof}

An \emph{Eulerian graph} is a graph with an Euler tour; that is, a connected graph for which all vertex degrees are even. We define an \emph{Eulerian triangulation} to be a maximal planar graph that is Eulerian.

\begin{lemma}
\label{lem:dual-Eulerian}
A plane graph $\G$ is bipartite if and only if its dual graph is Eulerian. A plane graph $\G$ is 3-connected, 3-regular, and bipartite if and only if its dual graph is an Eulerian triangulation.
\end{lemma}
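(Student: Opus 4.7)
The plan is to deduce both statements from Lemma~\ref{lem:even-faces} combined with basic planar-duality identities: under planar duality the degree of the dual vertex $v_f$ equals the length of the face $f$, and the length of each dual face equals the degree of the corresponding primal vertex. Throughout I assume $\G$ is a connected plane graph, so that $\G^*$ is also connected.

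For the first equivalence, I would observe that $\G^*$ being Eulerian amounts to all of its vertex degrees being even, which by the duality identity above is exactly the condition that every face of $\G$ has even length. By Lemma~\ref{lem:even-faces}, this is equivalent to $\G$ being bipartite, giving the first biconditional.

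For the second equivalence I would argue both directions by combining the first part with the dual correspondence between primal vertex degrees and dual face lengths. In the forward direction, if $\G$ is 3-connected, 3-regular, and bipartite, then the first part gives all even dual degrees, 3-regularity gives that every face of $\G^*$ is a triangle, and the 3-connectedness of $\G$ ensures that $\G^*$ is a simple graph (no two faces of $\G$ share more than one edge and no edge is a bridge), so $\G^*$ is a simple plane graph with all triangular faces and even vertex degrees, i.e., an Eulerian triangulation. Conversely, if $\G^*$ is an Eulerian triangulation, then the first part gives that $\G$ is bipartite, the triangular faces of $\G^*$ force $\G$ to be 3-regular, and the 3-connectedness of $\G$ follows because a simple maximal planar graph on at least four vertices is 3-connected and 3-connectedness is preserved under planar duality.

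The main delicate point is the handling of simplicity and 3-connectedness, rather than any combinatorial calculation: one must argue carefully that ``maximal planar'' coincides with ``simple plane graph in which every face is a triangle'' (so that multiple edges and loops in $\G^*$ are ruled out), and one must invoke the standard facts that 3-connectedness of a plane graph is equivalent to 3-connectedness of its dual and that a simple maximal planar graph is automatically 3-connected. Once these standard facts are granted, the rest is just the two dual length/degree identities together with Lemma~\ref{lem:even-faces}.
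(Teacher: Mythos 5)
Your proof is correct, and the first biconditional as well as the forward direction of the second biconditional proceed essentially as in the paper: Lemma~\ref{lem:even-faces} plus the degree/face-length duality identities settle the Eulerian part, while 3-regularity of $\G$ gives triangular dual faces and 3-connectedness (hence 3-edge-connectedness, so no bridges and no 2-edge cuts) forces the dual to be a simple graph. Where you diverge is in the converse of the second biconditional. The paper argues this direction directly and locally: assuming $\G$ is \emph{not} 3-connected, it produces either a 2-vertex cut (which yields a pair of edges lying on the same two faces, hence a 2-cycle in $\Delta$) or a cut-vertex (which yields a bridge, hence a loop in $\Delta$), contradicting that $\Delta$ is a triangulation. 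You instead invoke two global ``black-box'' facts: that a simple maximal planar graph on at least four vertices is 3-connected, and that 3-connectedness is preserved under planar duality, so $\G = \Delta^*$ inherits 3-connectedness from $\Delta$. Both routes are valid. The paper's version is more elementary and self-contained -- it needs only the Jordan-curve style observations already in play -- whereas yours buys concision at the cost of appealing to Whitney-type theorems. Your route also implicitly excludes the degenerate case $\Delta = K_3$ (whose dual is the 2-vertex dipole, which is not simple and not 3-connected in the usual sense); the paper's phrasing has the same silent restriction, so this is not a defect relative to the paper, but it is worth being explicit that the claim is intended for $\Delta$ with at least four vertices.
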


\begin{proof}
The duality of being bipartite and being Eulerian follows immediately from Lemma~\ref{lem:even-faces}.

Consider a plane graph $\G$ that is 3-connected, 3-regular, and bipartite. Let $\Delta$ be its dual graph. Then $\Delta$ is a simple graph rather than a multigraph, by the 3-connectedness of $\G$: a self-loop or a pair of multiple edges between two vertices in $\Delta$ would correspond to a single-vertex or two-vertex cut in $\G$. Every face of $\Delta$ is a triangle, from the 3-regularity of $\G$. Thus, $\Delta$ is a maximal planar graph. And every face of $\G$ has an even number of edges, from which it follows that $\Delta$ is Eulerian.

In the other direction, if $\Delta$ is an Eulerian triangulation, then its dual graph $\G$ is clearly bipartite and 3-regular. We are left to show that if $\Delta$ is an Eulerian triangulation then $\G$ is 3-connected Assume $\G$ is not 3-connected. Then there exists either a vertex-cut $u,v$ or a cut-vertex $w$ in $\G$. In the former case we have an edge $e_v$ of $v$ and an edge $e_u$ of $u$ that are adjacent to the same pair of faces of $\G$---see Fig.~\ref{fig:cut}, left. In the latter case $w$ has an edge that is adjacent to exactly one face---see Fig.~\ref{fig:cut}, right. Thus $\Delta$ contains either a 2-cycle or a loop, which contradicts the assumption that $\Delta$ is a triangulation.
\begin{figure}[h]
  \centering
 \includegraphics{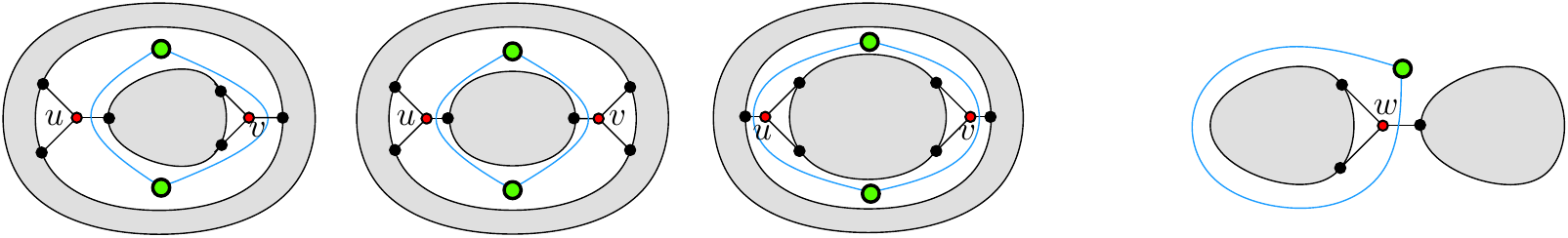}
  \caption{A cut of size 2 in $\G$ and the corresponding 2-cycle in $\Delta$; A cut-vertex in $\G$ and the corresponding loop in $\Delta$.}
  \label{fig:cut}
\end{figure}
\end{proof}

Let $\G$ be a 3-connected, 3-regular, bipartite plane graph. The bipartiteness of $\G$ allows us to two-color its vertices in such a way that adjacent vertices are assigned different colors. This naturally induces a two-coloring of faces of $\Delta$. In this paper we assume that the chosen colors are white and blue and the coloring is such that the outer face of $\Delta$ is white.

Furthermore, every 3-connected bipartite cubic planar graph $\G$ has a 3-face-coloring that is unique up to permutations of the colors~\cite{Hea-QJPAM-98}. If we assign each edge of the graph the color that is not used by its two adjacent faces, the result is a 3-edge-coloring of $\G$ in which the edges of each face alternate between two edge colors, and two adjacent faces have edges that alternate between two different pairs of colors~\cite{Epp-GD-08}.
We may carry the same color assignment over into the dual graph, $\Delta$; however, it will not form an edge coloring of $\Delta$, but rather will partition the edges of $\Delta$ into three color classes with the properties that every triangle of $\Delta$ has an edge of each of the three colors, each vertex of $\Delta$ has edges that alternate in cyclic order between two colors, and any two adjacent vertices in $\Delta$ have edges that alternate between different pairs of colors. We call such an edge partition of $\Delta$ a \emph{rainbow partition} to indicate the properties that it is not itself an edge coloring but that each triangle has a rainbow coloring in which all three colors are used. However, we also include the alternation of colors around each vertex of $\Delta$ as one of the defining properties of a rainbow partition.
\begin{lemma}
\label{lem:monochromatic-biconnectivity}
Let $\Delta_x$ be the subgraph of an Eulerian triangulation $\Delta$ formed by the edges of a single color $x$, $x \in \{ red, blue, green\}$ in a rainbow partition of $\Delta$. Then $\Delta_x$ is biconnected.
\end{lemma}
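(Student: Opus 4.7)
The plan is to view $\Delta_x$ as $\Delta$ with a specific independent set of vertices removed, and then deduce 2-connectivity from planar topology.

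First, I derive a proper 3-vertex-coloring of $\Delta$ from the rainbow partition. At each vertex $v$ only two of the three edge colors appear (since its incident edges alternate in cyclic order between two colors), so I label $v$ with the third, missing color. The axiom that adjacent vertices alternate between \emph{different} pairs of colors forces the labels of adjacent vertices to differ, so this is a proper 3-coloring; and the axiom that each triangle of $\Delta$ contains edges of all three colors forces the three vertices of each triangle to receive all three labels. Consequently an edge of color $x$ joins the two non-$x$-labeled vertices of its incident triangles, so $\Delta_x = \Delta \setminus V_x$ where $V_x$ is the (independent) set of vertices labeled $x$.

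Next I show $\Delta_x$ is connected by deleting the vertices of $V_x$ from $\Delta$ one at a time. Every neighbor of each $x$-vertex $w$ lies in $V_y \cup V_z$ and so remains present when $w$ is deleted; the cyclic sequence of $w$'s neighbors together with the edges joining consecutive ones forms the link cycle of $w$, which therefore lies intact in the current graph. Any walk through $w$ can be rerouted along this link cycle, so connectivity is preserved at each step. Moreover, every edge of the link of $w$ joins a $V_y$-vertex to a $V_z$-vertex and so has color $x$, so the link cycle is a subgraph of $\Delta_x$ itself and persists through all subsequent deletions.

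Each deletion of $w$ also opens up a new face of the current plane graph, bounded by the (persistent) link cycle of $w$, in place of the triangles previously incident to $w$. Therefore every face of the final plane graph $\Delta_x$ is bounded by the link cycle of some $x$-vertex of $\Delta$, and such a link is a simple cycle because $\Delta$ is a simple maximal planar graph (a repeated vertex in a link would force a multi-edge). Invoking the standard characterization that a connected plane graph on at least three vertices whose every face is bounded by a simple cycle has no cut vertex, we conclude that $\Delta_x$ is 2-connected. The main subtlety is that when two $x$-vertices share neighbors their link cycles can share edges or vertices (as already happens with the two polar vertices of the octahedron, which share a common $4$-cycle link); but because link edges are color $x$ and are therefore permanent in $\Delta_x$, each $x$-vertex's removal contributes exactly one face bounded by its entire link, and the face-boundary description goes through.
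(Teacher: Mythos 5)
Your proof is correct and takes a genuinely different route from the paper's. The paper argues directly about paths in $\Delta_x$: connectivity is established by taking an arbitrary path in $\Delta$ and, whenever it is about to enter a vertex that has no $x$-colored edges, detouring around that vertex along its link (which is all of color $x$); 2-connectivity is established by a similar but more delicate detour around a single specified vertex $w$, using the link cycles of the even-indexed neighbors of $w$ on one side of the path. Your approach instead begins by extracting the structural fact that the rainbow partition induces a proper vertex 3-coloring of $\Delta$ (each vertex labeled by the edge color absent at it), so that $V_x$ is an independent set, every triangle has one vertex of each label, and $\Delta_x$ is exactly the induced subgraph $\Delta \setminus V_x$; you then observe that every face of this plane subgraph is the link cycle of some deleted $x$-vertex, which is a simple cycle since $\Delta$ is a simple triangulation, and you invoke the classical fact that a connected plane graph on at least three vertices all of whose faces are bounded by simple cycles is 2-connected. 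This buys a cleaner and more conceptual proof (one appeal to a standard planarity fact replaces two separate detour constructions), and it makes visible the structure that drives the lemma (the 3-coloring, the independence of $V_x$, and the identification of $\Delta_x$ with an induced 2-color-class subgraph). The paper's proof is more elementary and self-contained, avoiding the face-boundary characterization, at the cost of two somewhat ad hoc rerouting arguments. Both are sound; one small point you implicitly rely on but could state explicitly is that $\Delta_x$ has at least three vertices, which follows since every vertex of $\Delta$ has degree at least four (it is Eulerian and 3-connected) and each triangle contains exactly one $x$-vertex, so $|V_x|\le n/2-1$.
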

\begin{proof}
Assume without loss of generality that $\Delta_x$ is induced by red edges.

We start with showing that $\Delta_x$ is connected.
Consider any two vertices $u,v$ in $\Delta_x$. Let $P$ be a path between $u$ and $v$ in $\Delta$, such that at least one edge of that path is not red.
We will show that we can construct a path $P_{red}$ between $u$ and $v$ in $\Delta_x$.
If $u$ and $v$ are adjacent in $\Delta$, their shared edge must be red, because of the property that the edges incident to $u$ and to $v$ alternate between different pairs of colors both of which include red. Otherwise, we can assume that $P$ contains more than one edge.
Direct the edges of $P$ from $u$ to $v$. Let $v_1$ be the first vertex on $P$ such that the outgoing edge is not red (say, blue), and let $v_2$ be the vertex after $v_1$ on $P$.
$v_2$ does not have any red edges (otherwise a face formed by $v_1$, $v_2$ and their common neighbor would have 2 red edges). Hence (a) $v_2 \neq v$ (because $v_2$ is not in $\Delta_x$ and $v$ is), and (b) the neighbors of $v_2$ form a cycle of red edges. We use this red cycle to detour around $v_2$ until we hit the next vertex $v_3$ of $P$. More precisely, we replace the path $(v_1,v_2,v_3)$ in $P$ by a simple subpath of the red cycle connecting $v_1$ and $v_3$. We repeat the procedure until we arrive at $v$---see Fig.~\ref{fig:delta-connect} for an illustration.

\begin{figure}[h]
  \centering
 \includegraphics{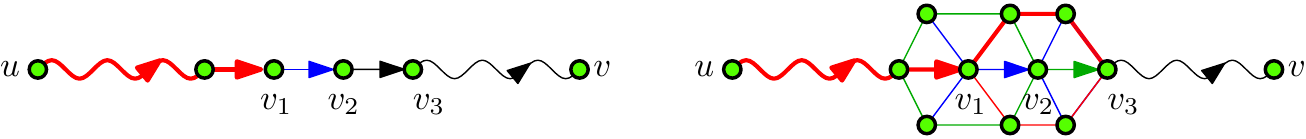}
  \caption{A path $P$ connecting $u$ and $v$ in $\Delta$ (left); a red detour around $v_2$ (the first vertex with the wrong colored entrance edge)(right).}
  \label{fig:delta-connect}
\end{figure}

Next we address 2-connectivity of $\Delta_x$. We must show that, for any three vertices  $u, v, w$ in $\Delta_x$ there is a path from $u$ to $v$ that avoids $w$. Since $\Delta_x$ is connected, there is a simple path $P$ in $\Delta_x$ connecting $u$ and $v$. Assume that $w$ is also in $P$, for otherwise we are done.

Let $w'$ and $w''$ be the neighbors of $w$ such that $(w',w)$ and $(w, w'')$ belong to $P$ (possibly $u = w'$ and/or $v = w''$). Let $w_2, ... w_{k-1}$ be the set of neighbors of $w$ that lie between $w_1 = w' $ and $w'' = w_k$ on the same side of $P$. Note that since the degree of each vertex in $\Delta$ is at least 4, then at least on one side of $P$ we have $k > 2$. Then all edges $(w,w_{2i})$, $2 \leq i < k/2$ have the same color and this color is not red. Let it be green. Then for each vertex $w_{2i}$ we have that its neighbors form a cycle of red edges in $\Delta$. We can detour the path $P$ starting at $w_1$ and traversing the neighbors of each of $w_{2i}$ on the side of the path $w_1,... w_k$ opposite to $w$.
\begin{figure}[h]
  \centering
 \includegraphics{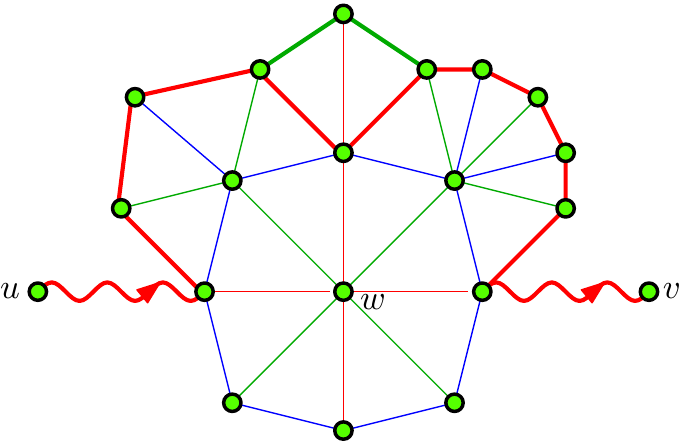}
  \caption{Detouring a red path around $w$.}
  \label{fig:delta-2-connect}
\end{figure}
Thus removing any vertex $w$ from $\Delta_x$ would not disconnect any pair $u,v$, which implies that $\Delta_x$ is 2-connected.
\end{proof}

Subsequently, we will refer to $\Delta_x$ as a \emph{monochromatic subgraph} of~$\Delta$.

\begin{lemma}
\label{lem:sep-4-cycles}
Suppose $\Delta$ is an Eulerian triangulation, with its edges colored in a rainbow partition, that contains a 4-cycle $C$. Then either all edges of $C$ have the same color or two colors, say red and blue, in the pattern $red-red-blue-blue$.
\label{lem:4-cycle}
\end{lemma}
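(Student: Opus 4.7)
The plan is to encode the rainbow partition data locally as a \emph{palette} $P(v)\subset\{R,G,B\}$ at each vertex $v$ of $\Delta$, namely the unordered pair of colors that alternate around $v$. Two axioms of a rainbow partition then translate into very restrictive local rules: first, if $uv$ is an edge of $\Delta$ then $P(u)\neq P(v)$ (this is exactly the condition that adjacent vertices have edges alternating between different pairs of colors); second, the color of $uv$ is the unique element of $P(u)\cap P(v)$, which is a singleton because distinct 2-element subsets of a 3-element set meet in exactly one element. Hence the color of any edge is determined by the palettes at its endpoints.

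I would then apply this observation to the $4$-cycle $C=v_1v_2v_3v_4$. The palettes $P(v_1),\ldots,P(v_4)$ give a proper coloring of the cycle $C$ drawn from only three possible values. Pigeonhole forces at least one repeat among the four palettes, and because adjacent palettes must differ the repeat can only occur at an antipodal pair, i.e.\ $\{v_1,v_3\}$ or $\{v_2,v_4\}$. By rotating the labelling of $C$ I may assume $P(v_1)=P(v_3)=A$.

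This leaves exactly two sub-cases. If also $P(v_2)=P(v_4)=B$, then every edge of $C$ has color $A\cap B$ and $C$ is monochromatic. Otherwise $P(v_2)=B$ and $P(v_4)=C$ with $A$, $B$, $C$ three pairwise distinct palettes; then the two edges incident to $v_2$ both take color $A\cap B$, while the two edges incident to $v_4$ both take color $A\cap C$, and these two colors are distinct because $A\cap B=A\cap C$ would force $B=C$. The cyclic sequence of edge colors around $C$ is therefore $X,X,Y,Y$ with $X=A\cap B$ and $Y=A\cap C$, which is precisely the claimed $\mathit{red}$-$\mathit{red}$-$\mathit{blue}$-$\mathit{blue}$ pattern up to renaming.

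The only conceptual step is the palette reformulation, which turns a statement about a length-$4$ edge-color sequence into a proper vertex-coloring question on $C_4$ with only three colors available; once that correspondence is made explicit, the pigeonhole step and the two-case check are immediate and I do not anticipate any serious obstacle.
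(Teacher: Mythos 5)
Your proof is correct, and it takes a genuinely different route from the paper's. The paper argues by exhaustion: it lists the color patterns on a 4-cycle that need to be excluded (up to symmetry, $RRRB$, $RBRB$, $RBRG$, $RRBG$) and derives a short local contradiction with the rainbow-partition axioms in each case — the first three by finding a length-3 subpath colored $R$-$B$-$R$, the last by chasing two adjacent vertices that are both forced to alternate red/green. Your palette reformulation replaces this enumeration with a single structural observation: assigning to each vertex $v$ of $\Delta$ the unordered pair $P(v)$ of colors alternating around $v$ yields a proper vertex 3-coloring of $\Delta$, and the color of each edge is the unique element of the intersection of its endpoints' palettes. (Incidentally, the palettes are in natural bijection with the proper 3-coloring of the faces of the dual cubic graph $\G$; $P(v)$ is the complement of the color of the face dual to $v$.) On a 4-cycle, pigeonhole then forces a repeated palette at an antipodal pair — note that even if one of the two diagonal chords is present in $\Delta$, both cannot be, since four distinct palettes are unavailable, so the step is safe — and your two sub-cases produce exactly the monochromatic and $RRBB$ conclusions. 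What you gain is a pattern-free argument whose mechanism is transparent; what the paper gains is avoiding any auxiliary notion, at the cost of a case enumeration. The two proofs are comparable in length, and yours is arguably the cleaner one.
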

\begin{proof}
We need to show that there exists no cycle with one of the following sequences of colors around the cycle: $red-blue-red-blue$, $red-red-blue-blue$, $red-blue-red-green$, $red-red-blue-green$.
A cycle with one of the first three patterns would imply a path $uvpq$ in $\Delta$ colored $red-blue-red$, violating the requirement that $v$ and $p$ alternate between different pairs of colors.
A cycle colored $red-red-blue-green$ implies a path $uvwpq$ colored $blue-red-red-green$. Then the edges at $v$ alternate between blue and red. Since $w$ has a red edge but must alternate between a different pair of colors, it must alternate between colors green and red. But $p$ also has red and green edges, violating the requirement that adjacent vertices $w$ and $p$ alternate between different pairs of colors.
\end{proof}

Our characterizations of corner polyhedra and the proof of our characterization of $xyz$ polyhedra both involve a careful study of the separating triangles in an Eulerian triangulation. We prove next some general facts that are needed in both cases.

\begin{lemma}
\label{lem:sep-tri-3x}
Let $\Delta$ be an Eulerian triangulation, with its edges colored in a rainbow partition, and let $\delta$ be a separating triangle in $\Delta$. Then $\delta$ also has one edge of each color.
\end{lemma}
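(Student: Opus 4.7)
The plan is to use a double counting argument based on the faces of $\Delta$ lying inside $\delta$. Since $\delta$ is a separating triangle, it is not itself a face, but it bounds a closed planar region that is subdivided by $\Delta$ into a collection $F$ of triangular faces; let $n_f = |F|$, and for each color $c \in \{\text{red}, \text{blue}, \text{green}\}$ let $m_c$ be the number of color-$c$ edges on the boundary $\delta$ and $n_c$ the number of color-$c$ edges strictly interior to $\delta$. The defining property of a rainbow partition is that every triangle of $\Delta$ has exactly one edge of each color, and this is what I will exploit.

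The key step is to count, for each fixed color $c$, the incidences between faces in $F$ and their unique color-$c$ edge. Counting by faces gives $n_f$ incidences. Counting by edges, each interior edge of color $c$ is shared by two faces of $F$ and each boundary edge of color $c$ borders exactly one face of $F$, giving $2n_c + m_c$ incidences. Hence $m_c + 2n_c = n_f$, and so $m_c \equiv n_f \pmod 2$ for every color $c$. In particular, the three values $m_{\text{red}}, m_{\text{blue}}, m_{\text{green}}$ all have the same parity.

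Since these three nonnegative integers sum to $3$ (the total number of edges of $\delta$) and must share a common parity, they must all be odd, because three even numbers sum to an even number. Each $m_c$ lies in $\{0,1,2,3\}$, so odd values leave only the possibilities $m_c \in \{1,3\}$; if some $m_c$ equalled $3$, the remaining two would be $0$, which is even, a contradiction. Therefore $m_{\text{red}} = m_{\text{blue}} = m_{\text{green}} = 1$, which is exactly the conclusion that $\delta$ has one edge of each color.

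I do not foresee a serious obstacle: the only subtlety is ensuring that the double-counting identity is applied correctly to a region bounded by $\delta$ rather than to all of $\Delta$, which is why I restrict to the interior side of $\delta$ (the argument works symmetrically on the exterior side). The local constraint that edges around each vertex alternate between two of the three colors, which was heavily used in Lemma~\ref{lem:monochromatic-biconnectivity} and Lemma~\ref{lem:sep-4-cycles}, turns out not to be needed here; only the per-face ``one edge of each color'' property of the rainbow partition is invoked.
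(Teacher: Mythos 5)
Your proposal is correct, and it takes a genuinely different route from the paper. The paper argues by contradiction: if two edges of $\delta$ shared a color, the alternation of colors around the shared vertex would force the green (say) edges inside $\delta$ to be connected to the green edges outside $\delta$ only through that one vertex, contradicting the biconnectivity of the monochromatic subgraph established in Lemma~\ref{lem:monochromatic-biconnectivity}. You instead restrict to the faces $F$ inside $\delta$, double-count face--edge incidences for each fixed color $c$ to get $m_c + 2n_c = |F|$, conclude $m_{\mathrm{red}} \equiv m_{\mathrm{blue}} \equiv m_{\mathrm{green}} \pmod 2$, and finish with a parity argument using $m_{\mathrm{red}} + m_{\mathrm{blue}} + m_{\mathrm{green}} = 3$. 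Your route is more elementary and more self-contained: it does not invoke the biconnectivity lemma at all, and as you correctly observe it does not even use the vertex-alternation condition in the definition of a rainbow partition, only the per-triangle ``one edge of each color'' property, so it actually proves a slightly stronger statement. What the paper's proof buys in exchange is consistency with the surrounding development, which repeatedly exploits alternation and biconnectivity; but as a proof of this particular lemma, your counting argument is cleaner and carries fewer dependencies.
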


\begin{proof}
Let the three colors be red, blue, and green; recall that, at each vertex of $\Delta$, the incident edges in cyclic order around that vertex must alternate between two colors, and that no two adjacent vertices have the same two colors. Suppose for a contradiction that two edges of $t$ are red, and that their shared endpoint alternates between red and green edges (the other five color combinations are symmetric). Then the other two vertices of $\delta$ alternate red and blue, so the green edges inside $\delta$ and the green edges outside $\delta$ are connected only through a single articulation vertex, contradicting the biconnectivity of the monochromatic subgraphs of $\Delta$ that we proved in Lemma~\ref{lem:monochromatic-biconnectivity}.
\end{proof}

\begin{lemma}
\label{lem:sep-tri-eulerian}
Let $\Delta$ be an Eulerian triangulation, and let $\delta$ be a separating triangle in $\Delta$. Then the two maximal planar subgraphs of $\Delta$ that are formed by splitting $\Delta$ along the edges of $\delta$ are each themselves Eulerian.
\end{lemma}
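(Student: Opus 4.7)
The plan is to verify two things for each of the pieces $\Delta_1$ (the subgraph on $\delta$ and its interior) and $\Delta_2$ (the subgraph on $\delta$ and its exterior): that each is a maximal planar graph, and that each vertex in it has even degree. Maximal planarity is immediate, because every triangular face of $\Delta$ lies wholly inside or wholly outside $\delta$ and is inherited unchanged by the appropriate piece, and the only new face produced by the split is $\delta$ itself, which is already a triangle.

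For the Eulerian property, any vertex strictly inside or strictly outside $\delta$ retains its original even degree. The only work is at the three vertices $u,v,w$ of $\delta$. Fix one such vertex $v$ and let $d_1$ denote the number of edges at $v$ that go strictly into the interior of $\delta$. Then $\deg_{\Delta_1}(v) = d_1 + 2$ and $\deg_{\Delta_2}(v) = \deg_\Delta(v) - d_1$; since $\deg_\Delta(v)$ is even, it suffices to show that $d_1$ is even.

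To prove $d_1$ is even, I would invoke the rainbow partition of $\Delta$. By Lemma~\ref{lem:sep-tri-3x}, the three edges of $\delta$ carry three distinct colors, so in particular the edges $vu$ and $vw$ have different colors. At $v$, the incident edges alternate cyclically between the two colors used by the edges at $v$, so two edges at $v$ have distinct colors if and only if they lie at cyclic positions of different parity. The $d_1$ interior edges at $v$ are exactly those lying on the arc between $vu$ and $vw$ on the side facing the interior of $\delta$, so their count equals the distance between two positions of different parity minus one, which is even. Applying the same reasoning at $u$ and $w$ completes the argument. The parity step is the only substantive point; everything else is bookkeeping.
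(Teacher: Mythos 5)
Your proposal is correct and follows essentially the same approach as the paper: both rely on Lemma~\ref{lem:sep-tri-3x} (the three edges of $\delta$ receive three distinct colors) together with the alternation of colors around each vertex in the rainbow partition, to conclude that the arc of edges at a $\delta$-vertex pointing into either side of $\delta$ has even length. Your version merely makes the parity-counting explicit where the paper states it more tersely.
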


\begin{proof}
Within each subgraph, each vertex of $\delta$ must have even degree due to the alternation at that vertex of the edge colors of a rainbow partition of $\Delta$ and the fact (proved in Lemma~\ref{lem:sep-tri-3x}) that the two edges of $\delta$ at that vertex have different colors in the rainbow partition. The degree of the vertices that are not part of $\delta$ must also be even because it is unchanged from the degree in $\Delta$, and we assumed that $\Delta$ is Eulerian.
\end{proof}

\section*{Appendix II: From corner polyhedra to cycle covers}
We show that a corner polyhedron induces an interesting structure on its dual graph. But before we describe the structure lets look at some properties of the polyhedra that lead to it.


Let $\Po$ be a corner polyhedron and let $\Po^*$ be an isometric projection of its skeleton graph.

We define a normal to a face $f$ of $\Po$ to be a non-trivial vector  $\nu$ that is perpendicular to $f$ and is directed towards the exterior of the polyhedron. We say that a face $f$ of $\Po$ is oriented towards vector $e$ if we have $(\nu, e) > 0$. We call  $f$ a \emph{forward face} if it is oriented towards the vector $(1,1,1)$  (i.e. faces whose normal has non-negative coordinates), and we call it a \emph{back face} otherwise. These notions of forward and back faces naturally carry over to the faces of the isometric projection $\Po^*$ of $\Po$.
\begin{lemma} Let $f$ be an arbitrary face of $\Po^*$. Then for any internal angle $\alpha$ of $f$ $\alpha \in \{\pi/3, 2\pi/3, 4\pi/3\}$
\label{lem:no5piover3}
\end{lemma}
\begin{proof} Since $\Po^*$ is an isometric projection of a skeleton of an orthogonal polytope, every edge of $\Po^*$ is parallel to one of the three directions that pairwise form $\pi/3$ angle. In particular, the edges of the face $f$ have two of the three slopes, and adjacent edges have distinct slopes. Hence all we need to show to prove the lemma is that $f$ does not have $5\pi/3$ as an inner angle.

Assume for a contradiction that there exists a pair  $e_1, e_2$ of consecutive edges along $f$ forming an interior angle $5\pi/2$. We direct $e_1$ and $e_2$ away from their common vertex $v$. Since $\angle (e_1, e_2) =5\pi/3$ one of the edges directed positively (w.r.t. the corresponding axis) and the other is directed negatively. Let $e_3$ be the third edge of $\Po^*$ adjacent to $v$ directed away from $v$---see Fig.~\ref{fig:5piover3}
\begin{figure}[h]
  \centering
 \includegraphics{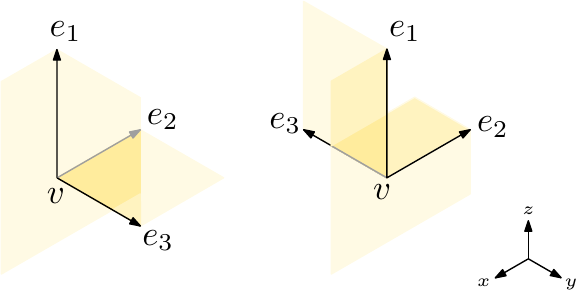}
  \caption{A face with $5\pi/3$ interior angle.}
  \label{fig:5piover3}
\end{figure}
If $e_3$ is the normal of $f$---see Fig.~\ref{fig:5piover3} (left)---then the normal for face $f_{13}$ spanned by $e_1$ and $e_3$ is $-e_2$ and the normal for $f_{23}$ is $-e_1$. Thus one of these normals is directed negatively, which contradicts the definition of a corner polyhedron. Otherwise if $-e_3$ is the normal for $f$---see Fig.~\ref{fig:5piover3}(right)--- the normals for $f_{13}$ and $f_{23}$ are $e_2$ and $e_1$ and we arrive at the same contradiction.
\end{proof}

\begin{lemma}
\label{lem:2piover3}
Each face of $\Po^*$ has the shape of a double staircase: there are two vertices at which the interior angle is $\pi/3$, and the two sequences of interior angles on the paths between these vertices alternate between interior angles of $2\pi/3$ and $4\pi/3$.
\end{lemma}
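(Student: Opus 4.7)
The plan is to combine a local case analysis of the edge-direction transitions at each vertex of $f$ with a sum-of-interior-angles identity. The local step will show that consecutive vertices of $f$ strictly alternate between two types, one of which is forced to contribute exactly $2\pi/3$; the global step will pin down the number of $\pi/3$ angles among the other type to be exactly two.

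I would start by lifting $f$ to its pre-image face $F$ in $\Po$. Since $\Po$ is orthogonal, $F$ is a planar orthogonal polygon lying in a plane perpendicular to one of the coordinate axes; by permuting axes (symmetry) I assume it is perpendicular to the $x$-axis, so the edges of $F$ alternate between directions parallel to the $y$-axis and directions parallel to the $z$-axis. Orient the boundary of $F$ counterclockwise as seen from the outward normal. Because isometric projection sends the six axis directions to the six vertices of a regular hexagon in a fixed cyclic order, this orientation projects to a counterclockwise traversal of $f$. Call a vertex of $f$ a \emph{$yz$-vertex} if the boundary passes from a $y$-parallel edge to a $z$-parallel edge there, and a \emph{$zy$-vertex} otherwise; the alternation of axes around $F$ forces the vertex types to alternate around $f$.

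For each of the two types I would enumerate the four possible directed sign patterns of the incoming and outgoing edges and read off the resulting interior angle from the hexagon of projected directions. The computation gives: every convex $yz$-vertex yields interior angle $\pi/3$ and every reflex $yz$-vertex yields $4\pi/3$, while every convex $zy$-vertex yields $2\pi/3$ and every reflex $zy$-vertex would yield $5\pi/3$. Lemma~\ref{lem:no5piover3} rules out the last case, so every $zy$-vertex contributes exactly $2\pi/3$ and every $yz$-vertex contributes either $\pi/3$ or $4\pi/3$.

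Let $a$, $b$, and $c$ count the interior angles of $f$ equal to $\pi/3$, $2\pi/3$, and $4\pi/3$, respectively, and let $n = a+b+c$. The alternation of types forces $b = a + c = n/2$, and combining this with the classical identity $\tfrac{\pi}{3}(a + 2b + 4c) = (n-2)\pi$ for the sum of interior angles of an $n$-gon simplifies to $a = 2$. The two $\pi/3$ vertices (both $yz$-type) therefore split the boundary of $f$ into two arcs along which the vertex types alternate $zy, yz, zy, \ldots, zy$, so the interior angles read $2\pi/3, 4\pi/3, 2\pi/3, \ldots, 2\pi/3$ along each arc, which is exactly the double-staircase shape. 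The main obstacle is the signed bookkeeping in the transition enumeration; I plan to tame it by noting that at each vertex the convex/reflex choice is determined by the sign of the $\pi/2$ turn in $F$ and hence by the sign of the induced turn on the projected hexagon, which cuts the eight raw cases down to the four that are genuinely different.
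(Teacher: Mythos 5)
Your proof is correct and reaches the same conclusion, but via a more systematic route than the paper. Both arguments rest on Lemma~\ref{lem:no5piover3} (no $5\pi/3$ angle) and the angle-sum identity, but the way they pin down ``exactly two $\pi/3$ angles'' differs. The paper first observes that without any $\pi/3$ angle the remaining angles would alternate $2\pi/3,4\pi/3$ and fail to close up, and that the angle sum forces at least two $\pi/3$ angles; it then rules out three or more by introducing an \emph{orientation} of each $\pi/3$ corner (whether the interior is left or right of the positively directed two-edge path) and arguing that any two $\pi/3$ corners with the same orientation force a $5\pi/3$ angle on one of the arcs between them. You instead classify every vertex by its axis-transition type ($yz$ versus $zy$) and convexity, read off the projected angle in each of the four cases, invoke Lemma~\ref{lem:no5piover3} to kill the $5\pi/3$ case, and then solve the linear system given by the alternation of types ($b=a+c$) and the interior-angle sum ($2a+b-c=6$) to get $a=2$ outright. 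The trade-off: the paper's argument is shorter but leans on an orientation/topology claim whose justification is terse (it effectively presupposes the alternation you make explicit); your version does a little more bookkeeping up front, but the alternation and the case analysis then give the count of $\pi/3$ angles in one step with no separate upper- and lower-bound arguments. One minor caveat worth flagging if you write this up in full: which of the two transition types ($yz$ or $zy$) lands in the $\{\pi/3,4\pi/3\}$ class and which in the $\{2\pi/3,5\pi/3\}$ class depends on the cyclic order in which the projection sends the three positive axes around the hexagon and on which coordinate axis the face is perpendicular to; the argument is symmetric in this choice, so the count $a=2$ is unaffected, but the statement ``$yz$ gives $\pi/3$'' should be phrased as ``one of the two types gives $\pi/3$ or $4\pi/3$, the other gives $2\pi/3$ or $5\pi/3$'' unless you fix a convention.
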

\begin{proof}
First assume for a contradiction that there exists a face $f$ in $\Po^*$ that has no interior angle of $\pi/3$. Since every edge of $f$ is parallel to the edge after the edge it is adjacent to in cyclic order around the face, the interior angles alternate values $2\pi/3$ and $4\pi/3$ around the face. Which makes $f$ an open polyline instead of a simple polygon.

Next, note that any face $f$ of $\Po^*$ has at least two interior angles of $\pi/3$ since all angles of $f$ need to sum up to $(k-2)\pi$, where $k$ is the number of vertices of $f$.
\begin{figure}[h]
  \centering
 \includegraphics{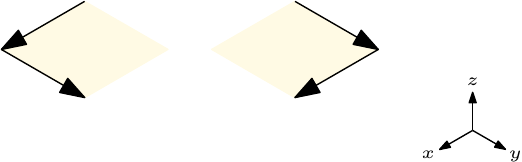}
  \caption{Two differently oriented interior $\pi/3$ angles.}
  \label{fig:2-int-piover3}
\end{figure}

Let $v_1$ and $v_2$ be two vertices of $f$ with interior angles $\pi/3$. We say that these interior angles are oriented the same way, if when we orient edges forming the angles positively (according to positive direction of the corresponding axis), then for both vertices the interior of the face is to the right (to the left)  side of the obtained directed path for both vertices.

Assume for a contradiction that there exists a third vertex $v_3$ with interior angle $\pi/3$. Then at least two of these three angles are oriented the same way, which means that one of the paths connecting these two vertices contains an interior angle of $5\pi/3$. We arrive at a contradiction with Lemma~\ref{lem:no5piover3}.
\end{proof}

\begin{lemma}
\label{lem:angle-sums}
At each vertex of $\Po^*$, there are either three angles of $2\pi/3$, or there is one angle of $4\pi/3$ and two angles of $\pi/3$.
\end{lemma}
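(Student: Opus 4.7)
} My plan is to argue directly from the geometry of the isometric projection. At any vertex $v$ of $\Po$, three mutually perpendicular axis-parallel edges meet, one along each coordinate axis, and each of those edges extends from $v$ in either the positive or the negative direction of its axis. The isometric projection sends the three positive axis directions to three unit vectors in the plane that are equally spaced, making angles of $2\pi/3$ with one another, and sends the three negative axis directions to their antipodes (again mutually at $2\pi/3$). Consequently the three projected edge rays emanating from $v$ in $\Po^*$ must occupy three of six possible directions, with exactly one direction chosen from each of the three axis-aligned pairs, determined by a sign assignment $(\epsilon_x,\epsilon_y,\epsilon_z)\in\{+,-\}^3$.

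First I would reduce to two cases by symmetry. Negating all three signs rotates the local configuration by $\pi$ and so preserves the cyclic sequence of interior angles around $v$; this reduces the eight sign patterns to four, and further distinguishes just two topological types: either all three signs agree, or exactly one differs from the other two. In the all-agree case the three edge rays sit at positions $0$, $2\pi/3$, $4\pi/3$ (up to a global rotation), so the three angular gaps between consecutive rays—which are exactly the three interior face angles at $v$—are all $2\pi/3$, giving the first alternative in the lemma.

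In the mixed case, say $(+,+,-)$ with positive $x,y,z$ at angles $0, 2\pi/3, 4\pi/3$ respectively, the third ray (negative $z$) lies at $4\pi/3-\pi=\pi/3$, so the three rays sit at $0, \pi/3, 2\pi/3$ in cyclic order. The three angular gaps are then $\pi/3$, $\pi/3$, and $2\pi-2\pi/3=4\pi/3$, yielding one interior angle of $4\pi/3$ and two of $\pi/3$, as claimed.

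The one thing to be careful about is the bookkeeping of which of the six isometric directions corresponds to which signed axis, and verifying that the $4\pi/3$ reflex angle arising in the mixed case is consistent with Lemma~\ref{lem:no5piover3}. The consistency is immediate: the largest interior angle produced by the case analysis is $4\pi/3$, which is strictly less than the forbidden value $5\pi/3$, so no contradiction arises and every vertex of $\Po^*$ falls into exactly one of the two patterns described.
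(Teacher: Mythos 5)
Your proof is correct, but it takes a genuinely different route from the paper's. The paper's argument is a one-line counting argument: it invokes Lemma~\ref{lem:2piover3} to restrict the allowable face angles to $\{\pi/3,\,2\pi/3,\,4\pi/3\}$, then observes that $2\pi/3+2\pi/3+2\pi/3$ and $\pi/3+\pi/3+4\pi/3$ are the only triples from this set summing to $2\pi$. You instead argue directly from the local geometry: a vertex carries a sign pattern $(\epsilon_x,\epsilon_y,\epsilon_z)$ indicating which of the six hexagonal directions each of its three edges takes, and the angular gaps between consecutive projected edge rays fall into exactly two multisets depending on whether all three signs agree or not. Your approach is more self-contained---it bypasses the double-staircase Lemma~\ref{lem:2piover3} entirely and does not need Lemma~\ref{lem:no5piover3} except as a sanity check---and it makes explicit a useful structural fact that the paper leaves implicit: the all-$2\pi/3$ vertices are precisely those whose three edges all point into a single orthant (signs all equal), while the $\pi/3,\pi/3,4\pi/3$ vertices are the remaining ones. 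The paper's proof is shorter because it leverages already-established machinery; yours is longer but logically lighter. Both arguments quietly rely on the same underlying fact, namely that the three faces at each non-hidden vertex tile a full $2\pi$ neighborhood in the isometric drawing so that the three angular sectors really are the three interior face angles; you state this identification explicitly (``which are exactly the three interior face angles at $v$''), which is no less rigorous than what the paper does.
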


\begin{proof}
By Lemma~\ref{lem:2piover3}, the only allowable angles are $\pi/3$, $2\pi/3$, and $4\pi/3$. These are the only ways for three angles with these values to add up to a total angle of $2\pi$.
\end{proof}

\begin{lemma}
\label{lem:color-from-angles}
In the vertex two-coloring of $\Po^*$, a vertex $v$ has the same color as the hidden vertex if and only if it has three angles of $2\pi/3$ incident to it, and $v$ has the opposite color from the hidden vertex if and only if it has one angle of $4\pi/3$ and two angles of $\pi/3$ incident to it.
\end{lemma}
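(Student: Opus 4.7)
The plan is to show that the \emph{wide} vertices of $\Po^*$ (those with three $2\pi/3$ angles) and the \emph{sharp} vertices (those with one $4\pi/3$ angle and two $\pi/3$ angles, the two cases identified by Lemma~\ref{lem:angle-sums}) form the two color classes of the bipartite $2$-coloring of $\Po^*$, and that the hidden vertex belongs to the wide class. The graph of $\Po^*$ is bipartite by Lemma~\ref{lem:even-faces} (each face has an even number of vertices, as the relation $a - b = 2$ between the counts $a$ of $2\pi/3$ and $b$ of $4\pi/3$ intermediate angles in a double-staircase face from Lemma~\ref{lem:2piover3} forces the total count $n = 2a$) and is connected, so its bipartite $2$-coloring is unique up to swap. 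Once we establish that the wide/sharp partition is a proper $2$-coloring and that the hidden vertex is wide, the lemma follows immediately.

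The main technical step, and the principal obstacle, will be showing that wide and sharp vertices strictly alternate around the boundary of each face of $\Po^*$. Lemma~\ref{lem:2piover3} guarantees alternation of $2\pi/3$ and $4\pi/3$ along each of the two paths between the $\pi/3$ ends of a face, but leaves open which of those two values sits immediately next to the $\pi/3$ end. I plan to pin this down by tracking edge directions: each face uses only two of the three projected hexagonal slopes, and at a $\pi/3$ end the two incident face edges already exhaust those two slopes. Walking along one such edge away from the end with the face interior on a fixed side, an interior angle of $4\pi/3$ at the next vertex would require a right turn by $\pi/3$, sending the outgoing edge into the forbidden third slope. Hence the first intermediate interior angle must be $2\pi/3$, and the cyclic sequence of interior angles around the face has the form $\pi/3, 2\pi/3, 4\pi/3, 2\pi/3, \ldots, 2\pi/3, \pi/3, 2\pi/3, 4\pi/3, \ldots, 2\pi/3$, in which $2\pi/3$ angles (contributed by wide vertices) strictly alternate with $\pi/3$ and $4\pi/3$ angles (contributed by sharp vertices).

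Strict alternation around every face implies, via any common face of an edge, that the two endpoints of every edge are of opposite wide/sharp type, so the partition into wide and sharp vertices is a proper $2$-coloring of the graph of $\Po^*$ and must coincide with its unique bipartition. To finish, I will locate the hidden vertex within this bipartition. Its three incident edges emanate in the positive axis directions $(1,0,0), (0,1,0), (0,0,1)$, because the three back faces meeting at the hidden vertex have outward normals $(-1,0,0), (0,-1,0), (0,0,-1)$ and the edges must point away from those normals into the interior of the polyhedron. Under the isometric projection along $(1,1,1)$, these three unit vectors map to three directions at pairwise angles of $2\pi/3$ in the drawing plane, so the hidden vertex has three $2\pi/3$ interior angles and is wide. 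Consequently the wide color class is precisely the class containing the hidden vertex, and the sharp class is its complement, as the lemma asserts.
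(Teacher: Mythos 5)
Your proof is correct, and it follows the same overall strategy as the paper's: show that the partition into ``wide'' vertices (three $2\pi/3$ angles) and ``sharp'' vertices (one $4\pi/3$ and two $\pi/3$ angles) is a proper $2$-coloring of $\Po^*$, then identify which class contains the hidden vertex and invoke uniqueness of the bipartition. The main difference is that you fill in a step that the paper's proof glosses over: the paper asserts, without justification, that a vertex adjacent to a $\pi/3$ extreme corner of a face has a $2\pi/3$ angle in that face, but Lemma~\ref{lem:2piover3} only guarantees alternation between $2\pi/3$ and $4\pi/3$ along each chain and says nothing about which value starts the chain. Your slope-and-turn argument --- that a $4\pi/3$ interior angle at the first intermediate vertex would force an edge onto the forbidden third slope --- closes this gap cleanly and is easily checked: after the $\pi/3$ corner a right turn by $\pi/3$ lands on the slope not used by the face. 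You also differ in how you pin down the hidden vertex's class: the paper observes that the three \emph{neighbors} of the hidden vertex have $4\pi/3$ angles and are therefore sharp, while you argue directly that the hidden vertex's three edges project to directions at pairwise angles $2\pi/3$ and hence the hidden vertex is wide. Both are valid; your version speaks of the ``angles'' of a vertex that is not actually drawn in $\Po^*$, which is a mild abuse of language, but the intent --- that the projected edge directions at $2\pi/3$ separation put the hidden vertex on the wide side of the bipartition --- is sound and amounts to the same fact, seen from the other end of the three edges.
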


\begin{proof}
Suppose that $uv$ is an edge in $\Po^*$, and let $f$ be a face containing edge $uv$. If $u$ has an angle of $2\pi/3$ incident to it in $f$, let $f$ be one of the three  then by Lemma~\ref{lem:2piover3} it must be interior to one of the two chains of alternating $2\pi/3$ and $4\pi/3$ angles in $f$, so the adjacent vertex $v$ must either have a $4\pi/3$ angle in the same chain, or it must be one of the two vertices with angles of $\pi/3$ that ends the chain. If $u$ has an angle of $\pi/3$ in $f$, it must be one of the two vertices with that angle in $f$, and its adjacent vertex $v$ is one of the vertices with angles of $2\pi/3$ in the chains connecting these two vertices. And if $u$ has an angle of $4\pi/3$ in $f$, it must be an intermediate vertex in one of the two chains forming $f$ and its neighbor $v$ must have an angle of $2\pi/3$. Thus in all cases the type of vertex $u$ is opposite that of vertex $v$, so the partition of vertices according to the angles described by Lemma~\ref{lem:angle-sums} must coincide with the two-coloring of~$\Po^*$.

It remains to show that the vertices with the angles of $4\pi/3$ are the ones with the opposite color from the hidden vertex. But this is clearly true for the three neighbors of the hidden vertex, and because there is a unique two-coloring of $\Po^*$ the result follows for all its remaining vertices.
\end{proof}

Now, construct a graph $\C^*$ that has as its the vertices of $\Po^*$ that are adjacent to a $\pi/3$ angle. Let the edges of $\C^*$ connect pairs of these vertices belonging to the same face.

\begin{lemma}
$\C^*$ is a collection of vertex disjoint simple cycles.
\end{lemma}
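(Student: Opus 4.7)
The plan is to show that $\C^*$, viewed as an abstract multigraph, is $2$-regular and has neither loops nor multiple edges; since any simple $2$-regular graph is a disjoint union of vertex-disjoint simple cycles, this gives the claim.

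First I will count edges face-by-face. By Lemma~\ref{lem:2piover3}, every face $f$ of $\Po^*$ is a double staircase and therefore contains exactly two distinct vertices whose interior angle in $f$ equals $\pi/3$. These two vertices are joined by a single edge of $\C^*$, so each face of $\Po^*$ contributes exactly one edge. Next I will count edge-ends vertex-by-vertex. The vertices of $\C^*$ are by definition those of $\Po^*$ incident to at least one $\pi/3$ angle. Combining Lemma~\ref{lem:angle-sums} with the identification in Lemma~\ref{lem:color-from-angles}, such a vertex $v$ is incident to exactly two $\pi/3$ angles, lying in two \emph{different} faces of $\Po^*$. Each of these faces supplies the unique $\C^*$-edge pairing $v$ with the other sharp vertex of that face, so $v$ has degree exactly $2$ in $\C^*$.

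With $2$-regularity in hand, the connected components of $\C^*$ are already cycles in the multigraph sense, so the remaining work is to exclude loops and parallel edges. Loops are immediate to rule out: a loop would require the two sharp vertices of some face to coincide, contradicting Lemma~\ref{lem:2piover3}. Parallel edges are the main obstacle and will take the most care: a pair of parallel edges between $u$ and $v$ would correspond to two distinct faces $f_1\neq f_2$ each having $\{u,v\}$ as its unordered pair of sharp vertices. Since $\Po^*$ underlies a bipartite planar graph (Lemma~\ref{lem:even-faces}), every face has length at least~$4$, so in each double-staircase face the two sharp vertices are separated by at least two edges along each of the two staircase paths; in particular $u$ and $v$ are non-adjacent along $\partial f_1$ and along $\partial f_2$. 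A short planarity/Jordan-curve argument using the four arcs of $\partial f_1\cup\partial f_2$ between $u$ and $v$ then exhibits $\{u,v\}$ as a $2$-cut of $\Po^*$, contradicting the $3$-connectivity of $\Po^*$ (which holds because $\Po^*$ is dual to an Eulerian triangulation; see Lemma~\ref{lem:dual-Eulerian}). Ensuring this final cut argument is airtight---in particular, verifying that interior vertices of the four arcs really are separated once $u$ and $v$ are removed---is the step I would most want to double-check.
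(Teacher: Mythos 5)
Your proof is correct in conclusion, and it is actually more careful than the one the paper gives. The paper's own proof is very terse: it observes that each vertex of $\C^*$ is incident to exactly two $\pi/3$ angles by Lemma~\ref{lem:angle-sums}, concludes $\C^*$ is $2$-regular, and stops there. You go further by explicitly ruling out loops and parallel edges, which is a legitimate concern if one wants each component to be a \emph{simple} cycle rather than a digon or loop in a multigraph, and the paper leaves that implicit. One small point of looseness worth fixing: from ``$\Po^*$ is bipartite, so every face has length at least $4$'' it does \emph{not} immediately follow that the two sharp vertices are separated by at least two edges on each of the two staircase arcs---one could a priori imagine a $1+3$ split. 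The clean way to get non-adjacency is via Lemma~\ref{lem:color-from-angles}: all sharp-angled vertices lie on the same side of the bipartition as each other (the side opposite the hidden vertex), so no two sharp vertices of a face are adjacent, and since the face has length $\ge 4$ each boundary arc between them has $\ge 2$ edges. Equivalently one can argue from the alternation condition in Lemma~\ref{lem:2piover3}, which forces each chain to have an odd, hence positive, number of intermediate vertices. With that patched, your $2$-cut argument for excluding parallel edges is sound, since in a $3$-connected plane graph two non-adjacent vertices lie together on at most one face.
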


\begin{proof}
The neighbors of a vertex $v$ in $\C^*$ come from the angles of $\pi/3$ incident to $v$ in $\Po^*$. By lemma~\ref{lem:angle-sums}, if there are any such angles there are exactly two of them.
\end{proof}

What we actually interested in is not the cycle set $\C^*$ itself but a structure dual to it, which we call a \emph{rooted cycle cover} and denote $\C$.

We construct $\C$ as follows. We subdivide every edge of $\C^*$, replacing it by a two-edge path whose middle vertex indicates the face whose $\pi/3$ angles the edge connects. Next we turn every two-edge-path between two new middle vertices by a single edge. By construction we obtain a collection of cycles $\C$ isomorphic to $\C^*$---see Fig.~\ref{fig:corners}, right.

\begin{lemma}
Let $\Delta$ be the Eulerian triangulation dual to corner polyhedron $\Po$. Then the vertex set of $\C$ contains every interior vertex of $\Delta$, and the edge set of $\C$ is a subset of edges of~$\Delta$.
\end{lemma}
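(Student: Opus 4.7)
The plan is to trace the construction of $\C$ from $\C^*$, using the correspondences that each middle vertex of $\C$ is labelled by a face of $\Po^*$ and that faces of $\Po^*$ are dual to vertices of $\Delta$. For the vertex containment, I would identify the interior vertices of $\Delta$ with the forward faces of $\Po$, which are exactly the bounded faces of $\Po^*$. By Lemma~\ref{lem:2piover3} each bounded face $f$ has exactly two sharp corners, which are joined by a unique edge of $\C^*$; the subdivision step introduces a middle vertex labelled by $f$ that survives the merge step as a vertex of $\C$, and equating it with the $\Delta$-dual of $f$ yields the containment.

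For the edge containment, each edge of $\C$ arises by collapsing a two-edge path from $M_{f_1}$ through a sharp vertex $v$ of $\Po^*$ to $M_{f_2}$, where $f_1, f_2$ are the two faces of $\Po^*$ carrying the two $\pi/3$ angles at $v$ guaranteed by Lemma~\ref{lem:angle-sums}. The critical observation is that $v$ has degree three in $\Po^*$: the only degree-two vertices are the three neighbors of the hidden vertex, and their two remaining angles must sum to $2\pi$ while each lies in $\{\pi/3,2\pi/3,4\pi/3\}$ by Lemma~\ref{lem:no5piover3}, forcing the split $(2\pi/3, 4\pi/3)$ and excluding any sharp angle there. Thus at $v$ the three incident edges pairwise separate the three surrounding faces of $\Po^*$, so $f_1$ and $f_2$ share precisely one edge at $v$; that edge also lies in $\Po$ (not being incident to the hidden vertex, since $v$ is not), and its dual in $\Delta$ is the required $\C$-edge between $M_{f_1}$ and $M_{f_2}$.

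The main technical obstacle is the unique outer face of $\Po^*$, which merges the three back faces of $\Po$ and so does not correspond to any single vertex of $\Delta$. For the identifications above to remain well defined, I would argue that whenever the outer face carries a sharp corner the corresponding middle vertex of $\C$ admits a canonical identification with one of the root-triangle vertices of $\Delta$, namely the dual of the back face on which that $\C^*$ edge lies. Carrying this out requires the same angle-sum argument that underlies Lemma~\ref{lem:2piover3}, applied to the projection of each back face to show it likewise has exactly two sharp projected corners, together with a case analysis of the hidden-vertex angle triple, which by Lemma~\ref{lem:no5piover3} must be either $(2\pi/3,2\pi/3,2\pi/3)$ or $(\pi/3,\pi/3,4\pi/3)$. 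The resulting pairing of outer-face sharp corners by back-face membership gives the consistent identification and completes both inclusions.
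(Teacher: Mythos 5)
Your first two paragraphs track the paper's proof closely and correctly. The second adds a worthwhile precision: the paper reasons as if $\Po^*$ were cubic, but the three neighbors of the hidden vertex have degree two in $\Po^*$, and you rightly check (via the forced $(2\pi/3,4\pi/3)$ split) that these carry no sharp angle before concluding that every $\C$-edge comes from a genuine degree-3 vertex whose two sharp faces share an edge of $\Po$.

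The concern raised in your third paragraph is legitimate: the outer face of $\Po^*$ is not a vertex of $\Delta$, so if it carried a $\pi/3$ angle the construction would produce a middle vertex with no dual and the resulting $\C$-edges would lie outside $\Delta$; the paper's proof is silent on this. But the resolution you sketch does not work. The outer face is not any one back face, and its interior angle at a vertex is \emph{not} the projected angle of any back face there: the back faces project to regions that overlap the front faces, and what is drawn is the complement of the front-face projections, not the back faces themselves. So even granting that each back face has exactly two sharp projected corners, this says nothing about the outer face, and the proposed pairing and identification of the outer middle vertex with a root-triangle vertex does not go through. Two further points: the hidden vertex's angle triple is always $(2\pi/3,2\pi/3,2\pi/3)$, since all three of its edges point into the positive orthant, so the case $(\pi/3,\pi/3,4\pi/3)$ you contemplate does not arise; and if the identification did go through it would put a root-triangle vertex into $\C$, contradicting the definition of a rooted cycle cover and Lemma~\ref{lem:white-rabbit} below.

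The right way to close the gap is to show that the outer face has no $\pi/3$ angle at all, so the problematic middle vertex never exists. Any degree-3 vertex $v$ of $\Po^*$ incident to the outer face lies on exactly one back face, say $F_x\subset\{x=0\}$ (a vertex on two back faces lies on a coordinate axis and, by the orthogonal convexity of corner polyhedra, is therefore a degree-two neighbor of the hidden vertex). Its $x$-edge points into $\{x\ge 0\}$, and both forward faces at $v$ contain that edge and have interiors in $\{x\ge 0\}$, hence form convex $\pi/2$ corners at $v$; each therefore projects to $\pi/3$ or $2\pi/3$, so the outer-face angle at $v$ equals $2\pi$ minus their sum and is at least $2\pi/3$. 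With this in hand, your first two paragraphs already establish both claims of the lemma.
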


\begin{proof}
The interior vertices of $\Delta$ correspond to the forward faces of $\Po^*$.
By Lemma~\ref{lem:2piover3} every  forward face of $\Po^*$ has two $\pi/3$ interior angles, hence by construction $\C$ contains every interior vertex of $\G$.
By construction two vertices $v_1$ and $v_2$ of $\C$ share an edge if and only if the corresponding faces $f_1$ and $f_2$ of $\Po^*$ have a common vertex, which in its turn (since $\Po^*$ is a cubic graph) means that $f_1$ and $f_2$ share an edge in $\Po^*$ or in other words $v_1$ and $v_2$ are adjacent in $\C$.
\end{proof}

$\Po^*$ is a bipartite graph, and its vertex set has a unique (up to permutation of colors) 2-coloring such that any two vertices sharing an edge are colored differently. This induces a face 2-coloring of the dual graph of $\Po^*$, an Eulerian triangulation $\Delta$. We adopt the convention that the color of the outer triangle is white and that the other triangle color is blue, as shown in our figures. Let $uvw$ be the outer face of $\Delta$.

\begin{lemma}
\label{lem:white-rabbit}
Every inner white triangle contains exactly one edge of $\C$.
\end{lemma}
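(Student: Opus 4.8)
The plan is to combine a global count with a local analysis around the vertex of $\Po$ dual to an inner white triangle.

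First I would do the bookkeeping. By the lemmas just proved, $\C$ is a disjoint union of simple cycles whose vertex set is exactly the set of interior vertices of $\Delta$, so $\C$ has exactly as many edges as $\Delta$ has interior vertices, namely $|V(\Delta)|-3$. On the other hand, the dual of $\Delta$ is cubic (because $\Delta$ is a triangulation) and bipartite (because $\Delta$ is Eulerian), its two colour classes being the white and the blue triangles of $\Delta$; a cubic bipartite graph has equal colour classes, so $\Delta$ has as many white triangles as blue triangles, and since a maximal planar graph has $2|V(\Delta)|-4$ triangles in all, there are $|V(\Delta)|-2$ white triangles, exactly one of which is the outer face $uvw$. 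Hence the number of inner white triangles of $\Delta$ is $|V(\Delta)|-3$, equal to the number of edges of $\C$.

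Next I would observe that every edge of $\C$ lies in exactly one inner white triangle. An edge of $\Delta$ is shared by exactly one white and one blue triangle, and an edge of $\C$ joins two interior vertices, so neither of the two triangles it bounds can be the all-boundary triangle $uvw$; thus its unique white triangle is an inner one. Summing over all edges of $\C$, the total number of incidences between $\C$-edges and inner white triangles is $|V(\Delta)|-3$; since there are exactly $|V(\Delta)|-3$ inner white triangles, it suffices to prove that no inner white triangle contains two edges of $\C$, for then every one of them contains exactly one.

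To rule out two $\C$-edges in an inner white triangle $\tau$ I would pass to the polyhedron. The triangle $\tau$ is dual to a vertex $a$ of $\Po$ other than the hidden vertex, and all three faces at $a$ are forward faces: a white vertex has its three incident edges either all directed positively or all directed negatively along the coordinate axes, and in the all-positive case all three incident faces would be back faces, forcing $a$ to be the hidden vertex. Since $a$ has a $2\pi/3$ angle in each of its three faces, it is a non-sharp chain-interior vertex of each of those double staircases. For an edge $g$ of $\Po$ at $a$, the dual edge of $\tau$ lies in $\C$ iff the second endpoint $b$ of $g$ is a sharp vertex whose two $\pi/3$ angles occur in the two faces incident to $g$ --- equivalently, $b$'s single $4\pi/3$ angle lies in its third face (here I use that $\Delta$ is simple, so only $g$'s far endpoint can produce this edge, together with the fact that $a$ itself is not sharp). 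Now suppose $\tau$ contained two $\C$-edges, dual to edges $g$ and $g'$ at $a$, and let $F$ be the face of $\Po$ incident to both $g$ and $g'$. Then the far endpoints $b$ of $g$ and $b'$ of $g'$ are both sharp corners of $F$; as $F$ has exactly two sharp corners and $a$ lies on $F$ between $b$ and $b'$, the staircase chain of $F$ through $a$ must be precisely $b\,a\,b'$. To finish I would track edge directions: the two $\pi/3$ angles at $b$ pin down the directions of all three edges at $b$ (one of them $g$, pointing back towards $a$), and likewise at $b'$, and these directions are incompatible with $a$ carrying a $2\pi/3$ angle in each of its faces together with the double-staircase shape of the third face at $a$.

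The last paragraph is the crux and the main obstacle: the count and the remark that $\C$-edges sit in inner white triangles are routine, but eliminating the ``two sharp neighbours'' configuration genuinely uses the geometry of a realised orthogonal polyhedron --- the rigid relations between the edge directions at adjacent vertices and the face shapes supplied by Lemmas~\ref{lem:no5piover3}--\ref{lem:color-from-angles} --- rather than any purely combinatorial property of $\Delta$.
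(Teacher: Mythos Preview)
Your global count and the reduction in steps 1--3 are correct, but step~4 rests on a false premise. You assert that the vertex $a$ dual to an inner white triangle has all three edges of the same sign (hence three $2\pi/3$ angles and three forward faces). Under the paper's convention the opposite is true: the inner white triangles are dual to the \emph{sharp} vertices, those with one $4\pi/3$ and two $\pi/3$ angles. One sees this already for the cube, where the inner white vertices are $(1,1,0)$, $(1,0,1)$, $(0,1,1)$, each with mixed edge-signs and one back face among its three. The literal wording of Lemma~\ref{lem:color-from-angles} has the two colour classes interchanged, which is presumably what misled you; the paper's own proof of the present lemma uses the correct correspondence.

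With the right identification the lemma is a one-liner and no exclusion argument is needed: each edge of $\C$ is manufactured from a sharp vertex $v$ and is precisely the edge of the triangle dual to $v$ joining $v$'s two $\pi/3$-faces, while the second triangle on that $\Delta$-edge is dual to a neighbour of $v$ and hence blue. So ``$\C$-edge $\mapsto$ its white triangle'' is already the bijection your count predicts. Your final-paragraph geometry cannot be repaired as stated, because the configuration you try to exclude --- a $2\pi/3$-type vertex $a$ with two neighbours whose $\C$-edges land in the triangle dual to $a$ --- does occur; it just occurs in blue triangles rather than white ones (for the cube, the blue triangle dual to $(1,1,1)$ contains all three edges of $\C$).
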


\begin{proof}
By Lemma~\ref{lem:color-from-angles} an inner white triangle is dual to a vertex of $\Po^*$ where there is one $4\pi/3$ angle and two $\pi/3$ angles. This triangle therefore contains exactly one edge, the edge connecting the two faces with the $\pi/3$ angles.
\end{proof}

We define a \emph{rooted cycle cover} more generally to be a structure with this form: a collection of cycles covering all of the interior vertices of $\Delta$ that has exactly one edge in every inner white triangle of $\Delta$. The results of this section can be summarized as showing that any corner polyhedron gives rise to a rooted cycle cover on the dual graph, with the root triangle dual to the hidden vertex. As we will show in the subsequent sections, this combinatorial abstraction of the geometry of a corner polyhedron provides enough information to reconstruct another corner polyhedron for the same graph.

\section*{Appendix III: Regular edge labeling}

A simple orthogonal polyhedron $\Po$ induces a rainbow partition for its dual graph $\Delta$, where each edge of $\Delta$ gets its color based on the orientation of its dual axis parallel edge. Although $\Delta$ is an undirected graph, we can use the left-to-right and bottom-to-top orders of faces of $\Po$ to define a direction for each edge of $\Delta$. More precisely, we do the following. We orient every edge of $\Po$ positively (i.e. such that its only non-trivial coordinate is positive) and then orient the corresponding dual edge in $\Delta$ such that it crosses its primal edge from left to right.

When $\Po$ is a corner polyhedron, this new labeling of $\Delta$ by directions and colors combines the properties of the rainbow partition (the edges of every triangle of $\Delta$ all have different colors, edges around each vertex of $\Delta$ alternate between two colors in cyclic order, and any two adjacent vertices in $\Delta$ have edges that alternate between different pairs of colors) with a similar alternation property for the directions of its edges:

\begin{lemma}
\label{lem:orient-alternate}
At each interior vertex $v$ of $\Delta$, all but two of the triangles incident to $v$ have one incoming and one outgoing edge; the two exceptional triangles are both white. The orientations of the edges at each exterior vertex alternate between incoming and outgoing.
\end{lemma}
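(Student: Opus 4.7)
The plan is to analyze at each vertex $v$ of $\Delta$ the cyclic pattern of incoming and outgoing dual edges by working in the primal face $f$ of $\Po^*$ dual to $v$. For an interior vertex $v$, the face $f$ is an interior face of $\Po^*$, and by Lemma~\ref{lem:2piover3} it is a double staircase with exactly two sharp $\pi/3$ corners, while its remaining corners alternate between $2\pi/3$ and $4\pi/3$.

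First I would set up a consistent traversal of $\partial f$ by walking counterclockwise as viewed from the outward face normal, which for a front face projects to counterclockwise in the plane. For each edge $e$ of $f$ I classify $e$ as \emph{aligned} if this traversal direction matches the positive axis orientation of $e$ and \emph{anti-aligned} otherwise. Because the counterclockwise walk keeps $f$ on the left of each edge in $\Po^*$, the ``left-to-right'' dual orientation convention gives: aligned edges yield outgoing dual edges at $v$, and anti-aligned edges yield incoming dual edges. Hence the cyclic pattern of incoming and outgoing dual edges at $v$ matches the cyclic alignment pattern along $\partial f$.

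The key combinatorial step is to show that alignment flips at every non-sharp corner of $f$ and is preserved at every sharp corner. Because the three projection directions $\pm\mathbf{u}_x, \pm\mathbf{u}_y, \pm\mathbf{u}_z$ make pairwise angles of only $\pi/3$ or $2\pi/3$, a projected $\pi/3$ interior angle at a corner $w$ forces the two edges emanating from $w$ to have opposite axis-direction signs (one in a positive axis direction, one in a negative), whereas a $2\pi/3$ or $4\pi/3$ angle forces them to share the same sign. Translating these emanating signs into traversal signs (noting that one edge arrives at $w$ and the other departs) flips the conclusion: sharp corners have matching traversal signs (alignment preserved, so both dual edges at $v$ point the same way, producing an exceptional triangle), while non-sharp corners have opposite traversal signs (alignment flips, yielding one incoming and one outgoing). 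This gives exactly two exceptional triangles at $v$, one per sharp corner of $f$. To identify the exceptional triangles as white, I would chase the bipartite $2$-coloring of $\Po^*$ around $\partial f$: consecutive corners alternate between the two color classes, and a parity count on the double staircase (using that each of the two chains between sharp corners has one more $2\pi/3$ corner than $4\pi/3$ corner, as in the proof of Lemma~\ref{lem:2piover3}) places both sharp corners in the same class; Lemma~\ref{lem:color-from-angles} together with the convention that the hidden vertex dualizes to the white outer triangle then identifies this class as white.

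For the exterior case, the same traversal argument applies to each back face $f'$ dual to an exterior vertex $v'$, but with a global orientation flip: the outward normal of $f'$ is opposite to the projection direction, so $3$D counterclockwise from the normal now projects to $2$D clockwise in $\Po^*$. This reverses the alignment/outgoing and anti-alignment/incoming correspondence for back faces, and, in combination with the sharp-corner/non-sharp analysis transported to $f'$, yields a strictly alternating in/out pattern at $v'$. The main technical obstacle will be the sign bookkeeping that justifies the sharp-corner-$\Leftrightarrow$-matching-traversal-signs equivalence precisely, and especially handling the orientation flip for back faces so that the exterior-vertex analysis cleanly yields alternation rather than further exceptional triangles; the color identification for the interior case, though conceptually clear, is a secondary nontrivial step that depends on reconciling the bipartite parity with the paper's white/blue convention.
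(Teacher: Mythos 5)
Your treatment of the interior vertices is, at bottom, the same argument as the paper's: translate the double-staircase structure of Lemma~\ref{lem:2piover3} through the left-to-right dual-orientation convention, observe that the dual orientations flip at each $2\pi/3$ or $4\pi/3$ corner and are preserved at the two $\pi/3$ corners, and identify the exceptional triangles via Lemma~\ref{lem:color-from-angles}. Your ``alignment'' bookkeeping spells out exactly what the paper states more briefly (``different orientations if dual to edges in the same monotone path, same orientation if in different paths''); the parity chase you add to place both sharp corners in the same color class is unnecessary, since Lemma~\ref{lem:color-from-angles} already tells you directly which color class the $\pi/3$ vertices fall into.

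The exterior vertices are where the proposal has a real gap, which you partially flag yourself. A global ``orientation flip'' that swaps the alignment-to-outgoing convention does not turn a ``two exceptional triangles'' pattern into a strictly alternating one --- reversing every in/out label is a global relabelling that preserves whether consecutive edges agree or disagree, so if the sharp-corner argument transported to a back face produced two same-orientation pairs, it would still produce two same-orientation pairs after the flip. The paper does not attempt to reuse the staircase analysis for the back faces at all; instead it argues directly that the edges of $\Delta$ at an exterior vertex $x$ are dual to the consistently (positively) oriented primal edges on the boundary cycle of the back face $f_x$, and that same-colored dual edges at $x$ therefore all get the same orientation while the two colors get opposite orientations, yielding strict alternation. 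To close your gap you would need to explain not just that the left/right convention flips for a back face, but also why the corner-by-corner parity analysis does not transfer --- for example, because the ``interior angle'' of a back face in the abstract planar embedding is not the projected geometric angle of the $3$D polygon, so the $\pi/3$/non-$\pi/3$ dichotomy you rely on no longer lines up with the alignment pattern. Without that, the exterior case does not follow from the machinery you set up.
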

\begin{proof}
According to Lemma~\ref{lem:2piover3} each face of the isometric projection of $\Po$ has two vertices with interior angles  of $\pi/3$ connected by two paths each alternating its interior angles of $2\pi/3$ and $4\pi/3$. In three dimensions, this means that each face $f$ of $\Po$ has two vertices connected by two paths of axis-aligned edges that are monotone in each coordinate direction. Now let $v$ be the vertex  of $\Delta$ dual to $f$. Any two dual edges incident to~$v$ and consecutive in the cyclic ordering of edges incident to~$v$ have different orientations with respect to $v$ if they are dual to edges in the same monotone path around $f$ and have the same orientation if they belong to different monotone paths---see Fig.~\ref{fig:xy2fat} for an illustration.

Next we show that the two exceptional triangles adjacent to a vertex $v$ are both white in the two-coloring of $\Delta$. These two triangles correspond to vertices adjacent to $\pi/3$ angles, so this result follows immediately from Lemma~\ref{lem:color-from-angles}.

Finally, label  the three external vertices of $\Delta$ $x$, $y$ and $z$ depending on which direction the corresponding back face of $\Po$ is perpendicular to. Consider the vertex $x$. The edges adjacent to $x$ correspond to the edges forming a $x$-monotone path $p_x$ around face $f_x$. The edges of $p_x$ are oriented positively, hence all edges of one color adjacent to $x$ are oriented the same way, and edges of different colors have different orientations. Thus edges alternate orientations around $x$. The same holds for $y$ and $z$.
\end{proof}
%
%
In analogy with regular edge labelings of dual graphs of rectangular layouts in two dimensions~\cite{KanHe-TCS-97} that originate in a very similar manner we are going to call the structure with properties described above a \emph{regular edge labeling} of Eulerian triangulation $\Delta$. That is, a regular edge labeling is an assignment of directions and colors to the edges of $\Delta$ so that the colors form a rainbow partition, each exterior vertex of $\Delta$ has edges with alternating directions, and each interior vertex $v$ of $\Delta$ has edges that alternate directions except within two white triangles, where the directions of the edges at $v$ do not alternate.

Just as in the two-dimensional case, we will eventually show that the correspondence between a polyhedron and a regular edge labeling of its dual graph works both ways---that is if we can construct a regular edge labeling for an Eulerian triangulation $\Delta$ we can represent its dual as a simple orthogonal polyhedron and more specifically as a corner polyhedron. The following lemmas are necessary for demonstrating this correspondence.

\begin{lemma}
In a regular edge labeling, for every interior vertex $v$ one of the two exceptional white triangles has two incoming edges, and one of them has two outgoing edges.
\end{lemma}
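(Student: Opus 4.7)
The plan is to work purely with the cyclic structure of edge directions around $v$, exploiting the fact that blue triangles all enforce alternation while only two white triangles are exceptional. First, I label the incident edges $e_1, \ldots, e_{2k}$ of $v$ in cyclic order (the degree is even because $\Delta$ is Eulerian). By the rainbow-partition axiom the colors at $v$ alternate between two of the three colors, so I may assume $e_i$ has color $A$ for odd $i$ and color $B$ for even $i$. The $2k$ triangles $T_i$ spanned at $v$ by $e_i$ and $e_{i+1}$ alternate between white and blue in the 2-coloring of $\Delta$, since a proper 2-coloring of a planar triangulation assigns opposite colors to adjacent faces; so I may further assume $T_i$ is white for odd $i$ and blue for even $i$.

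Next, writing $d_i \in \{\mathrm{in}, \mathrm{out}\}$ for the direction of $e_i$ at $v$, I translate the regular-edge-labeling axioms at $v$ into constraints on the sequence $d_1, \ldots, d_{2k}$. The blue-triangle condition forces $d_{2j} \neq d_{2j+1}$ for every $j$ (indices mod $k$), while the white-triangle condition forces $d_{2j-1} \neq d_{2j}$ except at two values $j_1$ and $j_2$ of $j$ where $d_{2j-1} = d_{2j}$. Grouping into pairs $a_j := d_{2j-1}$ and $b_j := d_{2j}$, these constraints become $a_{j+1} = \neg b_j$ for every $j$, $b_j = a_j$ for $j \in \{j_1, j_2\}$, and $b_j = \neg a_j$ for every other $j$.

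The conclusion then follows from a short recurrence. For $j \notin \{j_1, j_2\}$ we have $b_j = \neg a_j = \neg\neg b_{j-1} = b_{j-1}$, while for $j \in \{j_1, j_2\}$ we have $b_j = a_j = \neg b_{j-1}$. Hence $b_1, \ldots, b_k$ is a cyclic binary sequence that flips at exactly the two indices $j_1$ and $j_2$. Because such a sequence must return to its initial value, the two flips go in opposite directions, so one of $(a_{j_1}, b_{j_1})$ and $(a_{j_2}, b_{j_2})$ equals $(\mathrm{in}, \mathrm{in})$ and the other equals $(\mathrm{out}, \mathrm{out})$. Translating back, one exceptional white triangle has both of its edges at $v$ incoming and the other has both outgoing. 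The only subtle step is the translation of the regular-edge-labeling axioms into the paired constraints above; once that is set up, the remainder is a one-line recurrence together with the standard parity observation about a cyclic binary sequence with exactly two flips.
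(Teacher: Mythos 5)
Your proof is correct and takes essentially the same approach as the paper's: both trace the strict alternation of edge directions around $v$ from one exceptional white triangle to the other and deduce that the two must have opposite orientations. Your version makes the paper's clockwise walk explicit via the recurrence on $(a_j,b_j)$ and closes with the standard parity observation on cyclic binary sequences; the phrase ``the two flips go in opposite directions'' is a slight abuse (flips in a binary sequence are undirected), but the intended conclusion---that the two constant runs of $b_j$ take opposite values, hence one exceptional triangle is in--in and the other out--out---is exactly right.
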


\begin{proof}
If $X$ and $Y$ are the two exceptional white triangles, then the sequence of triangles between them (say, clockwise from $X$ to $Y$) strictly alternates between blue and white triangles. If $X$ has two incoming edges, then the blue triangles in this sequence all have their first edge in this clockwise order as the incoming one and their second edge in the clockwise order as the outgoing one, and the white triangles vice versa. So when we get to $Y$, the first edge in the clockwise order will be outgoing, and $Y$ will have two outgoing edges. Similarly, if we start with a triangle $Y$ that has two outgoing edges, the strict alternation of the triangles in clockwise order from $Y$ to $X$ means that when we get back to $X$ it will be forced to have two incoming edges. It's not possible for the two exceptional triangles both to be incoming, nor for both of them to be outgoing.
\end{proof}

\begin{lemma}
\label{lem:blue-cycle}
In a regular edge labeling, the edges of each blue triangle are oriented in a directed cycle.
\end{lemma}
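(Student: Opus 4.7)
The plan is to reduce the global claim to a purely local constraint at each vertex of a blue triangle $T$, then observe that this local constraint, applied simultaneously at all three vertices, forces the three edges of $T$ to close up into a directed cycle.

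First I would establish the local claim: at every vertex $v$ of a blue triangle $T$, the two edges of $T$ incident to $v$ have opposite orientations relative to $v$ (one points into $v$ and the other points out). If $v$ is an interior vertex of $\Delta$, this follows directly from the definition of a regular edge labeling together with the fact that the two ``exceptional'' triangles at $v$ are both white, so the blue triangle $T$ must be one of the non-exceptional triangles at $v$ and therefore contributes one incoming and one outgoing edge at $v$. If $v$ is an exterior vertex, I would use the alternation of orientations of the edges incident to $v$ plus the observation that the two edges of $T$ at $v$ are cyclically consecutive in the rotation around $v$ (since they are the two sides of a single triangular face at $v$), and consecutive edges in an alternating pattern have opposite orientations.

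Next, with this local one-in/one-out property in hand, I would argue the triangle is a directed cycle as follows. Label the vertices of $T$ as $a,b,c$ and pick an arbitrary orientation of edge $ab$, say $a\to b$. The local constraint at $a$ forces the edge $ca$ to be incoming at $a$, hence oriented $c\to a$; the local constraint at $b$ forces $bc$ to be outgoing at $b$, hence oriented $b\to c$; and the orientations at $c$ are then automatically consistent with the local constraint there. The three edges therefore form the directed $3$-cycle $a\to b\to c\to a$. The opposite initial choice yields the reverse directed cycle, and no non-cyclic orientation of three edges on three vertices satisfies the one-in/one-out rule at every vertex (any acyclic tournament on a triangle must have a source or a sink, violating the rule).

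I do not anticipate a serious obstacle. The only subtlety is taking care of the exterior-vertex case, where one must invoke the alternation property rather than the interior ``all but two triangles'' property; the key observation making that case work is simply that the two edges of a single triangle at a vertex are cyclically adjacent in the local rotation, so any alternating pattern assigns them opposite orientations.
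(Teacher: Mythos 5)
Your proposal is correct and takes essentially the same route as the paper: the paper's one‑sentence proof simply asserts that the two edges of a blue triangle at each vertex have opposite orientations and that this forces the directed cycle, while you unpack both steps (justifying the local one‑in/one‑out constraint at interior vertices via the two white exceptional triangles and at exterior vertices via the alternation condition, then closing the cycle by propagating the orientation). There is no gap; the extra care you take with the exterior‑vertex case, using that the two face edges at a vertex are cyclically consecutive, is a detail the paper leaves implicit but is entirely correct.
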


\begin{proof}
This follows immediately from the fact that the orientations of the edges at each vertex of the triangle must be opposite to each other.
\end{proof}

\begin{lemma}
Let $\C$ be a rooted cycle cover of an Eulerian triangulation $\Delta$. Then $\Delta$ admits a regular edge labeling.
\end{lemma}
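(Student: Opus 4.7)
The plan is to construct a regular edge labeling of $\Delta$ directly from $\C$, by first fixing a rainbow partition and then using $\C$ to orient every edge. For the rainbow partition, I would invoke the uniqueness (up to permutation of colors) of 3-face-colorings of 3-connected bipartite cubic planar graphs, which transfers to a rainbow partition of $\Delta$; after permuting colors, Lemma~\ref{lem:sep-tri-3x} allows us to assume that the edges of the root triangle receive all three colors, one of each.

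For the orientation, the key idea is to turn each inner white triangle into a ``source/sink'' configuration pinned to its $\C$-edge. First, orient each cycle of $\C$ so that its interior region (with respect to the planar embedding of $\Delta$) lies on the left as the cycle is traversed; this orients every $\C$-edge. Then, in each inner white triangle $T$ with $\C$-edge $uv$ oriented $u \to v$, orient the remaining two edges of $T$ as $u \to w$ and $w \to v$, where $w$ is the third vertex of $T$. This makes $T$ a triangle with source $u$, sink $v$, and middle $w$. Since each edge of $\Delta$ lies in exactly one white triangle (its other triangle is blue), this rule orients every non-root edge of $\Delta$. The edges of the root triangle are oriented last, by requiring that each blue triangle sharing an edge with the root be a directed cycle.

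Verification splits into three parts. At each interior vertex $v$, the required alternation is immediate from the source/sink rule: a white triangle $T$ incident to $v$ has both of its $T$-edges at $v$ sharing orientation precisely when $v$ is an endpoint of $T$'s $\C$-edge, and otherwise $v$ is the middle vertex of $T$ and the two edges at $v$ alternate. Since $v$ has exactly two incident $\C$-edges and each $\C$-edge lies in a unique white triangle (Lemma~\ref{lem:white-rabbit}), this yields exactly two exceptional white triangles at $v$, as required. At each exterior vertex there are no incident $\C$-edges, so $v$ is always in middle position of every incident white, and orientations strictly alternate around $v$.

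The main obstacle, and the part requiring real work, is checking that every blue triangle is oriented as a directed cycle (Lemma~\ref{lem:blue-cycle}), because this condition couples the orientation choices of $\C$-cycles lying in different parts of $\Delta$. I would handle it by a case analysis at each blue triangle $B = xyz$: the orientations of $B$'s three edges come from the three adjacent white triangles, and enumerating the possible placements of the $\C$-edge within each such white, together with the convention that $\C$-cycles are oriented interior-on-the-left, reduces the desired cyclic orientation of $B$ to a parity identity that follows from the combinatorial structure of $\C$ (vertex-disjoint cycles that cover every interior vertex exactly once and use exactly one edge per inner white triangle). Once this last step is carried through, one may also check that the root edges, orientated to complete the three blues adjacent to the root, form a consistent directed cycle on the root triangle, completing the construction of the regular edge labeling.
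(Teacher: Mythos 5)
Your overall plan---fix a rainbow partition, orient the edges using the cycle cover, make each inner white triangle a source/sink triangle pinned to its $\C$-edge, and then check the alternation conditions---is the same as the paper's, but the proposal has a genuine gap at precisely the step you flag as the ``real work,'' and the orientation rule you chose makes that step harder than it needs to be.

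\textbf{The orientation rule is not the paper's rule, and the difference matters.} You orient every cycle of $\C$ counterclockwise (interior on the left) and then apply the source/sink rule. The paper instead directs the $\C$-edge of each inner white triangle \emph{clockwise around that triangle} and the other two edges counterclockwise, and then \emph{reverses} the orientations of every white triangle lying inside an odd number of cycles of $\C$. When no cycle of $\C$ is nested inside another, these two prescriptions coincide (the initial clockwise choice plus the single reversal for triangles inside their own cycle produces exactly your ``interior on the left'' orientation on every $\C$-cycle). But when cycles nest, the two prescriptions diverge: the paper's rule effectively orients a $\C$-cycle counterclockwise when its nesting depth is even and clockwise when it is odd, while yours is always counterclockwise. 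The parity reversal is not cosmetic---it is exactly what makes the edges of the same color that lie on the same side of $C_v$ around an interior vertex $v$ all receive the same direction, which (combined with the color alternation of the rainbow partition) is what forces the direction alternation across the blue triangles. Without the reversal, that consistency claim fails in the nested case, so it is not clear that your construction gives a regular edge labeling at all.

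\textbf{The blue-triangle/alternation check is left as a plan, not a proof.} You correctly observe that the interior-vertex alternation condition is not ``immediate'' once you realize it must also hold across each blue triangle (the source/sink rule only pins down what happens \emph{inside} white triangles), and that this is equivalent to every blue triangle being a directed 3-cycle. But the proposal stops at ``reduces \dots to a parity identity that follows from the combinatorial structure of $\C$,'' which is a description of what a proof might look like rather than an argument. The paper proves this by working around a fixed interior vertex $v$ on its cycle $C_v$: the two $\C$-edges at $v$ give the two exceptional white triangles, and the remaining white triangles on a fixed side of $C_v$ all orient their edges at $v$ consistently (same color implies same direction), after which the required alternation, including across blue triangles, follows. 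Carrying out an analogous argument with your counterclockwise convention would have revealed the missing parity correction. A related small loose end: you orient the three root-triangle edges by forcing the adjacent blue triangles to be cycles, but you do not check that the three choices are mutually consistent, i.e.\ that they yield the strict direction alternation required at each of the three exterior vertices.
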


\begin{proof}
We have already shown in Appendix I that $\Delta$ admits a rainbow partition, so we are only left to show that we can orient the edges of $\Delta$ such that labeling has the property formulated in
Lemma~\ref{lem:orient-alternate}.

\begin{figure}[t]
\centering\includegraphics[width=4in]{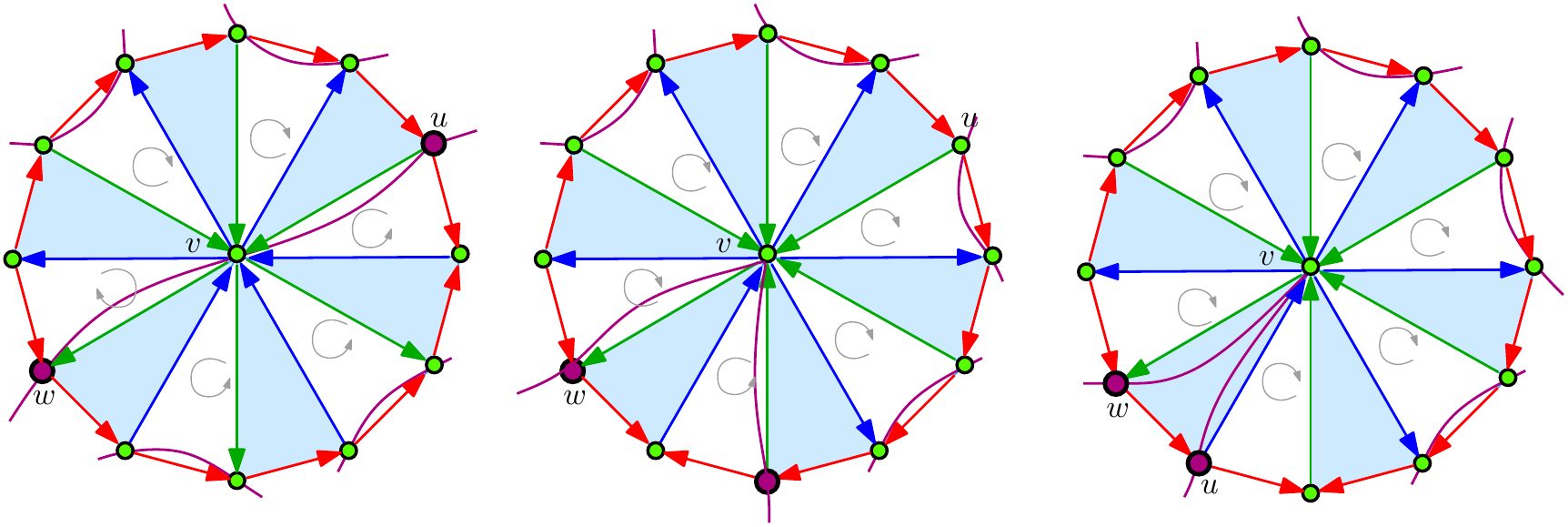}
\caption{Edge coloring of a vertex neighborhood.}
\label{fig:v-neighb}
\end{figure}

Recall that we have a two coloring of faces of $\Delta$ in which the outer face is colored white, and that every white triangle other than the outer face has exactly one edge of $\C$.
We orient the edges as follows:

\begin{enumerate}
  \item In each white triangle we direct the edge that is part of $\C$ clockwise and we direct the other two edges counterclockwise.
  \item We reverse the orientations of the edges of each white triangle that is inside an odd number of cycles of~$\C$.
\end{enumerate}

As we now show, this choice of directions forms a regular edge labeling.
To see this, consider a white triangle adjacent to an internal vertex $v$. There is a cycle $C_v$ of the cycle cover that contains $v$. Let $v$ be adjacent to blue and green edges and let $w$ and $u$ be the neighbors of $v$ that belong to $C_v$---see Fig.~\ref{fig:v-neighb}.

By construction every white triangle except for the ones to which $wv$ and $wv$ belong has an incoming and outgoing edge at $v$. Furthermore, the white triangles that are on the same side of $C_v$ as each other have all edges of the same color at $v$ oriented the same way.

Consider the white triangle $\delta_w$ containing $w$. $wv$ is a cycle cover edge, hence it is oriented oppositely around the face to the other edge of $\delta$ adjacent to $v$, hence these edges are oriented the same way. The same holds for the white triangle $\delta_u$ containing $uv$. Note that $\delta_w \neq \delta_u$ since every inner white triangle contains exactly one edge of the cycle cover.

Thus every triangle around $v$ except for the two white ones containing $w$ and $u$ have one incoming and one outgoing edge at $v$, so we can conclude that the constructed labeling of $\delta$ is regular edge labeling.
\end{proof}

An \emph{$st$-planar graph} is a planarly embedded directed acyclic graph
with a single source and sink, both on its outer face. The two-dimensional regular edge labelings corresponding to rectangular layouts have the property that each monochromatic subgraph of the labeling is $st$-planar. As we show, this same property holds for our three-dimensional regular edge labelings.

\begin{figure}[t]
\centering\includegraphics[width=4in]{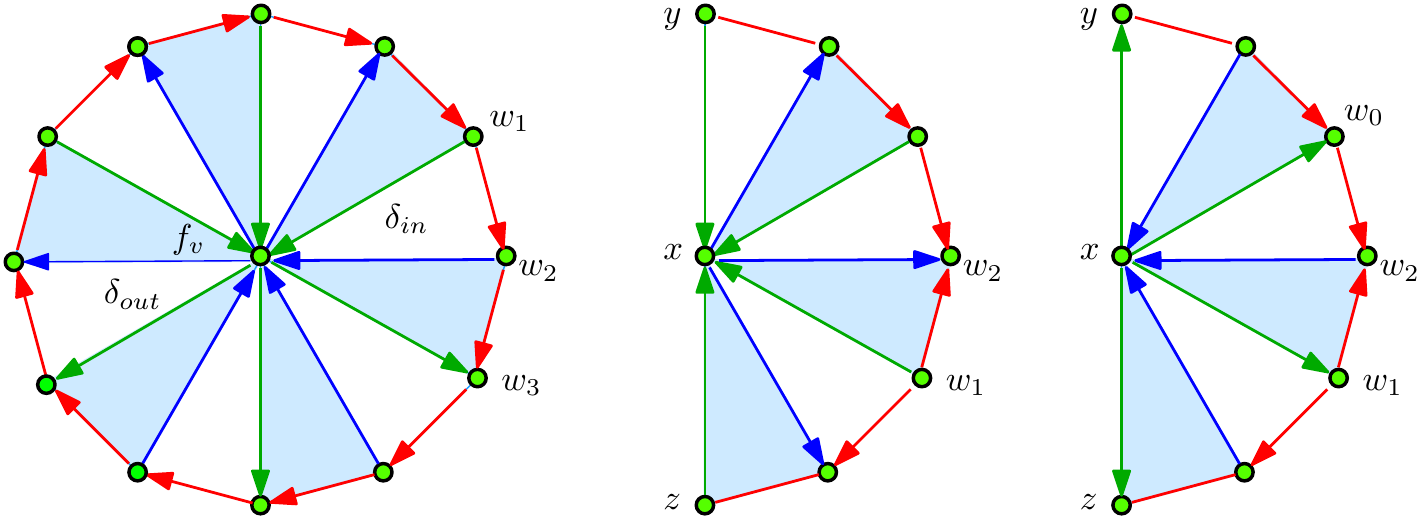}
\caption{A face cycle in a regular edge labeling.}
\label{fig:face-cycle}
\end{figure}

\begin{lemma}
\label{lem:mono-st-graphs}
Let $\Delta$ be oriented and colored to form a regular edge labeling. Then
each monochromatic subgraph $\Delta_x$ of $\Delta$ is an $st$-planar graph.
The source and sink of $\Delta_x$ both belong to the outer triangle of $\Delta$, and each
vertex of the outer triangle serves as a source on exactly one of the
monochromatic subgraphs $\Delta_x$, $\Delta_y$, and $\Delta_z$ and as a sink on exactly one other
monochromatic subgraph.
\end{lemma}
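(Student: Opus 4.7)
The plan is to verify $st$-planarity of $\Delta_x$ by checking three ingredients separately: bimodality at every vertex, a unique source and sink both lying on the outer triangle, and acyclicity. The first two follow essentially directly from the defining properties of a regular edge labeling and the rainbow partition; acyclicity will be the main obstacle.

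For bimodality at an interior vertex $v$ of $\Delta$ incident to $x$-edges, I would use the REL property that all but two of the triangles around $v$ have one incoming and one outgoing edge, the two exceptional triangles being white with one having both edges at $v$ incoming (``all-in'') and the other having both outgoing (``all-out''). Walking once around the rotation at $v$ and tracking orientations of the $x$-edges (which by the rainbow partition occupy every other position in the rotation), one sees that orientations of the $x$-edges are constant on each of the two arcs delimited by the exceptional triangles and flip across them. Hence the $x$-edges at $v$ partition into a contiguous nonempty in-arc and a contiguous nonempty out-arc --- nonemptiness of both arcs follows because the two exceptional triangles are distinct and each delimited arc contains at least one $x$-edge --- establishing bimodality at $v$ and showing that $v$ is neither a source nor a sink of $\Delta_x$.

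For the source and sink identification, I would use that by the rainbow partition the three edges of the outer triangle use all three colors. So exactly two of the outer vertices, namely the endpoints $p$ and $q$ of the outer $x$-edge, are incident to $x$-edges, and the third outer vertex lies outside $\Delta_x$ altogether. At each of $p$ and $q$, the REL forces orientations around the vertex to strictly alternate between incoming and outgoing, and since $x$-edges occupy every other position they must all share one orientation; hence each of $p$, $q$ is either a source or a sink of $\Delta_x$. The orientation of the outer $x$-edge $pq$ itself then forces exactly one of them to be the source and the other to be the sink (they cannot both be sources or both sinks since that would orient $pq$ in two inconsistent ways). Running this argument for each of the three colors yields the final claim that each outer vertex is a source for exactly one color and a sink for exactly one other.

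The main obstacle is acyclicity. Suppose for contradiction that $C$ is a simple directed cycle in $\Delta_x$. Because $s$ and $t$ lie on the outer face, one of the two closed regions bounded by $C$ avoids them; call it $\overline{R}$. Every vertex of $\overline{R}$ is then an interior vertex of $\Delta$ and hence bimodal in $\Delta_x$. I would then derive a contradiction by analyzing the triangles of $\Delta$ in $\overline{R}$: each blue triangle is a directed $3$-cycle by Lemma~\ref{lem:blue-cycle}, and each white triangle is, via the exceptional-triangle analysis, either a directed $3$-cycle or has a ``two in, two out'' orientation with a single all-incoming vertex and a single all-outgoing vertex among its three vertices. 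Summing these local orientations as signed flows along the boundary $C$ and cancelling internal edges against the bimodal constraint at vertices of $C$ should yield a balance violation, since a directed cycle enclosing a bounded region would require some enclosed source or sink to absorb the flow, but every interior vertex is bimodal. Together with biconnectivity (Lemma~\ref{lem:monochromatic-biconnectivity}), acyclicity, bimodality, and the unique source/sink on the outer face assemble into the standard characterization of an $st$-planar graph, completing the proof.
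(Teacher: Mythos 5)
Your bimodality and source/sink arguments are essentially correct and parallel the paper's: the paper also shows that every interior vertex of $\Delta_x$ has both incoming and outgoing edges by tracking the two exceptional white triangles, and identifies the source and sink with the two outer vertices incident to $x$-edges (with orientations forced by strict alternation at exterior vertices). Your observation that the third outer vertex has no incident $x$-edges is correct and consistent with the paper, which instead works with the $x$-path of its neighbors when analyzing the outer face of $\Delta_x$.

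The gap is in the acyclicity argument. The paper does \emph{not} argue globally; it first proves a \emph{local} lemma -- no face of $\Delta_x$ is a directed cycle -- by dualizing: a face $f$ of $\Delta_x$ corresponds to a vertex $v_f$ of $\Delta$ with no $x$-edges, whose entire link is the boundary of $f$, and looking at the exceptional white triangle $\delta_{\mathrm{in}}$ at $v_f$ forces an adjacent \emph{blue} triangle of $\Delta$ to be exceptional at one of its vertices, a contradiction. Only then does the paper invoke a minimal-cycle descent (Lemma~\ref{lem:stplanar}) that converts ``every inner vertex bimodal, no face a directed cycle, outer face two directed paths'' into acyclicity. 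Your proposal skips the ``no directed face'' step and instead proposes a flow-balance/signed-sum contradiction over the triangles of $\Delta$ enclosed by a hypothetical cycle $C$ in $\Delta_x$. As stated this does not close: bimodality constrains only the \emph{cyclic arrangement} of in- and out-edges at a vertex, not their counts, so in-degree and out-degree need not balance and there is no literal flow conservation to violate. The objects you are summing over (triangles of $\Delta$) are also not the faces of $\Delta_x$ -- each blue triangle contributes only a single $x$-edge to $\Delta_x$ -- so ``each blue triangle is a directed $3$-cycle'' does not translate into a statement about $\Delta_x$. A rescuable version of your idea would be a Poincar\'e--Hopf-style index count assigning index $0$ to bipolar triangles and $\pm 1$ to cyclic ones, but that requires substantially more care than sketched and is a genuinely different (and heavier) route than the paper's local argument plus Lemma~\ref{lem:stplanar}. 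You should either supply the index argument in full or prove the ``no face of $\Delta_x$ is a directed cycle'' fact, after which a minimal-cycle argument finishes the job.
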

\begin{proof}
By definition of regular edge labeling each inner vertex of $\Delta$ has incoming and outgoing edges of both colors---it has an exceptional triangle with incoming edges, which are of distinct colors and an exceptional triangle with outgoing edges, which are of distinct colors. Hence each inner vertex of $\Delta_x$ has both incoming and out going edges.

Next, we show that no face of $\Delta_x$ forms a directed cycle.

Assume for a contradiction that there exists a face $f$ of $\Delta_x$ that violates this condition: the edges around $f$ form a directed cycle, without loss of generality a clockwise cycle. Face $f$ corresponds to a vertex $v_f$ of $\Delta$ such that the vertices of $f$ are the cycle of neighbors of $v_f$ in $\Delta$---see Fig.~\ref{fig:face-cycle} (left) for an illustration. Then, because $v_f$ is an interior vertex of $\Delta$, it has an incident white triangle $\delta_{in}$ with two incoming edges and another incident white triangle $\delta_{out}$ with two outgoing edges. Consider the directed edge $w_1w_2$ of $\delta_{in}$ on $f$ and let $w_3$ be the next vertex of $f$ after $w_3$ in clockwise order. Edges $w_1v$ and $w_2v$ are incoming at $v$, edge $w_3v$ has $v$ as a source by definition of regular edge labeling, and edge $w_2w_3$ is directed away from $w_2$. Thus vertex $w_2$ is adjacent to a blue triangle $vw_2w_3$ that has two outgoing edges at $w_2$, which contradicts the definition of regular edge labeling, hence the edges of $f$ cannot form a clockwise cycle.

Finally, we will show that outer face of $\Delta_x$ consists of an edge connecting the two outer vertices and a directed path.

The outer vertices of $\Delta$ strictly alternate the directions of edges around it hence each of the two outer vertices $y$ and $z$ of $\Delta_x$ have all edges directed the same way and since there is an edge between them, one of them is a source and the other is a sink.

Consider the outer path $p_{yz}$ connecting $y$ and $z$. It consists of the path of neighbors of the third outer vertex $x$ of $\Delta$. Since the edges of $x$ alternate directions around $x$, all edges of $p_{yz}$ that belong to white triangles around $x$ are oriented the same way w.r.t. $x$. Assume there is a blue triangle with an edge $w_1w_2$ of $p_{yz}$ oriented in the opposite direction. If $xw_1$ had $x$ as a source, then since edges alternate directions around $x$ we have a blue triangle $xw_1w_2$ with two incoming edges at $w_1$---see Fig.~\ref{fig:face-cycle}, middle. Otherwise consider the vertex $w_0$ which a neighbor of $w_1$ on $f$ distinct from $w_2$. For the same reason we have a blue triangle $w_0fw_1$ with two incoming edges at $w_0$---see Fig.~\ref{fig:face-cycle}, right. Both cases contradict the definition of regular edge labeling. Hence we can conclude that $p_{yz}$ is a directed path.

Thus $\Delta_x$ satisfies the conditions of Lemma~\ref{lem:stplanar} below and hence is $st$-planar.
\end{proof}

\begin{lemma}
\label{lem:stplanar}
Let $\G$ be a planar graph, oriented so that the outer face cycle
forms two directed paths, no face forms a directed cycle, and each
vertex that is not on the outer face has both incoming and outgoing
edges. Then $\G$ is an $st$-planar graph.
\end{lemma}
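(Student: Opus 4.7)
The plan is to establish two facts which together give $st$-planarity: first, that $\G$ is acyclic, and second, that the resulting DAG has a unique source and a unique sink, both on the outer face.

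For acyclicity I would proceed by contradiction using an innermost-cycle argument. Suppose $\G$ has a directed cycle, and let $C$ be one enclosing the minimum number of bounded faces, with $D$ the closed disk it bounds. A chord of $C$ would combine with whichever directed sub-arc of $C$ runs against it to form a strictly smaller directed cycle inside $D$, so $C$ is chord-free. If $D$ additionally contains no interior vertex of $\G$, then $D$ is a single face of $\G$ bounded by $C$, contradicting the hypothesis that no face is a directed cycle. Otherwise let $w$ be an interior vertex of $D$. Using the interior-vertex hypothesis I can walk forward from $w$ along outgoing edges and backward along incoming edges. If either walk closes up without ever reaching $C$, it produces a directed cycle strictly inside $C$, contradicting minimality. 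Otherwise the two walks terminate at vertices $u,v \in V(C)$, producing a directed path $P$ from $u$ through $w$ to $v$ whose intermediate vertices lie in the interior of $D$. Concatenating $P$ with the directed sub-arc of $C$ from $v$ back to $u$ then gives a directed cycle in $D$ that is strictly smaller than $C$ (it omits the faces on the opposite side of $P$), once again contradicting minimality. Hence no directed cycle exists and $\G$ is a DAG.

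For uniqueness of source and sink, let $s$ and $t$ be the two vertices of the outer face at which the two directed paths making up its boundary meet, so that at $s$ both outer edges are outgoing and at $t$ both are incoming. By hypothesis every interior vertex has both an incoming and an outgoing edge, and so is neither a source nor a sink. Every outer vertex other than $s$ or $t$ has one incoming and one outgoing edge along the outer cycle, so also fails to be a source or sink. Thus $s$ is the only possible source and $t$ is the only possible sink; since any finite nonempty DAG has at least one of each, $s$ and $t$ must play these roles, and do so uniquely. Both lie on the outer face, so $\G$ is $st$-planar with source $s$ and sink $t$.

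The main obstacle is the acyclicity step, where several configurations of a would-be minimal directed cycle must be ruled out in turn. The interior-vertex case in particular requires care in routing the walks through $w$ back to $C$ and verifying that the composite cycle obtained from $P$ and a sub-arc of $C$ really does enclose strictly fewer faces than $C$. The source/sink analysis is then a short in/out-degree count using the three hypotheses.
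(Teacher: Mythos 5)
Your proof is correct and follows essentially the same strategy as the paper's: a minimal-counterexample argument for acyclicity (take a directed cycle $C$ enclosing the fewest faces, then use an interior edge or an interior vertex together with the incoming/outgoing hypothesis to build a directed cycle enclosing strictly fewer faces), followed by a degree count showing that only the two junction vertices of the outer boundary paths can serve as source or sink. The only difference is cosmetic — you separate the chord case from the interior-vertex case, while the paper merges them by directly taking an edge ``incident to and within $C$'' and following it forward or backward; both versions implicitly rely on connectivity to guarantee such an edge exists, and both give the same conclusion.
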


\begin{proof}
Suppose for a contradiction that $\G$ is not acyclic, and let $C$ be a
directed cycle in $G$ that encloses as few faces as possible. $C$
cannot enclose only a single face, because no face forms a directed
cycle, so there must be an edge $e$ of $\G$ that is incident to and
within $C$. If $e$ is directed away from $C$, then we can follow a
directed path in $\G$ until reaching either a repeated vertex or
another vertex of $C$, because each vertex within $C$ has outgoing
edges. If we reach a repeated vertex, the part of the path from this
vertex to itself forms a cycle enclosing fewer faces, and if we reach
a vertex of $C$, then one of the two parts into which this path splits
$C$ forms a cycle enclosing fewer faces. In the case that $e$ is
directed towards $C$, then following a path backwards from $e$ again
leads to a cycle enclosing fewer faces. This contradiction shows that
$\G$ must be acyclic, and its only sources and sinks must be the two
vertices on the outer face that are not interior to the directed paths
forming this face.
\end{proof}

\section*{Appendix IV: From regular edge labeling to corner polyhedra}

In this section, we show that, for every 3-connected planar bipartite cubic graph $\G$ with dual Eulerian triangulation $\Delta$, if $\Delta$ has a regular edge labeling then $\G$ represents a corner polyhedron.

Recall that a regular edge labeling consists of an $st$-planar orientation of each monochromatic subgraph $\Delta_x$ of $\Delta$, together with some consistency conditions on how the orientations of the edges of two colors can meet at a vertex of~$\Delta$. From the regular edge labeling, for any pair of colors $x$ and $y$, we now define another directed graph $\Delta_{xy}$, as follows. The vertices of $\Delta_{xy}$ are the vertices of $\Delta$, together with one new sink vertex. The edge set of $\Delta_{xy}$ is the union of the edges of $\Delta_{x}$ and of $\Delta_{y}$, together with two new edges connecting the monochromatic vertices of the root triangle of $\Delta$ with the sink vertex. Without loss of generality we may assume that, at the vertex $v$ of the root triangle that has edges of both colors $x$ and $y$, the edges of color $x$ go outwards from $v$ and the edges of color $y$ go inwards to $v$; swap $x$ and $y$ if necessary to enforce this condition. Then in $\Delta_{xy}$, the edges of color $x$ are given the same orientation as they have in the regular edge labeling, while the edges of color $y$ are given the opposite orientation from the one they have in the regular edge labeling. The two edges connecting to the sink vertex are oriented towards it. An example is depicted in Fig.~\ref{fig:deltaxy}. We may embed $\Delta_{xy}$, as shown in the figure, so that $v$ and the sink are both on the outer face of the embedding.

\begin{figure}[t]
\centering\includegraphics[width=6in]{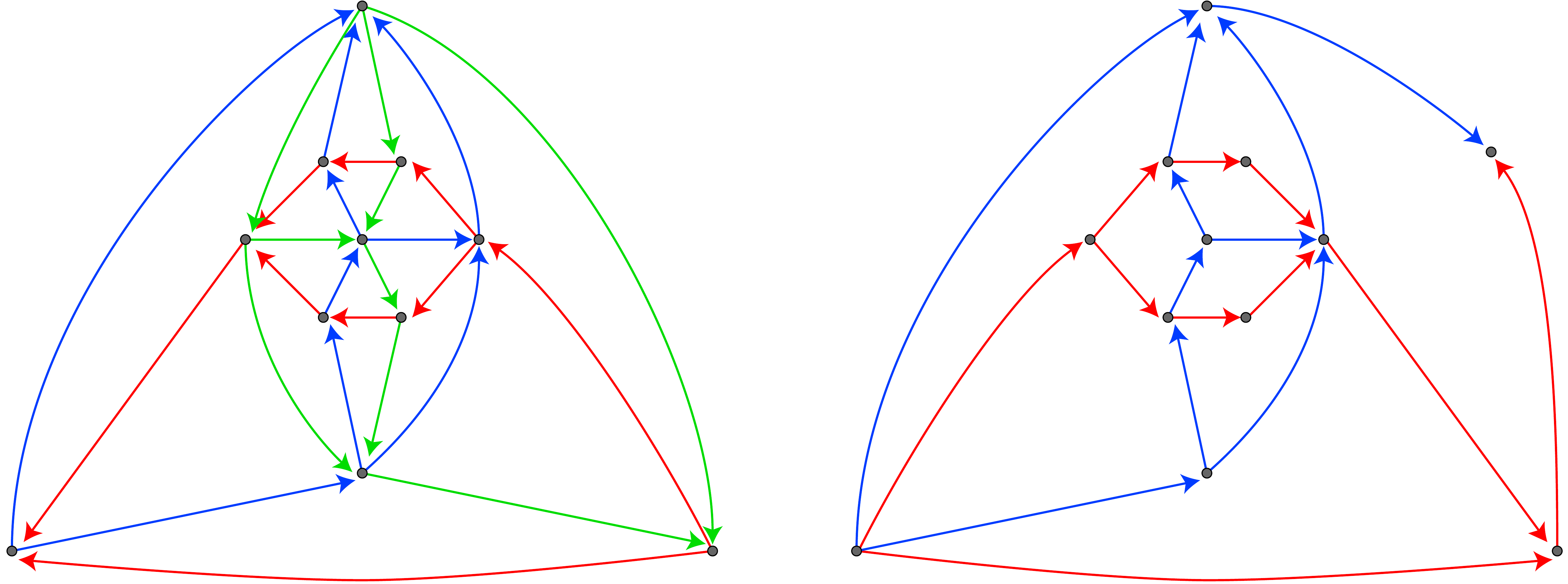}
\caption{Constructing the $st$-planar graph $\Delta_{xy}$ (right) from a regular edge labeling (left).}
\label{fig:deltaxy}
\end{figure}

\begin{lemma}
$\Delta_{xy}$ is $st$-planar.
\end{lemma}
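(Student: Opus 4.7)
To show that $\Delta_{xy}$ is $st$-planar I plan to apply Lemma~\ref{lem:stplanar}, verifying its three hypotheses: planarity with the outer face cycle forming two directed paths from the source $v$ to the sink $s$, no face forming a directed cycle, and every vertex not on the outer face having both incoming and outgoing edges. First I would describe the embedding. Start with the planar embedding of $\Delta$ whose outer face is the root triangle $vab$, where $va$, $vb$, and $ab$ have colors $x$, $y$, and $z$ respectively (by the convention chosen when $\Delta_{xy}$ is defined together with the rainbow property). Because every triangle of $\Delta$ contains exactly one edge of each color, deleting the $z$-colored edges merges each pair of triangles sharing a $z$-edge into a quadrilateral; every interior face of $\Delta\setminus\Delta_z$ is such a quadrilateral, and the outer face becomes a quadrilateral $v,a,c,b$ obtained by merging the outer triangle with the interior triangle $abc$ adjacent to $ab$. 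Placing $s$ in this outer face and drawing the edges $a\to s$ and $b\to s$ there subdivides it into a new outer face bounded by $v,a,s,b$ and a new interior face bounded by $a,c,b,s$.

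For hypothesis (iii), every interior vertex of $\Delta_{xy}$ is an interior vertex of $\Delta$ whose incident edges span exactly two of the three colors (by the rainbow alternation); at least one of those two colors is $x$ or $y$, and by the $st$-planarity of the corresponding monochromatic subgraph (Lemma~\ref{lem:mono-st-graphs}) the vertex has both incoming and outgoing edges, a property preserved by reversing the $y$-edges. For hypothesis (i), by the convention for $\Delta_{xy}$, $v$ is the source of $\Delta_x$ and the sink of $\Delta_y$, so the edge $va$ is directed $v\to a$ and the edge $vb$ (originally $b\to v$) becomes $v\to b$ after reversal; together with the new edges $a\to s$ and $b\to s$ the outer boundary is the union of the two directed paths $v\to a\to s$ and $v\to b\to s$.

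The main work is hypothesis (ii), that no face forms a directed cycle. The new outer face is two directed paths and so is not a cycle; on the new interior face $a,c,b,s$ both $a\to s$ and $b\to s$ enter $s$, making $s$ a local sink on that face and ruling out a directed cycle. The delicate case is an interior quadrilateral obtained by removing an interior $z$-edge $pq$ shared by a blue triangle $T_b=pqr_b$ and a white triangle $T_w=pqr_w$. By the rainbow property at $T_b$ the two non-$z$ edges $pr_b$ and $qr_b$ receive the colors $x$ and $y$, one each. By Lemma~\ref{lem:blue-cycle}, the three edges of $T_b$ form a directed $3$-cycle, so within the quadrilateral boundary these two edges are oriented consistently with each other (both following or both opposing the cyclic order around the quadrilateral). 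Passing to $\Delta_{xy}$ reverses exactly the one of them that is $y$-colored, so they end up oppositely oriented on the quadrilateral boundary, forcing at least one forward and one backward edge and preventing a directed cycle.

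The main obstacle I anticipate is this last argument for interior quadrilaterals. The slick version above avoids any case analysis over the white triangle $T_w$, whose orientation in a regular edge labeling can depend on whether $T_w$ is one of the two exceptional white triangles at $p$ or $q$; but making it rigorous requires carefully identifying the color structure of the quadrilateral boundary via the rainbow and alternation properties of the regular edge labeling, which I will spell out when writing the full proof.
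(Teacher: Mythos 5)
Your proof is correct and follows the paper's own argument: you verify the three hypotheses of Lemma~\ref{lem:stplanar}, using the $st$-planarity of the monochromatic subgraphs (Lemma~\ref{lem:mono-st-graphs}) for the interior-vertex condition and Lemma~\ref{lem:blue-cycle} to show that reversing the $y$-colored edge of the blue triangle inside each quadrilateral face destroys any directed cycle around that face. Your closing worry is unfounded --- the rainbow property at $T_b$ already pins down the colors of $pr_b$ and $qr_b$, so the argument via the blue triangle alone is complete and is exactly what the paper does; no case analysis over $T_w$ is needed.
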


\begin{proof}
In $\Delta_{xy}$, each vertex other than $v$ and the sink has both incoming and outgoing edges because of the $st$-planarity of the graphs $\Delta_x$ and $\Delta_y$ from which $\Delta_{xy}$ is formed. Additionally, each face of $\Delta_{xy}$ is a quadrilateral, with two consecutive edges of color $x$ and two consecutive edges of color $y$: the faces not incident to the sink are formed by removing the edge of color $z$ between two triangles of $\Delta$, while the two faces incident to the sink are also clearly quadrilaterals of this form.

The outer quadrilateral face is oriented acyclically, with one source and one sink.
Each inner face, also, is oriented acyclically: The property of alternating orientations around each vertex in a regular edge labeling, except within two exceptional white triangles, means that the blue triangle within any quadrilateral face must be consistently oriented either clockwise or counterclockwise in $\Delta$ (Lemma~\ref{lem:blue-cycle}), so when we reverse the orientation of one of its edges in $\Delta_{xy}$ it loses its consistent orientation and eliminates the possibility that the whole quadrilateral is cyclically oriented.

As a planar graph with one source and one sink on the outer face, with every other vertex having both incoming and outgoing edges, and with
all faces oriented acyclically, $\Delta_{xy}$ must be $st$-planar by Lemma~\ref{lem:stplanar}.
\end{proof}

We use the fact that this graph $\Delta_{xy}$ is $st$-planar to find an \emph{$st$-numbering} of its vertices. An $st$-numbering is an assignment of numbers to the vertices of an $st$-oriented graph in such a way that, for each edge $uv$, the number for $u$ is smaller than the number for $v$; as a consequence, for each vertex other than the source and sink, there will be a neighboring vertex with a smaller number and another neighboring vertex with a larger number. An $st$-numbering is easy to compute from the $st$-orientation of the graph, in more than one way. If we require additionally that each vertex get a distinct number (which will correspond to the geometric property that no two faces be coplanar) then we may simply assign each vertex its position in a breadth-first traversal of the $st$-oriented graph $\Delta_{xy}$. If, on the other hand, we wish the numbers to be drawn from as small a set as possible (corresponding to finding a representation as an orthogonal polyhedron with coordinates within a small grid) then we may assign each vertex its distance from the source vertex, as determined again by a breadth-first traversal of the graph. We may add the same constant to each number in the numbering, if necessary, to ensure that the number of the source vertex is zero; for the numberings produced by breadth-first traversal or breadth-first layers, the number of the source will automatically be zero.

We use the numbers produced by this numbering as coordinates of our desired orthogonal polyhedron. Specifically, within each graph $\Delta_{xy}$ there are monochromatic vertices (belonging to one but not the other of $\Delta_x$ or $\Delta_y$) and bichromatic vertices (belonging to both monochromatic subgraphs). The bichromatic vertices of $\Delta_{xy}$ correspond to a family of faces that should lie on planes parallel to the $x$ and $y$ axes in a geometric representation of the given graph $\G$; we use the $st$-numbering as the $z$-coordinate of this plane. Each vertex $v$ of the given cubic bipartite graph $\G$ corresponds in the dual Eulerian triangulation to a triangle of vertices that are bichromatic in $\Delta_{yz}$, $\Delta_{xz}$, and $\Delta_{xy}$, and we use the $st$-numberings of these three vertices as respectively the $x$, $y$, and $z$-coordinates of~$v$. In the rest of the section, we argue that the resulting placement of vertices produces a representation of $\G$ as a corner polyhedron.

\begin{lemma}
\label{lem:placement-properties}
The vertex placement described above, for a graph $\G$ with a regular edge labeling of its dual Eulerian triangulation $\Delta$, has the following properties:
\begin{enumerate}
\item\label{lpp:positive} Each vertex of $\G$ lies in or on the positive orthant.
\item\label{lpp:boundary} The vertex $h$ dual to the root triangle lies at the origin, the neighbors of $h$ lie on the coordinate axes, the other vertices on faces incident to $h$ lie on the coordinate planes, and all remaining vertices are strictly interior to the positive orthant.
\item\label{lpp:edge} The two endpoints of every edge of $\G$ lie on an axis-parallel line.
\item\label{lpp:face} The set of vertices of any face of $\G$ lie on an axis-parallel plane.
\end{enumerate}
\end{lemma}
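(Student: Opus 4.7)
The plan is to verify each of the four properties in turn, all of them flowing from two structural observations about the construction: the color of an edge of $\Delta$ determines the axis direction of the dual primal edge, and the color pair at a vertex of $\Delta$ determines which family of axis-parallel planes the dual face lies in. First I would unpack the construction. By the rainbow partition property, any triangle $T$ of $\Delta$ has one edge of each color, and at each vertex of $\Delta$ the incident edges alternate between two of the three colors. It follows that the three vertices of $T$ carry color pairs $(y,z)$, $(x,z)$, and $(x,y)$, so exactly one vertex of $T$ is bichromatic in each of $\Delta_{yz}$, $\Delta_{xz}$, and $\Delta_{xy}$. The $x$-, $y$-, and $z$-coordinates of the primal vertex dual to $T$ are then the $st$-numbers of these three designated witnesses.

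For property (\ref{lpp:edge}), consider a primal edge dual to a $\Delta$-edge $uu'$ of color $z$. The two triangles sharing $uu'$ both have $u$ and $u'$ as the bichromatic witnesses for $\Delta_{xz}$ and $\Delta_{yz}$ respectively, so the two primal endpoints get the same $x$- and $y$-coordinates and differ only in $z$; the primal edge is therefore parallel to the $z$-axis. For property (\ref{lpp:face}), the face dual to a vertex $u$ of $\Delta$ with color pair $(x,y)$ consists of $\G$-vertices whose dual triangles all contain $u$, and in every such triangle $u$ is the bichromatic-in-$\Delta_{xy}$ witness, so the $z$-coordinate of each such vertex equals the $st$-number of $u$ in $\Delta_{xy}$ and the face lies in a plane perpendicular to the $z$-axis.

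For property (\ref{lpp:positive}), I would invoke the normalization that the $st$-numbering assigns $0$ to the source and only non-negative values elsewhere, so all coordinates are non-negative. For property (\ref{lpp:boundary}), I would first observe that the convention adopted when orienting $\Delta_{xy}$ (with edges of color $x$ outgoing and edges of color $y$ made outgoing after reversal) forces the bichromatic vertex of the root triangle in colors $(x,y)$ to be the source of $\Delta_{xy}$, and similarly for the other two color pairs. Then the hidden vertex $h$, dual to the root triangle, has all three coordinates equal to $0$ and sits at the origin. A neighbor of $h$, dual to a triangle sharing an edge of the root, contains two of the three root vertices as witnesses in two of the three graphs $\Delta_{xy}$, pinning two coordinates to $0$ and placing the neighbor on a coordinate axis. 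A vertex on a face incident to $h$ that is not a neighbor of $h$ is dual to a triangle containing exactly one root vertex, so exactly one coordinate is $0$ and the vertex lies on a coordinate plane. Any remaining vertex is dual to a triangle disjoint from the root triangle, so all three witnesses are non-source, and all three coordinates are strictly positive---provided the $st$-numbering is chosen so that only the source receives $0$, which is automatic for the breadth-first-layers construction and can be enforced by an additive shift in the generic case.

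The main obstacle will be the bookkeeping in property (\ref{lpp:boundary}): one must correctly identify which of the three vertices of each triangle adjacent to the root is the bichromatic witness in each subgraph, and verify that the three root vertices really are the sources (as opposed to sinks) of the three graphs $\Delta_{xy}$. The latter is a matter of unwinding the orientation normalization in the definition of $\Delta_{xy}$; the former follows mechanically from the rainbow property of triangles and the alternation of colors around each vertex, but writing it down cleanly requires a careful labeling of the three edges of the root triangle and of the three triangles adjacent to it.
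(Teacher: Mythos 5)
Your proof is correct and follows essentially the same route as the paper's: properties (1) and (2) are read off the fact that the $st$-numbering is non-negative and vanishes only at the source, combined with the observation that the three root vertices of $\Delta$ are precisely the three sources of $\Delta_{yz}$, $\Delta_{xz}$, $\Delta_{xy}$; properties (3) and (4) follow because coordinates of a primal vertex are inherited from the three faces meeting there, so adjacent vertices sharing two faces share two coordinates, and all vertices of one face share the coordinate contributed by that face. Your version spells out the bichromatic-witness bookkeeping in each dual triangle more explicitly than the paper does, which is a harmless (and perhaps helpful) elaboration rather than a different argument.
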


\begin{proof}
Properties \ref{lpp:positive} and~\ref{lpp:boundary} follow immediately from the facts that the $st$-numbering of each graph $\Delta_{xy}$ is non-negative, and is zero only at the source vertex. Therefore each vertex $v$ of $\G$ has non-negative coordinates, with one of these coordinates zero only when $v$ is adjacent to a face that is dual to a vertex in $\Delta$ that is the source in one of these three graphs. The root triangle is dual to the vertex that  is adjacent to all three of these source faces, so this vertex lies at the origin. Its neighbors are adjacent to two source faces, so two of their coordinates are zero and the third nonzero; thus, they lie on the coordinate axes. The remaining vertices on the source faces have the corresponding coordinate zero and lie on a coordinate plane.

For Property~\ref{lpp:edge}, observe that any two adjacent vertices in $\G$ belong to two common faces (the faces on either side of their shared edge) and therefore have two coordinates in common. Similarly, for Property~\ref{lpp:face}, all vertices on a face of $\G$ take one of their coordinates from that face and therefore lie on the axis-parallel plane consisting of all points with that coordinate value.
\end{proof}

\begin{figure}[t]
\centering\includegraphics[width=4in]{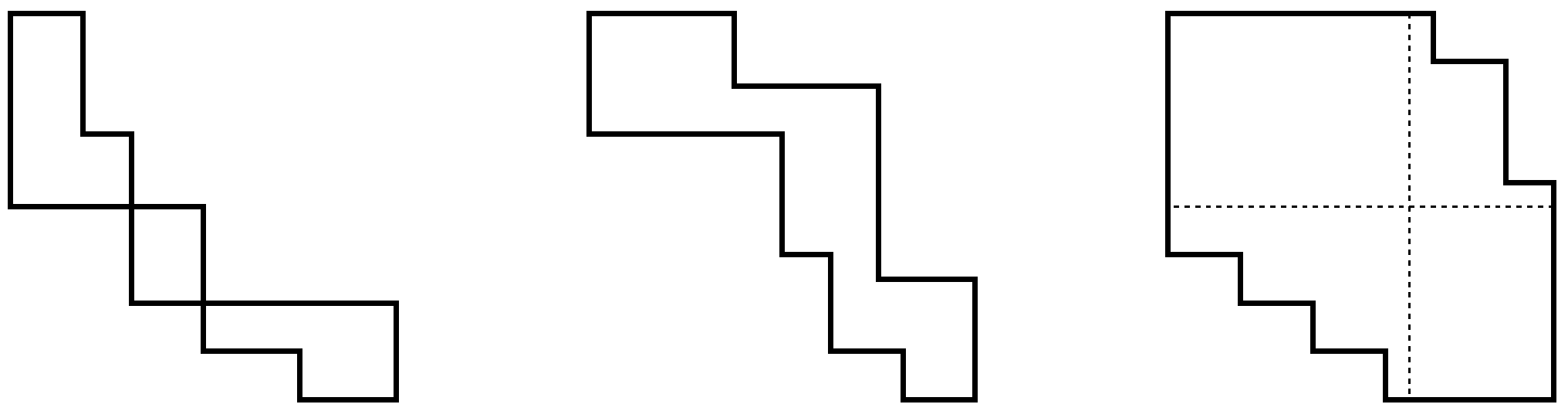}
\caption{Left: A pair of monotone chains that do not form a double staircase polygon. Center: A double staircase polygon that is not fat. Right: A fat double staircase polygon.}
\label{fig:fat2stair}
\end{figure}

Recall that, in a corner polyhedron, each face should have the shape of a \emph{double staircase}: an orthogonal polygon with two extreme vertices connected by two polygonal chains that are monotone in both the coordinate directions, where one chain starts with an $x$-parallel line segment and ends with a $y$-parallel segment, and the other chain starts with a $y$-parallel segment and ends with an $x$-parallel segment. However, simply connecting two extreme vertices by two chains in this way is not sufficient to describe a non-self-intersecting polygon (Fig.~\ref{fig:fat2stair}, left). In order to describe a combinatorial condition that forces a pair of chains connecting two extreme vertices to form a simple polygon, we define the class of \emph{fat double staircase} polygons to be the double staircase polygons in which, for each pair of extreme edges in one of the two coordinate directions, there exists an axis-parallel line segment that connects the two edges and lies within the interior of the polygon (Fig.~\ref{fig:fat2stair}, right). A fat double staircase may equivalently be characterized by total orderings of its boundary segments: the $y$ coordinates of the horizontal segments are ordered from top to bottom along the upper left chain and then from top to bottom along the lower right chain, while the $x$ coordinates of the vertical segments are ordered from left to right, first along the lower right chain and then along the upper left chain.

\begin{lemma}
\label{lem:fat}
With the vertex placement described above, for a graph $\G$ with a regular edge labeling of its dual Eulerian triangulation $\Delta$, every face of $\G$ forms a fat double staircase.
\end{lemma}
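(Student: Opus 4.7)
The plan is to analyze $F$ through its dual vertex $v_F$ in $\Delta$, translating the local cyclic pattern of edge colors and orientations at $v_F$ into the geometric shape of $F$, and using the $st$-planarity of the monochromatic subgraphs to obtain the fat condition. Without loss of generality assume $F$ lies in the plane $z = z_0$; then $v_F$ is bichromatic in $\Delta_{xy}$, so by the rainbow partition property its incident edges alternate in cyclic order between colors $x$ and $y$, and accordingly the boundary of $F$ is an orthogonal polygon with edges alternating between $x$-parallel and $y$-parallel. Let $t_{\min}$ and $t_{\max}$ be the two exceptional white triangles at $v_F$ (with both edges outgoing at $v_F$ and both incoming, respectively), and let $w_{\min},w_{\max}$ be the corresponding primal vertices of $F$.

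The first step is to establish the double staircase structure with $w_{\min}$ and $w_{\max}$ as the unique extreme corners. Between $t_{\min}$ and $t_{\max}$ the two arcs of triangles around $v_F$ consist entirely of regular triangles, which forces the in/out orientations of the edges at $v_F$ to alternate perfectly along each arc. Looking at positions of a fixed color, this means that all $x$-colored edges on a single arc share a single orientation at $v_F$ and all $y$-colored edges share a single orientation, with the four resulting orientations pinned down by the outgoing-both condition at $t_{\min}$ and the incoming-both condition at $t_{\max}$. Translating these orientations into coordinate language via the sign conventions defining the $st$-numberings of $\Delta_{xz}$ and $\Delta_{yz}$, I get that as one walks along either boundary chain of $F$ from $w_{\min}$ to $w_{\max}$, the $x$-coordinate changes monotonically across the $x$-parallel segments and the $y$-coordinate changes monotonically across the $y$-parallel segments, with both coordinates increasing away from $w_{\min}$; this gives a double staircase whose only extreme corners are $w_{\min}$ and $w_{\max}$.

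The second step is the fat condition. The $y$-coordinate of each horizontal segment of $F$ is the $st$-number in $\Delta_{xz}$ of the corresponding $x$-neighbour $b_i$ of $v_F$, and the $x$-edges of $\Delta_{xz}$ carry (uniformly either preserved or reversed by a single sign convention) the orientation inherited from the regular edge labeling. Combined with the first step's observation that all $x$-edges on one arc at $v_F$ are outgoing while all on the other arc are incoming, this implies $st(b_i) > st(v_F)$ for every $b_i$ on one arc and $st(b_j) < st(v_F)$ for every $b_j$ on the other arc, yielding strict separation in $\Delta_{xz}$ of the $y$-coordinates of horizontal segments between the two chains. Together with the within-chain monotonicity established in step 1 and the distinctness of $st$-numbers under a BFS-based numbering, this gives the fat condition for horizontal segments; a completely symmetric argument in $\Delta_{yz}$ handles vertical segments, establishing the total ordering of vertical-segment $x$-coordinates required by the alternative characterization of fat double staircases.

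The main obstacle is the bookkeeping in step 1: one has to carefully translate between the cyclic in/out pattern at $v_F$ in the regular edge labeling, the sign conventions in the definitions of $\Delta_{xz}$ and $\Delta_{yz}$ (in which one of the two colors has its orientations reversed), and the resulting primal coordinate directions, so that the monotonicity along each chain and the assertion that both coordinates increase away from $w_{\min}$ emerge correctly. Once these conventions are tracked, the fat condition itself follows almost immediately from the $st$-planarity of the monochromatic subgraphs (Lemma~\ref{lem:mono-st-graphs}) and requires no further structural result; the global comparison between the two arcs is provided essentially for free by the position of $st(v_F)$ in the $st$-numbering of $\Delta_{xz}$ and $\Delta_{yz}$.
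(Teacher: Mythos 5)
Your overall structure mirrors the paper's: work in the dual at $v_F$, identify the two exceptional white triangles as the two $\pi/3$ corners, use the alternating in/out pattern along the two arcs, and translate into $st$-numbers of the combined two-color graphs. Your step~2 is essentially the paper's fat argument in a cleaner form: the paper obtains the cross-chain separation from a two-edge directed path $b_1\to v_F\to b_k$ through $v_F$, and you obtain the same conclusion by comparing every $st(b_i)$ directly to $st(v_F)$; after the uniform sign fix in $\Delta_{xz}$ these are the same comparison, and yours is arguably easier to state.

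However, your step~1 has a genuine gap. The claim that ``as one walks along either boundary chain of $F$ ... the $x$-coordinate changes monotonically ... and the $y$-coordinate changes monotonically'' does not follow from the observation that all $x$-colored edges on one arc share a single orientation at $v_F$. That observation gives only $st(b_i) > st(v_F)$ for every $b_i$ on that arc (the separation fact you re-use in step~2); it imposes no order among the $st(b_i)$ themselves, and there is nothing in what you wrote to rule out the sequence $st(b_1), st(b_2), \dots$ zigzagging, which would destroy the staircase shape and the double-staircase angle pattern. The within-chain monotonicity is a separate fact and needs its own argument. The paper supplies it: two consecutive $x$-neighbours $b_i$, $b_{i+1}$ of $v_F$ along a chain are joined by a two-edge directed path through the intervening $y$-neighbour $b'$ of $v_F$ (the dual of the face carrying the vertical segment between them), and both edges of that path are $z$-colored; after reversing the appropriate color to form $\Delta_{xz}$, those $z$-paths are oriented compatibly with the $x$-edges you use at $v_F$, so the $st$-numbering is forced to order the $b_i$'s monotonically along each arc. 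You rely on ``within-chain monotonicity established in step~1'' when proving the fat condition in step~2, so this missing argument is load-bearing and must be added.
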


\begin{proof}
We first assume that a given face $f$ is not one of the three faces incident with the vertex at the origin. Without loss of generality $f$ is parallel to the $xy$-plane.
Recall that, according to the constraints of a regular edge labeling, the vertex $v$ dual to $f$ in $\Delta$ has edges that alternate between ingoing and outgoing, except in two of the triangles adjacent to $v$, one of which has two ingoing edges and one of which has two outgoing edges. These two triangles will be dual to the two extreme vertices of the fat double staircase formed by~$f$. It remains to show that the two paths in $f$ connecting these extreme vertices form monotonic chains and that the total ordering of the $x$-parallel and $y$-parallel segments in these two chains is the ordering required in a fat double staircase. The $y$-coordinates of the $x$-parallel segments are the $st$-numbers of the neighbors of $v$ that are bichromatic in the numbering of graph $\Delta_{yz}$; these numbers are monotonic along each of the two chains of $f$ because the neighbors corresponding to any two consecutive $y$-parallel segments are connected in $\Delta_z$ by a two-edge oriented path, and the $st$-numbers must be monotonic along this path---see Fig.~\ref{fig:xy2fat}, left. Symmetrically, the $x$-coordinates of the $y$-parallel segments in each chain are also monotonic. Therefore, we do indeed have two monotonic orthogonal polygonal chains connecting the two extreme vertices of $f$.

\begin{figure}[t]
\centering\includegraphics[width=6in]{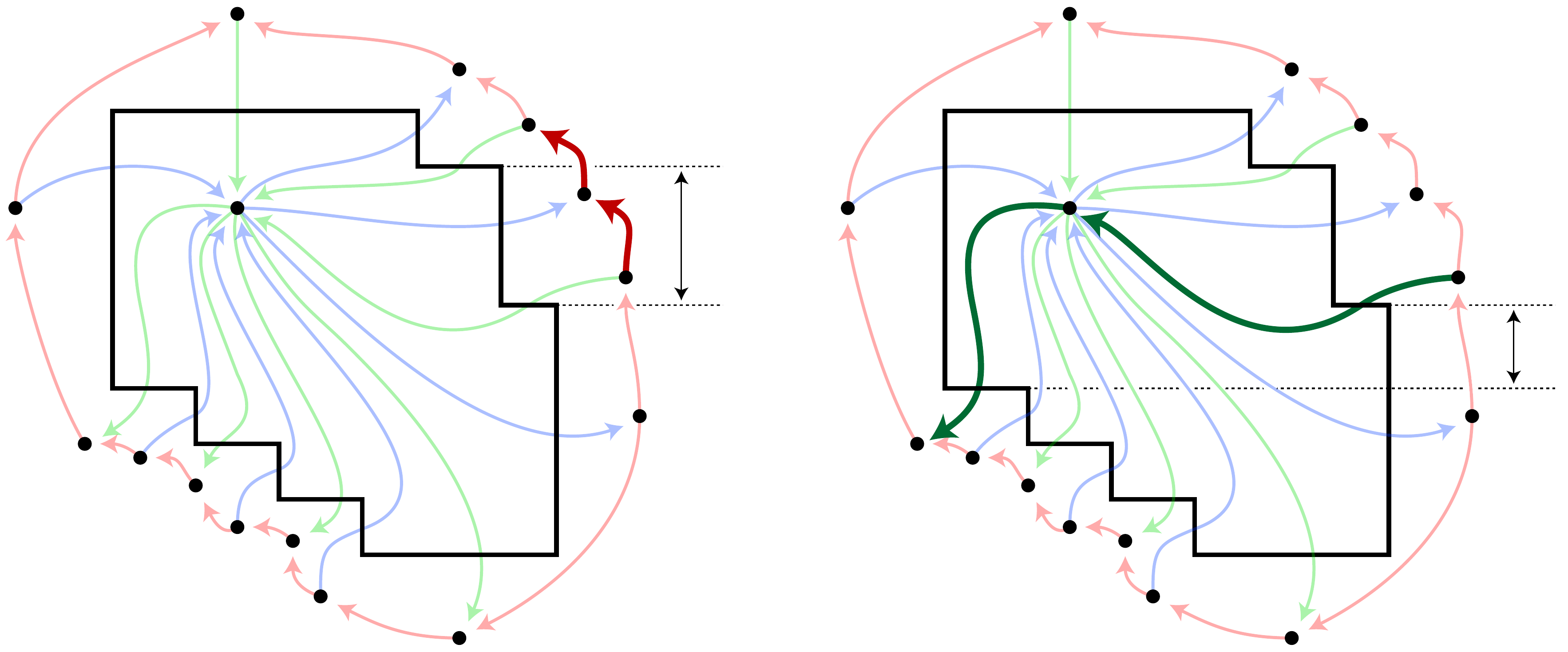}
\caption{Two-edge paths in $\Delta_{yz}$ cause its $st$-numbering to order the boundary segments of a face of $\G$ in the order required for a fat double staircase.}
\label{fig:xy2fat}
\end{figure}

All of the $x$-parallel segments in the upper right chain have greater $y$-coordinates than all of the $x$-parallel segments in the lower left chain, because there exists a two-edge path in $\Delta_y$ from the neighbor of $v$ corresponding to the lowest $x$-parallel segment of the upper right chain to the neighbor of $v$ corresponding to the highest $x$-parallel segment of the lower right chain (Fig.~\ref{fig:xy2fat}, right). Note that this two-edge path in $\Delta_y$ has the opposite color and opposite orientation from the two-edge paths in $\Delta_z$ used to prove monotonicity within each of the two chains; therefore, when we reverse the orientation of one of $\Delta_y$ and $\Delta_z$ to form $\Delta_{yz}$, both types of path will be oriented consistently. The existence of this path causes the neighbor of $v$ corresponding to the lowest $x$-parallel segment on the upper right chain to receive an $st$-number in the $st$-numbering of $\Delta_{yz}$ that is greater than the neighbor of $v$ corresponding to the highest $x$-parallel segment on the lower left chain. Symmetrically, the rightmost $y$-parallel segment on the lower left chain has a smaller $x$-coordinate than the leftmost $y$-parallel segment on the upper right chain. Therefore, $f$ is a fat double staircase.

For the remaining three faces, incident with the vertex at the origin, the result follows by a similar but simpler analysis: the face has two edges on the coordinate axes, and the remaining edges form a monotone plane within the interior of the positive quadrant, therefore the face must form a fat double staircase.
\end{proof}

\begin{corollary}
With the vertex placement described above, for a graph $\G$ with a regular edge labeling of its dual Eulerian triangulation $\Delta$, every face of $\G$ forms a non-self-intersecting polygon.
\end{corollary}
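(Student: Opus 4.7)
The plan is essentially to unpack the definition of a fat double staircase and use the strict total orderings of boundary segments established in Lemma~\ref{lem:fat} to rule out every possible self-intersection. Since Lemma~\ref{lem:fat} already certifies that each face $f$ of $\G$ is a fat double staircase, the remaining content of the corollary is purely the planar-geometric observation that a fat double staircase is a simple polygon.

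First I would set up notation: let $f$ be a face lying in a coordinate plane (say parallel to the $xy$-plane), with its two extreme vertices (corresponding to the two exceptional white triangles at the dual vertex) joined by an upper-left chain $U$ and a lower-right chain $L$. By Lemma~\ref{lem:fat}, both $U$ and $L$ are monotone in each of the two coordinate directions, and Lemma~\ref{lem:fat} also establishes the stronger total orderings: the $y$-coordinates of the horizontal segments strictly decrease along $U$ and then strictly decrease along $L$, with every horizontal segment of $U$ lying strictly above every horizontal segment of $L$; and symmetrically for the $x$-coordinates of the vertical segments.

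Next I would rule out each possible type of self-intersection between two distinct edges $e_1,e_2$ of the boundary. If $e_1$ and $e_2$ are both horizontal, they lie on distinct horizontal lines by the strict ordering of $y$-coordinates, so they are disjoint; the same argument handles two vertical edges. If $e_1$ and $e_2$ are perpendicular and belong to the same monotone chain, then either they are consecutive (in which case they meet only at their shared endpoint, as required) or they are separated by at least one intermediate edge, and the strict monotonicity of the chain in both coordinates places them in disjoint quadrants relative to any intermediate vertex, precluding a crossing. The only remaining case is a horizontal edge from one chain crossing a vertical edge from the other chain, and here the ordering gives it away directly: a horizontal edge of $U$ has $y$-coordinate strictly greater than the $y$-coordinates of both endpoints of any vertical edge of $L$ (since those endpoints are endpoints of horizontal edges of $L$), so it lies strictly above that vertical edge; the symmetric statement rules out a horizontal edge of $L$ crossing a vertical edge of $U$.

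Having ruled out all crossings between distinct edges, I would conclude that the closed boundary walk traced by $U$ followed by the reverse of $L$ is a simple closed orthogonal polygon, which is precisely the statement of the corollary. The easiest potential pitfall is the case of faces incident to the origin, where one of the chains degenerates onto the coordinate axes; but the same argument applies verbatim, and in fact Lemma~\ref{lem:fat} has already handled these three faces explicitly, so no additional work is required. The whole proof is thus a two- or three-line corollary once the combinatorial content of ``fat'' is spelled out.
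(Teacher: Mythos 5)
Your proposal is correct and matches the paper's approach: the paper offers no separate argument for this corollary, treating it as immediate from Lemma~\ref{lem:fat} together with the definition of a fat double staircase (whose fatness condition, via the stated total ordering of boundary segments, was introduced precisely to force simplicity), and you simply spell out the elementary case analysis the paper leaves implicit. One small imprecision worth patching: your parenthetical claim that every vertical edge of the lower-right chain has both endpoints on horizontal edges of that same chain fails for the vertical edge incident to an extreme vertex (that endpoint lies on a horizontal edge of the \emph{other} chain), but in that case the two edges in question meet only at the shared extreme vertex, which is of course a permitted incidence rather than a self-intersection.
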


\begin{lemma}
With the vertex placement described above, for a graph $\G$ with a regular edge labeling of its dual Eulerian triangulation $\Delta$, $\G$ is a corner polyhedron.
\end{lemma}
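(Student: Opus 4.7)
The plan is to verify the three defining conditions of a corner polyhedron in turn: (a) three mutually perpendicular axis-parallel edges meet at every vertex of $\G$; (b) the face polygons glue along their shared edges into a non-self-intersecting closed surface with the topology of a sphere; and (c) the three faces incident to the vertex at the origin are back faces oriented towards $(-1,-1,-1)$, while every other face is a front face oriented towards $(1,1,1)$.

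For (a), the three faces of $\G$ meeting at any vertex are dual to the three vertices of a triangle $T$ in $\Delta$. By the rainbow partition property the three edges of $T$ carry three different colors, so the three vertices of $T$ belong to three distinct pairs of monochromatic subgraphs and hence supply coordinates from three distinct bichromatic graphs among $\Delta_{yz}$, $\Delta_{xz}$, $\Delta_{xy}$. Therefore the three faces at that vertex of $\G$ are parallel to three distinct coordinate planes, and by Lemma~\ref{lem:placement-properties} their pairwise intersections are axis-parallel edges in three mutually perpendicular directions. For (c), the three faces incident to $h$ lie on the coordinate planes $x=0$, $y=0$, $z=0$, and since every other vertex has strictly positive coordinates their outward normals must point into the negative orthant and so these three faces are back faces. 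For any other face $f$, once embeddedness is in hand I would read off the outward normal of $f$ from the side on which the polyhedron's interior lies relative to $f$; the $st$-orientation of the bichromatic graph whose numbering supplies $f$'s coordinate places that interior on the smaller-coordinate side, so $f$'s outward normal points in the corresponding positive coordinate direction and $f$ is a front face.

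Step (b) is where I expect the main difficulty. Combinatorially the task is easy: $\G$ is $3$-connected and therefore has a unique planar embedding, so the face--edge--vertex incidence pattern inherited by our geometric surface is already that of a sphere, and (a) tells us that the local structure at each vertex is an orthogonal tripod of faces. The delicate part is geometric embeddedness, i.e.\ ruling out that two faces cross. My preferred route is to appeal to Thurston's height-function theorem, cited in the introduction: the isometric projection in direction $(1,1,1)$ of our vertex placement is a drawing on the hexagonal lattice whose faces, by Lemma~\ref{lem:fat}, are all fat double staircases, and Thurston's theorem guarantees that any such drawing is the projection of an embedded orthogonal surface in $\R^3$; the face coordinates of that surface must agree with the ones we constructed, so our surface is embedded. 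A more self-contained alternative would be to verify directly that the $(1,1,1)$-projection is injective on each face (immediate from Lemma~\ref{lem:fat}) and consistent across faces using the strict monotonicity of the $st$-numberings of the three graphs $\Delta_{xy}$, from which embeddedness of the whole surface follows by the standard argument that a face-wise-injective monotone projection of a closed combinatorial $2$-manifold is globally injective away from its boundary.
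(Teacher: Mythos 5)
Your outline (verify the three-perpendicular-edges condition, embeddedness, and the front/back face dichotomy) matches the paper's goal, and steps (a) and (c) are fine. But step (b), which you correctly single out as the delicate one, has a genuine gap: both routes you offer presuppose the very thing you still need to prove. Thurston's height-function theorem takes as input a \emph{planar} drawing in the hexagonal lattice whose cells have the right shape and concludes that it lifts to a $3$D orthogonal surface; at this point in the argument you only know that each individual face projects to a simple fat double staircase (Lemma~\ref{lem:fat}), not that the projected faces tile a region of the plane without overlap. Likewise, your ``standard argument that a face-wise-injective monotone projection of a closed combinatorial $2$-manifold is globally injective away from its boundary'' is not a theorem in that generality (it already fails for a torus); what rescues it for the sphere is a degree-of-map argument, and that argument itself requires knowing the projection is locally orientation-consistent at every vertex.

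The missing ingredient, which is the heart of the paper's proof, is a pointwise check of local consistency at each vertex of $\G$: using the regular edge labeling, one classifies the projected interior angle of a face $f$ at a vertex $v$ by the orientation pattern of the dual triangle $T$ at the vertex $v_f$ dual to $f$ --- angle $\pi/3$ if $T$ has two equally-oriented edges at $v_f$, $2\pi/3$ if $T$ is cyclically oriented, and $4\pi/3$ if $T$ is acyclic with one in- and one out-edge at $v_f$. One then observes that the three angles around any vertex of $\G$ always sum to exactly $2\pi$, and that the only way three wedges with these specific angle values can share a common corner is for them to be arranged cyclically, not overlapping. This local fact, combined with the $3$-connectedness of $\G$ and the face shapes from Lemma~\ref{lem:fat}, is what forces the projection to be a planar tiling of a hexagonal region; only after that does a Thurston-style lifting (or the direct argument that the back faces live on the coordinate quadrants and the front faces in the open positive orthant) finish the job. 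You should add this angle classification to make step (b) complete.
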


\begin{proof}
The three face of $\G$ dual to the vertices of the root triangle in $\Delta$ are fat staircase polygons that lie on the three boundary quadrants of the positive orthant; therefore, they do not cross each other nor do they cross any of the other faces, which lie within the interior of the positive orthant.
All of the remaining faces, by Lemma~\ref{lem:fat}, are non-self-intersecting polygons (more specifically, fat double staircases).

In the correspondence between regular edge labelings and fat double staircase polygons described in Lemma~\ref{lem:fat}, a face $f$ has a vertex that (when isometrically projected) has an angle of $\pi/3$ if the triangle dual to that vertex has two equally-oriented edges at the vertex dual to $f$, an angle of $2\pi/3$ if the triangle dual to that vertex is cyclically oriented, and an angle of $4\pi/3$ if the triangle is acyclically oriented but has edges both going into and out of $f$. In all cases, the three angles of the polygons incident to a vertex of $\G$ have angles adding to $2\pi$ and the only way for three edges at a common vertex to form these three angles is for the three polygons to by arranged cyclically around the vertex rather than for any one of them to overlap with the other two. Thus, locally, at each vertex of $\G$, each three faces are consistently oriented. When isometrically projected onto the plane, these faces cannot double back on themselves and therefore must form a non-self-overlapping tiling of a region of the plane by polygons.

Thus, the faces of $\G$ can be partitioned into two surfaces (the three back faces and the remaining faces) that don't self-intersect nor cross with each other; it follows that $\G$ is a simple orthogonal polyhedron, and further $\G$ is a corner polyhedron because all of its faces except for the three back ones are visible from $(1,1,1)$.
\end{proof}

The results of the last two sections and this one may be combined to complete the proof of Theorem~\ref{thm:corner-cover}: A graph $\G$ can be represented as a corner polyhedron, with a specified vertex $v$ as the single hidden vertex, if and only if the dual graph of $\G$ has a cycle cover rooted at the triangle dual to $v$.

\section*{Appendix V: Decomposition of 4-connected Eulerian triangulations}

We have now established that graphs of corner polyhedra and graphs with rooted cycle covers are the same thing. In the next section we will show that all 4-connected Eulerian triangulations have rooted cycle covers. In order to do so, we describe in this section a set of reduction operations that allow any 4-connected Eulerian triangulation to be simplified---split into two graphs with smaller number of vertices or reduced to a single graph with smaller number of vertices---in such a way the the obtained graphs remain 4-connected. We will use these simplification steps to guide an inductive proof of the existence of a rooted cycle cover.

\begin{figure}[p]
  \centering
 \includegraphics{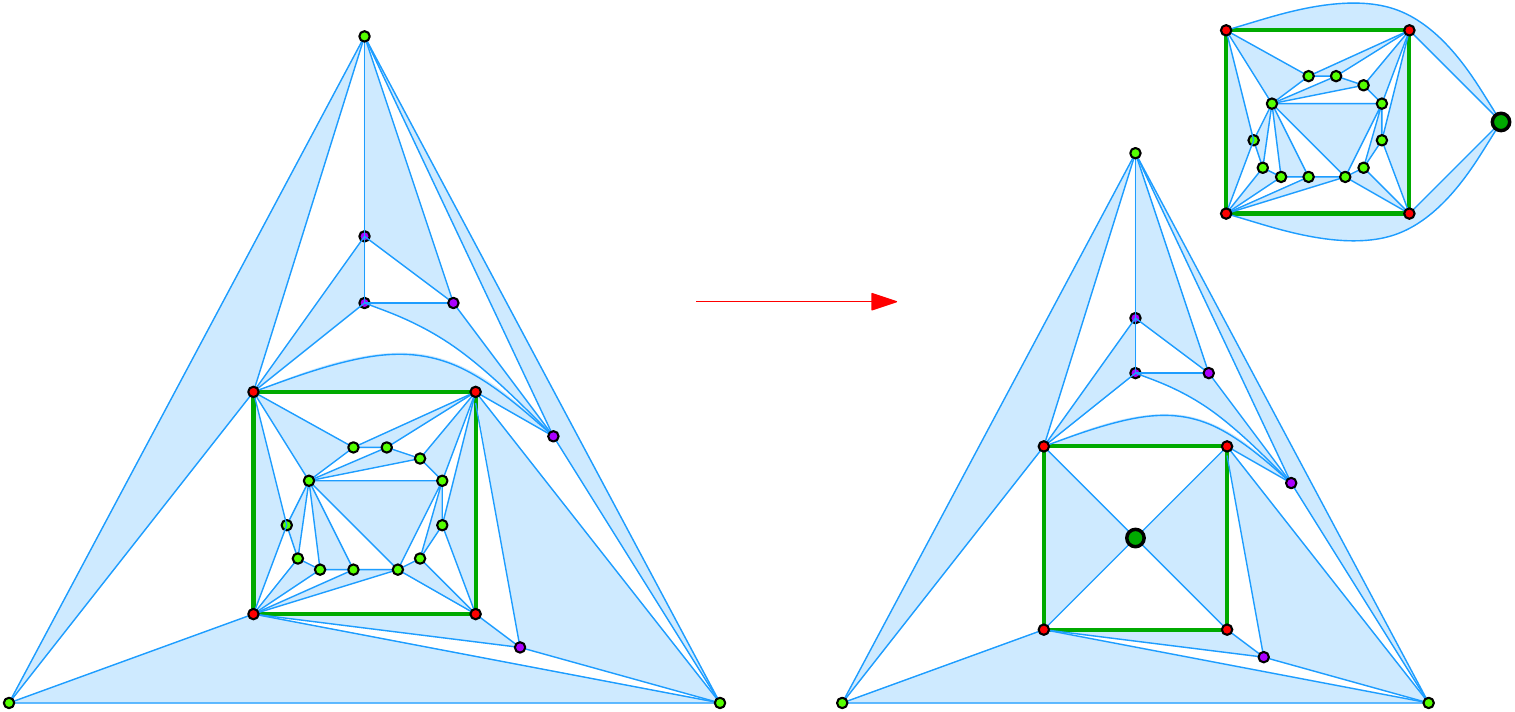}
  \caption{Rule I for simplifying 4-connected Eulerian triangulations: split on a monochromatic 4-cycle.}
  \label{fig:rule1}
\end{figure}

\begin{figure}[p]
  \centering
  \includegraphics{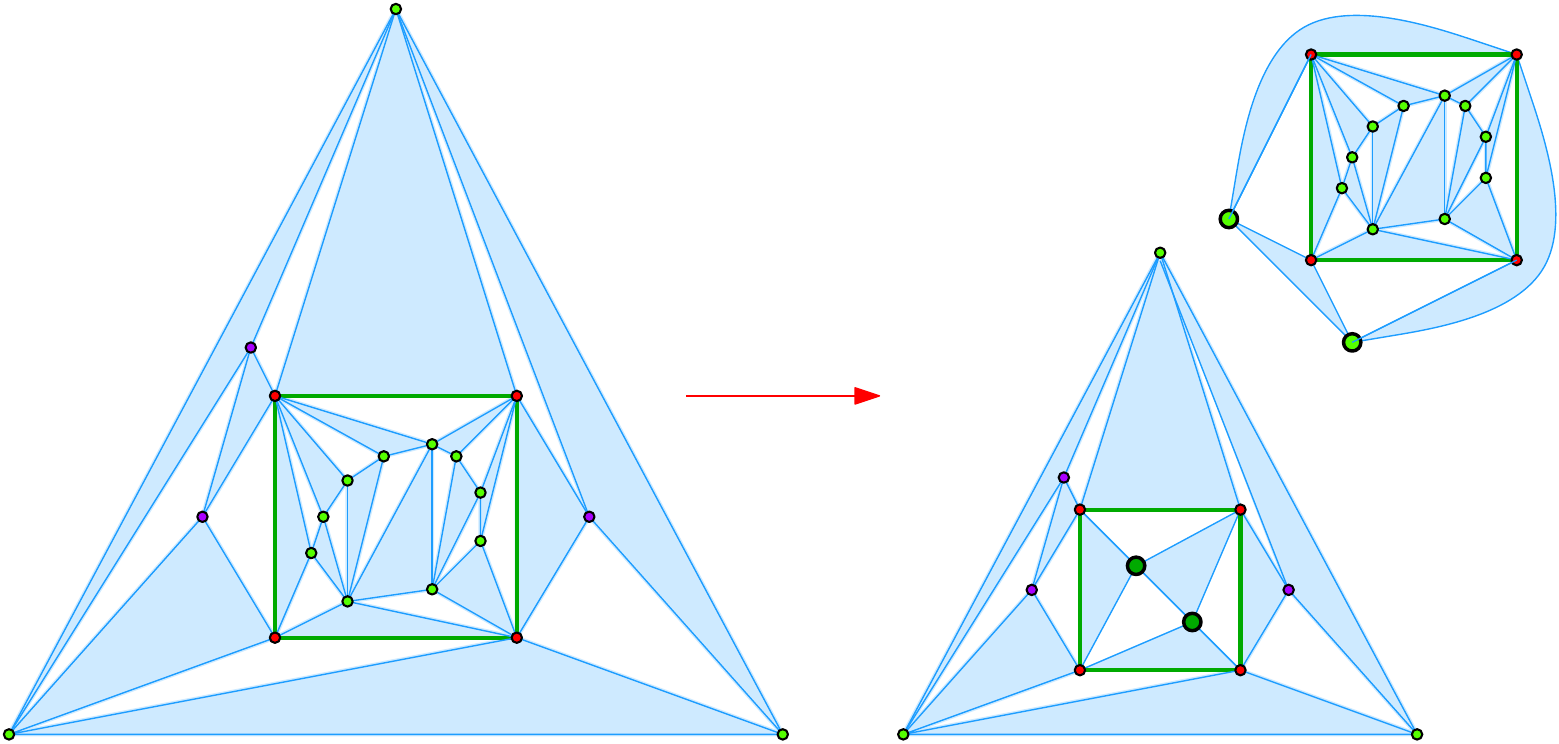}
  \caption{Rule II for simplifying 4-connected Eulerian triangulations: split on a bichromatic 4-cycle.}
  \label{fig:rule2}
\end{figure}
\begin{figure}[p]
  \centering
 \includegraphics[scale=1.25]{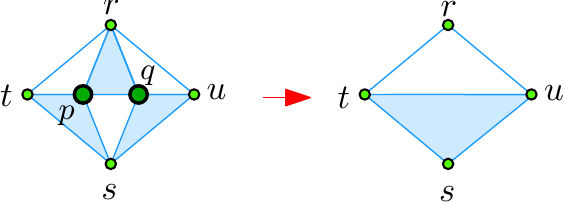}
  \caption{Rule III for simplifying 4-connected Eulerian triangulations: remove two adjacent degree-four vertices, when their neighbors are far enough apart to avoid creating a separating triangle.}
  \label{fig:rule3a}
\end{figure}

We use the following four rules to simplify a 4-connected Eulerian triangulation $\Delta$:

\begin{description}
  \item[(I)]  Suppose that $\Delta$ has a 4-cycle C such that there are three or more
vertices inside the cycle and three or more vertices outside the
cycle and all edges of C have the same color in the rainbow partition of $\Delta$. Then we can split $\Delta$ into
two subgraphs, by replacing the inside of $C$ by a single degree-four
vertex in one of the subgraphs and the outside of C by a single
degree-four vertex in the other subgraph---see Fig.~\ref{fig:rule1} for an illustration. This does not change any
distances between vertices in either of the two subgraphs so there
still are no separating triangles.

  \item[(II)] Suppose that $\Delta$ has a 4-cycle that, as above, separates three or more vertices inside from three or more vertices outside, but that the edges of $C$ have two colors in the rainbow partition of of $\Delta$. Then we can again split $\Delta$ into two subgraphs,
this time by replacing the inside of $C$ by two adjacent degree-four
vertices in one subgraph and the outside of $C$ by two adjacent
degree-four vertices in the other subgraph---see Fig.~\ref{fig:rule2} for an illustration.  Again, the obtained graph is 4-connected
because we do not change the distances between vertices.

  \item[(III)] Suppose $\Delta$ has two adjacent degree-four vertices $p$ and $q$. Let the vertices $r$ and $s$ be the third
vertices of the triangles on edge $pq$, and $t$ and $u$ be the other two
neighbors of $p$ and $q$---see Fig.~\ref{fig:rule3}, left. Suppose further that $r$ and $s$ have degree greater than four, that $tu$ is not an edge of $\Delta$, and that $tru$ and $tsu$ are the only paths of length two from $t$ to $u$.  Then we can replace $p$ and $q$ with an edge connecting $t$ and $u$ without
creating separating triangles---see Fig.~\ref{fig:rule3a}.

\item[(IV)] Suppose $\Delta$ has a degree-four vertex $v$ such that among its four neighbors $abcd$ all have degree greater than four. Then we can contract edges $bv$ and $bd$, as long as this contraction does not create any separating triangles---see Fig.~\ref{fig:rule4}.
\end{description}

\begin{figure}[htb]
  \centering
 \includegraphics[scale=1.25]{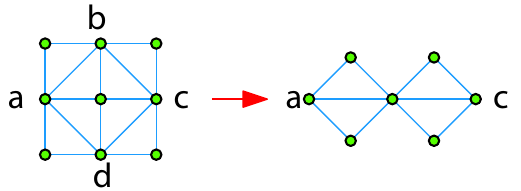}
  \caption{Rule IV for simplifying 4-connected Eulerian triangulations: contract two opposite edges of a 4-cycle.}
  \label{fig:rule4}
\end{figure}

\begin{lemma}
\label{lem:decompose}
Let $\Delta$ be a 4-connected Eulerian triangulation, and let $\delta$ be a designated triangle in $\Delta$. Then $\Delta$ can be simplified by one of rules I-IV above, and further in the cases of rules III or IV the simplification does not collapse triangle $\delta$, unless $\Delta$ is one of the two graphs shown at the center and right of Fig.~\ref{fig:rule3}.
\end{lemma}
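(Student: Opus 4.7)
The plan is to proceed by structural case analysis. The first step is a counting observation: since $\Delta$ is a 4-connected Eulerian triangulation, every vertex has even degree at least $4$, and Euler's formula gives $\sum_v \deg v = 6n-12$, so the number of degree-$4$ vertices is at least $6 + \sum_{k\ge 3}(k-3)\cdot(\text{vertices of degree }2k)$. In particular there are always vertices of degree exactly $4$, and in fact at least six of them.

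Next I would split on whether $\Delta$ contains a separating $4$-cycle $C$ with at least three vertices strictly on each side. If such a $C$ exists, then by Lemma~\ref{lem:sep-4-cycles} the edges of $C$ are either all one color in the rainbow partition, in which case rule~I applies, or they follow the pattern $red$-$red$-$blue$-$blue$, in which case rule~II applies. These two rules only replace large pieces by one or two new vertices and preserve all pairwise distances in each resulting piece, so 4-connectedness is preserved; moreover, these rules leave the vertex set of every triangle intact, so the constraint on $\delta$ is automatic.

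In the remaining case, every separating $4$-cycle of $\Delta$ bounds at most two vertices on some side. Choose any degree-$4$ vertex $v$ with cyclic neighborhood $a,b,c,d$ (so $\{a,b,c,d\}$ induces a $4$-cycle around $v$). I would distinguish two subcases. If $v$ has an adjacent degree-$4$ neighbor $q$, I attempt rule~III with $p=v$ and $q$: letting $r,s$ be the common neighbors of $p,q$ and $t,u$ the remaining neighbors of $p,q$, I would verify the three conditions of rule~III. Failure of the ``$tu$ is not an edge'' condition forces $\{p,q,t,u\}$ to be a $4$-cycle, which by our assumption bounds at most two vertices on one side; tracing through the possibilities pushes $\Delta$ toward one of the exceptional graphs in Fig.~\ref{fig:rule3}. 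Failure of the ``only two length-$2$ paths from $t$ to $u$'' condition produces a third common neighbor and hence a $4$-cycle of the same shape, handled identically. Failure of the $\deg r, \deg s > 4$ condition means we found another adjacent pair of degree-$4$ vertices to try instead; after at most a bounded cascade of substitutions either some instance of rule~III succeeds or we again land on an exceptional graph. If instead $v$ has no degree-$4$ neighbor, I attempt rule~IV, contracting the two opposite edges $bv$ and $bd$ (or, as a backup, $av$ and $ac$). The only failure mode is the creation of a separating triangle, which forces $b$ and $d$ (or $a$ and $c$) to share a common neighbor outside $\{v\}$; this common neighbor together with $v$ produces a separating $4$-cycle, contradicting our current subcase unless $\Delta$ is exceptionally small.

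Finally, I handle the constraint that rules~III and~IV must not collapse the designated triangle $\delta$. Since $\Delta$ has at least six degree-$4$ vertices, in any sufficiently large $\Delta$ one can choose the vertex or adjacent pair used by rule~III or~IV to be disjoint from the three vertices of $\delta$. Only when every suitable degree-$4$ structure is incident to $\delta$ can we be forced to collapse $\delta$, and a direct enumeration shows that this can occur only when $\Delta$ is one of the two exceptional graphs in Fig.~\ref{fig:rule3}. The main obstacle will be this last bookkeeping step: verifying that the interaction between the ``no degenerate separating $4$-cycle'' assumption, the conditions of rules~III and~IV, and the requirement to avoid $\delta$ jointly leave no unlisted exceptional graphs, which requires carefully enumerating the possible neighborhoods of degree-$4$ vertices in small $4$-connected Eulerian triangulations.
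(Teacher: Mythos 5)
Your overall architecture (Euler count of degree-$4$ vertices, a dichotomy between separating $4$-cycles with fat sides versus local moves at a degree-$4$ vertex, with a final avoidance argument for $\delta$) matches the paper's structure, but the key step is wrong and a genuine idea is missing.

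The error is in your Rule~IV failure analysis. You assert that a contraction of the two opposite edges at $p$ can only be blocked by a \emph{separating triangle} if the two endpoints of the contracted chain ($a$ and $c$, say) share a common neighbor, and you then discharge this via a separating $4$-cycle. But a common neighbor of $a$ and $c$ is a path of length $2$, and what that causes after contraction is a \emph{multiple adjacency}, which is exactly the separating-$4$-cycle case the paper handles first and dispenses with via Rules~I--II. The real obstruction to Rule~IV --- the one the paper names explicitly --- is a path of length $3$, say $a\,e\,f\,c$ with $e,f\notin\{b,d,p\}$: contracting $ap$ and $pc$ then manufactures a new separating triangle $g\,e\,f$. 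You never treat this case. The paper's treatment here is the crux of the lemma: when \emph{both} contraction directions at $p$ are obstructed by such length-$3$ paths ($a\,e\,f\,c$ and $b\,g\,e\,d$), planarity forces the two paths to cross and hence share a vertex, from which the proof extracts a $4$-cycle $cdef$ containing the degree-$\ge 6$ vertex $d$'s spare neighbors; this $4$-cycle is then shown to enable a Rule~I/II cut, a Case-$1$-style pair of adjacent degree-$4$ vertices, or to keep $\delta$ out of the way. None of this is present in your proposal, and without it the lemma does not follow: there is nothing stopping a single degree-$4$ vertex from having all its neighbors high-degree and both contractions blocked.

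Two smaller weaknesses: in Case~$1$, ``after at most a bounded cascade of substitutions either some instance of rule~III succeeds or we again land on an exceptional graph'' is not an argument --- the paper instead observes that a triangle of degree-$4$ vertices in a $4$-connected triangulation forces the octahedron, which gives a clean termination. And the final paragraph claiming that $\delta$ can be avoided whenever $\Delta$ is not exceptional is asserted via ``a direct enumeration shows,'' but the enumeration is exactly the content that needs to be supplied; the paper instead pins down at the outset that a degree-$4$ vertex with at most one edge into $\delta$ must exist unless $\Delta=\Delta_6$, and then carries that choice through the entire case analysis.
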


\begin{figure}[t]
  \centering
  \includegraphics{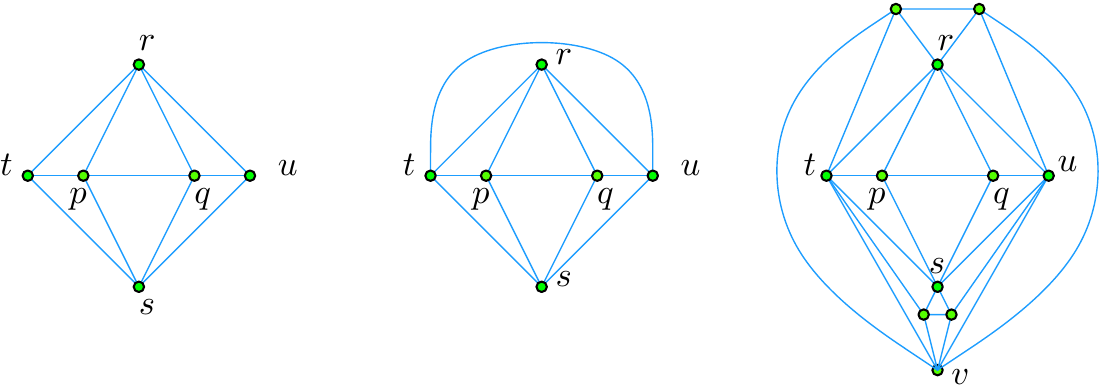}
  \caption{Left: vertex labeling for Lemma~\ref{lem:decompose}; right and center: the two indecomposable graphs for the lemma, the octahedral graph $\Delta_6$ and an 11-vertex graph $\Delta_{11}$ formed by adding two vertices inside each face of the complete bipartite graph $K_{2,3}$.}
  \label{fig:rule3}
\end{figure}

\begin{proof}
Due to Euler's formula, every Eulerian triangulation contains at least six degree-four vertices. It is not possible for all six to be part of $\delta$ or to be among the three vertices on adjacent triangles to $\delta$, unless $\Delta$ is the octahedral graph $\Delta_6$ shown at the center of Fig.~\ref{fig:rule3}. Therefore, let $p$ be a degree-four vertex that has at most one edge connecting it to a vertex of $\delta$.

First, suppose that $p$ has a degree-four neighbor $q$; $q$ cannot be a vertex of $\delta$.
If $p$ and $q$ have a common degree-four neighbor $r$, then triangle $pqr$ is surrounded by another triangle of $\Delta$; since $\Delta$ is by assumption 4-connected, this surrounding triangle must be a face of $\Delta$ and again we have the octahedral graph. Otherwise, let $r$ and $s$  be the two common neighbors of $p$ and $q$, let $t$ be the fourth neighbor of $p$, and let $u$ be the fourth neighbor of $q$, as shown in Fig.~\ref{fig:rule3} (left). If $t$ and $u$ are adjacent, we have the octahedral graph in the center of the figure. If there is a third vertex $v$ that belongs to a two-edge path
$tvu$, then $tvur$ and $tvus$ are 4-cycles. It is not possible for either of these 4-cycles to contain only a single vertex, because that would cause $r$ or $s$ to have odd degree. If either of these cycles has three or more vertices on each of its sides, then by Lemma~\ref{lem:sep-4-cycles} we may perform a Rule~I or Rule~II simplification. Otherwise, both of these cycles contain exactly
two degree-four vertices and we have determined the entire graph---see Fig.~\ref{fig:rule3}(right)). And if there is no such vertex~$v$, then we may perform a Rule~III simplification.

In the remaining case $p$ has degree four but none of its neighbors do. Let the neighbors of $p$ be (in clockwise order) $a,b,c,d$.
Then there are two different ways of collapsing two opposite neighbors
into a single supervertex: we could contract edges $ap$ and $pc$, or we
could contract edges $bp$ and $pd$. It is not possible for either contraction to create a self-loop, for if (say) $a$ and $c$ were already adjacent then $acp$ would be a separating triangle, which we have assumed do not exist in $\Delta$. And, if no Rule~I or Rule~II simplification exists, it is also not possible for either contraction to create a multiple adjacency, for if (say) $a$ and $c$ had a common neighbor $v\notin\{b,d,p\}$ then $apcv$ would be a separating 4-cycle. Each side of this 4-cycle would have to contain more than two vertices, for otherwise $b$ or $d$ would have degree~4, so one of the two rules for simplifying a separating 4-cycle would apply. However, it is possible for the contraction of $ap$ and $pc$ to create a separating triangle, if there exists a path $aefc$ where neither $e$ nor $f$ belongs to the set $\{b,d,p\}$. Similarly, there might exist a path of length three from $b$ to $d$ preventing the other contraction from being applied. But if both paths exist, they must cross, by planarity, and hence must share a vertex: without loss of generality we may assume that the path from $b$ to $d$ has the form $bged$, as shown in Fig.~\ref{fig:crossed-pentagons}, as the other cases are symmetric to this one. Then $cdef$ is a cycle, separating $a$, $b$, and $g$ from the neighbors of $d$ other than the four neighbors $a$, $c$, $e$, and $p$ that we have already identified; recall that $d$ has degree at least six, so there are at least two vertices inside cycle $cdef$. We may also assume without loss of generality that $\delta$ is not inside cycle $cdef$, for otherwise we may apply the same argument to cycle $abge$. If cycle $cdef$ contains exactly two vertices, then they are adjacent vertices of degree four, a case we have already analyzed. Otherwise, cycle $cdef$ contains three or more vertices and allows a Rule~I or Rule~II simplification.
\end{proof}

\begin{figure}[t]
  \centering
  \includegraphics{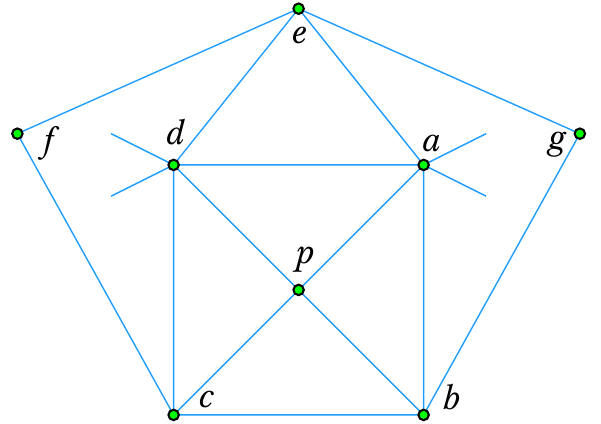}
  \caption{Vertex labeling for the final case of Lemma~\ref{lem:decompose}.}
  \label{fig:crossed-pentagons}
\end{figure}

\section*{Appendix VI: Cycle covers for 4-connected Eulerian triangulations}

In this section we prove that there exists a cycle cover for every 4-connected Eulerian triangulation $\Delta$ and every choice of root triangle. Throughout the section adopt the conventions that the chosen root triangle $uvw$ is drawn as the outer face of a planar embedding of the given graph. Recall also that, in every Eulerian triangulation, there exists a 2-coloring of the triangular faces and an assignment of three colors to the edges such that each triangle has an edge of each color; these colorings are unique up to permutation of the colors. We adopt the convention that the color of the outer triangle is white and that the other triangle color is blue, as shown in our figures.

Recall that a cycle cover is a collection of vertex-disjoint cycles in $\Delta$ that include every vertex other than the three vertices of the root triangle, and that together include exactly one edge from every white triangle. We will prove the existence of the rooted cycle cover for $\Delta$ inductively, using the decomposition rules from the previous section. As we describe in a different section, this inductive proof can be used as the basis of a recursive algorithm for finding cycle covers efficiently.

\begin{lemma}
\label{lem:cc1}
Suppose that the 4-connected Eulerian triangulation $\Delta$ admits a Rule I simplification into two smaller graphs $\Delta\OUT$ (the graph obtained from $\Delta$ by replacing the interior of the 4-cycle  $C=abcd$ with a hyper-vertex $g$) and $\Delta\IN$ the graph obtained by replacing the exterior of $C$ with a hypervertex $h$). Embed the two graphs as in Fig.~\ref{fig:rule1} so that the outer face of $\Delta\OUT$ coincides with that for $\Delta$, the outer face for $\Delta\IN$ includes the new vertex $h$, and the face colorings of $\Delta\OUT$ and $\Delta\IN$ are both consistent with that for $\Delta$. Suppose further that both $\Delta\OUT$ and $\Delta\IN$ admit cycle covers $\C\IN$ and $\C\OUT$ rooted at their outer triangles. Then $\Delta$ also admits a cycle cover rooted at the outer triangle.
\end{lemma}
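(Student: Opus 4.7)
The construction proceeds by local surgery along $C = abcd$.  First I analyze $\C\OUT$ at the degree-four vertex $g$.  Since $C$ is monochromatic in the rainbow partition, the four triangles around $g$ alternate white--blue--white--blue in the triangle $2$-coloring; label the white ones $gab$ and $gcd$.  Because $g$ is interior in $\Delta\OUT$ it has degree two in $\C\OUT$, and because each of $gab$ and $gcd$ contributes exactly one cover edge, both cover edges at $g$ must be incident to $g$.  Hence $\C\OUT$ has cover edges $gx$ and $gy$ at $g$, with $x\in\{a,b\}$ and $y\in\{c,d\}$, and removing $g$ leaves an open path from $x$ to $y$ in the outside region.  Symmetrically, the two white triangles around $h$ in $\Delta\IN$ are $hbc$ and $hda$; the outer triangle of $\Delta\IN$ must be white and contain $h$, so it is one of these---I will choose which.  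Whichever is chosen, the cover edge of the other (the remaining interior white triangle around $h$) is forced: its two $h$-edges are disallowed since $h$ is a root vertex, leaving only the single $C$-edge of that triangle as its cover.

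I pick the outer triangle of $\Delta\IN$ so that this forced interface $C$-edge agrees, whenever possible, with the cover edge that $\C\OUT$ assigns to the corresponding outside white triangle $O_{bc}$ or $O_{da}$.  Then I set
\[
\C \;:=\; \bigl(\C\OUT \setminus \{gx,gy\}\bigr)\;\cup\;\C\IN,
\]
interpreted as a set of edges of $\Delta$ after identifying the shared $C$-edges; if the forced interface $C$-edge lies in $\C\IN$ but differs from $\C\OUT$'s cover of the adjacent outside triangle, it is dropped from the union, so that the outside cover wins.  All cycles of $\C\OUT$ not through $g$ and all cycles of $\C\IN$ disjoint from the interface edge pass into $\C$ unchanged.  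The cycle of $\C\OUT$ through $g$, broken into a path from $x$ to $y$ by the removal of $gx$ and $gy$, closes into a cycle of $\Delta$ by being joined at $x$ and $y$ to the interior path supplied by $\C\IN$ through the forced interface edge.

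The main obstacle is verifying that $\C$ is a genuine cycle cover: every interior vertex of $\Delta$ must have exactly two incident edges in $\C$, and every interior white triangle must contain exactly one cover edge in $\C$.  Away from $C$ this is immediate, since $\C$ inherits the relevant data verbatim from $\C\OUT$ and $\C\IN$.  The verification near $C$ reduces to local checks at the vertices $a,b,c,d$ and at the adjacent interior white triangles $I_{ab}, I_{cd}, O_{bc}, O_{da}$, and splits into cases according to $\{x,y\}$ and the chosen outer face of $\Delta\IN$.  The subtlest sub-case is when $\C\OUT$'s covers of both $O_{bc}$ and $O_{da}$ are non-$C$-edges, so that neither choice of outer triangle for $\Delta\IN$ is immediately compatible; here the proof must exhibit a local swap in $\C\OUT$ that replaces the cover edge of one outside triangle by its $C$-edge while re-routing the incident cycle, a surgery made possible by the structural regularity of $\C\OUT$ near $C$, and this is where I expect the technical effort of the proof to concentrate.
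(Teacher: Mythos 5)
You correctly identify the local structure: in $\C\OUT$ the hypervertex $g$ has exactly two cover edges $gx$, $gy$ with $x\in\{a,b\}$ and $y\in\{c,d\}$, one from each white triangle $gab$, $gcd$ at $g$; and in $\Delta\IN$ the unique non-root white triangle at $h$ has its cover edge forced onto the corresponding $C$-edge. These observations agree with the paper. The gap is in the merging step. You propose to close the broken cycle $C_g\setminus\{g\}$ (a path from $x$ to $y$ in the exterior) by joining it ``at $x$ and $y$'' to the interior cycle of $\C\IN$ through the forced interface edge. But that interior cycle meets $C$ only at the two non-root $C$-vertices of $\Delta\IN$, say $b$ and $c$: its cover edges there are forced to be $pb$, $bc$, $cq$, where $p$, $q$ are the interior common neighbors of $\{a,b\}$ and $\{c,d\}$. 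So when $x=a$ or $y=d$ the two paths to be joined have mismatched endpoints, and the construction as stated does not produce a cycle. Choosing the outer triangle of $\Delta\IN$ to align the forced interface edge with $\C\OUT$'s cover of $O_{bc}$ or $O_{da}$ does not repair this: that choice selects which $C$-edge is forced inside, but has no influence on which vertices $x\in\{a,b\}$ and $y\in\{c,d\}$ arise from $\C\OUT$. You recognize this in your ``subtlest sub-case'' and leave the needed ``local swap'' unspecified --- that is exactly where the argument fails.

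The paper sidesteps the endpoint-mismatch entirely: instead of retaining the interface edge, it deletes the whole three-edge path $pbcq$ from the interior cycle, leaving a path with endpoints $p$ and $q$, and reconnects via the \emph{new} edges $xp$ and $yq$. Because $p$ is adjacent to both $a$ and $b$, and $q$ to both $c$ and $d$, this works uniformly for every combination of $x$ and $y$; the deleted vertices $b$, $c$ are then covered either by the merged cycle $C'$ (when they lie on $C_g$) or by other unchanged cycles of $\C\OUT$ (they are interior to $\Delta\OUT$). No coordination between the embedding of $\Delta\IN$ and the behavior of $\C\OUT$ near $C$ is required, and there is no residual sub-case. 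Reworking your argument around this reconnection-at-$p$-and-$q$ idea, rather than at $b$ and $c$, would close the gap.
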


\begin{proof}
In a Rule I simplification, the edges of $C$ have the same color in the rainbow partition of $\Delta$. Since the edges around each vertex of $\Delta$ alternate in color, each of the vertices of $C$ must be adjacent to an odd number of vertices in the interior of $C$, which implies that, among the four triangles inside $C$ and adjacent to its edges, two triangles on opposite edges of $C$ are white and the other two triangles are blue. Vertex $h$ is placed in such a way that the edges of $C$ adjacent to blue such triangles lie in the interior of $\Delta\OUT$.
Without loss of generality let $b$ and $c$ be the vertices of $\C$ that are not on the outer face of $\Delta\IN$. Consider the portions of covers $\C\IN$ and $\C\OUT$ within the triangles near~$C$.

In $\C\IN$ there must be a cycle containing the path $pbcq$, where $p$ is the common neighbor of $a$ and $b$ and $q$ is the common neighbor of $c$ and $d$ inside $C$. This follows from the observation that vertices $b$ and $c$ are included in cycles of the cover while vertices $a$, $d$ and $h$ belong to the outer triangle and are not included---see Fig.~\ref{fig:gin-gout} (left). Let $C_{bc}$ denote the cycle containing $pbcq$.

In $\C\OUT$ there must be one cycle $C_g$ that covers $g$. $C_g$ contains exactly one of the four paths $agc$, $bgd$, $agd$, and $bgc$, since exactly one edge of every white rectangle adjacent to $g$ is part of the cover.

We combine $\C\IN$ and $\C\OUT$ to obtain the cycle cover $\C$ of $\Delta$ by first including all cycles of $\C\OUT\setminus\{C_g\}$ and all cycles of $\C\IN\setminus\{C_{bc}\}$ to $\C$. We then replace $C_{bc}$ and $C_g$ by a single cycle $C'$ which covers all the remaining vertices other than those in the outer triangle of $\Delta$. We construct the final cycle $C'$ of $\C$ as follows:

\begin{enumerate}
  \item Remove the path $pbcq$ from $C_{bc}$.
  \item Let $x$ and $y$ be vertices of $C \bigcap C_g$ such that $x$ is adjacent to $p$ and $y$ is adjacent to $q$. Use edges $xp$ and $yq$ to connect the path $C_g \setminus {g}$  to $C_{bc} \setminus {pbcq}$ to form a cycle $C'$.
 \end{enumerate}
Fig.~\ref{fig:gin-gout} illustrates the construction for all possible choices of $x$ and $y$.
It is easy to see that the obtained set of cycles satisfies the definition of rooted cycle cover for $\Delta$.
\end{proof}

\begin{figure}[htb]
  \centering
  \includegraphics{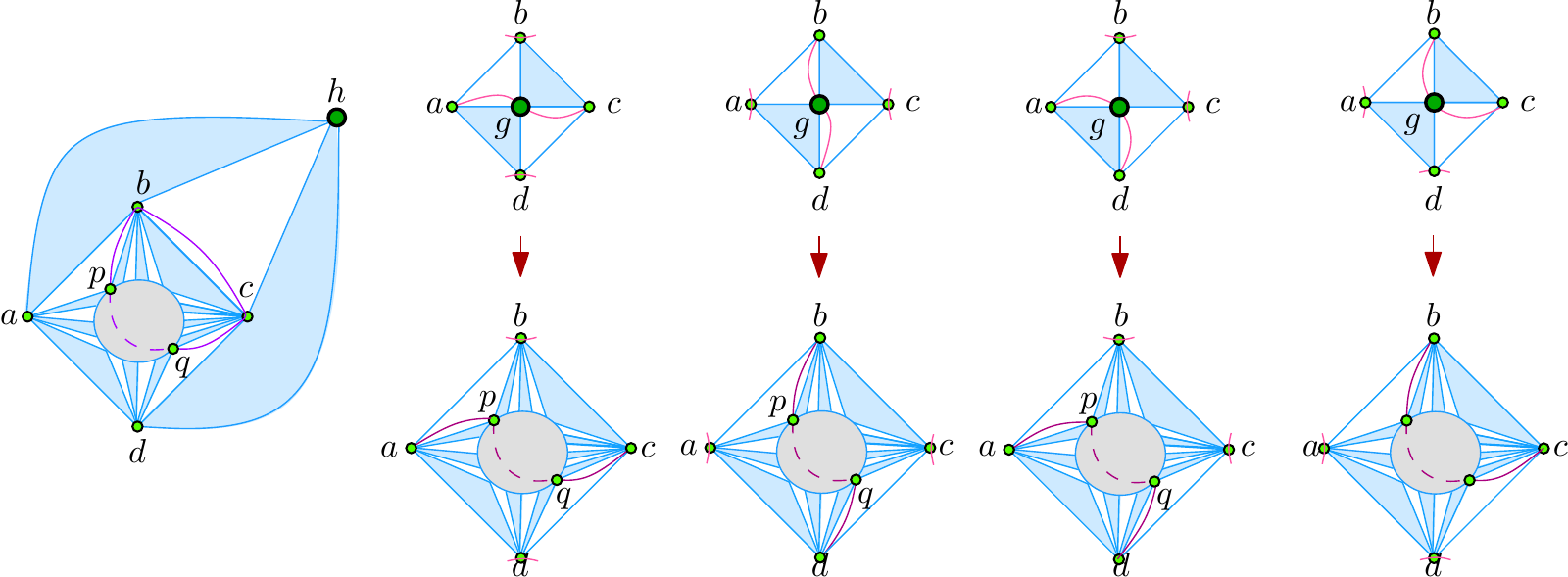}
  \caption{Cycle covers for Rule I simplification.}
  \label{fig:gin-gout}
\end{figure}

\begin{lemma}
\label{lem:cc2}
Suppose that the 4-connected Eulerian triangulation $\Delta$ admits a Rule II simplification into two smaller graphs $\Delta\OUT$ (the graph obtained from $\Delta$ by replacing the interior of the 4-cycle  $C=abcd$ with two hyper-vertices $g_1$ and $g_2$) and $\Delta\IN$ the graph obtained by replacing the exterior of $C$ with two hypervertices $h_1$ and $h_2$). Embed the two graphs as in Fig.~\ref{fig:rule2} so that the outer face of $\Delta\OUT$ coincides with that for $\Delta$, the outer face for $\Delta\IN$ includes the new vertices $h_1$ and~$h_2$, and the face colorings of $\Delta\OUT$ and $\Delta\IN$ are both consistent with that for $\Delta$. Suppose further that both $\Delta\OUT$ and $\Delta\IN$ admit cycle covers $\C\IN$ and $\C\OUT$ rooted at their outer triangles. Then $\Delta$ also admits a cycle cover rooted at the outer triangle.
\end{lemma}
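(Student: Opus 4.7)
The plan is to follow the template of the proof of Lemma~\ref{lem:cc1}, adapted to the fact that each of $\Delta\OUT$ and $\Delta\IN$ now carries a pair of adjacent hyper-vertices rather than a single one, and that the separating 4-cycle $C=abcd$ is bichromatic in the rainbow partition. First I would pin down the local structure near~$C$. By Lemma~\ref{lem:sep-4-cycles} the edges of $C$ are colored in the pattern red, red, blue, blue around $C$; combined with the alternation of colors around each vertex of $\Delta$, this forces the four triangles of $\Delta$ inside $C$ incident to its edges to alternate white, blue, white, blue, and the same holds outside $C$. The two hyper-vertices $g_1, g_2$ of $\Delta\OUT$ lie inside the two blue inner-side triangles, so the two white inner-side triangles contribute two of the four edges of $C$ as their natural cover candidates. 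I would fix notation, without loss of generality, so that these two inner white-triangle edges of $C$ are $ab$ and $cd$, with an entirely symmetric labeling applied on the outside for $h_1, h_2$ in $\Delta\IN$.

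Next I would read off the forced local behavior of $\C\IN$ and $\C\OUT$. In $\Delta\IN$ the outer triangle is a triangle $h_1 h_2 x$ for one $x \in \{a,b,c,d\}$, so the remaining three vertices of $C$ must be covered by cycles of $\C\IN$; the requirement that every inner white triangle of $\Delta\IN$ contribute exactly one cover edge then pins down which edges of $\C\IN$ cross the vicinity of $C$, up to a short list of topological shapes (either a single long arc passing through both $h_i$, or two disjoint short arcs each through one $h_i$). A parallel argument on the outside produces the analogous list of shapes for $\C\OUT$ near $g_1, g_2$.

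Finally I would splice, in direct analogy with Lemma~\ref{lem:cc1}: delete the fragments of $\C\IN$ that pass through $\{h_1,h_2\}$ and the fragments of $\C\OUT$ that pass through $\{g_1,g_2\}$, and reconnect the loose ends along edges of $\Delta$ incident to $a,b,c,d$ that the deletions have freed up. One then verifies that the union of the surviving cycles and the reconnecting arcs forms a disjoint collection of simple cycles covering every non-root vertex of $\Delta$ with exactly one edge in every white triangle. The hard step is this splicing: unlike in Lemma~\ref{lem:cc1}, where each side contributed a single arc and only a handful of gluings had to be checked, here I must enumerate all pairings of $\C\IN$-shapes with $\C\OUT$-shapes and, for each of the two possible assignments of the white-triangle constraint to $\{ab, cd\}$ versus $\{bc, da\}$, exhibit a consistent gluing. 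I expect each case to close by a direct picture, as in the figure used for Lemma~\ref{lem:cc1}, with the main obstacle being the verification that no combination breaks the one-edge-per-white-triangle condition at the four white triangles straddling $C$.
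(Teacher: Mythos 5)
There is a concrete error in your local structure analysis that would derail the case enumeration. By Lemma~\ref{lem:sep-4-cycles} the bichromatic 4-cycle $C$ is colored in the block pattern red--red--blue--blue, not alternating; and from the alternation of edge colors around each vertex of $\Delta$ one deduces that exactly two \emph{opposite} vertices of $C$ have an odd number of neighbors strictly inside $C$ and the other two have an even number. Translating to face colors, this forces the four triangles of $\Delta$ inside $C$ incident to its edges to appear in the \emph{block} pattern white--white--blue--blue (in cyclic order), not white--blue--white--blue as you claim. Concretely, there is a single distinguished vertex of $C$ incident to two blue interior triangles (the paper calls it $d$), and the two white interior triangles sit on the two $C$-edges consecutive to its opposite vertex. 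Your normalization ``the two inner white-triangle edges of $C$ are $ab$ and $cd$'' (opposite edges) therefore cannot be achieved, and the subsequent enumeration of ``shapes'' of cycle fragments crossing the neighborhood of $C$ is built on a false picture.

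There is a second confusion: you describe the forced behavior of $\C\IN$ as arcs passing through $h_1$ and $h_2$, but $h_1$ and $h_2$ sit on the root (outer) triangle of $\Delta\IN$ together with the distinguished vertex $d$, so they are \emph{not} covered by $\C\IN$ at all; what is forced instead is a single cycle of $\C\IN$ containing the path $pabcq$, where $p$ and $q$ are the interior common neighbors of $\{a,d\}$ and $\{c,d\}$. Symmetrically, on the $\Delta\OUT$ side the key forced fact you need (and do not mention) is that the cycle of $\C\OUT$ through $g_1$ and $g_2$ must use the edge $g_1g_2$, which cuts your case tree down to two genuinely distinct gluings (the cycle either re-enters at $a$ and $c$, or at $a$ and $d$ up to symmetry). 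So the overall plan --- identify forced local fragments and splice --- is the right one and matches the paper, but as written both halves of your identification step are wrong, and the cases you would enumerate would not be the cases that actually occur.
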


\begin{proof}
In a Rule~II simplification, the edges of $C$ alternate colors, from which it follows that two opposite vertices of $C$ are adjacent to an odd number of vertices in $C$'s interior, and the other two vertices of $C$ have evenly many neighbors inside $C$. This means that there exists a single vertex of $C$---let it be $d$---that is adjacent to two blue faces inside $C$ containing edges of $C$. We embed $h_1$ and $h_2$ to keep $a$ in the interior of $\Delta\IN$.

Denote $C_{abc}$ the cycle in $\C\IN$ containing the path $pabcq$, where $p$ is the common neighbor of $a$ and $d$ inside $C$ and $q$ is the common neighbor of $c$ and $d$ inside $C$. Such a cycle exists because this is the only way to cover vertices $p,a,b,c,q$ without violating the properties of a rooted cycle cover.

In $\C\OUT$, there must be a cycle $C_g$ that includes edge $g_1g_2$, for otherwise it would not be possible to include both $g_1$ and $g_2$ in the cycle cover. However, there are two cases for how this cycle can be connected to the rest of $\Delta\IN$: it may contain the path $ah_1h_2c$, or it may include vertex $d$. In the latter case, we may assume without loss of generality that it contains the path $ah_1h_2d$; the other case, of a path $dh_1h_2c$, is symmetric.

\begin{figure}[htb]
  \centering
  \includegraphics{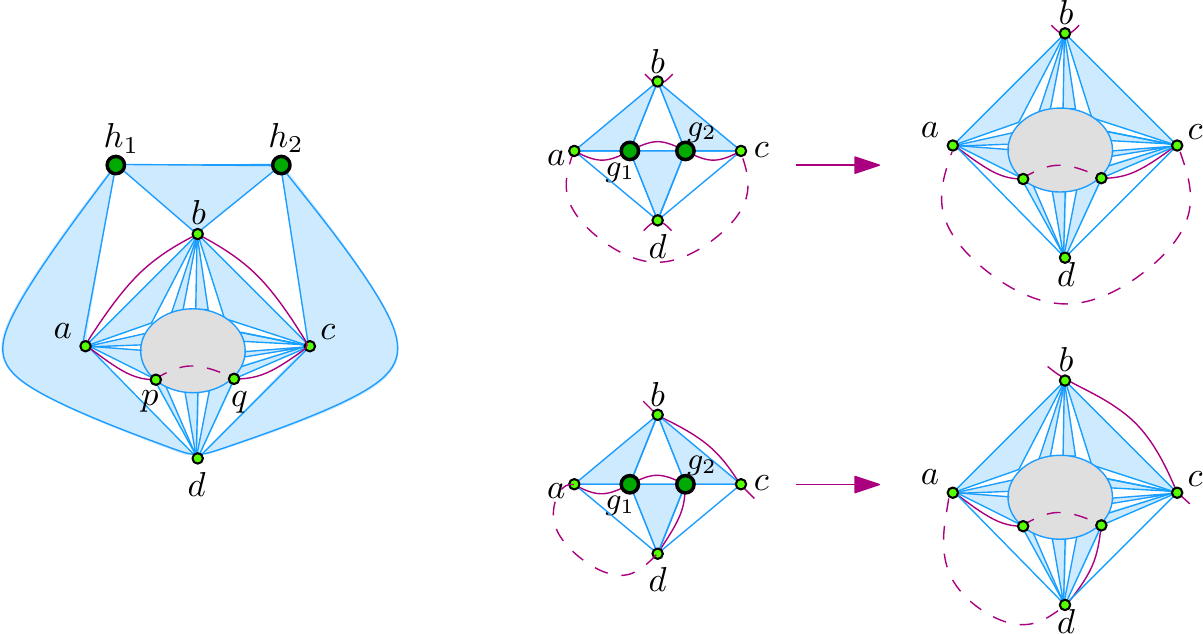}
  \caption{Cycle covers for Rule~II simplification.}
  \label{fig:gin-gout-1b}
\end{figure}

We construct the cycle cover $\C$ of $\Delta$ by first adding all cycles in $\C\OUT\setminus C_{abc}$ and all cycles in $\C\IN\setminus C_{g}$. In the case that $C_g$ contains the path $ah_1h_2c$,
we construct the final cycle $C'$ by removing the path $abc$ from $C_{abc}$ and merging it with the subpath of $C_g$ that connects $a$ and $c$ and does not contain $g_1$ and $g_2$. In the case that $C_g$ contains the path $ah_1h_2c$, we instead construct $C'$ by removing the path $abcq$ from $C_{abc}$ and merging it with the subpath of $C_g$ that connects $a$ and $g_2$ and does not contain $g_1$. All cycles are vertex disjoint, all internal vertices in $\Delta$ are covered and every white face contributes exactly one edge to $\C$, so $\C$ is a rooted cycle cover for $\Delta$.
\end{proof}

\begin{lemma}
\label{lem:cc3}
Suppose that the 4-connected Eulerian triangulation $\Delta$ admits a Rule~III simplification that replaces a pair of adjacent degree four vertices $p$ and $q$ by an edge $tu$ connecting their two neighbors, forming a simpler graph $\Delta'$, that neither $p$ nor $q$ belongs to the outer triangle, and that $\Delta'$ admits a cycle cover rooted at the outer triangle. Then $\Delta$ also admits a cycle cover rooted at the outer triangle.
\end{lemma}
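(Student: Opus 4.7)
The plan is to construct a rooted cycle cover $\C$ of $\Delta$ from the given cover $\C'$ of $\Delta'$ by modifying a single cycle of $\C'$ locally: replace one of its cover edges by a three-edge detour through the two reinstated vertices $p$ and $q$. The cover edge to be replaced is one of the three edges of a specific white triangle of $\Delta'$ at the reduction site, determined by $\C'$.

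First I would fix the local color picture. Of the six triangles of $\Delta$ incident with edge $pq$, the two sharing $pq$ (namely $pqr$ and $pqs$) have opposite colors in the two-coloring of $\Delta$, and propagating the alternation around $p$ and $q$ shows the six split into two monochromatic triples, $\{pqr, pst, qsu\}$ and $\{prt, pqs, qru\}$. Rule~III leaves every triangle of $\Delta$ outside the 4-cycle $trus$ unchanged, so the colors of the two new triangles $tru$ and $tsu$ of $\Delta'$ are forced by their shared edges with those outside triangles to be opposite, and exactly one is white. Label so that $tsu$ is the white one and $\{pqr, pst, qsu\}$ is the matching white triple in $\Delta$; the other labeling is symmetric under the swap $r \leftrightarrow s$, and the detours below then take the analogous form using the edges of $tru$ and the new white triple $\{prt, pqs, qru\}$.

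By the white-triangle condition applied to $tsu$ in $\Delta'$, exactly one of $\{tu, ts, su\}$ lies in $\C'$. I would case-split on which: if $tu \in \C'$, in $\Delta$ replace $tu$ in its cycle by the path $t,p,q,u$ (using edges $tp,pq,qu$); if $ts \in \C'$, replace $ts$ by $t,p,q,s$ (using $tp,pq,qs$); if $su \in \C'$, replace $su$ by $s,p,q,u$ (using $sp,pq,qu$). Let $\C$ be the resulting family of cycles. Since $p$ and $q$ are new vertices absent from every cycle of $\C'$, the modified cycle remains simple, the cycles of $\C$ remain pairwise vertex-disjoint, and every interior vertex of $\Delta$ is covered, now including $p$ and $q$.

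Finally I would verify the white-triangle edge-count condition for $\C$. Every triangle of $\Delta$ lying outside the 4-cycle $trus$ has the same edge set as a triangle of $\Delta'$, and the reroute touches none of those edges, so those constraints transfer unchanged. For the three new inside white triangles, a single observation handles all three cases uniformly: the three-edge detour contributes $pq$ to $pqr$, one of $\{pt, ps\}$ to $pst$, and one of $\{qu, qs\}$ to $qsu$. The two unchosen edges of $\{tu, ts, su\}$ were uncovered in $\C'$ by the $tsu$-constraint and are not added by the reroute, so no white triangle of $\Delta$ winds up with more than one covered edge. The crux, and essentially the only nontrivial step, is this distribution observation --- that the natural $p$-$q$ detour automatically supplies exactly one cover edge to each of the three new white triangles inside $trus$, regardless of which of the three possible cover edges of $tsu$ the input $\C'$ selected --- and once it is recognized the three-case verification is mechanical.
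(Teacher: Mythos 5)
Your proposal is correct and takes essentially the same approach as the paper: choose (WLOG) which of the two triangles inside the 4-cycle $trus$ is white in $\Delta'$, case-split on which of its three edges the cover $\C'$ uses, and replace that edge by a three-edge detour through $p$ and $q$, observing that the detour distributes exactly one cover edge to each of the three new interior white triangles of $\Delta$. The only cosmetic difference is the labeling convention (you take $tsu$ white where the paper takes $tru$) and that you spell out all three cases where the paper treats the third as symmetric.
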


\begin{proof}
Let the vertices $r$ and $s$ be the third vertices of the triangles on edge $pq$. Without loss of generality let $tru$ be the white triangle on the four-cycle $trus$, let $\C'$ be a cycle cover for $\Delta'$, and let $C_{tru}$ be the cycle of $\C'$ that includes an edge of triangle $tru$. If $C_{tru}$ includes edge $tu$ then we replace this edge by the path $tpqu$, and if $C_{tru}$ includes edge $tr$ then we replace this edge by the path $tpqr$; the case that $C_{tru}$ includes edge $ru$ is symmetric to the case that it includes edge $tr$. In all cases the result is a cycle cover $\C$ of $\Delta$---see Fig.~\ref{fig:6-2flip}.
\end{proof}
\begin{figure}[htb]
  \centering
 \includegraphics{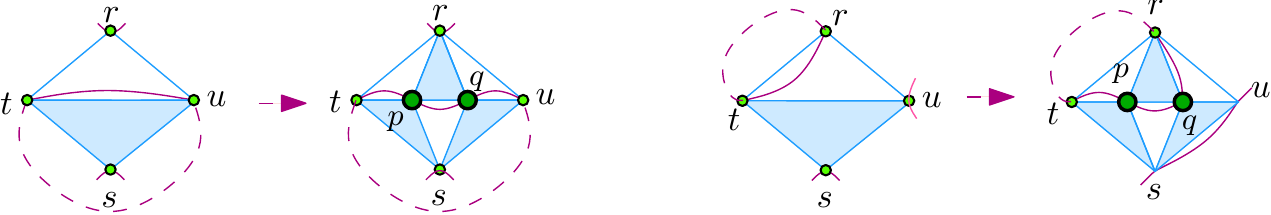}
  \caption{Cycle covers for Rule III simplification.}
  \label{fig:6-2flip}
\end{figure}

\begin{lemma}
\label{lem:cc4}
Suppose that the 4-connected Eulerian triangulation $\Delta$ admits a Rule~IV simplification in which a degree-four vertex $p$ has its neighbors in the cyclic order $abcd$, $p$ does not belong to the outer cycle of $\Delta$, all four neighbors have degree greater than four, and the simplification collapses edges $bp$ and $pd$ of $\Delta$ into a new hyper-vertex $g$ forming the simpler graph $\Delta'$. Suppose further that $\Delta$ has a cycle cover~$\C'$ rooted at the outer triangle. Then $\Delta$ also admits a cycle cover rooted at the outer triangle.
\end{lemma}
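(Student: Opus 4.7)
The plan is to construct the rooted cycle cover $\C$ of $\Delta$ from $\C'$ by lifting the cycle $C_g \in \C'$ that passes through $g$ to a path through $\{b, p, d\}$ in $\Delta$, while keeping the remaining cycles of $\C'$ intact. The structural starting point is that in any cycle cover of $\Delta$, the two covering edges at the degree-four vertex $p$ must each be incident to $p$ and must together cover the two white triangles meeting at $p$ (either the pair $\{pab, pcd\}$ or the pair $\{pbc, pda\}$, depending on the two-coloring of $\Delta$): if a cover edge of a white triangle at $p$ were instead the outer edge $ab$ or $cd$, then $p$ would fall short of its required degree of two in $\C$. Contracting $\{b, p, d\} \to g$, this restricts the possible pairings of covering edges of $\C'$ at $g$ to a short list of \emph{compatible} patterns, namely those pairings where one covering edge is a former $b$-neighbor (or $ga$) and the other is a former $d$-neighbor (or $gc$), so that the lift can connect through $bp$ and $pd$ to cover all three of $b$, $p$, $d$ simultaneously.

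For a $\C'$ whose covering edges at $g$ form a compatible pattern, I lift $\C'$ to $\C$ directly. In the pattern with one $b$-side and one $d$-side covering edge, I replace the arc of $C_g$ through $g$ by the path $b' \to b \to p \to d \to d'$, introducing the internal edges $bp$ and $pd$; this gives $p$ the covering edges $pb$ and $pd$, which cover the white triangles at $p$, while $b$ and $d$ pick up their second covering edges from $bb'$ and $dd'$. In patterns involving $ga$ or $gc$, I additionally choose the three-way lift of each such edge (e.g.\ $ga$ lifts to exactly one of $ba, pa, da$) so that the white-triangle conditions at $p$ continue to hold and each of $a, b, c, p, d$ retains exactly two covering edges. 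A routine check verifies that the resulting edge set is a valid rooted cycle cover of $\Delta$: every interior vertex has degree two, every white triangle is covered exactly once (including the two new white triangles introduced by expanding $g$ back into $b, p, d$), and the root triangle stays uncovered.

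For a $\C'$ with an incompatible pattern at $g$ (for example, both covering edges on the $b$-side, in which case the direct lift leaves $p$ and $d$ uncovered without creating multi-cover conflicts at the neighboring white triangles), I first convert $\C'$ into another cover $\C''$ of $\Delta'$ in a compatible pattern at $g$, and then apply the lift above to $\C''$. The conversion is effected by a finite sequence of local swap operations, each of which exchanges the covering edge of one white triangle in $\Delta'$ for another edge of the same triangle; any resulting imbalance in vertex covers is absorbed along an alternating cycle through the neighborhood of $g$, yielding again a valid cycle cover. The swaps are essentially the same ``twist'' moves that connect regular edge labelings of the dual graph in the two-dimensional theory, and they allow navigation through the family of cycle covers of $\Delta'$ until a compatible pattern at $g$ is reached.

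The principal obstacle is the swap argument of the preceding paragraph, namely showing that an alternating cycle closing up to yield a new valid cover in a compatible configuration always exists. Here I expect to exploit the 4-connectedness of $\Delta$ together with the hypothesis that all four neighbors of $p$ have degree greater than four: this combination provides the combinatorial slack needed so that local modifications at $g$ propagate through short alternating cycles contained in the neighborhood of $g$, rather than getting stuck or forcing global changes to $\C'$ that might themselves violate the white-triangle constraints.
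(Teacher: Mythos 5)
Your proposal correctly identifies the two structural regimes (the two cover edges of $C_g$ at $g$ lift to a $b$-edge and a $d$-edge, or they lift to two edges on the same side), and in the first regime your lift---splicing the path $b\to p\to d$ into $C_g$ using edges $bp$ and $pd$---matches the paper's first case. The divergence, and the genuine gap, is in the second regime. You explicitly punt: you propose to convert $\C'$ into some other rooted cycle cover $\C''$ of $\Delta'$ having a compatible pattern at $g$ via a ``finite sequence of local swap operations,'' analogize these to the twist moves on two-dimensional regular edge labelings, and then acknowledge that establishing the existence of the closing alternating cycle is ``the principal obstacle'' that you merely \emph{expect} 4-connectedness to resolve. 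That plan, even if it could be carried out, would require proving two things the paper never needs: that the set of rooted cycle covers of $\Delta'$ is connected under such local moves, and that this set actually contains a cover with a compatible pattern at $g$. Neither is established, and the second one in particular is not obviously true---a compatible pattern at $g$ might not be achievable for a given $\Delta'$ and root.

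The paper sidesteps all of this. In the incompatible case it does \emph{not} reroute $C_g$ at all: $C_g$ simply lifts to a cycle through one of $b$ or $d$ (say $d$), and $p$ and $b$ are left uncovered. To pick them up, the paper looks at the white triangle $cge$ of $\Delta'$ on edge $cg$; since neither cover edge of $C_g$ at $g$ lies in this triangle, its unique cover edge is $ce$, carried by some cycle $C_{ce}\in\C'$. Replacing $ce$ in $C_{ce}$ by the three-edge path $c\to p\to b\to e$ covers $p$ and $b$, uses one edge from each of the two white triangles of $\Delta$ at $p$, and leaves all other white-triangle counts unchanged. This is a single, explicit, purely local modification of a \emph{different} cycle of $\C'$, not a global reconfiguration of the cover. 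You should replace the swap heuristic with a construction of this kind; as written, the incompatible case of your argument does not go through.
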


\begin{proof}
Let $C_g$ be the cycle of $\C'$ that passes through $g$. Suppose first that $C_p$ separates the blue triangle incident with edge $ag$ from the blue triangle incident with edge $gc$. In this case, the two edges of $C_g$ that are incident to $g$ correspond in $\Delta$ to one edge incident to $b$ and one edge incident to $d$, and we can form a cycle cover in $\Delta$ by adding the edges $bp$ and $pd$ to the cycle cover---see Fig.~\ref{fig:cc-rule4a}.
\begin{figure}[htb]
  \centering
 \includegraphics{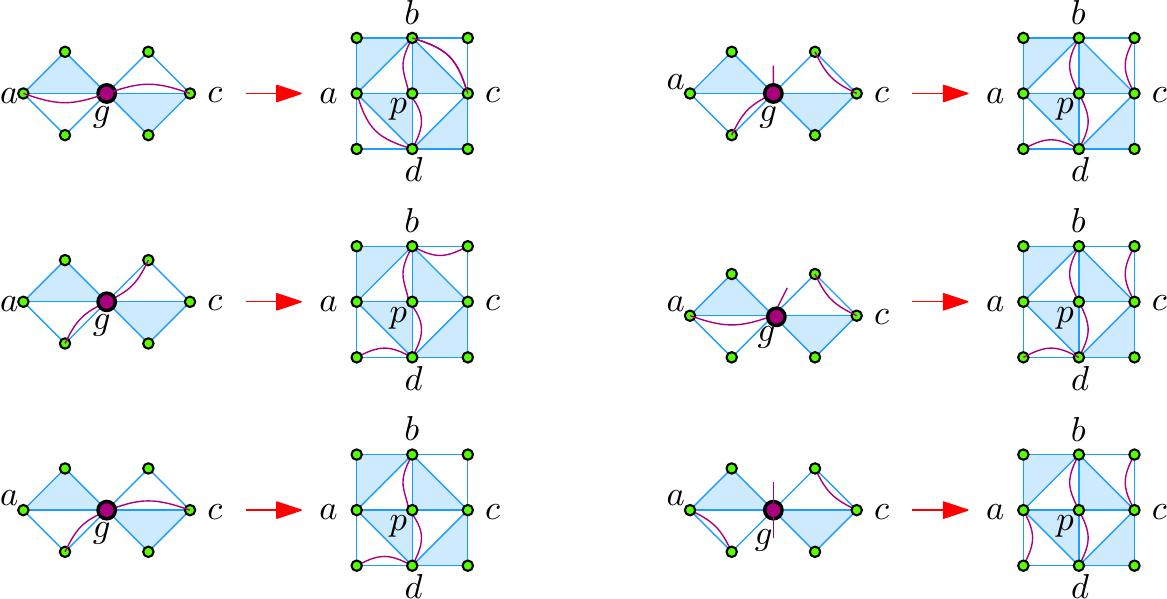}
  \caption{Cycle covers for Rule IV expansion when $C_g$ separates $a$ from $c$.}
  \label{fig:cc-rule4a}
\end{figure}

Alternatively, it may be the case that $C_g$ does not separate the blue triangles. In this case, it corresponds to a cycle in $\Delta$ that covers one of the two vertices $b$ or $d$ (without loss of generality $d$), leaving $p$ and $b$ uncovered. In this case, let $e$ be the third vertex of the white triangle $cge$ containing edge $cg$ in $\Delta'$; there must be a cycle $C_{ce}$ in $\C'$ that includes edge $ce$, because vertex $g$ is not included in the edge of $cge$ that belongs to $\C'$. In $\Delta$, we form a cycle cover $\C$ by removing edge $ce$ from $C_{ce}$ and replacing it by the three-edge path $cpbe$---see Fig.~\ref{fig:cc-rule4b}.
\end{proof}
\begin{figure}[htb]
  \centering
 \includegraphics{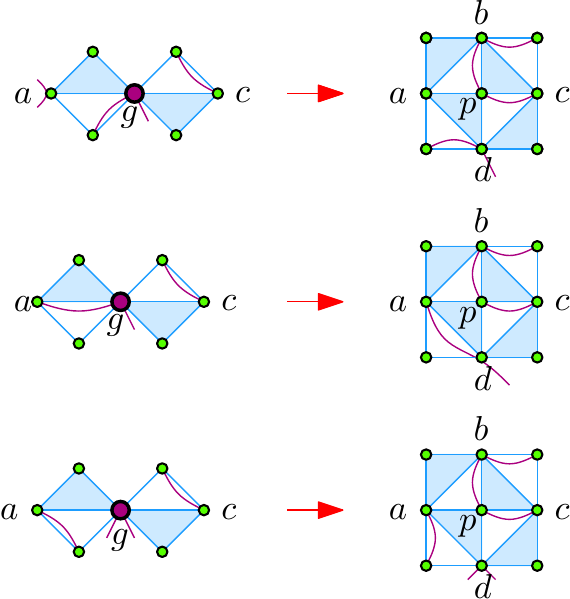}
  \caption{Cycle covers for Rule IV expansion when $C_g$ does not separate $a$ from $c$.}
  \label{fig:cc-rule4b}
\end{figure}

\begin{lemma}
\label{lem:cc4c}
For every 4-connected Eulerian triangulation $\Delta$, and every choice of a root triangle $uvw$, there is a cycle cover of $\Delta$ rooted at $uvw$.
\end{lemma}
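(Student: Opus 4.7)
The plan is to proceed by strong induction on $|V(\Delta)|$. The inductive machinery is already in place: Lemma~\ref{lem:decompose} guarantees that every 4-connected Eulerian triangulation admits a simplification by one of Rules I--IV, with the sole exceptions of the octahedron $\Delta_6$ and the 11-vertex graph $\Delta_{11}$; and Lemmas~\ref{lem:cc1}--\ref{lem:cc4} show how to assemble a rooted cycle cover for $\Delta$ from rooted cycle covers of the strictly smaller 4-connected Eulerian triangulations produced by each such simplification.

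First I would dispose of the base cases $\Delta_6$ and $\Delta_{11}$ by direct inspection. For the octahedron, face-transitivity allows me to assume without loss of generality that the root triangle is any specific face; the three remaining interior vertices form the opposite face of the octahedron, and in each of the three interior white triangles exactly one edge lies between two of these interior vertices. Those three edges together form a single 3-cycle that covers all interior vertices and meets each interior white triangle in exactly one edge---a valid rooted cycle cover. For $\Delta_{11}$, I would enumerate the orbits of triangles under its automorphism group and exhibit a cycle cover for each orbit representative by hand.

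For the inductive step, embed $\Delta$ so that the designated root triangle $\delta = uvw$ is the outer face, and apply Lemma~\ref{lem:decompose} to obtain a simplification of $\Delta$ by one of the four rules. If Rule I or Rule II applies, the graph splits into two strictly smaller 4-connected Eulerian triangulations $\Delta\OUT$ and $\Delta\IN$, each with a natural outer triangle: the original $\delta$ for $\Delta\OUT$, and the triangle containing the new hyper-vertex $h$ (respectively the edge $h_1 h_2$) for $\Delta\IN$. The inductive hypothesis, applied to each with its outer triangle as root, yields rooted cycle covers that Lemma~\ref{lem:cc1} (for Rule I) or Lemma~\ref{lem:cc2} (for Rule II) glues into a rooted cycle cover of $\Delta$ rooted at $\delta$. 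If Rule III or Rule IV applies, Lemma~\ref{lem:decompose} guarantees that the simplification does not collapse $\delta$, so $\delta$ survives as a triangle in the smaller 4-connected Eulerian triangulation $\Delta'$; the inductive hypothesis then supplies a cover of $\Delta'$ rooted at $\delta$, which Lemma~\ref{lem:cc3} (for Rule III) or Lemma~\ref{lem:cc4} (for Rule IV) promotes to a cover of $\Delta$ rooted at $\delta$.

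The main obstacle is the base case for $\Delta_{11}$: unlike the octahedron it is not face-transitive, so verifying that every choice of root admits a cycle cover requires an explicit enumeration of orbits and a case-by-case construction. The remainder of the argument is essentially bookkeeping, once one observes that the three-or-more-vertices-on-each-side precondition of Rules I and II, together with the vertex deletions and contractions of Rules III and IV, each strictly reduces the vertex count while preserving the property of being a 4-connected Eulerian triangulation (as already noted in Appendix~V).
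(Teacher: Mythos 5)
Your proof is correct and follows essentially the same strategy as the paper's: strong induction on $|V(\Delta)|$, base cases $\Delta_6$ and $\Delta_{11}$ handled by direct inspection (the paper exhibits explicit cycle covers in a figure and observes that the automorphism group of $\Delta_{11}$ reduces the choice of root to two inequivalent cases, exactly the orbit enumeration you propose), and inductive step via Lemma~\ref{lem:decompose} combined with Lemmas~\ref{lem:cc1}--\ref{lem:cc4}, using the fact that Rules~III and~IV are guaranteed not to collapse the root triangle.
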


\begin{proof}
We use induction on the number of vertices in $\Delta$. As a base case, $\Delta$ is one of the graphs depicted in Fig.~\ref{fig:rule3}(middle and right), the octahedron $\Delta_6$ and the 11-vertex graph $\Delta_{11}$. Both of these graphs allow rooted cycle covers---see Fig.~\ref{fig:simple-graph-cover} below. In the case of the octahedron in Fig.~\ref{fig:rule3}(middle) and Fig.~\ref{fig:simple-graph-cover} (left), all choices of root triangle are equivalent. In the case of the graph $\Delta_{11}$  in Fig.~\ref{fig:rule3}(right) and Fig.~\ref{fig:simple-graph-cover}(right), the symmetries of the graph take any choice of root triangle to one of two non-equivalent choices. The choice shown in Fig.~\ref{fig:simple-graph-cover} below corresponds to the corner polyhedron depicted in Fig.~\ref{fig:orthotypes} (left), while the other choice also admits a rooted cycle cover and corresponds to the corner polyhedron depicted together with its cycle cover in Fig.~\ref{fig:corners}(right).
\begin{figure}[htb]
  \centering
 \includegraphics{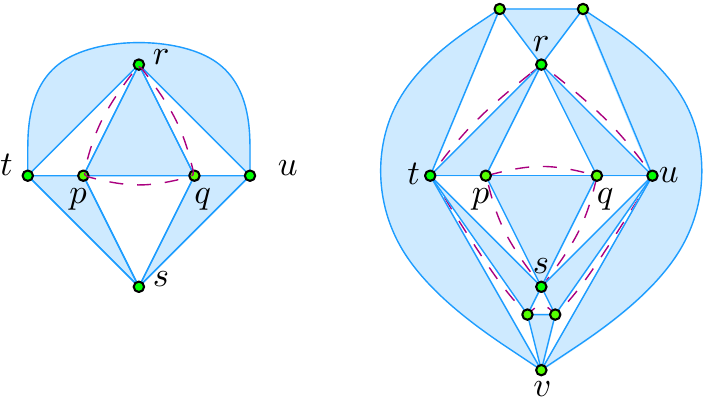}
  \caption{Base cases for Lemma~\ref{lem:cc4c}.}
  \label{fig:simple-graph-cover}
\end{figure}

If $\Delta$ is not one of these two base cases, then by Lemma~\ref{lem:decompose} it admits a Rule~I, Rule~II, Rule~III, or Rule~IV simplification to smaller 4-connected Eulerian triangulations. By induction, these smaller graphs have cycle covers for any choice of root triangle, and the existence of a cycle cover for $\Delta$ itself follows from Lemmas \ref{lem:cc1}, \ref{lem:cc2}, \ref{lem:cc3}, and~\ref{lem:cc4}.
\end{proof}

The proof of Theorem~\ref{thm:4-connected}, that graphs with 4-connected dual Eulerian triangulations may be represented as corner polyhedra, follows immediately from this lemma and from Theorem~\ref{thm:corner-cover} that graphs with cycle covers may be represented as corner polyhedra.

\section*{Appendix VII: Characterization of corner polyhedra}

We have seen in the previous sections that a graph $\G$ is the graph of a corner polyhedron, with hidden vertex $h$, if and only if the dual Eulerian triangulation $\Delta$ has a cycle cover rooted at the triangle dual to $h$. We have also seen that, for a 4-connected Eulerian triangulation, there is a cycle cover rooted at any triangle, so the duals of these triangulations always have corner polyhedron representations. In this section we extend these results to the non-4-connected case, by completing the proof of Theorem~\ref{thm:corner-characterization} characterizing the graphs of corner polyhedra.

Thus, let $\G$ be a 3-connected bipartite cubic planar graph, and $v$ a vertex of $\G$ such that we wish to determine whether $\G$ can be represented as a corner polyhedron with hidden vertex $v$. Let $\Delta$ be the Eulerian triangulation dual to $\G$, and as in previous sections adopt the convention that $\Delta$ is drawn so that the triangle dual to $v$ is the outer face, and so that the faces are 2-colored blue and white with the outer face white. If $uvw$ is a separating triangle in $\Delta$, then it follows from Lemma~\ref{lem:sep-tri-eulerian} that the three face triangles of $\Delta$ that are inside $uvw$ and incident to edges $uw$, $vw$, and $uw$ all have the same color; as in Section~\ref{sec:corner}, we define $uvw$ to have even parity if these three face triangles are white, and odd parity if these three face triangles are blue.

\medskip\noindent\emph{Proof of Theorem~\ref{thm:corner-characterization}.} The theorem states that $\G$ has a corner polyhedron representation, with $v$ as hidden vertex, if and only if all separating triangles of $\Delta$ have odd parity. In one direction, suppose that this is the case. Consider the collection of Eulerian triangulations formed by splitting $\Delta$ on each of its separating triangles, giving the inner split component the same coloring it had in $\Delta$, and in the outer split component replacing the portion of $\Delta$ within the separating triangle by a single blue face triangle. Then each of these graphs is an Eulerian triangulation by Lemma~\ref{lem:sep-tri-eulerian} and is 4-connected; therefore, it has a cycle cover rooted at its outer triangle. The union of the cycles in these covers forms a cycle cover in $\Delta$. Thus, $\Delta$ has a rooted cycle cover, from which it follows that $\G$ has a corner polyhedron representation.

In the other direction, suppose that some separating triangle $uvw$ has even parity, and let $\Delta'$ be the Eulerian triangulation formed by the triangles inside $uvw$. The dual to $\Delta'$ is a regular bipartite graph, so it has equal numbers of vertices on each side of its bipartition; translating this fact back to $\Delta'$ itself, and applying Euler's formula, if $\Delta'$ has $k$ vertices then it has $k-2$ blue triangles (including its outside triangle) and $k-2$ white triangles. We now assume for a contradiction that $\Delta$ has a cycle cover $\C$, and count in two different ways the number of pairs $(x,y)$ where $x$ is a vertex in $\Delta'$ and $y$ is an edge in $\Delta'\cap\C$, getting two incompatible bounds on the numbers of pairs.
First, we count the pairs $(x,y)$ by vertices. There are $k-3$ vertices interior to $\Delta'$, each of which belongs to two edges and forms two pairs. Additionally, the three vertices $u$, $v$, and $w$ have at least one pair each, because no matter how we choose an edge in $\C$ from the three outer white triangles in $\Delta'$ the chosen edge will include one of these vertices. Thus, counting by vertices, the number of pairs is at least $2(k-3)+3=2k-3$. However, counting by edges, there are $k-2$ white triangles in $\Delta'$, each of which supplies an edge of $\Delta'\cap\C$, so there are $k-2$ edges of $\Delta'\cap\C$ forming only $2(k-2)=2k-4$ pairs. This contradiction between a number being $2k-4$ when counted one way and at least $2k-3$ when counted a different way proves that a cycle cover $\C$ cannot exist and therefore that $\G$ has no corner polyhedron representation.\qed

\section*{Appendix VIII: Characterization of $xyz$ polyhedra}

We begin the proofs of the results claimed in Section~\ref{sec;xyz} by showing that the two classes of polyhedra described in that section are combinatorially equivalent.

\begin{lemma}
\label{lem:xyz=single}
Let $P$ be a singly-intersecting simple orthogonal polyhedron. Then there is an $xyz$ polyhedron with the same graph as $P$, in which all vertex coordinates are integers in the range $[1,n/4]$.
\end{lemma}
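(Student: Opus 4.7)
The plan is to produce $P'$ from $P$ in two stages: first perturb the face-plane coordinates of $P$ so that distinct axis-perpendicular faces receive distinct coordinates on their perpendicular axis, and then replace the perturbed coordinates by their integer ranks.

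For the perturbation stage, I would first observe that in a simple orthogonal polyhedron any two coplanar $x$-faces must be disjoint: if two $x$-faces shared a vertex, then two of the three faces incident to that vertex would be perpendicular to the $x$-axis, contradicting the fact that the three incident faces at any vertex are perpendicular to three distinct axes. Consequently, translating one of two coplanar $x$-faces by a sufficiently small amount in the $x$ direction neither creates nor destroys any intersection between pairs of faces, preserves the singly-intersecting property, and preserves every combinatorial face/edge/vertex incidence. Iterating such perturbations and their $y$- and $z$-analogues yields a singly-intersecting polyhedron with the same graph as $P$ in which any two distinct $x$-faces (respectively $y$-faces, $z$-faces) have distinct $x$-coordinates (respectively $y$-, $z$-coordinates).

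For the ranking stage, I would sort the $k_x$ $x$-faces in increasing order of $x$-coordinate and reassign the $x$-coordinate $i$ to the $i$-th face in this order, then repeat this for the $y$- and $z$-faces. Because each reparameterization is strictly increasing on an already-distinct set of coordinates, the combinatorial structure of the polyhedron is preserved, producing a simple orthogonal polyhedron $P'$ with the same graph and integer vertex coordinates. Each axis-perpendicular face of $P$ is a simple orthogonal polygon and therefore has at least four vertices; since each vertex of $P$ lies on exactly one $x$-face, we get $4k_x \le n$ and hence $k_x \le n/4$, with the same bound for $y$ and $z$. Thus all vertex coordinates of $P'$ are integers in $[1,n/4]$. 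Finally, to check that $P'$ is an $xyz$ polyhedron, suppose for contradiction that three vertices of $P'$ lie on a common $x$-parallel line. They share $y$- and $z$-coordinates, and since distinct $y$-faces (respectively $z$-faces) of $P'$ have distinct $y$- (respectively $z$-) coordinates, these vertices lie on a common $y$-face $F_y$ and a common $z$-face $F_z$. The vertex set of $F_y \cap F_z$ is a combinatorial invariant, preserved from $P$, and in $P$ it consists of the vertices of a single line segment by the singly-intersecting hypothesis, so it has at most two vertices --- a contradiction.

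The main obstacle is the perturbation stage: one must verify that separating two coplanar $x$-faces by a tiny amount in the $x$ direction indeed leaves every $y$-face and every $z$-face a simple orthogonal polygon and creates no new intersections between pairs of faces. The combination of the singly-intersecting hypothesis with the local rigidity at each vertex (three mutually perpendicular edges and three mutually perpendicular faces of distinct axis types) is what rules out the pathological coincidences between coplanar faces that could otherwise prevent such a perturbation from preserving the combinatorics.
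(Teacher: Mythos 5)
Your two-stage plan (continuous perturbation, then integer ranking) lands on the same reparameterization as the paper, but the perturbation stage is precisely where the gap is, and you acknowledge as much in your final paragraph without closing it. The paper avoids that stage entirely: it assigns integer ranks to the face planes in one step, breaking ties between coplanar faces arbitrarily, and moves each vertex directly to the triple of ranks of its three incident faces. Correctness then rests on the single observation that this map is order-preserving on the real face coordinates, so that if two ``features'' (maximal rectangles between parallel edges of a face) intersect in the image polyhedron $P'$, the corresponding features already intersected in $P$. Since $P$ is singly-intersecting and therefore has no intersections of non-adjacent features, neither does $P'$, and $P'$ is a simple orthogonal polyhedron with distinct face coordinates, hence an $xyz$ polyhedron. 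That inclusion-of-intersections argument is the whole proof; no intermediate geometric objects need to be shown to exist.

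Your perturbation stage, by contrast, requires exhibiting an actual sequence of singly-intersecting simple orthogonal polyhedra interpolating between $P$ and a polyhedron with distinct face coordinates, and the claim that translating one coplanar $x$-face by a small $\epsilon$ ``neither creates nor destroys any intersection between pairs of faces, preserves the singly-intersecting property, and preserves every combinatorial \ldots\ incidence'' is unproved. Concretely, a $y$-face $F$ may be adjacent to the moved $x$-face $A$ along one boundary segment and to a second coplanar $x$-face $A'$ along another; after the move $F$ acquires boundary segments at both $x=c$ and $x=c+\epsilon$, and you must verify that $F$ remains a simple orthogonal polygon, that no face--face intersection is created, and that every face--face intersection that was a single segment remains a single segment, uniformly over all faces. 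Asserting that the singly-intersecting hypothesis plus vertex rigidity ``rules out the pathological coincidences'' names the obstruction but does not clear it. The lemma you do prove --- that coplanar same-axis faces share no vertex, because a shared vertex would have two incident faces perpendicular to the same axis --- is correct and not in the paper, but it falls well short of establishing the perturbation, and the paper's one-shot order-preserving map makes the whole stage unnecessary. The remainder of your argument (the $k_x \le n/4$ count and the final $xyz$ check) tracks the paper closely.
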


\begin{proof}
For each coordinate plane, number the faces of the polyhedron with the integers $1$, $2$, $\ldots$, so that faces with smaller coordinate values have smaller numbers; if two faces are coplanar, choose their numbers arbitrarily. The faces with a given orientation partition the $n$ vertices, and each has at least $4$ vertices, so there are at most $n/4$ numbers used. Now (as in~\cite{Epp-GD-08}) move each vertex of $P$ to the point given by the numbers of its incident faces; that is, if the face parallel to the $yz$ plane has number $k$, set the $x$-coordinate of the vertex to $k$. Call the resulting polyhedron $P'$.

Define a \emph{feature} of $P'$ to be the rectangular region of a face lying between two parallel edges of the face. Every point of the boundary of $P'$ belongs to a feature, and every feature of $P'$ corresponds to a (possibly degenerate) feature in $P$ because of the order-preserving nature of the transformation from $P$ to $P'$. If two features intersect in $P'$, they must intersect in $P$ as well, again because of the order-preserving nature of the transformation. But $P$ has no intersections of non-adjacent features, so neither does $P'$. Therefore, $P'$ remains a simple orthogonal polyhedron. Any line containing an edge of $P'$ is contained in the face planes only of the two faces that meet in that edge, and therefore contains no other edge of $P'$. Since each vertex is incident to an edge within each axis-parallel line containing it, it also follows that the line containing an edge of $P'$ cannot contain a vertex that is not an endpoint of that edge. Therefore, $P'$ forms an $xyz$ graph and is an $xyz$ polyhedron.
\end{proof}

\begin{corollary}
\label{cor:3cb}
Every singly-intersecting simple orthogonal polyhedron or $xyz$ polyhedron represents a 3-connected bipartite cubic planar graph.
\end{corollary}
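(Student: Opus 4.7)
First I would use Lemma~\ref{lem:xyz=single} to collapse the two cases of the corollary into one: since every singly-intersecting simple orthogonal polyhedron shares its graph with some $xyz$ polyhedron, it suffices to prove that the skeleton $\G$ of an arbitrary $xyz$ polyhedron $P$ is 3-connected, bipartite, cubic, and planar. With that reduction in hand, my strategy is to verify the hypotheses of the second part of Lemma~\ref{lem:dual-Eulerian}, which will simultaneously deliver 3-regularity, bipartiteness, and 3-connectedness of $\G$.

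The first three properties are essentially bookkeeping. That $\G$ is 3-regular is built into the definition of a simple orthogonal polyhedron, and planarity is immediate from the spherical topology of $P$. For bipartiteness I would argue that each face of $P$ is a planar orthogonal polygon, so its boundary edges alternate between the two perpendicular axis directions of that face's plane; returning to the starting direction after traversing the cycle forces an even number of edges per face, and Lemma~\ref{lem:even-faces} then gives bipartiteness. At this point the dual graph $\Delta$ is already Eulerian and all of its faces are triangles (triangularity following from the 3-regularity of $\G$).

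The main obstacle, and the only place where the $xyz$ hypothesis is really used, is checking that $\Delta$ is a simple graph, so that $\Delta$ qualifies as an Eulerian triangulation. A self-loop in $\Delta$ would correspond to a primal edge with the same face on both sides, which is ruled out because $P$ has the topology of a sphere with simply-connected faces (every edge lies on two distinct face cycles). For multiple edges I plan the following geometric argument: suppose two distinct edges $e$ and $e'$ of $P$ were shared by the same pair of faces $f_1$ and $f_2$. Then both $e$ and $e'$ lie on the intersection line $\ell = f_1 \cap f_2$, which is axis-parallel since the two face planes are perpendicular coordinate planes. Two distinct collinear segments have at least three distinct endpoints between them, so $\ell$ would contain at least three vertices of $P$, violating the $xyz$ property that every axis-parallel line contains at most two vertices. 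Once $\Delta$ is simple, it is an Eulerian triangulation, and the second part of Lemma~\ref{lem:dual-Eulerian} immediately yields 3-connectedness of $\G$; all the remaining content is routine, and the geometric exclusion of parallel dual edges is the only nontrivial step.
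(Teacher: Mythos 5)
Your proof is correct, and it takes a genuinely different route from the paper's for the key step. For 3-connectedness, the paper simply invokes the result from the earlier paper on $xyz$ graphs that such graphs are 3-connected (a more general statement covering nonplanar $xyz$ graphs as well), while you derive 3-connectedness from scratch by first arguing that the planar dual $\Delta$ is a simple graph, then invoking the second half of Lemma~\ref{lem:dual-Eulerian} (Eulerian triangulation dual implies 3-connected primal). Your reduction to simplicity of $\Delta$ is the genuinely new content: the self-loop exclusion follows because each edge of a polyhedron with simple polygonal faces borders two distinct faces, and the parallel-edge exclusion follows because two distinct primal edges sharing the same pair of faces would both lie on the axis-parallel intersection line of those two face planes, forcing three or more vertices on a single axis-parallel line, which is exactly what the $xyz$ condition forbids. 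Both approaches are sound; yours has the virtue of being self-contained within the appendix's framework rather than relying on an external citation, and it isolates precisely where the $xyz$ hypothesis does its work (ruling out parallel dual edges). One thing worth noting explicitly if you were to write this up: you are using only the backward direction of Lemma~\ref{lem:dual-Eulerian}, and you verify the hypotheses of that direction (simplicity and triangular faces of $\Delta$) independently of 3-connectedness of $\G$, so there is no circularity with the forward direction of that lemma, which derives simplicity of $\Delta$ \emph{from} 3-connectedness of $\G$.
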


\begin{proof}
3-connectivity follows from the more general fact that $xyz$ graphs are 3-connected from our previous paper~\cite{Epp-GD-08}. Every face is even, since it alternates between edges in two coordinate directions. Thus, as a planar graph with even faces, the graph of an $xyz$ polyhedron must be bipartite.
\end{proof}

\medskip\noindent
\emph{Proof of Theorem~\ref{thm:xyz}}.
Recall that Theorem~\ref{thm:xyz} states that 4-connected bipartite cubic planar graphs, graphs of $xyz$ polyhedra, and graphs of singly-intersecting simple orthogonal polyhedra are the same thing. The equivalence between the two types of polyhedron is Lemma~\ref{lem:xyz=single}, and the fact that every graph of one of these polyhedra is 3-connected and bipartite is Corollary~\ref{cor:3cb}. It remains to prove that every 3-connected bipartite cubic planar graph $\G$ can be represented as a singly-intersecting simple orthogonal polyhedron.

To do so, we use induction on the number of separating triangles in the dual graph $\Delta$ of $\G$, an Eulerian triangulation. As a base case, if $\Delta$ has no separating triangles, it is 4-connected, and we may represent $\G$ as a corner polyhedron, which is certainly a singly-intersecting simple orthogonal polyhedron. Otherwise, let $t$ be a triangle which separates $\Delta$ into two subgraphs $\Delta_1$ and $\Delta_2$, where $\Delta_2$ has as few vertices as possible among all such splits. Then $\Delta_1$ and $\Delta_2$ are themselves Eulerian (Lemma~\ref{lem:sep-tri-eulerian}). $\Delta_2$ can have no separating triangle itself, because that triangle would separate a graph with even fewer vertices. Let $\G_1$ and $\G_2$ be the duals of $\Delta_1$ and $\Delta_2$ respectively.

\begin{figure}[t]
\centering\includegraphics[width=5.5in]{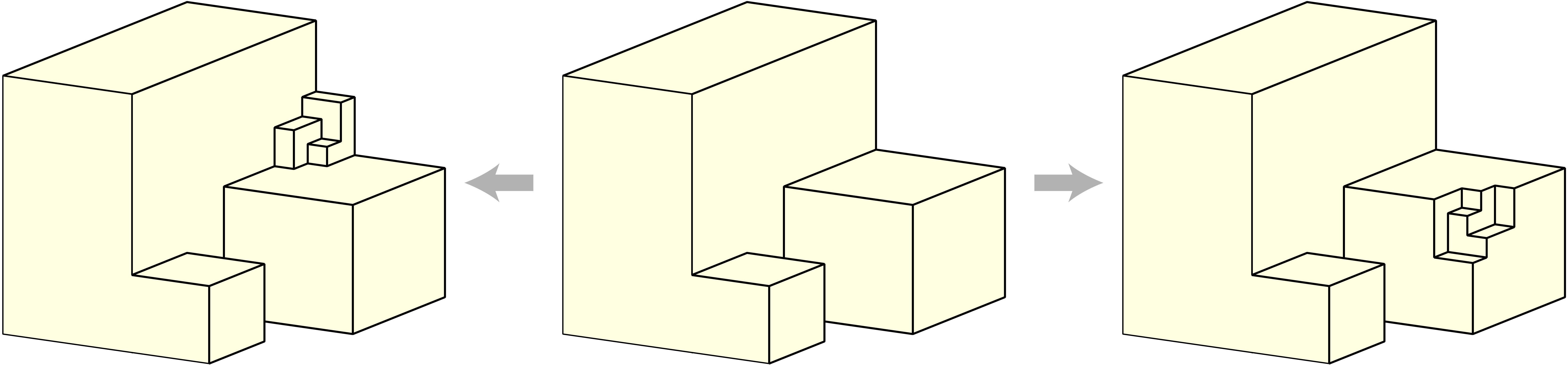}
\caption{Gluing a corner polyhedron onto a convex or concave vertex of an orthogonal polyhedron (right) and onto a saddle vertex (left).}
\label{fig:cornerglue}
\end{figure}

By induction, $\G_1$ can be represented as a singly-intersecting simple orthogonal polyhedron, and $\G_2$ can be represented as a corner polyhedron with the hidden vertex being the one dual to $t$.
A polyhedron representing $\G$ may be formed by replacing the vertex representing $t$ in $\G_1$ by a copy of the corner polyhedron representing $\G_2$ (minus its hidden vertex). The replacement can be performed geometrically by placing the corner polyhedron so that its copy of $t$ coincides with that for $\G_2$, and so that the edges adjacent to $t$ in $\G_1$ and $\G_2$ lie on the same rays, and then forming the $xyz$ graph from the union of the vertex sets of $\G_1$ and $\G_2$, minus the two copies of the shared vertex $t$.
There are two different cases for how $t$ may be represented in $\G_1$: the faces near it may separate one orthant of space from the other seven orthants, forming a convex vertex or its complement, or it may separate three orthants from the other five orthants, forming a saddle (Fig.~\ref{fig:cornerglue}). But in each case the replacement causes using only local changes to the polyhedron near the replaced vertex, so if the copy of the corner polyhedron that replaces $t$ is sufficiently small, it will not cross any other feature of the polyhedron, nor have any faces coplanar with existing faces of the polyhedron, producing a new $xyz$ polyhedron representing $\G$.\qed

\section*{Appendix IX: Characterization of simple orthogonal polyhedra}

Recall that Theorem~\ref{thm:atom}, to be proved in this section, characterizes simple orthogonal polyhedra in terms of the triconnected components of their graphs. The definition of triconnected components~\cite{Mac-DMJ-37,HopTar-SJC-73} involves recursively partitioning the graph using pairs of vertices the removal of which would disconnect the graph, but it is convenient for our purposes to use a modification of triconnected components that is specialized to 3-regular graphs and instead partitions a graph into subgraphs based on pairs of edges. Therefore, we define a \emph{split pair} of a graph $\G$ to be a pair of edges the removal of which disconnects the graph (necessarily into exactly two components). A \emph{split component} is formed from one of these two components by adding a \emph{virtual edge} between the two degree-two vertices of the component, or from repeating the same splitting process within larger split components. The \emph{atoms} of a 3-regular graph are its 3-connected split components, formed by repeatedly subdividing the graph into split components until every remaining graph is 3-connected. The split components and atoms of a graph may be multigraphs rather than simple graphs; for instance, the graph shown in Fig.~\ref{fig:nonpolyhedral} has four atoms, three of which are cubes and the fourth of which is a multigraph with two vertices and three edges. As we will later show, in a 3-regular graph, the atoms are the non-cyclic triconnected components.

We would like to claim that every split component formed in this way is itself a simple orthogonal polyhedron, with its faces in the same planes as the original polyhedron. Unfortunately, this is not true: if $\G$ is the graph of a simple orthogonal polyhedron with split components $\G_1$ and $\G_2$, then $\G_1$ may form a cavity into which part of $\G_2$ protrudes, so that $\G$ is simple but $\G_2$ (without the cavity) is not. To avoid this problem we define a larger class of combinatorial objects, which we call \emph{orthogonal polyhedroids}. An orthogonal  polyhedroid is a complex of vertices, edges, and faces, with the following properties:
\begin{itemize}
\item Every vertex is represented by a distinct three-dimensional point.
\item Every edge is represented by an axis-parallel line segment connecting its two vertices.
\item Three perpendicular edges meet at every vertex.
\item Every face is an abstract polygon the edges of which alternate between being parallel to two coordinate axes. Distinct edges may cross, lie along the same line, or even have a nontrivial line segment as their intersection.
\item The complex of vertices, edges, and faces, viewed as an abstract complex without regard to its three-dimensional embedding, has the overall topology of a sphere.
\end{itemize}

Every simple orthogonal polyhedron therefore forms an orthogonal polyhedroid, as does every planar $xyz$ graph.

\begin{lemma}
\label{lem:sop2graph1}
Every orthogonal polyhedroid forms a 3-regular bipartite 2-connected planar graph.
\end{lemma}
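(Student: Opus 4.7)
The plan is to verify four properties of the 1-skeleton graph $G$ of an orthogonal polyhedroid: 3-regularity, planarity, bipartiteness, and 2-connectivity. Three-regularity is immediate from the definition, which stipulates that three perpendicular edges meet at every vertex, and connectivity follows at once from the fact that the underlying abstract complex is homeomorphic to $S^2$, which is connected. Planarity follows from the same topological fact: the abstract complex provides a cellular embedding of $G$ on the sphere, and a graph admits a spherical embedding if and only if it is planar (apply stereographic projection from a point in the interior of any face to obtain a plane embedding).

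For bipartiteness, I would invoke the alternation property of faces. By definition, each face is an abstract polygon whose edges alternate between being parallel to two of the three coordinate axes; an alternating cyclic sequence between two symbols must have even length, so each face of the spherical embedding has an even number of edges. Since $G$ is a planar graph all of whose faces are even, Lemma~\ref{lem:even-faces} yields that $G$ is bipartite.

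The main obstacle is 2-connectivity, which requires a topological argument. Suppose for contradiction that some vertex $v$ is a cut vertex of $G$. Since exactly three mutually perpendicular edges meet at $v$, the combinatorial structure forces exactly three faces to meet at $v$ (one between each pair of consecutive edges in the cyclic order around $v$), so the open star of $v$ in the complex is an open topological disk on $S^2$. Its closed complement $D$ is therefore a closed topological disk whose boundary is the link of $v$, a cycle in the 1-skeleton passing through the three neighbors of $v$ via the boundary arcs of the three incident faces that avoid $v$. Since every $0$-cell of $D$ lies in its 1-skeleton and $D$ is a connected cell complex whose 1-skeleton contains the boundary cycle, the 1-skeleton of $D$ is connected. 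But this 1-skeleton coincides exactly with $G - v$, contradicting the assumption that $v$ is a cut vertex.

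The one point requiring care in this topological argument is the claim that the 1-skeleton of the closed complementary disk is connected; I would justify this by noting that in any finite 2-dimensional CW complex homeomorphic to a closed disk, each 2-cell is attached along a cycle in the 1-skeleton, the boundary is a single cycle, and any interior vertex is connected to the boundary through a chain of incident 2-cells and their boundary edges. Once this is established, the proof is complete.
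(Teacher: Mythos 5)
Your proof is correct. The arguments for 3-regularity, planarity, and bipartiteness coincide with the paper's (both ultimately invoke Lemma~\ref{lem:even-faces} for bipartiteness). For 2-connectivity, however, you take a genuinely different route. The paper fixes two vertices $u,w\neq v$, uses the path-connectedness of the punctured sphere to obtain a surface path from $u$ to $w$ avoiding $v$, and pushes each excursion of this path through a face interior onto that face's boundary, choosing the boundary arc not through $v$ when the face is incident to $v$. You instead delete the open star of $v$ in one step: its closed complement $D$ is a closed disk, a subcomplex whose 1-skeleton is exactly $\G - v$, and since attaching 2-cells along connected cycles leaves $\pi_0$ unchanged, the connectedness of $D$ forces the connectedness of $\G-v$. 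Both are valid. The paper's argument is bare-hands and tells the reader explicitly which boundary arc to walk; yours is a cleaner one-shot argument once one accepts that in a CW surface the open star of a vertex is an open disk and that the 1-skeleton inclusion of a CW complex is a $\pi_0$-bijection. Your instinct to flag the latter as the step requiring care is right; the tightest justification is cellular approximation (any path between 1-skeleton points is homotopic rel endpoints into the 1-skeleton), which makes rigorous the face-chain sketch you give and amounts to the same rerouting idea the paper uses, just applied uniformly.
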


\begin{proof}
Planarity follows from the assumption that the polyhedroid has the topology of a sphere, and 3-regularity follows from the assumption that three edges meet at every vertex. Because of the alternation between directions of the edges around each face, each face has even length and by Lemma~\ref{lem:even-faces} the graph $\G$ must be bipartite. It remains to show that $\G$ is 2-connected. To do so, consider removing any vertex $v$ of $\G$; let $u$ and $w$ be any two other vertices. Then, because the polyhedroid has the topology of the sphere, it can have only one boundary component, so there is a path on its surface (viewed as an abstract topological space) from $u$ to $w$, avoiding the single point $v$. Any part of this path that passes through the interior of a face $f$ can be replaced by a path that instead follows the edges and vertices around the boundary of $f$ in one of two ways; if $f$ contains $v$ we choose the one of these two ways that does not pass through $v$. By making this replacement to all parts of the path, we find a path through $\G\setminus\{v\}$ from $u$ to $w$; since such a path can be found for any two vertices after the removal of any third vertex, $\G$ is 2-connected.
\end{proof}

\begin{lemma}
\label{lem:split-bipartiteness}
Let $\G$ be a 2-connected 3-regular graph, and $\G_1$ and $\G_2$ be the two split components formed from some split pair in $\G$. Then $\G$ is bipartite if and only if $\G_1$ and $\G_2$ are both bipartite.
\end{lemma}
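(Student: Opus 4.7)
The plan is to work directly with proper $2$-colorings, handling each direction of the biconditional separately. Write the split pair as $\{e_1,e_2\} = \{u_1v_1,\, u_2v_2\}$, with $u_1, u_2$ in the component $C_1$ (so $\G_1$ consists of $C_1$ together with the virtual edge $u_1u_2$) and $v_1,v_2$ in $C_2$ (so $\G_2$ gets the virtual edge $v_1v_2$). Note that $3$-regularity ensures the two cut edges are vertex-disjoint, so this setup is unambiguous.

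For the ``if'' direction, suppose $\G_1$ and $\G_2$ are bipartite with proper 2-colorings $c_1$ and $c_2$. The virtual edges force $c_1(u_1) \neq c_1(u_2)$ and $c_2(v_1) \neq c_2(v_2)$. After possibly swapping the two colors used by $c_2$, I may assume $c_1(u_1) \neq c_2(v_1)$, so $e_1$ is properly colored by $c_1 \cup c_2$. Then
\[
c_1(u_2) = 1 - c_1(u_1) = c_2(v_1) = 1 - c_2(v_2),
\]
so $e_2$ is properly colored as well, and $c_1 \cup c_2$ is a proper 2-coloring of $\G$.

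For the ``only if'' direction, it suffices to show that any bipartition $(A,B)$ of $\G$ puts $u_1$ and $u_2$ in different parts (and symmetrically for $v_1,v_2$), since then the restriction of $(A,B)$ properly colors the virtual edge and hence all of $\G_1$. This is where $3$-regularity is essential: set $a_1 = |A \cap C_1|$, $b_1 = |B \cap C_1|$, and $c_A = |\{i : u_i \in A\}| \in \{0,1,2\}$. Each $u_i$ has $C_1$-degree $2$ (having lost one edge to the cut), while every other vertex of $C_1$ retains $C_1$-degree $3$. Since every edge of $C_1$ has exactly one endpoint in $A$, counting the edges of $C_1$ from each side of the bipartition yields
\[
3a_1 - c_A \;=\; 3b_1 - (2 - c_A),
\]
that is, $3(a_1-b_1) = 2c_A - 2$. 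Reducing modulo $3$ forces $c_A = 1$, so exactly one of $u_1, u_2$ lies in $A$, as required; the identical argument applied to $C_2$ handles $\G_2$.

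The main obstacle is the ``only if'' direction. A naive cycle-based attack fails, because a cycle in $\G_1$ through the virtual edge corresponds to a path in $C_1$ from $u_1$ to $u_2$ whose parity is not directly pinned down by the bipartiteness of $\G$ alone. The divisibility observation $3 \mid 2c_A - 2$ is exactly what rules out the bad case $c_A \in \{0,2\}$ where $u_1,u_2$ would share a color class and the virtual edge would create an odd cycle. Note that $3$-regularity is indispensable here: splitting a $4$-cycle on a pair of non-adjacent edges produces two triangles, showing that the corresponding statement fails for $2$-regular graphs.
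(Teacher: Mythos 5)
Your proof is correct and takes essentially the same route as the paper: the ``if'' direction glues two compatible 2-colorings via the virtual edges, and the ``only if'' direction uses the same edge-counting argument modulo $3$ (the paper phrases it as a contradiction when $u_1,u_2$ share a color, yours derives $c_A=1$ directly, but the calculation is identical). One small slip in your closing remark: splitting a $4$-cycle along two nonadjacent edges produces digons, not triangles; you want a $6$-cycle to get the triangle counterexample illustrating that $3$-regularity is needed.
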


\begin{proof}
Let the two edges of the split pair be $uv$ and $xy$, with $u$ and $x$ in $\G_1$ and $v$ and $y$ in $\G_2$. First, suppose that $\G_1$ and $\G_2$ are both bipartite; we may color them separately so that $u$ and $v$ have opposite colors. Since $x$ is adjacent to $u$ in $\G_1$ and $y$ is adjacent to $v$ in $\G_2$, it follows that $x$ and $y$ also have opposite colors and we have found a consistent 2-coloring of $\G$, showing that it is bipartite.

In the other direction, suppose that $\G$ is bipartite, and fix a 2-coloring of it. It then follows that $u$ and $x$ must have opposite colors; for if they had the same color, then the graph $\G'$ formed from $\G_1$ by removing edge $ux$ would have two degree-two vertices on one side of the bipartition and none on the other side of the bipartition. But this is an impossibility, because the total number of edges in any bipartite graph is equal to the number of vertex-edge adjacencies on a single side of the bipartition, and this number would be congruent to 1 mod 3 on the side with $u$ and $x$ and to 0 mod 3 on the other side. Thus, $\G_1$ is consistently 2-colored and is therefore bipartite. The same argument applies as well to $\G_2$.
\end{proof}

\begin{corollary}
\label{cor:split-bipartiteness}
A 2-connected 3-regular graph $\G$ is bipartite if and only if all of its atoms are bipartite.
\end{corollary}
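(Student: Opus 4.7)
The plan is to derive the corollary from Lemma~\ref{lem:split-bipartiteness} by induction on the number of split operations needed to reduce $\G$ to its atoms. Concretely, I would induct on the size (say, number of edges) of $\G$, with the inductive hypothesis asserting the corollary for every 2-connected 3-regular graph with fewer edges.

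For the base case, if $\G$ is already 3-connected then it is its own unique atom and the statement is trivially true. For the inductive step, suppose $\G$ is not 3-connected. Then by definition a split pair exists, producing two split components $\G_1$ and $\G_2$, each obtained by deleting the two edges of the split pair from one side and inserting a virtual edge between the resulting two degree-two vertices. Each $\G_i$ is again 3-regular and 2-connected (2-connectivity follows from the fact that the split pair was a genuine edge cut of $\G$, so within each side removing any single vertex still leaves a connected subgraph, and the virtual edge only reinforces connectivity at the two degree-two vertices), and each $\G_i$ has strictly fewer edges than $\G$. Hence the inductive hypothesis applies to $\G_1$ and $\G_2$: each is bipartite iff all its atoms are bipartite.

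Now apply Lemma~\ref{lem:split-bipartiteness}: $\G$ is bipartite iff both $\G_1$ and $\G_2$ are bipartite. Combining with the inductive hypothesis, $\G$ is bipartite iff all atoms of $\G_1$ and all atoms of $\G_2$ are bipartite. Since the atoms of $\G$ are by definition precisely the union of the atoms of $\G_1$ and the atoms of $\G_2$, this is exactly the statement that $\G$ is bipartite iff all atoms of $\G$ are bipartite.

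The only mildly delicate point, and the one I would double-check, is that the atoms of $\G$ really do coincide with the union of the atoms of $\G_1$ and $\G_2$---that is, the decomposition into atoms is independent of the order in which split pairs are chosen. This is a standard property of the SPQR-tree-style decomposition referenced earlier in the paper, so I would simply cite it rather than reprove it. With that invariance in hand, the induction closes without further difficulty.
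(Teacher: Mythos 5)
The proposal is correct and takes the same route the paper implicitly intends: the paper states this corollary without proof, clearly expecting it to follow by iterating Lemma~\ref{lem:split-bipartiteness}, which is exactly the induction you carry out. You also correctly flag and resolve the one nontrivial point, namely that the atom set is independent of the order in which split pairs are chosen, which is justified by the SPQR-tree characterization of atoms (Lemma~\ref{lem:spqr-atoms}).
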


\begin{lemma}
\label{lem:split-polyhedroidality}
Let $\G$ be a 2-connected 3-regular graph, and $\G_1$ and $\G_2$ be the two split components formed from some split pair in $\G$. If $\G$ is the graph of an orthogonal polyhedroid, then $\G_1$ and $\G_2$ are also graphs of orthogonal polyhedroids.
\end{lemma}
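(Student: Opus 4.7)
The plan is to produce polyhedroid embeddings of $\G_1$ and $\G_2$ by reusing the vertex positions from the given polyhedroid embedding of $\G$ and adding the two required virtual edges (an edge $ux$ in $\G_1$ and $vy$ in $\G_2$, where the split pair is $\{uv, xy\}$). The key geometric fact I will need is that, already in $\G$'s embedding, $u$ and $x$ lie on a common axis-parallel line, and likewise $v$ and $y$, so that the two virtual edges can be inserted as axis-parallel segments without moving any vertex.

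First I would pin down the two faces involved. In a 2-connected planar graph, a 2-edge cut has the property that the two cut edges share the same pair of adjacent faces; let $F_1$ and $F_2$ be the two faces of the polyhedroid containing both $uv$ and $xy$. Label axes so that $uv$ lies along $c$, and the other two edges at $u$ lie along $a$ (inside $F_1$) and $b$ (inside $F_2$). Since each orthogonal face alternates between two axis directions, $F_1$'s edges lie in $\{c,a\}$ and $F_2$'s lie in $\{c,b\}$. The edge $xy$ belongs to both faces, so its direction lies in $\{c,a\}\cap\{c,b\}=\{c\}$; in particular $uv$ and $xy$ are parallel.

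Next I would derive collinearity. The $\G_1$-portion of the boundary of $F_1$ is a polygonal path through vertices of $\G_1\setminus\{u,x\}$ joining $u$ and $x$, and all of its edges lie in $\{c,a\}$; hence its total displacement (which equals $\pm(u-x)$) has zero $b$-component. The same vector is realized by the $\G_1$-portion of $F_2$'s boundary, whose edges lie in $\{c,b\}$, so its $a$-component also vanishes. Hence $u-x$ is purely along axis $c$, i.e., $u$ and $x$ are collinear in direction $c$. Applying the identical argument to the $\G_2$-portions of the two boundaries yields that $v$ and $y$ are collinear in direction $c$.

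With these collinearities, I would simply declare the polyhedroid of $\G_1$ to consist of the inherited positions and edges of $\G_1$, together with the virtual edge $ux$ drawn as the axis-parallel segment in direction $c$; the construction for $\G_2$ is symmetric. Perpendicularity at $u$ and $x$ is automatic because the two remaining edges at each of these vertices lie in the directions $a,b$ complementary to $c$. Each inherited face of $\G$ that survives in $\G_1$ keeps its alternation pattern, and the two new faces formed by closing off the $\G_1$-portions of $F_1$ and $F_2$ by the virtual edge alternate in $\{c,a\}$ and $\{c,b\}$ respectively, because those $\G_1$-portions begin and end with an edge in $a$ (respectively $b$), and the virtual edge in $c$ extends the alternation. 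Sphere topology of each half follows from a short Euler-characteristic count. I expect the load-bearing step to be the collinearity argument: identifying the right pair of face boundaries and projecting them into different axis planes to pin $u-x$ and $v-y$ to the $c$-axis. Once that is in hand, the remaining verification of the polyhedroid axioms is routine bookkeeping.
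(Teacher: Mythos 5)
Your proposal is correct and follows the same essential approach as the paper: identify the two faces $F_1,F_2$ shared by both edges of the split pair, deduce that the endpoints of the proposed virtual edge are forced onto a common axis-parallel line, insert the virtual edge along that line, and replace $F_1,F_2$ by the two truncated faces while keeping the rest of the complex. The only difference is presentational: the paper notes that both split-pair edges must lie on the line of intersection of the two (necessarily distinct, axis-parallel) face planes and reads off collinearity directly, whereas you rederive the same collinearity via displacement vectors along the two boundary arcs; both arguments are sound and equivalent in substance.
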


\begin{proof}
The two edges of a split pair must belong to the same two faces of $\G$, so they both lie on the line formed by the intersection of the planes containing these faces. Therefore, the new edge added to form the split component $\G_1$ must be axis-parallel, and must be perpendicular to the other two edges at its two endpoints, satisfying the requirements for an orthogonal polyhedroid. A system of faces forming a planar graph for $\G_1$ may be found from a spherical embedding of the abstract complex representing the polyhedroid for $\G$, by replacing the two faces containing the split pair by two simpler faces from which all vertices in $\G_2$ have been removed. The construction of an orthogonal polyhedroid representing $\G_2$ is symmetric.
\end{proof}

\begin{figure}[t]
\centering\includegraphics[width=3in]{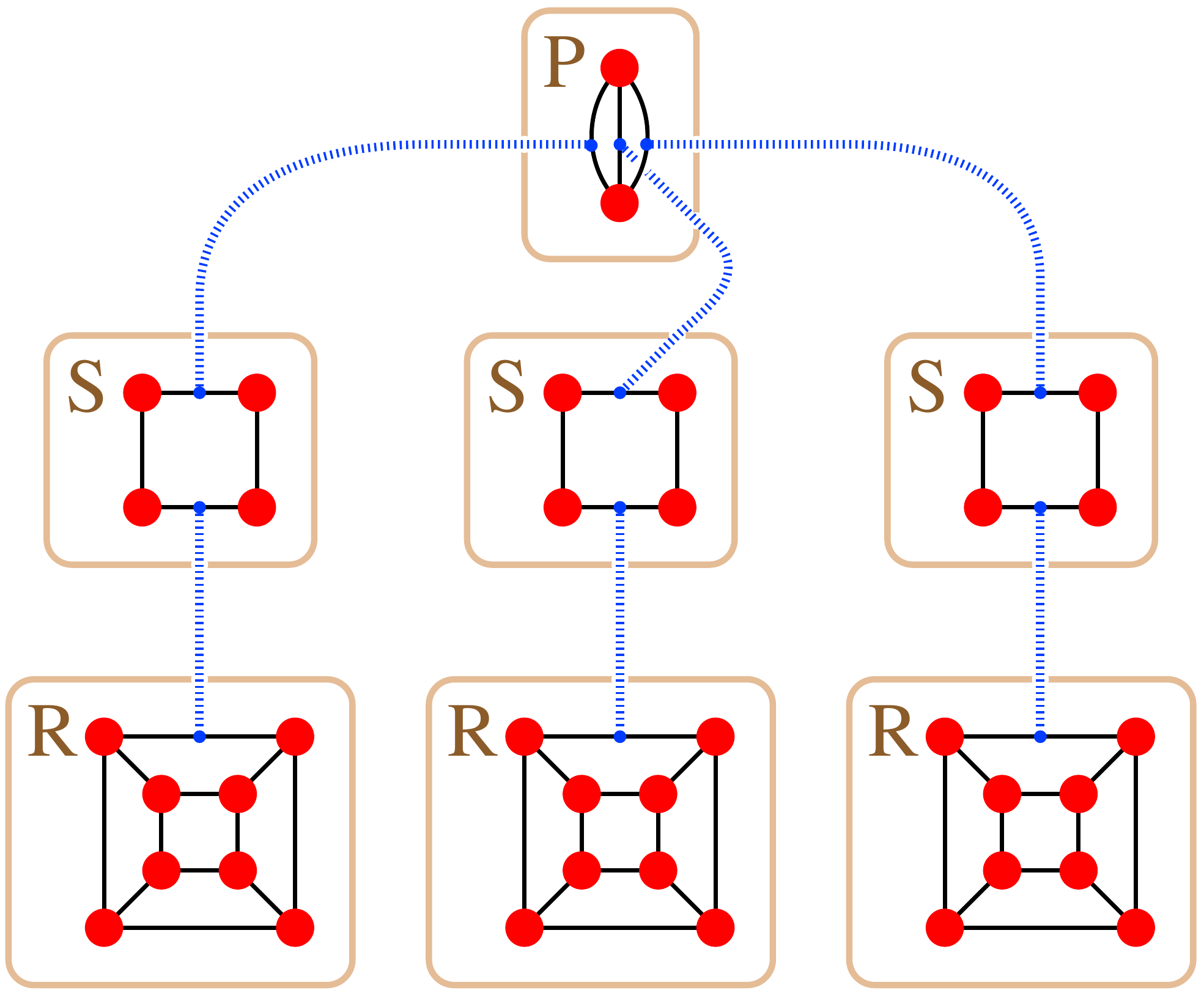}
\caption{The SPQR tree for the graph shown in Fig.~\ref{fig:nonpolyhedral}.}
\label{fig:spqr}
\end{figure}

An SPQR tree~\cite{DiBTam-FOCS-89,DiBTam-ICALP-90,GutMut-GD-01} is a data structure for representing the 3-connected components of any 2-connected graph. It takes the form of an unrooted tree (Fig.~\ref{fig:spqr}), the nodes of which are labeled by undirected graphs. The graph in each node may be a multigraph with two vertices and three or more edges (a P node), a cycle of three or more edges (an S node), or a 3-connected simple graph (an R node); the Q nodes from the original definition of SPQR trees are not needed for our purposes. Each edge $pq$ of the SPQR tree is associated with two oriented edges in the two graphs $\G_p$ and $\G_q$ associated with SQPR tree nodes $p$ and $q$. If an edge of a graph $\G_p$ is associated in this way with an SPQR tree edge, it is called a \emph{virtual edge}; a virtual edge can be associated with only one SPQR tree edge.

An SPQR tree $T$ represents a 2-connected graph $\G_T$, formed as follows: whenever SPQR tree edge $pq$ associates the virtual edge $uv$ of $\G_p$ with the virtual edge $wx$ of $\G_q$, form a single larger graph by merging $u$ and $w$ into a single supervertex, merging $v$ and $x$ into another single supervertex, and deleting the two virtual edges. By performing this gluing step on each edge of the SPQR tree we form the graph $\G_T$; the order of performing the gluing steps does not affect the result. For instance, the SPQR tree of Fig.~\ref{fig:spqr} represents in this way the graph of Fig.~\ref{fig:nonpolyhedral}.

With the additional assumptions that no two P nodes are adjacent and that no two S nodes are adjacent, there is a unique SPQR tree that represents any 2-connected graph; it may be constructed from the graph in linear time~\cite{GutMut-GD-01}. Each vertex in each graph $\G_p$ corresponds to a unique vertex in the overall graph $\G$. The 2-cuts of $\G$ (that is, the pairs of vertices the removal of which disconnects $\G$) are exactly the 2-cuts of its SPQR tree nodes: that is, pairs of vertices that are the two vertices of a P node, that are any two vertices of an S node, or that are two endpoints of a virtual edge in an R node.

\begin{lemma}
\label{lem:spqr3}
A 2-connected graph $\G$ is 3-regular if and only if its SPQR tree has the following form:
for every two adjacent nodes, one of the two nodes must be an S node, every P node $p$ must be associated with a graph $\G_p$ with three edges, every R node $r$ must be associated with a graph $\G_r$ that is itself 3-regular, and every S node $s$ must be associated with a cycle $\G_s$ that is of even length and that alternates between virtual and non-virtual edges.
\end{lemma}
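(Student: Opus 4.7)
The plan is to analyze, for each vertex $v$ of $\G$, the subtree $T_v$ of the SPQR tree consisting of all nodes $p$ whose component $\G_p$ contains $v$. A standard property of SPQR trees is that $T_v$ is connected, so if $|T_v|=k$ it has $k-1$ edges, each of which corresponds to a matched pair of virtual edges incident to $v$. This yields the key identity
\[
\sum_{p \in T_v} \deg_{\G_p}(v) = \deg_\G(v) + 2(k-1),
\]
since every non-virtual edge at $v$ is counted once on the left, while every SPQR-tree edge in $T_v$ contributes two virtual edges at $v$ to the sum.

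For the forward direction, assume $\G$ is $3$-regular. Every vertex of an S-node has degree $2$, while every vertex of a P- or R-node has degree at least $3$ (for P because it has at least three parallel edges, for R because it is $3$-connected). Letting $T_v$ contain $a$ S-nodes and $b$ non-S nodes, these lower bounds combined with the identity give $2a + 3b \le 2k+1$, forcing $b \le 1$. The case $b = 0$ would place $v$ in S-nodes only, and since S-nodes are never adjacent (by the standard SPQR-tree convention), $T_v$ would be a single S-node with an unmatched virtual edge at $v$, a contradiction. Hence $b = 1$ exactly, and $v$ has degree $3$ in the unique non-S node containing it. This immediately implies every R-node is $3$-regular and every P-node has three edges. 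Two adjacent non-S nodes would share a vertex living in two non-S nodes, violating $b=1$, so every SPQR-tree edge has an S-endpoint. Finally, because $v$'s single non-S node is the only possible partner for any virtual edge incident to $v$, each S-node containing $v$ has exactly one virtual and one non-virtual cycle edge at $v$; applied around the full cycle this gives global alternation and therefore even length.

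For the backward direction, assume the SPQR tree has the claimed form. Fix $v$ and consider its subtree $T_v$. The alternation in each S-cycle forces $v$ to have exactly one virtual edge in every S-node that contains it, so each such S-node is a leaf of $T_v$. The hypothesis that every SPQR-tree edge has an S-endpoint forbids two non-S nodes from being adjacent, so $T_v$ contains at most one non-S node. A $T_v$ consisting of a single S-leaf would leave $v$'s one virtual edge unmatched, so $T_v$ must contain exactly one non-S node, in which $v$ has degree $3$. Substituting $a = k-1$ and the non-S degree $3$ into the identity gives $\deg_\G(v) = 2(k-1) + 3 - 2(k-1) = 3$, so $\G$ is $3$-regular.

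The main technical point is the counting identity above, together with the observation that the two structural hypotheses (alternation in S-cycles and the ban on adjacent non-S nodes) jointly force $T_v$ to be a star of S-leaves around a single non-S centre. Once that star picture is in place, the rest of the proof reduces to elementary arithmetic in both directions; the only subtlety is being careful that ``virtual edge at $v$'' and ``SPQR-tree edge in $T_v$'' are in bijection, which is where the connectedness of $T_v$ is used.
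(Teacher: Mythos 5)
Your proof is correct, and it takes a genuinely different route from the paper's. The paper argues the implication ``SPQR form $\Rightarrow$ 3-regular'' by tracing how gluing acts on degrees (gluing an S node in place of a virtual edge preserves the degree at the endpoints), and it argues the converse by contrapositive, with a direct case analysis: a high-degree vertex in a P or R node, two adjacent non-S nodes, two adjacent virtual edges in an S node, or two adjacent non-virtual edges in an S node each force a vertex of the wrong degree in $\G$. Your approach instead centers on the subtree $T_v$ of SPQR-tree nodes containing a given vertex $v$, together with the counting identity $\sum_{p\in T_v}\deg_{\G_p}(v)=\deg_\G(v)+2(|T_v|-1)$, and then shows structurally that $T_v$ must be a star of S-leaves around a single P or R center; everything else is arithmetic. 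Your version has the advantage of handling both directions with the same tool and of explicitly justifying a step the paper leaves implicit, namely that every vertex of $\G$ lies in exactly one non-S node. The paper's version is more elementary in that it avoids the subtree $T_v$ and the matching of virtual edges, staying close to the mechanics of the gluing operation. One tiny imprecision in your write-up: in the $b=0$ subcase of the forward direction, the single S node containing $v$ has no virtual edge at $v$ at all (an incident virtual edge would force a second node into $T_v$), so the contradiction is simply that $\deg_\G(v)=\deg_{\G_s}(v)=2\neq 3$, not an ``unmatched'' virtual edge; your identity already yields this cleanly without the extra phrase.
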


\begin{proof}
If the SPQR tree has the given form, then every vertex of $\G$ is formed either from a vertex in an R node that is glued to from zero to three S nodes, or a vertex in a P node that is glued to two or three S nodes. Gluing an S node in place of a virtual edge does not change the degree of the associated vertices, so each vertex has the same degree it had in its associated P node or R node, that is, the graph is 3-regular.

Conversely, suppose that the SPQR tree does not have this form. If there is a high degree vertex in a P or an R node, then additional gluing cannot decrease the degree, so there is a high degree vertex in $\G$. There can be no vertex of degree lower than 3 in an R node $r$ because then the associated graph $\G_r$ would not be 3-connected. If an edge of the SPQR tree connects two nodes neither of which is an S node, then gluing those two nodes produces vertices of degree four or more at the endpoints of the glued virtual edges, and again those high degrees will persist into $\G$. If two non-virtual edges are adjacent in an S node's graph, then the degree two vertex where they meet will persist into $\G$. And if two virtual edges are adjacent in an S node's graph, then gluing in the P or R nodes connected to those virtual edges will increase the degree of the vertex in the middle to four, and again that high degree vertex will persist into $\G$. Thus, in all cases, a violation of the constraints on the form of the SPQR tree leads to a graph that is not 3-regular.
\end{proof}

\begin{lemma}
\label{lem:spqr-split}
For a 3-regular graph $\G$, each split pair of edges form two non-virtual edges in an S node of the SPQR tree of $\G$.
\end{lemma}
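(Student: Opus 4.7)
The plan is a case analysis on where the two edges $e_1, e_2$ of a split pair sit in the SPQR tree. By Lemma~\ref{lem:spqr3}, every P node's graph consists of three parallel virtual edges between two vertices, so every actual (non-virtual) edge of $\G$ lives in either an R node or an S node. This leaves four combinatorial cases to analyze: both edges in a single R node, in two different R nodes, in an R node and an S node, or in two different S nodes, with the target case (both non-virtual edges of a single S node) as the remaining possibility.

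Two structural facts drive the ruling out. First, the graph $\G_r$ of any R node is 3-connected and 3-regular, hence 3-edge-connected, so $\G_r\setminus\{e_1,e_2\}$ stays connected if both edges lie in $\G_r$, and $\G_r\setminus e_i$ stays 2-edge-connected if only one does. Second, the graph $\G_s$ of any S node is a cycle with alternating virtual and non-virtual edges, so removing one non-virtual edge leaves a Hamiltonian path (whose endpoints are therefore still connected inside $\G_s$), while removing two non-virtual edges splits the cycle into two disjoint arcs.

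Combining these two facts with the tree structure of the SPQR tree, in each of the first three cases every SPQR-tree node that contains one of the $e_i$ is still internally connected after its local deletions, and the virtual-edge gluings---which identify pairs of vertices and are unaffected by edge removals---propagate this internal connectivity into global connectivity of $\G\setminus\{e_1,e_2\}$. Hence none of these arrangements can form a split pair, and $e_1, e_2$ must both be non-virtual edges of a single S node, as desired. As a by-product, the ``both in one S node'' case gives the converse (not strictly claimed in the lemma but worth noting): the arcs end up in distinct components because, the SPQR tree being acyclic, any path between them would have to reuse $\G_s$.

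The main technical obstacle I foresee is formalizing the claim that gluings across virtual edges transmit connectivity along the SPQR tree, which I would handle by induction on the tree distance between the two nodes containing $e_1$ and $e_2$. The inductive step uses the fact that gluing a new node $n$ (of any type: P, S, or R) along a virtual edge $uv$ identifies two vertices of $n$ with two vertices of the existing glued graph; since $\G_n$ minus the at most one deleted edge inside it is still connected, and every path through the glued region can be rerouted via the shared vertices $u$ and $v$, the overall glued graph also remains connected. Once this sub-lemma is in place, the case analysis is essentially mechanical.
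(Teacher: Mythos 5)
Your proof takes a genuinely different route from the paper's. The paper argues at the level of \emph{vertices}: it observes that two appropriately chosen endpoints of the split pair form a 2-cut of $\G$, uses the standard fact that any 2-cut lies inside a single SPQR node, invokes Lemma~\ref{lem:spqr3} to push that 2-cut into an S node, and then identifies the unique pair of non-virtual edges of that S node adjacent to the two cut vertices as the split pair. Your proof instead works at the level of \emph{edges}: you locate $e_1$ and $e_2$ in the SPQR tree, rule out every non-target arrangement by edge-connectivity of the node graphs, and propagate connectivity across the gluings. Your approach is more exhaustive but also more elementary, and as you note it yields the converse direction essentially for free.

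There is, however, a concrete gap in the first step. You assert, citing Lemma~\ref{lem:spqr3}, that ``every P node's graph consists of three parallel virtual edges,'' so that no actual edge of $\G$ lives in a P node. Lemma~\ref{lem:spqr3} does not say this. It says that $\G_p$ has three edges, and its proof explicitly allows a P-node vertex to be ``glued to two or three S nodes''; in the ``two'' case the third edge of $\G_p$ is a non-virtual edge of $\G$ (this happens precisely when $\G$ has an actual edge joining the two vertices of the P node). So your case analysis is missing the possibilities in which one or both of $e_1, e_2$ is a non-virtual P-node edge, and you would need a further observation to close it---for instance, that if $e_0=uv$ is the non-virtual edge of a P node then $u$ and $v$ remain joined in $\G-e_0$ by the two vertex-disjoint parallel structures attached through the remaining virtual edges, so that $\G - e_0$ is still 2-edge-connected and $e_0$ cannot belong to any split pair. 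The paper's proof sidesteps this entirely because it reasons about the endpoints (which, regardless of where $e_1$ and $e_2$ themselves sit, form a 2-cut that Lemma~\ref{lem:spqr3} forces into an S node) rather than about which nodes the edges inhabit.

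Two smaller remarks. First, the connectivity-propagation sublemma you flag as the main technical obstacle is indeed where the care is needed: the clean invariant is that for each SPQR node $q$ and each virtual edge $uv$ of $q$, the part of the tree on the far side of $uv$, after edge deletions, still yields a $u$--$v$ path, and you must track which side of each tree edge the deleted edges lie on when setting up the induction. Second, your ``by-product'' observation (any two non-virtual edges of a single S node do form a split pair) is correct and is implicitly used later in the paper, so it is worth stating.
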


\begin{proof}
Any two endpoints of a split pair form a 2-cut in $\G$, and must therefore belong to a single node of the SPQR tree. By Lemma~\ref{lem:spqr3}, every two vertices that form a cut belong to an S node $s$, because the two vertices of a cut in a P node or an R node are also the vertices of a cut in an S node that is glued to it. By Lemma~\ref{lem:spqr3} again, there is only one pair of non-virtual edges adjacent to these two vertices in $\G_s$, and this pair of edges form a split pair. It must be the split pair we started with, because any other combination of edges with the same two vertices fails to separate the graph, as they belong to different subgraphs $\G_p$ each of which is at least 2-connected.
\end{proof}

\begin{lemma}
\label{lem:spqr-atoms}
In a 3-regular graph $\G$, the atoms of $\G$ are exactly the graphs $\G_p$ where $p$ is either a P node or an R node of the SPQR tree of $\G$.
\end{lemma}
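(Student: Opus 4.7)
The plan is to argue via a correspondence between split operations on $\G$ and edge removals from its SPQR tree $T$, using induction on the number of S nodes in $T$ and an appeal to Lemma~\ref{lem:spqr-split}. First I would establish the easy direction: each $\G_p$ for a P or R node $p$ contains no split pair. For an R node this is immediate from 3-connectivity (combined with the 3-regularity guaranteed by Lemma~\ref{lem:spqr3}, which rules out any 2-edge cut). For a P node, $\G_p$ has only two vertices joined by at least three parallel edges, so removing any two edges leaves the two vertices still adjacent. Hence once the splitting process produces $\G_p$, no further split is available, and $\G_p$ qualifies as an atom.

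Next I would prove that each such $\G_p$ is indeed produced by iterated splitting, by induction on the number of S nodes in $T$. If $T$ has no S node, then by Lemma~\ref{lem:spqr3} (no two P or R nodes are adjacent in a valid SPQR tree) $T$ is a single P or R node and $\G = \G_p$ is already its own unique atom. Otherwise, pick any S node $s$ in $T$, and inside its cycle $\G_s$ pick two non-virtual edges $e_1, e_2$ that flank a single virtual edge of $\G_s$. By Lemma~\ref{lem:spqr-split}, $\{e_1,e_2\}$ is a split pair of $\G$. Splitting $\G$ on this pair yields two components whose SPQR trees are the two pieces obtained by cutting $T$ at the selected virtual edge of $\G_s$ (with the S node $s$ itself shortened, or collapsed entirely if $\G_s$ had length four). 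Each piece has strictly fewer S nodes, so by induction its atoms are precisely its own P and R nodes; together these coincide with the P and R nodes of $T$.

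For the reverse inclusion, Lemma~\ref{lem:spqr-split} tells us that \emph{every} split pair of \emph{any} 3-regular 2-connected graph is a pair of non-virtual edges in some S node of its SPQR tree, so the splitting procedure can only ever cut through S nodes. Iterating, the process must terminate at split components whose SPQR trees contain no S node, which by Lemma~\ref{lem:spqr3} must each be a single P or R node of the original $T$. The main obstacle is the bookkeeping in the inductive step: one must verify that the virtual edge inserted into each split component when we split on $\{e_1,e_2\}$ really does play the role of the SPQR virtual edge of the corresponding subtree of $T$, so that the SPQR trees of the two split components are exactly the two subtrees of $T$ (with $s$ possibly shortened). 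This amounts to tracing how the gluing operation of the SPQR tree interacts with the subdivision-by-an-S-path along which we cut, and confirming that the splitting creates no spurious new split pairs outside the S nodes inherited from $T$.
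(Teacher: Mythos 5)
Your proposal follows essentially the same route as the paper's proof: induction on the SPQR tree, using Lemmas~\ref{lem:spqr3} and~\ref{lem:spqr-split} to tie split pairs to S nodes, and tracing how each split corresponds to cutting the SPQR tree into subtrees. However, the induction measure you chose---the number of S nodes---does not always strictly decrease, and this breaks the inductive step. Concretely, suppose $\G_s$ is a cycle of length six or more and the split pair you pick peels off a leaf R node $\G_p$ on the far side of the flanked virtual edge. The other component retains $s$ (merely shortened by two vertices, still a valid S node of length at least four) together with every other S node of $T$, so its S-node count equals $T$'s and your inductive hypothesis does not apply. The paper's measure is the total number of SPQR tree nodes, which does strictly decrease at each split because the peeled-off P or R node disappears from the remaining component's tree; substituting that measure (or the number of graph vertices) repairs your argument.

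A smaller point: you invoke Lemma~\ref{lem:spqr-split} to conclude that two non-virtual edges in $\G_s$ form a split pair, but that lemma as stated gives only the reverse implication (every split pair of $\G$ consists of two non-virtual edges of some S node). The fact you actually need---that any pair of non-virtual edges of an S-node cycle forms a split pair of $\G$---is true, and is demonstrated inside the paper's proof of Lemma~\ref{lem:spqr-split}, but it should be recorded as its own observation (removing two edges of a cycle disconnects it, and the subgraphs glued at the intervening virtual edges only touch one arc) rather than attributed to the lemma statement. The larger bookkeeping you flag at the end, verifying that each split component's SPQR tree is a modified subtree of $T$, is elided at roughly the same level of detail in the paper's proof, so there is no additional gap there.
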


\begin{proof}
Using Lemma~\ref{lem:spqr-split}, it follows by induction on the number of nodes of the SPQR tree of $\G$ that, for each split component $H$ of $\G$, the SPQR tree of H has a special form: it consists of a subtree $S$ of the SPQR tree of $\G$, where within each S node $s$ of $S$ we modify the graph $\G_s$ by removing the two endpoints of any virtual edge whose partner does not belong to $S$ and by reconnecting the remaining vertices in cyclic order. Additionally, the split components into which $\G$ has been partitioned by a sequence of splits together contain all of the P nodes and R nodes of the SPQR tree of $\G$.

Whenever a subtree of the above form does not consist of a single P node or R node, it can be further split by a split pair drawn from one of its S nodes, by Lemma~\ref{lem:spqr-split}. Therefore, once $\G$ has been split into the atoms, each atom must form a graph represented by a single P node or R node of the SPQR tree of $\G$, and all such nodes must represent atoms.
\end{proof}

It follows from the above characterization that every 2-connected 3-regular graph has an \emph{atomic decomposition}, a sequence of subdivisions of the graph into split components in which, at every split, one of the two split components is an atom.

\begin{lemma}
\label{lem:atomic}
Every 2-connected 3-regular graph has an atomic decomposition.
\end{lemma}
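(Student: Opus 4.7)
The plan is to induct on the number of P and R nodes in the SPQR tree of $\G$. By Lemma~\ref{lem:spqr-atoms} these nodes are in bijection with the atoms of $\G$, and by Lemma~\ref{lem:spqr-split} every splittable pair of edges in $\G$ lives in some S node of the tree, so the SPQR tree provides a complete account of the splitting moves available to us. For the base case of a single atom, $\G$ is already 3-connected and the trivial one-element decomposition suffices.

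For the inductive step, I first observe that every leaf of the SPQR tree must be a P or an R node. If some S node $s$ were a leaf, then the cycle $\G_s$ would contain exactly one virtual edge; but by Lemma~\ref{lem:spqr3} this cycle has length at least three and its edges must strictly alternate between virtual and non-virtual, ruling out exactly one virtual edge. Pick such a leaf $\ell$, and let $s$ be its unique SPQR-neighbor, which is necessarily an S node. Let $e_s$ denote the virtual edge of $\G_s$ dual to the tree edge $\ell s$, and let $a$ and $b$ be the two non-virtual edges of $\G_s$ immediately adjacent to $e_s$ along the cycle. These edges $a$ and $b$ correspond to actual edges of $\G$ and, by Lemma~\ref{lem:spqr-split}, form a split pair: deleting them disconnects $\G_\ell$ from the rest. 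Performing this split recovers on the $\ell$-side exactly $\G_\ell$ (with $e_\ell$ as the new virtual edge in place of the removed pair), which is an atom, and on the other side yields a smaller 2-connected 3-regular graph $\G'$ whose SPQR tree is obtained from that of $\G$ by deleting $\ell$ and simplifying $s$ accordingly. Because $\G'$ has strictly fewer atoms, the inductive hypothesis provides an atomic decomposition of $\G'$, and prepending the single split $(\G_\ell,\G')$ produces an atomic decomposition of $\G$.

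The main obstacle is verifying that $\G'$ really is 2-connected and 3-regular so that the induction is legitimate. Three-regularity is direct: the only vertices whose incidences change are the two endpoints shared by $\{a,b\}$ and $e_s$, and for each of them the removed edge is replaced by the newly virtual edge, preserving degree three. Two-connectivity and the validity of $\G'$'s SPQR tree follow from preserving the structural conditions of Lemma~\ref{lem:spqr3}; the subtle point is that the cycle $\G_s$ may shrink to length two after $e_s$ is absorbed, in which case $s$ must be merged into its remaining P- or R-neighbor. This merge is a bookkeeping step, not a new atom, so the atom count of $\G'$ is strictly less than that of $\G$ and the induction completes.
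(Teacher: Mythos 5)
Your proof is correct and follows essentially the same strategy as the paper's: induct on the SPQR tree, use Lemma~\ref{lem:spqr3} to show every leaf is a P or R node, split off the leaf atom via the pair of non-virtual edges adjacent to its virtual edge in the neighboring S node, and recurse. The only cosmetic differences are your choice of induction measure (counting P and R nodes rather than all SPQR tree nodes) and your somewhat more explicit verification that the residual graph remains 2-connected and 3-regular.
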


\begin{proof}
The result follows by induction on the number of nodes in the SPQR tree of the graph $\G$. As a base case, if there is a single node, it must be a P or R node and the result follows. Otherwise, let $p$ be a node of degree one in the SPQR tree. Then $p$ must be a P or R node, for by Lemma~\ref{lem:spqr3} each S node has two or more neighbors in the tree. The two endpoints of the single virtual edge of $p$ must also be the endpoints of two non-virtual edges in the S node adjacent to $p$, these two edges form a split pair, and splitting $\G$ according to that split pair produces the atom $\G_p$ as one component and another component with either one or two fewer nodes in its SPQR tree (accordingly as the S node adjacent to $p$ has length greater than four or not).
\end{proof}

\begin{lemma}
\label{lem:molecular}
Let $\G$ be a 2-connected 3-regular planar graph that is split by a split pair $uv$, $wx$ into two split components $\G_1$ and $\G_2$, where $\G_1$ is the graph of a simple orthogonal polyhedron and $\G_2$ is a 3-connected simple planar bipartite graph. Then $\G$ is itself the graph of a simple orthogonal polyhedron.
\end{lemma}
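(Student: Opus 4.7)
The plan is to build $P$ by inserting a small scaled copy of a polyhedral realization of $\G_2$ into $P_1$ near the virtual edge $uw$. Since $\G_2$ is 3-connected bipartite cubic planar, Theorem~\ref{thm:xyz} gives a simple orthogonal polyhedron $P_2$ whose graph is $\G_2$, in which the virtual edge $vx$ is an axis-parallel segment with two perpendicular adjacent faces; the virtual edge $uw$ in $P_1$ is similarly an axis-parallel segment between two perpendicular faces.

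First I would apply rigid motions, a reflection if needed, and a uniform scaling to $P_2$ so that (i)~$vx$ and $uw$ are colinear with $v$ and $x$ lying strictly in the interior of the segment $uw$; (ii)~the two planes containing the faces of $P_2$ at $vx$ coincide with the two planes containing the faces of $P_1$ at $uw$; (iii)~$P_2$ is small enough that, outside a small neighborhood of $uw$, it does not come near any other feature of $P_1$; and (iv)~$P_2$ meets $P_1$ in exactly one axis-parallel 2D rectangle lying inside one of the two faces of $P_1$ at $uw$, with no 3D overlap of interiors. Depending on whether $uw$ is a convex or concave edge of $P_1$, $P_2$ should be placed ``outside'' $P_1$ (producing a bump) or ``inside'' $P_1$ (producing a notch).

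With this placement arranged, define $P$ as the closure of the union (in the convex case) or the set difference (in the concave case) of $\mathrm{int}(P_1)$ and $\mathrm{int}(P_2)$. The verification that $P$ is a simple orthogonal polyhedron with graph $\G$ then splits into standard pieces: topologically $P$ is a ball glued from two balls along a disk, so its boundary is a sphere; every vertex of $P_1$ or $P_2$ other than $u, v, w, x$ keeps its three perpendicular edges, and at each of the four special vertices the edge along the virtual segment is simply replaced by $uv$ or $wx$; the only ``merged'' faces are those combining a coplanar pair of faces from $P_1$ and $P_2$ at the virtual edges, and these can be checked directly to be simply-connected orthogonal polygons (one a rectangle-with-notch, one a rectangle-with-bump). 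The resulting graph has vertex set $V(P_1) \cup V(P_2)$ and edge set $(E(\G_1) \setminus \{uw\}) \cup (E(\G_2) \setminus \{vx\}) \cup \{uv, wx\}$, which matches $E(\G)$ by the definition of SPQR gluing.

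The hard part is step~(iv): making the dihedral convexity at $uw$ in $P_1$ match the dihedral convexity at $vx$ in $P_2$ so that the two polyhedra meet in a 2D rectangle rather than in a higher-dimensional overlap or merely in the 1D edge. Because the simple orthogonal polyhedron realizing $\G_2$ is not unique, there is flexibility in the construction of Theorem~\ref{thm:xyz} to produce a $P_2$ whose virtual edge $vx$ has the required convexity, and combining this with the inside/outside placement above then handles both the convex-$uw$ and concave-$uw$ cases uniformly.
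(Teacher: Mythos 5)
Your overall plan---shrink a polyhedral realization of $\G_2$, align its two face planes at $vx$ with the two face planes of $P_1$ at $uw$, and combine by a union or set difference---is the same strategy as the paper's proof. However, two of your explicit conditions undercut one another, and the resolution you offer for what you rightly flag as the hard part leaves a real gap.

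The internal inconsistency is in step~(iv). You require that $P_2$ and $P_1$ have ``no 3D overlap of interiors,'' yet your notch case has $P_2$ placed \emph{inside} $P_1$, where $\mathrm{int}(P_2)\subset\mathrm{int}(P_1)$ by design. (Also, the contact region is in general a simply connected orthogonal polygon, not a rectangle.) More seriously, the assertion that ``there is flexibility in the construction of Theorem~\ref{thm:xyz} to produce a $P_2$ whose virtual edge $vx$ has the required convexity'' is exactly the fact you would need to prove, and the ``matching dihedral convexity'' framing points in the wrong direction. What the paper actually uses is a stronger conclusion of the \emph{proof} of Theorem~\ref{thm:xyz}, namely that the realization of $\G_2$ can be chosen to lie in the positive orthant with a \emph{designated} vertex at the origin and that vertex's three neighbors on the coordinate axes. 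Taking $v$ to be the designated vertex forces $vx$ to be a convex edge whose two incident faces lie in coordinate planes, regardless of whether $uw$ is convex or concave in $P_1$. One then places $P_2$ in the wedge bounded by the two faces of $P_1$ at $uw$: the interior wedge when $uw$ is convex (subtracting $P_2$ as a notch), the exterior wedge when $uw$ is concave (adding $P_2$ as a bump)---note this is the opposite pairing from the one you state---so that both coplanar pairs of faces can be merged by symmetric difference. Your alternative placement, in which only one of the two faces of $P_2$ at $vx$ overlaps a face of $P_1$, can in fact also be made to work (one merged face gains a bump while the other loses a notch), but it needs the same structural facts about the orthant representation, and as written your conditions for it are mutually contradictory. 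The missing ingredient, in short, is invoking the specific positive-orthant form of the realization produced by the proof of Theorem~\ref{thm:xyz}, rather than an unspecified flexibility in choosing the convexity of $vx$.
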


\begin{proof}
Let $P$ be a polyhedron representing $\G_1$. Edge $uw$ in $P$ is represented geometrically by an axis-parallel line segment, and its two adjacent faces form a right-angled wedge near that segment.

It follows from Theorem~\ref{thm:xyz} that $\G_2$ can itself be represented as a simple orthogonal polyhedron. More, the method of proving Theorem~\ref{thm:xyz} shows that, if $\Delta$ is the 4-connected Eulerian triangulation chosen as the root of the decomposition tree of the dual of $\G_2$ by its separating triangles, and if $abc$ is the root triangle of $\Delta$, then the polyhedron representing $\G_2$ lies within the three-dimensional positive orthant, with the vertex of $\G_2$ dual to $abc$ at the origin, its three neighbors placed on the coordinate axes, and all other vertices placed strictly interior to the positive orthant. We may choose any vertex of $\G_2$ to be the one placed at the origin; the vertex we choose is $v$.

We may now shrink and translate the polyhedron representing $\G_2$ so that, instead of lying within the positive orthant, it lies within the wedge defined by the faces adjacent to edge $uw$ of $\G_1$, with $v$ and $x$ placed in the correct order between $u$ and $w$ on the line through this edge, and with all other vertices strictly interior to this wedge. The faces of the two polyhedra adjacent to the two edges $uw$ and $vx$ form two coplanar pairs, and each pair may be replaced by a single face, the symmetric difference of the two faces in the coplanar pair. All other faces of $\G_1$ and $\G_2$ remain unchanged. The result is a polyhedral representation of the entire graph~$\G$.
\end{proof}

\begin{lemma}
\label{lem:23split}
For any 2-connected graph $\G$, the SPQR tree of $\G$ contains a P node if and only if there exists in $\G$ a pair of vertices $u$ and $v$ such that either (1) $u$ and $v$ are nonadjacent and their removal leaves at least three connected components, or (2) $u$ and $v$ are adjacent and their removal leaves at least two connected components.
\end{lemma}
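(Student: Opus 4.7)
The plan is to prove both directions by tying P nodes to the notion of \emph{split components at a separation pair}, which is the standard ingredient in Tutte's decomposition that underlies the SPQR tree. Concretely, for a 2-cut $\{u,v\}$, let the split components at $\{u,v\}$ be defined as follows: for each connected component $C_i$ of $\G \setminus \{u,v\}$ take the subgraph $H_i$ induced by $V(C_i)\cup\{u,v\}$, and, if $uv$ is an edge of $\G$, take the two-vertex one-edge graph containing $uv$ as an additional split component. The canonical SPQR tree of Di Battista and Tamassia has the property that a P node at $\{u,v\}$ appears if and only if the number of split components at $\{u,v\}$ is at least three; I will use this characterization as the bridge between the two conditions.

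For the forward direction, suppose the SPQR tree of $\G$ contains a P node $p$ with vertex set $\{u,v\}$ and $k\ge 3$ parallel edges $e_1,\dots,e_k$. Each $e_i$ is either a non-virtual edge (a real edge of $\G$) or a virtual edge whose associated subtree, when glued back, produces a subgraph $\G_i$ of $\G$ containing $\{u,v\}$ plus at least one additional vertex, connected and internally disjoint from the other $\G_j$. Let $t$ be the number of non-virtual $e_i$; since $\G$ is simple there is at most one edge joining $u$ to $v$, so $t\le 1$. Removing $\{u,v\}$ from $\G$ leaves exactly $k-t$ connected components, one per virtual $e_i$. Hence if $t=0$ then $u\not\sim v$ and $\G\setminus\{u,v\}$ has at least three components, giving case~(1); if $t=1$ then $u\sim v$ and $\G\setminus\{u,v\}$ has at least two components, giving case~(2).

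For the converse, suppose $\{u,v\}$ satisfies (1) or (2), and let $C_1,\dots,C_k$ be the components of $\G\setminus\{u,v\}$, with $t=1$ if $uv$ is an edge of $\G$ and $t=0$ otherwise. By hypothesis $k+t\ge 3$. Since $\G$ is 2-connected, each $C_i$ contains at least one neighbour of $u$ and at least one neighbour of $v$, and each $H_i=\G[V(C_i)\cup\{u,v\}]$ is connected; together with the virtual edge $uv$ added to it, each $H_i$ is itself 2-connected. Together with the extra split component carrying the edge $uv$ (when $t=1$), this produces $k+t\ge 3$ split components of $\G$ at $\{u,v\}$. By the canonical-SPQR characterization cited above, this forces a P node at $\{u,v\}$ in the SPQR tree of~$\G$.

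The hard part is the converse: one must ensure that the P structure witnessed by the $k+t\ge 3$ split components is not silently absorbed into an adjacent S or R node of the tree. This is exactly what the canonicality of the SPQR tree (no two adjacent P nodes, no two adjacent S nodes) guarantees: any split pair with three or more split components is represented by a unique P node, because collapsing it into an S node would force two of the $H_i$ to share a path disjoint from $\{u,v\}$ (contradicting the component decomposition), and collapsing it into an R node would require $\{u,v\}$ not to be a separation pair in the containing 3-connected component, again contradicting the existence of three vertex-disjoint $u$--$v$ bundles. With this characterization invoked as a black box from the SPQR-tree literature, the rest of the argument is a direct unpacking of the definitions.
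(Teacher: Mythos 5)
Your proof is correct and follows essentially the same route as the paper's: both identify P nodes with separation pairs $\{u,v\}$ that have at least three split components, and both note that those split components correspond to the components of $\G\setminus\{u,v\}$ plus, if present, the edge $uv$. The paper's proof is terser (two sentences), while you spell out the bookkeeping with $k$ and $t$ and note explicitly why the P structure cannot be absorbed into a neighboring S or R node; this is a reasonable elaboration but not a different argument.
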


\begin{proof}
In one direction, if the SPQR tree contains a P node then the two vertices $u$ and $v$ of that node have the described property. In the other direction, if $u$ and $v$ have this property, then using $u$ and $v$ as a split pair of vertices produces a split component in which $u$ and $v$ are connected by multiple edges, one for each component of $\G$ formed by removing $u$ and $v$ from the graph and one for edge $uv$ if that edge belongs to $\G$. This split component must form a P node in the SPQR tree.
\end{proof}

\begin{lemma}
\label{lem:23biconn}
Suppose that, in the cubic graph $\G$, there do not exist two vertices $u$ and $v$ such that removing $u$ and $v$ from $\G$ leaves three components (counting an edge between $u$ and $v$, if one exists, as a component). Then $\G$ is biconnected.
\end{lemma}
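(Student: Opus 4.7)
My plan is to prove the contrapositive: assuming $\G$ is not biconnected, I will exhibit two vertices $u,v$ whose removal leaves at least three components (using the stated convention that an edge $uv$ in $\G$ contributes one extra component). Since $\G$ is cubic, every connected component of $\G$ has at least four vertices; I will use this basic fact throughout. The failure of biconnectivity means either $\G$ is disconnected or $\G$ is connected but has a cut vertex, and I will handle these two cases separately.

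If $\G$ is disconnected, I will pick any edge $uv$ lying inside some component $C$. Removing $u$ and $v$ leaves at least one piece of $C - \{u,v\}$ (because $|C| \geq 4$, so at least two vertices survive in $C$), plus at least one other component of $\G$ that is wholly untouched by the removal, plus --- by the convention --- one more component counted for the edge $uv$, giving at least three components in total. If instead $\G$ is connected with cut vertex $u$, I will take $w$ to be any neighbor of $u$; then $\G - u$ has components $C_1,\ldots,C_k$ with $k \geq 2$, and letting $w \in C_1$, the cubic constraint forces $w$'s two non-$u$ neighbors to lie in $C_1$, so $|C_1|\geq 3$ and $C_1 - w$ still has at least one component. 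Adding the $k-1 \geq 1$ components outside $C_1$ and the edge $uw$ from the convention, I again reach at least $1 + (k-1) + 1 \geq 3$ components.

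The main obstacle is really just a bookkeeping one: I must be careful not to destroy a component entirely when removing a single vertex from it, and the cubic hypothesis is what keeps each relevant subgraph large enough (at least two surviving vertices) to ensure survival. The edge-counting convention is precisely what promotes the count from $2$ to $3$ in the worst cases --- without it the statement would fail, as witnessed by the disjoint union of two $3$-connected cubic graphs, where one vertex removed from each component leaves only two pieces.
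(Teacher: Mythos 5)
Your proof is correct, proves the contrapositive as the paper does, and in fact is more complete than the paper's own argument in one respect. The paper's proof jumps directly to ``if $\G$ is not biconnected, it has an articulation point $v$,'' which implicitly assumes $\G$ is connected and thus silently skips the disconnected case; you handle both cases explicitly. The case analyses also differ in the connected subcase. The paper observes that a cut vertex $v$ of a cubic graph lies in at least two blocks, so by pigeonhole one block contributes only a single edge $uv$ at $v$, which must therefore be a bridge; removing $u$ and $v$ then yields the bridge itself, the other neighbors of $u$, and the other neighbors of $v$ as three separate pieces. You instead take the cut vertex $u$ and an \emph{arbitrary} neighbor $w$, and count: at least one piece from $C_1 - w$ (using the cubic constraint to show $|C_1|\geq 3$), at least one from the other components of $\G - u$, and one from the edge $uw$. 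Both accountings reach the required three; the paper's bridge observation is a slightly tighter structural fact, while yours avoids invoking the block decomposition at all. Either way the cubic hypothesis does the same work --- it guarantees that the pieces left over after removing a degree-3 vertex from a component are nonempty --- so the two arguments are close in spirit even though the bookkeeping is different.
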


\begin{proof}
We prove the contrapositive, that if $\G$ is not biconnected then it has two vertices $u$ and $v$ as described. But if $\G$ is biconnected, it has an articulation point $v$. Because there are only three edges at $v$, one of the blocks that contains $v$ must have only a single edge $uv$ incident to $v$. Then $uv$ forms a bridge in $\G$, and removing $u$ and $v$ from $\G$ produces at least three components: one containing edge $uv$, one containing the other neighbors of $u$, and one containing the other neighbors of~$v$.
\end{proof}

\medskip\noindent
\emph{Proof of Theorem~\ref{thm:atom}}. The first two equivalent graph classes in the theorem are 2-connected 3-regular graphs such that each triconnected component is a bipartite polyhedral graph or an even cycle, and 3-regular planar bipartite graphs in which no 2-vertex removal leaves three components (counting an edge between the two removed vertices as a component). This equivalence follows easily from the known facts that a graph is planar if and only if its triconnected components are planar~\cite{Mac-DMJ-37}, that a graph is bipartite if and only if its triconnected components are bipartite (Corollary~\ref{cor:split-bipartiteness}), that a graph has a triconnected component that is neither a simple 3-connected graph nor a cycle if and only if it has two vertices the removal of which forms three components (Lemma~\ref{lem:23split}), and that a graph without any two such vertices is biconnected (Lemma~\ref{lem:23biconn}). By Lemma~\ref{lem:spqr-atoms}, this same class of graphs is also the class of bipartite 2-connected 3-regular planar graphs in which every atom is a simple graph, for every P node $p$ in an SPQR tree of such a graph is represented by a non-simple graph $\G_p$ (a multigraph) while every R node is represented by a simple graph.

Suppose that a graph $\G$ is the graph of a simple orthogonal polyhedron. Then by Lemma~\ref{lem:sop2graph1} it forms a bipartite 2-connected 3-regular planar graph, and by Lemma~\ref{lem:split-polyhedroidality} all of its atoms are also simple graphs; thus, $\G$ belongs to the class of graphs described above.

Conversely, suppose that $\G$ belongs to this class of graphs. Then by Lemma~\ref{lem:atomic} it has an atomic decomposition, and by repeatedly applying Lemma~\ref{lem:molecular} to the splits in this decomposition we may build up a representation of $\G$ as a simple orthogonal polyhedron. Therefore, every graph in this class is the graph of a simple orthogonal polyhedron.\qed

Finally, we observe that there is some freedom in choosing the relative orientation of the two polyhedra to be merged in Lemma~\ref{lem:molecular}. However, all of the planar embeddings of a 2-connected planar graph may be specified by choosing an orientation for the gluing at each virtual edge of its SPQR tree and by choosing a permutation of the edges in each P node. Because the SPQR trees of simple orthogonal polyhedra do not have P nodes, only the orientation of the gluing at each virtual edge is needed to determine an embedding. This shows that, when a graph $\G$ is the graph of a simple orthogonal polyhedron, all planar embeddings of $\G$ may be realized as simple orthogonal polyhedra.

\section*{Appendix X: Algorithmic Steinitz theorems}
In this section we describe an algorithm that takes a two-connected cubic planar graph as an input and represents it as a simple orthogonal polyhedron.  The algorithm for general simple orthogonal polyhedra should perform the following steps; the algorithms for the other cases are similar but with fewer steps.

\begin{enumerate}
\item \emph{Find an atomic decomposition.}
Finding an atomic decomposition is essentially just finding the SPQR tree and/or triconnected components, and can be done in linear time as described in~\cite{HopTar-SJC-73,GutMut-GD-01}. Before computing the SPQR tree, we check that the graph is 3-regular and bipartite, so that we may apply the lemmas in Appendix~VII restricting the structure of the SPQR tree in this case. If the SPQR tree has a P node, no simple orthogonal polyhedron representation can exist. Otherwise, the atoms in the atomic decomposition are given by the R nodes of the SPQR tree, and the atomic decomposition tree can be formed by choosing an arbitrary ordering for the atoms incident to each S node of the SPQR tree.

\item \emph{Transform each atom (a 3-connected bipartite cubic planar graph) into its dual Eulerian triangulation.}
We can find a planar embedding of the graph $\G$ in linear time~\cite{HopTar-JACM-74}. Creating the dual triangulation $\Delta$ can also be done in linear time, as each feature of $\Delta$ (a vertex, edge, or face) corresponds one-for-one with a dual feature (a face, edge, or vertex) of the embedding of~$\G$.

\item \emph{Partition each Eulerian triangulation into 4-connected Eulerian triangulations by splitting it on its separating triangles.}
All triangles in $\Delta$ may be found in $O(n)$ time~\cite{ChiNis-SJC-85,ChrEpp-TCS-91}, and the separating triangles are just the ones that are not faces; we elaborate this step in a subsection below.

\item \emph{Recursively decompose each 4-connected Eulerian triangulation into simpler 4-connected Eulerian triangulations using separating 4-cycles, pairs of adjacent degree-4 vertices, and isolated  degree-4 vertices.} While returning from the recursion, undo the steps of the decomposition and build a cycle cover for the Eulerian triangulation. This is the most complicated step of the algorithm, and we elaborate it below.

\item \emph{Convert the cycle covers into regular edge labelings.}
This step is a simple pattern matching process, detailed below.

\item  \emph{Construct the graphs $\Delta_{xy}$ and find an $st$-numbering of each such graph.}
$\Delta_{xy}$ is just a subgraph of $\Delta$, with edge inclusion and orientation determined by the regular edge labeling. Constructing the $st$-numbering of $\Delta_{xy}$ may be performed by using a breadth-first search on the graph: if we wish to construct an $xyz$ polyhedron, we may use the breadth-first traversal order as the numbering, but in the other cases (where a more compact representation is desirable) we may use the distances from the root node of the search as the numbering. In either case this step takes linear time.

\item \emph{Use the $st$-numbering to construct a corner representation dual to each 4-connected Eulerian triangulation.}
After the previous step each vertex $v$ of $\Delta$ has received a number in the $st$-numbering of the graph $\Delta_{xy}$ in which $v$ is incident to edges of two different colors; this number represents the coordinate of the face in $\G$ that is dual to~$v$. In this step we simply copy these numbers to the incident vertices of each face in $\G$. Each vertex of $\G$ gets three numbers assigned to it, which we use as its Cartesian coordinates in the corner representation.

\item \emph{Glue the corner polyhedra together to form orthogonal polyhedra dual to each non-4-connected Eulerian triangulation.} In order to perform this step efficiently we represent vertex coordinates implicitly as positions within a doubly linked list, as detailed below.

\item \emph{ Glue 3-connected polyhedra together to form arbitrary simple orthogonal polyhedra.} This part can be done in the same way as gluing corner polyhedra in the previous step, with the same analysis.
\end{enumerate}

The algorithm for constructing $xyz$ polyhedron representations uses only steps 2--8, after testing that the input graph really is 3-connected. Similarly, the algorithm for constructing a corner polyhedron representation uses only steps 3--7, together with a new step between steps 4 and 5 that computes the union of the cycles covers on the 4-connected components, forming a single cycle cover for the dual Eulerian triangulation to the entire input graph. We omit the details as they are not significantly different than for the simple orthogonal polyhedron case.

We describe and analyze the steps that need to be elaborated in the subsections below.

\subsection*{Decomposing an Eulerian triangulation on its separating triangles}

A \emph{separating triangle} in a plane graph is a 3-cycle that contains vertices of the graph both in its exterior and in its interior. That is, it is a 3-cycle that is not a face of the planar embedding. Equivalently, for a 3-connected planar graph, a separating triangle is a triangle with the property that removing its three vertices from the graph causes the remaining graph to have more than one connected component. Any two separating triangles are either interior-disjoint or one of them contains the other in its interior.

We first identify all separating triangles; all triangles may be found in linear time~\cite{ChiNis-SJC-85,ChrEpp-TCS-91} and the separating triangles are the triangles that are not themselves faces. By performing a single breadth-first search, we may also determine in linear time, for each face triangle $\delta$ of $\Delta$, the distance from $\delta$ to the outer face of $\Delta$ (measured in terms of the number of steps between two faces that share an edge). With this information, we may determine, for each separating triangle $\delta$, which side of $\delta$ is the inside and which is the outside: among the six face triangles sharing an edge with $\delta$, the one that's closest to the outer face is on the outside of $\delta$, and the other face triangles adjacent to and outside of $\delta$ are the ones with the same color.

We then form a directed graph from $\Delta$, by replacing each undirected edge $uv$ in $\Delta$ by either one directed edge or a pair of directed edges: if $u$ is contained within a triangle having $v$ as one of its three corners (either a separating triangle or the outer triangle), we direct the edge from $u$ to $v$, if $v$ is contained within a triangle having $u$ as one of its three corners, we direct the edge from $v$ to $u$, and in the remaining case we replace undirected edge $uv$ by a pair of directed edges going in both directions. This orientation can be determined in linear total time by scanning the edges and triangles incident to each vertex in cyclic order. The fact that separating triangles are properly nested implies that this case analysis determines a unique orientation or pair of orientations for every edge: it is not possible for $u$ to be inside a triangle through $v$ and simultaneously for $v$ to be inside a triangle through~$u$.
We compute the strongly connected components of this directed graph, in linear time~\cite{Tar-SJC-72}.

Then, if $\delta$ is a triangle that has other vertices inside it (that is, $\delta$ is either the outer triangle or a separating triangle), let $S_\delta$ be the set of vertices that are inside $\delta$ but not inside any other triangle. We claim that $S_\delta$ forms a strongly connected component of the directed graph described above. For, if $u$ and $v$ are any two vertices of $S_\delta$, then there exists a path in $\Delta$ from $u$ to $v$ that does not intersect $\delta$ itself (by the 3-connectivity and planarity of $\Delta$, the subset of vertices inside $\delta$ must be connected). The shortest such path cannot cross into any other separating triangle inside $\delta$, because if it did then we could form a shorter path using one of the triangle edges. Therefore, all edges of the shortest path are bidirected and $u$ and $v$ belong to the same strongly connected component. However, it is not possible for the strongly connected component of $u$ and $v$ to contain other vertices that do not belong to $S_\delta$, because the vertices outside $\delta$ do not have paths connecting from them to the inside of $\delta$ and the vertices inside smaller separating triangles within $\delta$ do not have paths connecting to them from the outside of these separating triangles.

The \emph{condensation} of the directed graph constructed above (a graph formed by replacing each strongly connected component by a single supervertex) forms an \emph{inclusion tree} of separating triangles with the outer face triangle of $\Delta$ as a root.  Each strongly connected component, together with the edges connecting it to its parent triangle and the edges of its parent triangle, forms a \emph{4-connected component} of $\Delta$, a minimal subgraph that is itself an Eulerian triangulation and has no separating triangles.
The result of this step is a collection of 4-connected Eulerian triangulations, together with a tree structure describing how to glue them back together again.

We store each Eulerian triangulation (a 4-connected component of the dual of our cubic input graph) as a collection of edges, vertices and triangles; each of these objects stores information about the objects it is adjacent to. Furthermore each triangle stores a bit indicating its color in the face two-coloring, each triangulation stores a pointer to its root triangle, and each edge stores its color in the rainbow partition.

\subsection*{Simplifying a 4-connected Eulerian triangulation}

In this subsection we discuss the algorithmic implementation of our decomposition of a 4-connected Eulerian triangulation into simpler Eulerian triangulations, via decomposition operations that split the graph along separating 4-cycles and that simplify pairs of adjacent degree-four vertices and single isolated degree-four vertices.

We recursively decompose each 4-connected Eulerian triangulation into simpler 4-connected Eulerian triangulations using separating 4-cycles, pairs of adjacent degree-4 vertices, and isolated degree-4 vertices according to the rules described in Appendix~V, until we end up with graphs $\Delta_6$ and $\Delta_{11}$ as the base cases of the recursion. We construct cycle covers for these simple graphs at the bottom of the recursion, then reverse the decomposition by returning from the recursion. As we return from each recursive call to the algorithm, we undo the decomposition step made at that call and update the cycle cover for the graph obtained by reversing the decomposition as described in Appendix~VI.
When we return from the outer call to the recursion, we will have constructed a cycle cover for the original 4-connected triangulation.

Each decomposition step reduces the potential function $\Phi=|V|-6|K|$, where $V$ is the set of vertices in the remaining partially decomposed graph and $K$ is the set of its connected components. This potential function is initially $n-6$, it is always non-negative (each component has at least six vertices), and it decreases by two or more at each step in which the number of components increases. Therefore, there can be at most $n/2$ such steps. As each step that increases the number of components can create at most four new vertices, the total number of vertices that are ever present at any one time during the decomposition is at most $3n$.

In the rest of this section we describe the way that we find the decomposition operations efficiently.
In order to find and perform the steps of the decomposition, we require the following data structures.

\begin{itemize}
\item A collection of Eulerian triangulations representing the partially decomposed graph. This collection will be represented by edge, vertex, and triangle objects. Each edge points to its adjacent vertices and triangles. Each triangle stores its color and whether it is the outer triangle of its component; it also points to its adjacent edges. Each vertex stores a cyclic list of its adjacent edges. Additionally, while we are returning from the recursion each edge will store a bit indicating whether it is part of the cycle cover, and each vertex and triangle will store a pointer to the incident cycle cover edges. We also store, for each vertex $v$ of the partially decomposed graph, the number $\min(36,d(v))$; we say that $v$ has \emph{high degree} if this number is 36.
These modified degree numbers may be computed in constant time for any vertex by scanning its adjacency list, and therefore may be updated in constant time whenever we perform a decomposition operation.
However, we do not store any information telling us how these edge, vertex, and triangle objects are partitioned into the connected components of the partially decomposed graph.

\medskip
\item A data structure allowing us to quickly determine whether any two vertices are adjacent in the partially decomposed graph, and if so enabling us to find the edge object connecting them. In the randomized setting this can be simply a hashed dictionary having as its keys pairs of adjacent vertices, with the dictionary for each key pointing to the edge object connecting that pair of vertices. With a hashed dictionary, adding or removing an edge, and testing adjacency, may be performed in expected $O(1)$ time per operation.

\medskip
In the deterministic setting, adjacency testing of a static planar graph can still be performed in constant time using bounded-degree orientations~\cite{ChrEpp-TCS-91}, but dynamic adjacency testing appears to be more difficult. To solve it, we store with each vertex object $v$ a number $N(v)$, a unique identifier for that vertex, in the range $0\le N(v)<3n$. When a 4-cycle causes us to split the graph into two, we assign new unused identifiers to the vertices on one side of the split; the bound of $3n$ on the number of vertices present at any time in the algorithm ensures that this is always possible. We associate with each edge $uv$ the number $(min(N(u),N(v))\cdot 3n + \max(N(u),N(v))$; this number uniquely determines the endpoints of the edge, as their identifiers are the quotient and remainder formed when dividing the edge number by $3n$.  The problem of testing for the existence of an edge then becomes one of searching for this number among a collection of $O(n)$ other numbers (one for each edge in the graph), each of which has magnitude $O(n^2)$. This integer searching problem may be solved deterministically in linear space, using time $O((\log\log n)^2/\log\log\log n)$ per edge addition, edge removal, or adjacency test~\cite{AndTho-STOC-00}.

\medskip
\item A collection of \emph{good vertices}. Define a \emph{collapsible 4-cycle} to be a set of vertices $p,q,r,s,t,u$ as in Fig.~\ref{fig:rule3a}~(left), forming a chain $t-p-q-u$ where $p$ and $q$ have degree four, $t$ and $u$ have high degree, and $r$ and $s$ are both adjacent to all of $t-p-q-u$. We call a degree-four vertex \emph{bad} if it belongs to the root triangle, to a collapsible 4-cycle or to a face triangle in which the other two vertices are high degree. A good vertex is a degree-four vertex that is not bad. We may test whether a vertex $v$ is good in constant time, by checking that it has four neighbors and examining the modified degree numbers of its neighbors and of the neighbors of its degree-four neighbors. We maintain the collection of good vertices as a doubly-linked list, and we store with each vertex object a pointer to its position in the list (if it is a good vertex; the pointer's value is undefined otherwise). With this representation, it is possible to find a good vertex in constant time, and to change the status of a vertex to be good or to be not good in constant time per status change. Each decomposition operation may change whether $O(1)$ other vertices are good (the vertices whose neighborhoods change because of the operation, and the degree-four vertices that are adjacent to other vertices that changed from being low to high degree or vice versa) and therefore the updates to this data structure take $O(1)$ time per decomposition operation.
\end{itemize}

\begin{figure}[t]
\centering\includegraphics{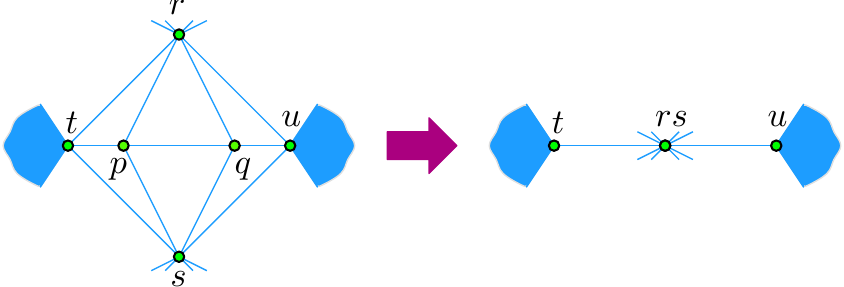}
\caption{Contracting a collapsible 4-cycle.}
\label{fig:contract}
\end{figure}

Observe that, if we contract every collapsible 4-cycle by deleting $p$ and $q$ and merging $r$ and $s$ into each other (see Fig.~\ref{fig:contract}), every high degree vertex will still have degree 18 or more (because at most half of the adjacent vertices of a high degree vertex can be the degree four vertices of collapsible 4-cycles) and there will still be no degree-3 or degree-5 vertices even though some of the high degree vertices might now have odd degree.
By Euler's formula, every triangulation without degree-3 or degree-5 vertices has at least $6h+6$ degree-4 vertices, where $h$ is the number of vertices of degree 18 or more. But when $h > 2$ then (in the graph formed by contracting every collapsible 4-cycle) there are at most $3h-6$ edges connecting pairs of degree-18 vertices and at most $6h-12$ triangles involving those edges, and at most three degree-four vertices in the root triangle, so there are at least three degree-four vertices that are not part of the root triangle and not part of any triangle with the other two vertices having degree eighteen or more. These three vertices are good in the original uncontracted graph, so it is also true that in the original uncontracted graph there are at least three good vertices. By the same argument, at every step of our decomposition algorithm, there always exists at least three good vertices in every connected component of the partially decomposed graph.

At each recursive call to our decomposition algorithm, we remove a single good vertex from our data structure that lists good vertices. As shown below, each such vertex helps us to find a feature in the triangulation, allowing to simplify it according Rules~I-IV, in linear time; we work though the list of good vertices until the graph becomes empty, at which point we return from the recursion.
Based on a type of a good vertex (described by 4 cases below) we decide what reduction operation to use, perform the reduction, and update our data structures.

\begin{description}
\item[Case 1:] The good vertex belongs to a triangle of degree-four vertices. We may determine that we are in this case, in constant time, by examining the modified degree numbers of the four adjacent vertices. In this case, since the graph has no separating triangles, it must be the octahedral graph $\Delta_6$---see Fig.~\ref{fig:rule3}. We may find the rest of the vertices, edges, and triangles of this graph by performing a depth-first search starting from the given good vertex. The decomposition step consists of removing this whole component from the partially decomposed graph, and then calling our decomposition algorithm recursively. When the recursive call returns, we add back the component to the graph, and mark the edges of its unique cycle cover.

\item[Case 2:] The good vertex belongs to a copy of $\Delta_{11}$ (Fig.~\ref{fig:rule3}).  We may determine that we are in this case, in constant time, by performing a depth-first search, starting from the good vertex, until either the entire graph has been explored and matches $\Delta_{11}$ or a feature not present in $\Delta_{11}$ is found. In this case too we remove the entire component from our partially decomposed graph, call the decomposition algorithm recursively, and when it returns restore the removed component and mark its cycle cover.

\item [Case 3:] The good vertex $p$ has a degree-4 neighbor $q$. This means the local neighborhood is a subgraph like the one in Fig.~\ref{fig:rule3a}~(left) with a chain of vertices $t-p-q-u$ with $r$ above and $s$ below connected to all of them. It is not possible that $t$ and $u$ both have high degree, because this would cause $p$ to be part of a collapsible cycle and therefore $p$ would not be good. Therefore, at least one of $t$ or $u$ is not high degree; we can determine which one does not have high degree by examining the modified vertex degrees. Without loss of generality suppose that the vertex with low degree is~$t$; the case that it is $u$ is symmetric. Then, for each neighbor $v$ of $t$ we test whether $v$ is adjacent to $u$. If we find a vertex $v$ that is indeed adjacent to both $t$ and $u$, that gives us 4-cycles $t-s-u-v$ and $t-r-u-v$, one of which must enclose three or more vertices on each of its sides (for otherwise we would be in Case~2), and we can check by a constant amount of exploration which of these two cycles has three or more vertices on each side. In this case we can cut our triangulation into two along this separating cycle according to Rule~I or Rule~II depending on the colors of its adjacent triangles. If we do not find such a vertex $v$ then we have a pair of adjacent degree-4 vertices satisfying the conditions of Rule~III, and we may replace them by a single edge. In either case we perform the appropriate decomposition step, call our decomposition algorithm recursively, undo the decomposition step, and update the cycle cover of the partially decomposed graph as shown in Figures~\ref{fig:gin-gout},~\ref{fig:gin-gout-1b},~and~\ref{fig:6-2flip}.

\item[Case 4:] The good degree-four vertex $p$ has its four neighbors (in cyclic order) $a, b, c, d$, where all four of $a, b, c$, and $d$ have degree greater than four. Because $p$ is good, some pair of opposite vertices (say $a$ and $c$) have low degree. By testing all pairs of neighbors of $a$ and $c$, we can determine in constant time whether there is a 4-cycle $a-p-c-x$ (in which case it must be separating and we can use Rule~I or Rule~II to cut along the 4-cycle) or whether there is a 5-cycle $a-p-c-x-y$. If there is no 4-cycle or 5-cycle, we can collapse $a-p-c$ into a single supervertex according to Rule~IV. So the remaining case is the one in which a 5-cycle exists.

\smallskip
In this remaining case, test the adjacency of $b$ and $d$ with $x$ and $y$. If none of these four pairs of vertices is adjacent, we can collapse $b-p-d$ into a single supervertex according to Rule~IV. Otherwise, there exists an edge between one of these four pairs; by symmetry we may assume without loss of generality that this edge is $b-x$, as the other three cases may be converted into this one by relabeling the vertices. But then $a-b-x-y$ is a 4-cycle containing the remaining neighbors of $b$. If there are only two vertices inside the 4-cycle, then they both have degree four, the low degree vertex $a$ is a neighbor of exactly one of them, and we have an instance of Case 3. Otherwise, $a-b-x-y$ is a separating 4-cycle and we can use Rule~I or Rule~II.

\smallskip
Regardless of which decomposition rule we find for this case, as in the previous cases, we perform the appropriate decomposition step, call the decomposition algorithm recursively, undo the decomposition step, and update the cycle cover.
\end{description}

\noindent We now discuss the running time analysis of the algorithm described above.
At each recursive call we perform a subset of the following operations:
\begin{itemize}
 \item Cutting a triangulation over a separating 4-cycle.

\smallskip
This step is implicit: we simply recurse on a single disconnected graph with small local changes, caused by splitting the previous graph along the cycle and adding a constant number of new faces to the interior and exterior of the split.

\item Performing local changes near degree-four vertices.

\item Updating the vertex degrees, adjacency testing data structures, and list of good vertices according to each local change to the graph.

\item Reversing a decomposition step.

\smallskip
Here we undo the decomposition changes to the triangulation and merge the cycle cover of the parts the way it is described in Appendix~VI. Undoing each decomposition step is no harder than making it, and the appropriate set of changes to the cycle cover can be found in constant time.
\end{itemize}

The slowest parts of each step involve a constant number of adjacency tests (to determine which decomposition case to use) and a constant number of changes to the adjacency testing data structure (whenever we make a change to the features of the triangulation). These steps take expected time $O(1)$ when the edge list is implemented as a randomized hash table and $O((\log\log n)^2/\log\log\log n)$ time when it is implemented as an integer searching data structure. All other parts of the algorithm take constant time per step.

Since we perform at most a linear number of decomposition steps for any 4-connected triangulation, the total time to perform the decomposition and find a cycle cover is $O(n)$ randomized expected time, or $O(n(\log\log n)^2/\log\log\log n)$ deterministic time with linear space.

\subsection*{Converting cycle covers to regular edge labelings}
The cycle cover is represented by adding a special marker to each edge of triangulation belonging to a cycle in the cover and storing at each vertex pointers to its cycle edges. To orient the edges we start at one of the source vertices and then propagate the orientation and edge coloring from a vertex that has already been colored to its neighbors. Each such vertex will have at least one labeled edge, and we can color the rest of the edges around it based on which side of the cycle covering that vertex they belong to as it is described in Appendix~III. Since every vertex knows its cycle cover edges, we can orient all edges of our triangulations in linear time.

\subsection*{Gluing the corner polyhedra together to form orthogonal polyhedra dual to each non-4-connected Eulerian triangulation}

In order to form a simple orthogonal polyhedron from a 3-connected graph $\G$ whose dual Eulerian triangulation $\Delta$ is not 4-connected, we split $\Delta$ into its 4-connected components (as described above), find a corner polyhedron representation separately for each component, and then glue these polyhedra together into a single more complex polyhedron.
When we described this gluing step earlier, in our graph-theoretic but non-algorithmic characterization of $xyz$ polyhedra, we described it in geometric terms as finding a "small enough" scale for each corner polyhedron so that when it is glued in it does not get in the way of anything else. But this idea of finding a small enough scale can be interpreted combinatorially, in terms of the sorted ordering of the coordinate values, without the need to treat these coordinate values numerically.

Essentially, when we glue a corner polyhedron $C$ into a larger polyhedron $P$, the coordinate values in $P$ may be represented using three sorted sequences, representing the ordering of the faces perpendicular to each of the three coordinate axes. Similarly, the coordinate values in $C$ may be represented as three sorted sequences. For each pair of sorted sequences representing the same coordinates, one from $P$ and one from $C$, we need to insert the sequence of coordinates from $C$ as a contiguous subsequence between some two coordinates of $P$. So we need data structures that can (a) represent a sorted sequence of coordinate values; (b) find the object in this sequence that represents the coordinate of some particular face plane (allowing for the possibility that more than one face may have the same coordinate); and (c) insert one sequence between some two values in another sequence.

This can be done by storing each such sequence as a doubly-linked list for (a), together with a pointer from each face to the position of its coordinate in the list for (b). If an initial set of integer coordinate values is given, we may convert them to positions in such a list in linear time by bucket sorting; alternatively, for our problem, it may be simpler to construct a doubly-linked list representing coordinate values at the time we created these values (that is, at the time we constructed the $st$-numbering of a graph $\Delta_{xy}$, we already represent this numbering implicitly as the set of positions in a doubly-linked list).
With these simple data structures each gluing operation takes constant time: we simply splice the sorted list representing the coordinates of $C$ into the appropriate position within the sorted list representing the coordinates of $P$. Thus, with this data structure, gluing all the polyhedra from our tree of 4-connected components of $\Delta$ into a single polyhedron takes linear time. Once everything is glued together, we can reconvert positions in the doubly-linked list back into integer coordinates in linear time: the coordinate values are just the positions in the doubly-linked list, and converting a list to its positions (the list ranking problem) may be solved straightforwardly by scanning the list.

The same data structure is also needed for the final step of the algorithm, gluing triconnected components together to form a simple orthogonal polyhedron, so the conversion from doubly-linked lists to numeric coordinate values should be delayed until all gluing steps are complete.

\end{document}